\numberwithin{equation}{section}
\newcommand{\thmref}[1]{Theorem~\ref{#1}}
\newcommand{\secref}[1]{Section~\ref{#1}}
\newcommand{\lemref}[1]{Lemma~\ref{#1}}
\newcommand{\propref}[1]{Proposition~\ref{#1}}
\newcommand{\corref}[1]{Corollary~\ref{#1}}
\newcommand{\defref}[1]{Definition~\ref{#1}}
\newcommand{\eqnref}[1]{(\ref{#1})}
\newcommand{\exref}[1]{Example~\ref{#1}}
\newcommand{\figref}[1]{Figure~\ref{#1}}
\newtheorem{theorem}{Theorem}[section]
\newtheorem{lemma}[theorem]{Lemma}
\newtheorem{proposition}[theorem]{Proposition}
\newtheorem{corollary}[theorem]{Corollary}
\theoremstyle{definition}
\newtheorem{example}[theorem]{Example}
\newtheorem{definition}[theorem]{Definition}
\theoremstyle{remark}
\newtheorem{remark}[theorem]{Remark}
\newcommand*{\rom}[1]{\expandafter\@slowromancap\romannumeral #1@}
\DeclareMathOperator{\ad}{ad}
\DeclareMathOperator{\Ad}{Ad}
\DeclareMathOperator{\Der}{Der}
\DeclareMathOperator{\Ric}{Ric}
\DeclareMathOperator{\sdim}{sdim}
\DeclareMathOperator{\sspan}{span}
\DeclareMathOperator{\Str}{Str}
\DeclareMathOperator{\GL}{GL}
\DeclareMathOperator{\SL}{SL}
\DeclareMathOperator{\SO}{SO}
\DeclareMathOperator{\SOSp}{SOSp}
\DeclareMathOperator{\Sp}{Sp}
\DeclareMathOperator{\SU}{SU}
\DeclareMathOperator{\UU}{U}
\newcommand{\Cc}{\mathcal{C}}
\newcommand{\Fc}{\mathcal{F}}
\newcommand\bigw{\scalebox{.8}[0.8]{$\bigwedge$}}
\begin{document}

\thispagestyle{empty}

\begin{center}

\scalebox{1}{\textbf{\huge Einstein metrics on homogeneous superspaces}}
\\[0.6cm]
\scalebox{0.9}{\LARGE Yang Zhang, Mark D.~Gould, Artem Pulemotov, J\o rgen Rasmussen}
\\[0.3cm]
\textit{School of Mathematics and Physics, University of Queensland\\ 
St Lucia, Brisbane, Queensland 4072, Australia}
\\[0.3cm] 
\textsf{yang.zhang\!\;@\!\;uq.edu.au\quad\ m.gould1\!\;@\!\;uq.edu.au\quad\ a.pulemotov\!\;@\!\;uq.edu.au\quad\ j.rasmussen\!\;@\!\;uq.edu.au}

\vspace{1.4cm}

{\large\textbf{Abstract}}\end{center}
This paper initiates the study of the Einstein equation on homogeneous supermanifolds. First, we produce explicit curvature formulas for graded Riemannian metrics on these spaces. Next, we present a construction of homogeneous supermanifolds by means of Dynkin diagrams, resembling the construction of generalised flag manifolds in classical (non-super) theory. We describe the Einstein metrics on several classes of spaces obtained through this approach. Our results provide examples of compact homogeneous supermanifolds on which the Einstein equation has no solutions, discrete families of solutions, and continuous families of Ricci-flat solutions among invariant metrics. These examples demonstrate that the finiteness conjecture from classical homogeneous geometry fails on supermanifolds, and challenge the intuition furnished by Bochner's vanishing theorem.

\newpage

\tableofcontents

\newpage

\section{Introduction}

Over the past few decades, considerable efforts have been made to extend Riemannian geometry to the realm of 
supermanifolds; see, for example, the classic book~\cite{DeW92} and the more recent texts~\cite{Rog07,CCF11,Kes19}. 
Notions such as Riemannian metric, Levi-Civita connection and Riemann curvature have thus been generalised to this 
setting. The purpose of the present paper is to address some of the fundamental mathematical aspects of these notions,
building on explicit curvature formulas that we derive for certain types of supermanifolds. 

The appearance of supermanifolds in contemporary physical theories makes their mathematical study timely and warrants the
development of manageable curvature expressions in cases with physically relevant symmetries or properties.
Indeed, supergeometry provides a natural framework for formulating a variety of supersymmetric theories,
including supersymmetric quantum field theories, supergravity, and string theory.
The field of supergeometry flourished in attempts to combine the principles of general relativity theory with those of 
supersymmetry (see the books \cite{GGRS83,CDAF91,WB92,BK98}), and was propelled during the so-called 
``first superstring revolution" because of its fundamental role in the Green--Schwarz formulation of string theory \cite{GS84a,GS84b}.
Later, Calabi--Yau supermanifolds and graded analogues of the Fubini--Study metrics on projective
superspaces \cite{Wit04,RW05,Zho05,ARS16} have been linked to mirror symmetry \cite{Set94,Sch96,NV04,AV04,Ahn05,BDRSS05}.
Moreover, supermanifolds arise in the description of strings on AdS$_5\times S^5$ \cite{MT98,BPR04,AF09},
which is of great importance in the AdS/CFT correspondence and enjoys links to integrable field theories.
For further results on supermanifolds and their applications to physics, we refer to 
\cite{BBHZZ00,Zar10,Wit19a,Wit19b,Kes19} and the references therein.

One of the major goals in Riemannian geometry is to determine which manifolds admit `special' geometric structures. 
Once the foundational 
concepts are established and their basic properties understood, one may search for metrics with distinguished features. 
For example, there is a large body of literature on metrics with positive or non-negative sectional curvature; 
see the collection~\cite{HHL14}. Of particular interest in Riemannian geometry are metrics whose curvature is, in some 
sense, constant (think of round metrics on spheres as model examples). A natural way to capture the idea of constant 
curvature mathematically is through the Einstein equation
\begin{align}\label{intro: Einstein}
 \Ric(g)=cg,
\end{align}
where $\Ric$ denotes the Ricci curvature, $g$ the metric, and $c$ the Einstein constant. Solutions to this equation are 
known as Einstein metrics, and many of the classical results on the associated Einstein manifolds are collected in the 
celebrated monograph~\cite{Bes87}. The analysis of~\eqref{intro: Einstein} also underpins the classification of 
three-manifolds via Thurston's geometrisation conjecture~\cite{MT06}, and plays a pivotal role in a raft of other 
breakthrough insights and constructions in geometry, analysis and theoretical physics.

K\"ahler geometry provides an important class of Einstein metrics. These include Ricci-flat metrics on Calabi--Yau 
manifolds appearing in string theory; see~\cite{CHSW85,Gre97,Pol98} and references therein. 
Outside the K\"ahler setting, the majority 
of known Einstein metrics have been found through symmetry reduction. One typically assumes that the manifold admits an 
action of a Lie group that either is transitive or has principal orbits of co-dimension one. 
This helps reduce~\eqref{intro: Einstein} from a system of PDEs to a system of algebraic equations or ODEs. The theory of 
Einstein metrics on homogeneous spaces is particularly fruitful, and includes a plethora of infinite families and isolated 
examples~\cite{WZ86,Boh04,BWZ04,Arv15}. The constructions in the present paper are inspired by that theory.

Another goal in Riemannian geometry is the establishment and exploration of links between the geometry and the topology of a manifold. 
For instance, the Bonnet--Myers theorem states that the positivity of the Ricci curvature implies the 
finiteness of the fundamental group. In the presence of symmetry, the links are particularly strong. 
A compact homogeneous space, for example, typically does not admit a metric with non-positive Ricci 
curvature. According to the Bochner theorem (see~\cite[Theorem~1.84]{Bes87}), if $G_0/K_0$ is a compact homogeneous 
space with non-positive Ricci curvature, then the connected component of the identity of $G_0$ must be abelian.

While supergeometric generalisations of Ricci-flat K\"ahler metrics on Calabi--Yau manifolds have been considered in the 
literature (see, e.g.,~\cite{RW05,Zho05,ARS16}), there seem to be no known results regarding metrics on supermanifolds 
with distinguished curvature properties but without an immediate connection to a K\"ahler-type structure. In this paper, we 
initiate the study of the Einstein equation~\eqref{intro: Einstein} on homogeneous superspaces.
Our work is motivated by the theory of Einstein metrics on homogeneous spaces 
$G_0/K_0$, which is a particularly rich theory
for $G_0$~\emph{compact}. For example, it was shown in~\cite{BL23} that every \emph{non-compact} manifold with a 
homogeneous Einstein metric of negative scalar curvature must be diffeomorphic to~$\mathbb R^n$. 

Our key initial step is to compute the curvature of a metric on a homogeneous superspace $G/K$. Building on this, we 
solve~\eqref{intro: Einstein} on a large class of such spaces, and find examples of homogeneous supermanifolds 
on which~\eqref{intro: Einstein} has no solutions, (up to four) discrete families of solutions, and continuous families of 
solutions among invariant metrics. The solutions we produce appear to be the first nontrivial Einstein metrics obtained in the 
framework of supergeometry without exploiting methods or ideas from K\"ahler geometry. The existence of the discrete 
families is to be expected in light of the results available in the non-super setting. However, the presence of continuous 
families (constructed on $G/K$ with~$G=\SU(m|n)$) shows that the well-known finiteness conjecture from homogeneous 
geometry (see~\cite{BWZ04}) fails on supermanifolds. Remarkably, all metrics in the  continuous  families are Ricci-flat, 
while Bochner's theorem tells us that no Ricci-flat metrics exist on $G_0/K_0$ 
if $G_0$ is a compact Lie group whose identity component is not a torus. Thus, the 
exploration of links between the geometry of a supermanifold and the topology of its associated reduced manifold is an 
intriguing research direction that is likely to challenge intuition based on classical results only.

Often motivated by physical theories, homogeneous superspaces, including special types such as flag 
supermanifolds, have been studied before; see~\cite{BV91,Cor00,Goe08,San10,CCF11,Vis15,Vis22}, and references 
therein. In fact, one may view our work as a continuation of these papers, offering a further but distinct study of the nature 
of homogeneous superspaces with distinguished structures.

The paper is organised as follows. In \secref{sec: prelim}, we recall some definitions and properties of Lie superalgebras 
and their real forms, Lie supergroups, Harish-Chandra pairs, and homogeneous superspaces. In \secref{sec: gradedR} and 
\secref{sec: RRs}, we collect definitions and basic results from Riemannian supergeometry needed in subsequent sections. 
We also show that every left-invariant metric on~$\SL(1|1)$ satisfies \eqref{intro: Einstein}. This gives us, arguably, the 
simplest nontrivial examples of Einstein metrics on supermanifolds and a toy model for more complicated cases. In 
\secref{subsec: curvs}, we present our formulas for the curvature tensors associated with a metric on a
homogeneous superspace, with the proof of the Ricci curvature formula 
deferred until \secref{sec: pfmain}. While the expression for the Riemann curvature is considerably bulkier than 
its non-super counterpart, we find that the expression for the Ricci curvature is comparatively simple. In both expressions, 
a key difference from the non-super setting stems from the use of dual bases. Specialisations and simplifications for 
naturally reductive and diagonal metrics are presented in \secref{sec: nat red} and \secref{sec: diag_curv}, respectively. 
The possible non-positivity of superdimensions and structure constants presents additional challenges in the diagonal case.

\secref{sec: gen_flag} is devoted to Einstein metrics on homogeneous superspaces~$G/K$, where $G$ is a connected Lie 
supergroup with Lie superalgebra~$\mathfrak g$. We assume that $\mathfrak g$ is basic classical to ensure that 
$\mathfrak g^{\mathbb C}$ admits a non-degenerate $\ad_{\mathfrak{g}^{\mathbb{C}}}$-invariant supersymmetric even 
bilinear form, which we use to obtain our background metric on~$G/K$. Furthermore, we demand that $\mathfrak g$ be the 
compact real form of $\mathfrak g^{\mathbb C}$ induced by a so-called star operation, as is the case in the non-super 
setting when $G_0$ is compact. This leaves $\mathfrak g^{\mathbb C}$ to be of type~$A$ or~$C$, whence $G=\SU(m|n)$ 
or $G=\SOSp(2|2n)$; see~\cite{SNR77,AN15,FG23}. 
Imitating the structure of so-called generalised flag manifolds in classical Riemannian geometry (see~\cite{Arv03}),
we will assume that the Dynkin diagram of $K$ is obtained from that of $G$ by removing one or more nodes. 
Here, we restrict ourselves to removing one or two nodes only but intend to discuss elsewhere the
removal of an arbitrary number of nodes.
\secref{sec: flag} thus outlines the construction of homogeneous superspaces by means of Dynkin diagrams, while 
\secref{sec: examflag} concerns the corresponding flag supermanifolds and their compact real forms. In 
\secref{sec: flagA1} and \secref{sec: flagA}, we obtain our classification results for Einstein metrics on $G/K$ 
with $G=\SU(m|n)$. Notably, some of the metrics we find are Ricci-flat, 
and our construction in \secref{sec: flagA1} produces supergeometric analogues of Fubini--Study metrics on projective spaces. 
\secref{sec: flagC} concerns Einstein metrics on $G/K$ with $G=\SOSp(2|2n)$.

\subsubsection*{Notation}

Throughout, the ground field is the field $\mathbb{R}$ of real numbers, unless otherwise stated. We let $\mathbb{C}$ 
denote the field of complex numbers, $\mathbb{H}$ the quaternions (appearing in Section~\ref{subsubsec: typeC}), 
$\mathbb{Z}$ the ring of integers, $\mathbb{Z}_2=\{\bar{0},\bar{1} \}$ the ring of integers modulo~2,  
$\mathbb{R}^+$ (respectively $\mathbb{R}^-$) the set of positive (respectively negative) real numbers, and $\imath$ the 
imaginary unit, and set $\mathbb{R}^\times=\mathbb{R}-\{0\}$. 
Let $\langle-, -\rangle$ denote a scalar superproduct on a vector superspace~$V$. 
If $\{v_1,\ldots,v_d\}$ is a basis for $V$, we write $\{\bar v_1,\ldots,\bar v_d\}$ for the corresponding right dual basis, 
i.e., $\langle v_i,\bar v_j\rangle=\delta_{ij}$ for all $i,j\in\{1,\ldots,d\}$. 
We denote by $g$ a graded Riemannian metric on a Riemannian supermanifold~$M$. 
We write $\mathfrak{g}^{\mathbb{C}}$ for a complex Lie superalgebra, with $\mathfrak{g}$ its real form over $\mathbb{R}$.  
If $\mathfrak{g}^{\mathbb{C}}$ is a basic classic complex Lie superalgebra, 
we denote by $Q$ a fixed, non-degenerate, supersymmetric, invariant, even bilinear form on $\mathfrak{g}^{\mathbb{C}}$.

\section{Supermanifolds}
\label{sec: prelim}

Here, we recall some basic notions from supergeometry. We refer 
to~\cite{Kos77,Bat79,Lei80,BBH91,DeW92,DM99,Var04,Goe08,CCF11} for further background, 
noting (as in~\cite{BCF10}) that these references employ different approaches to supergeometry. 
Focus here is on notions and results needed in subsequent sections.

\subsection{Lie superalgebras and real forms}
\label{subsec: Lie}

A \emph{vector superspace} of dimension $m|n$ is a $\mathbb{Z}_2$-graded vector space $V=V_{\bar{0}}\oplus V_{\bar{1}}$, 
where $m=\dim(V_{\bar{0}})$ and $n=\dim(V_{\bar{1}})$.  The \emph{superdimension} of $V$ is defined as $\sdim(V)=m-n$. 
A nonzero element of $V_{\bar{0}}\cup V_{\bar{1}}$ is said to be 
\emph{homogeneous}, with elements of $V_{\bar{0}}$ called \emph{even} and those of $V_{\bar{1}}$ \emph{odd},
and a basis $\{v_1, \dots, v_{m+n}\}$ for $V$ is said to be \emph{homogeneous} if every $v_i$ is homogeneous. 
A homogeneous vector $v\in V_a$ is said to have \emph{parity} $[v]=a$. 
For the tensor product of vector superspaces $V$ and $W$, we have
\begin{align}\label{VW}
 (V\otimes W)_{\bar{0}}= (V_{\bar{0}}\otimes W_{\bar{0}})\oplus(V_{\bar{1}}\otimes W_{\bar{1}}),\qquad 
 (V\otimes W)_{\bar{1}}= (V_{\bar{0}}\otimes W_{\bar{1}})\oplus(V_{\bar{1}}\otimes W_{\bar{0}}).
\end{align}
Throughout this paper, definitions and calculations 
for vector superspaces are given in terms of homogeneous elements and then extended linearly. Moreover, the ground field 
$\mathbb{R}$ is considered an even vector space over $\mathbb{R}$, 
and a linear map $f\in \mathrm{Hom}_{\mathbb{R}}(V,W)$ is said to be 
\emph{even} if $f(V_{a})\subseteq W_{a}$ for $a\in \mathbb{Z}_2$. This notion extends readily to bilinear maps.

A \emph{Lie superalgebra} is a $\mathbb{Z}_2$-graded vector space (over $\mathbb{R}$ or $\mathbb{C}$) equipped with 
a Lie superbracket $[-,-]$ which is super-skew-symmetric and satisfies the super-Jacobi identity. 
The (even bilinear) \emph{Killing form} is defined for every finite-dimensional Lie superalgebra $\mathfrak{g}$, by
\begin{align*}
 B(X,Y):=\Str_{\mathfrak{g}}\!\big(\!\ad(X)\ad(Y)\big),\qquad X,Y\in\mathfrak{g},
\end{align*}
where $\Str_{\mathfrak{g}}$ is the supertrace over $\mathrm{End}(\mathfrak{g})$. The Killing form may be identically zero.

A finite-dimensional complex Lie superalgebra 
$\mathfrak{g}^{\mathbb{C}}=\mathfrak{g}_{\bar{0}}^{\mathbb{C}}\oplus\mathfrak{g}_{\bar{1}}^{\mathbb{C}}$ 
is called \emph{basic classical} if $\mathfrak{g}^{\mathbb{C}}$ is simple, $\mathfrak{g}_{\bar{0}}^{\mathbb{C}}$ is reductive, 
and $\mathfrak{g}^{\mathbb{C}}$ admits a non-degenerate $\ad_{\mathfrak{g}^{\mathbb{C}}}$-invariant supersymmetric 
even bilinear form. Following Kac~\cite{Kac77,Kac78}, there are two types of basic classical Lie superalgebras, 
characterised by whether $\mathfrak{g}_{\bar{1}}^{\mathbb{C}}$ is reducible (type \rom{1}) or irreducible (type \rom{2)} as a 
$\mathfrak{g}_{\bar{0}}^{\mathbb{C}}$-module.
Thus, $A(m,n)$ and $C(n)$ comprise the basic classical Lie superalgebras of type \rom{1}, while $B(m,n)$, $D(m,n)$, 
$F(4)$, $G(3)$ and $D(2,1,\alpha)$ comprise the ones of type \rom{2}.
Concretely, a basic classical Lie superalgebra of type \rom{1} is isomorphic to one of the following Lie superalgebras: 
$A(m,n)=\mathfrak{sl}(m+1|n+1)^{\mathbb{C}}$ with $m>n\geq0$; 
$A(n|n)=\mathfrak{psl}(n+1|n+1)^{\mathbb{C}}$ with $n\geq1$; 
or $C(n)=\mathfrak{osp}(2|2n-2)^{\mathbb{C}}$ with $n\geq3$ (noting that $C(2)\cong A(1|0)$).

A real Lie superalgebra $\mathfrak{g}$ is called a \emph{real form} of a complex Lie superalgebra 
$\mathfrak{g}^{\mathbb{C}}$ if $\mathfrak{g}^{\mathbb{C}}\cong \mathfrak{g}\otimes_{\mathbb{R}} \mathbb{C}$, while 
a \emph{star operation} $*: \mathfrak{g}^{\mathbb{C}}\to \mathfrak{g}^{\mathbb{C}}$ is an even anti-linear map such that 
\begin{align*}
 [X,Y]^*=[Y^*, X^*],\qquad (X^*)^*=X,
\end{align*}
for any $X,Y\in \mathfrak{g}^{\mathbb{C}}$; see~\cite{SNR77} or~\cite[\S 4.1--4.3]{DM99}. 
Here, anti-linearity means that $(cX)^{*}=\bar{c}X^*$, where $\bar{c}$ is the complex conjugate of $c\in\mathbb{C}$. 
With $\sqrt{\imath}= e^{\frac{\pi}{4}\imath}$, a real form $\mathfrak{g}=\mathfrak{g}_{\bar{0}}\oplus \mathfrak{g}_{\bar{1}}$
can now be constructed as 
\begin{align}\label{eq: real}
\begin{gathered}
 \mathfrak{g}_{\bar{0}}=\sspan_{\mathbb{R}} \{ X\in \mathfrak{g}_{\bar{0}}^{\mathbb{C}}\,|\, X^*=-X\},\\[.1cm]
 \mathfrak{g}_{\bar{1}}=\sspan_{\mathbb{R}} \{ \sqrt{\imath}X\in \mathfrak{g}_{\bar{1}}^{\mathbb{C}}\,|\, X^*=-X\}
  =\sspan_{\mathbb{R}} \{ X\in \mathfrak{g}_{\bar{1}}^{\mathbb{C}}\,|\, X^*=\imath X\}.
\end{gathered}
\end{align}

Note that $\mathfrak{g}_{\bar{0}}$ is a real form of the complex Lie algebra $\mathfrak{g}_{\bar{0}}^{\mathbb{C}}$. 
We say that a real form of $\mathfrak{g}^{\mathbb{C}}$ is \emph{compact} if $\mathfrak{g}_{\bar{0}}$ is a compact Lie 
algebra; cf.~\cite{AN15,FG23}. It was shown in~\cite{SNR77} that among basic classical Lie superalgebras, only type I Lie 
superalgebras admit star operations. In that case, there are actually two star operations, and they are related by duality.
Throughout this paper, we are only concerned with the so-called type (1) star 
operation; see, e.g.,~\cite{GZ90a, GZ90b} and \secref{sec: examflag}.

\subsection{Lie supergroups}
\label{sec: Lie}

The approach taken when introducing the notion of a supermanifold is typically context-dependent.
In the following, we largely follow~\cite{Kos77,Lei80}.

An archetypal supermanifold is the superspace
\begin{align*}
 \mathbb{R}^{m|n}=(\mathbb{R}^m, \mathcal{O}_{\mathbb{R}^{m|n}}),
\end{align*}
the topological space $\mathbb{R}^m$ equipped with the sheaf of supercommutative superalgebras
\begin{align*}
 \mathcal{O}_{\mathbb{R}^{m|n}}(U)
 =\mathcal{C}_{\mathbb{R}^m}^{\infty}(U) \otimes \Lambda_{\mathbb{R}}[\xi_1, \dots, \xi_n]
\end{align*}
for any open set $U\subseteq \mathbb{R}^m$. Here, $\mathcal{C}_{\mathbb{R}^m}^{\infty}$ denotes the sheaf of smooth (even) 
functions on $\mathbb{R}^m$, and $\Lambda_{\mathbb{R}}[\xi_1, \dots, \xi_n]$ is the Grassman algebra in the odd 
variables $\xi_1, \dots, \xi_n$. 

A (real smooth) \emph{supermanifold} of dimension $m|n$ is a ringed space 
$M=(|M|,\mathcal{O}_M)$, where the underlying topological space $|M|$ is Hausdorff and second-countable, 
and $\mathcal{O}_M= (\mathcal{O}_M)_{\bar{0}}\oplus (\mathcal{O}_M)_{\bar{1}}$ is a sheaf of \emph{superfunctions}, 
i.e., for each $x\in |M|$ there exists an open neighbourhood $V_x\subseteq |M|$ of $x$ and an open set 
$U\subseteq \mathbb{R}^{m|n}$ such that $\mathcal{O}_M(V_x)\cong \mathcal{O}_{\mathbb{R}^{m|n}}(U)$.
Let $\mathcal{I}_M=\langle (\mathcal{O}_M)_{\bar{1}}\rangle$ be the two-sided ideal of $\mathcal{O}_M$ generated by all nilpotent 
sections in $(\mathcal{O}_M)_{\bar{1}}$. The corresponding subspace $M_0=(|M|, \mathcal{O}_M/\mathcal{I}_M)$ 
is called the \emph{reduced manifold} associated with $M$, where $\mathcal{O}_M/\mathcal{I}_M\cong \mathcal{C}^{\infty}_{|M|}$ is 
regarded as the sheaf of smooth functions on $|M|$. Hence, $M_0$ lies in the category of ordinary smooth manifolds. 
The canonical projection 
$\mathrm{ev}: \mathcal{O}_M \to \mathcal{O}_M/\mathcal{I}_M $
is the corresponding \emph{evaluation map}. For any point $p\in |M|$, let $\mathcal{O}_{M,p}$ denote the stalk at $p$. 
The \emph{evaluation at $p$} is then defined as
\begin{align*}
 \mathrm{ev}_p: \mathcal{O}_{M,p} \to \mathbb{R}, \qquad f \mapsto f(p):=\mathrm{ev}(f)(p).
\end{align*}

For each supermanifold $M=(|M|, \mathcal{O}_M)$, the tangent sheaf $\mathcal{T}_M$ is defined by
\begin{align*}
\mathcal{T}_M(U):=\Der(\mathcal{O}_{M}(U))
\end{align*} 
for any open subset $U\subseteq |M|$. Here, $\Der(\mathcal{O}_{M}(U))$ denotes the $\mathcal{O}_M(U)$-module of (left) 
super-derivations $\varphi$, the ones satisfying the super-Leibniz rule 
\begin{align*}
 \varphi(fg)= \varphi(f)g+(-1)^{[\varphi][f]} f\varphi(g), \qquad f,g\in \mathcal{O}_M(U).
\end{align*}
The sections of $\mathcal{T}_M$ are called \emph{vector fields}. For every point $p\in |M|$, the tangent superspace 
$T_pM$ at $p$ is the vector superspace of dimension $m|n$, defined by 
\begin{align*}
 T_pM:= \{\psi \in \mathrm{Hom}_{\mathbb{R}}(\mathcal{O}_{M,p}, \mathbb{R})\,|\,\psi(fg)= \psi(f)g(p)+ (-1)^{[\psi][f]}f(p)\psi(g)\}.	
\end{align*}
Elements of $T_pM$ are called \emph{tangent vectors}. 
For any $p\in U$ and vector field $X\in \mathcal{T}_M(U)$,  the \emph{value} of $X$ at $p$ is the 
tangent vector $X_p: \mathcal{O}_{M,p} \to \mathbb{R}$ defined by $X_p(f):= \mathrm{ev}_p(X(f))$ for all $f\in \mathcal{O}_{M,p}$.
A vector field is only determined by its values at all points if the odd dimension of $M$ is zero.
We also introduce the dual space
\begin{align*}
 \mathcal{T}_M ^*:=\mathrm{Hom}_{\mathcal{O}_M}(\mathcal{T}_M, \mathcal{O}_M).
\end{align*}

A \emph{Lie supergroup} $G$ is a group object in the category of supermanifolds, that is, it is a real smooth supermanifold 
together with three morphisms
\begin{align*}
 m: G\times G\to G, \qquad i: G\to G,\qquad e: \mathbb{R}^{0|0}\to G,
\end{align*}
called \emph{multiplication}, \emph{inverse} and \emph{unit}, respectively, satisfying the usual group axioms that encode 
associativity and the existence of identity and of inverse (see~\cite[Definition 7.1]{CCF11}). The reduced manifold $G_0$ is 
an ordinary Lie group. A Lie supergroup $K$ is a \emph{Lie subsupergroup} of $G$ if the reduced group $K_0$ is a Lie 
subgroup of $G_0$ and the inclusion morphism $K \to G$ is an even immersion. Moreover, $K$ is said to be 
\emph{closed}, respectively \emph{connected}, if $K_0$ is closed, respectively connected, in $G_0$.

A vector field $X$ on $G$ is \emph{left-invariant} if and only if
\begin{align*}
 (1\otimes X)\circ m^*= m^*\circ X,
\end{align*}
where $m^*: \mathcal{O}_{G}\to \mathcal{O}_{G\times G}$ is a morphism on sheaves.
The \emph{Lie superalgebra} $\mathfrak{g}$ of $G$ consists of all left-invariant vector fields on $G$. 
We can identify $\mathfrak{g}$ with the tangent superspace of $G$ at the identity $e$, i.e., the linear map 
$\mathfrak{g} \to T_{e}G$, $X \mapsto X_e$, is an isomorphism of vector superspaces, with inverse map given by 
\begin{align*}
 v \mapsto X_v= (1\otimes v)\circ m^*,
\end{align*} 
where $X_v$ is the left invariant vector field with $(X_v)_e=v$. 
We refer to~\cite{CCF11} for more details.

\subsection{Harish-Chandra pairs}

Complementary to the introduction in \secref{sec: Lie},
the theory of Lie supergroups can be approached equivalently through Harish-Chandra pairs~\cite{Kos77,DM99,CCF11}. 

\begin{definition}\label{def: HCpair}
A (real) \emph{Harish-Chandra pair} $(G_0,\mathfrak{g})$ consists of an ordinary Lie group $G_0$; a Lie superalgebra 
$\mathfrak{g}=\mathfrak{g}_{\bar{0}}\oplus \mathfrak{g}_{\bar{1}}$, where $\mathfrak{g}_{\bar{0}}$ is the Lie algebra of 
$G_0$; and a representation $\sigma$ of $G_0$ on $\mathfrak{g}$ such that
(i) $\sigma$ induces the adjoint representation of $G_0$ on $\mathfrak{g}_{\bar{0}}$, and
(ii) the differential $(d\sigma)_e$ at the identity $e\in G_0$ acts on $\mathfrak{g}$ as the adjoint representation of 
$\mathfrak{g}_{\bar{0}}$.
\end{definition}

Given a Lie supergroup $G$, we can define a Harish-Chandra pair $(G_0, \mathfrak{g})$ by taking $G_0$ to be the 
underlying reduced Lie group of $G$, and $\mathfrak{g}$ to be the Lie superalgebra of $G$. It is well known that the 
category of (real) Harish-Chandra pairs is equivalent to the category of (real) Lie supergroups; see, 
e.g.,~\cite[Theorem~7.4.5]{CCF11}. Thus, a Lie supergroup only depends on the underlying ordinary 
Lie group and Lie superalgebra, together with the compatibility conditions given in \defref{def: HCpair}. 

As an example, the \emph{general linear supergroup} $\GL(m|n)$ is the Lie supergroup associated with the Harish-Chandra 
pair $(\GL(m)\times \GL(n), \mathfrak{gl}(m|n))$, where $\mathfrak{gl}(m|n)$ is the real general linear superalgebra 
comprising the endomorphisms on $\mathbb{R}^{m|n}$. The \emph{special linear supergroup} 
$\SL(m|n)=(\SL(m|n)_0, \mathfrak{sl}(m|n))$ is the Lie subsupergroup of $\GL(m|n)$ with even subsupergroup 
\begin{align*}
 \SL(m|n)_0=\{(A,B)\in \GL(m)\times \GL(n)\,|\,\mathrm{det}(A)=\mathrm{det}(B)>0 \} 
\end{align*}
and Lie superalgebra
\begin{align*}
 \mathfrak{sl}(m|n)=\{X\in \mathfrak{gl}(m|n)\,|\, \Str(X)=0\}.
\end{align*}

A representation of a Harish-Chandra pair $G=(G_0, \mathfrak{g})$ on a vector superspace $V$ consists of a 
representation of $G_0$ on $V$ and one of $\mathfrak{g}$ on $V$, such that the differential of the representation of $G_0$ 
agrees with the restriction to $\mathfrak{g}_0$ of the $\mathfrak{g}$-representation.

\subsection{Homogeneous superspaces}
\label{sec: homosp}

Let $G$ be a Lie supergroup in the category of real smooth supermanifolds and $K$ a closed Lie subsupergroup of $G$. 
Classically, the set $G_0/K_0$ is a topological space with the quotient topology induced by the natural projection 
$\pi_0: G_0\to G_0/K_0$. Moreover, it has a unique classical manifold structure compatible with the action of $G_0$.
As discussed in the following, this classical construction can be extended to the super-setting, thereby giving rise to 
homogeneous superspaces.

Following \cite[Theorem 3.9]{Kos77}, the super-structure sheaf $\mathcal{O}_{G/K}$ on $G_0/K_0$ is defined as follows. 
Let $\mathrm{pr}_1: G\times K \to G$ denote the projection onto the first factor, and $\psi: G\times K \to G$ the morphism 
induced by the multiplication $m$. For any open subset $U\subseteq G_0/K_0$, define the set of $K$-invariant sections by
\begin{align*}
 \mathcal{O}_{G/K}(U):= \{ f\in\mathcal{O}_G(\pi_0^{-1}(U))\,|\, \psi^*f= \mathrm{pr}_1^*f \}.
\end{align*}
This gives rise to a supermanifold,
\begin{align*}
 G/K:=(G_0/K_0, \mathcal{O}_{G/K}),
\end{align*}
referred to as a \emph{homogeneous superspace}. Let $\pi: G \to G/K$ be the corresponding canonical surjective 
homomorphism of supermanifolds, where the sheaf morphism is given by inclusion. The differential of $\pi$ at the identity 
$e\in G$ yields a surjective linear map $\mathfrak{g}\cong T_eG\to T_{K}(G/K)$ with kernel $\mathfrak{k}$, the Lie 
superalgebra of~$K$. This induces a canonical identification: $T_K (G/K)\cong \mathfrak{g}/\mathfrak{k}$. 

A Harish-Chandra pair $G = (G_0, \mathfrak{g})$ acts on a supermanifold $M$ if the reduced Lie group $G_0$ acts on 
$|M|$, and there exists a Lie superalgebra anti-homomorphism from $\mathfrak{g}$ to the Lie superalgebra $\mathcal{T}_M(M)$ 
of vector fields on $M$ such that its restriction to $\mathfrak{g}_{\bar{0}}$ agrees with the differential of the $G_0$-action 
at the identity~\cite[\S 3.8]{DM99}. Indeed, with $\alpha: G\times M\to M$ denoting the Lie supergroup action, 
the anti-homomorphism is given by 
\begin{align}\label{eq: gTMM}
 \mathfrak{g}\to \mathcal{T}_M(M),\qquad
 X\mapsto (X_e\otimes 1)\circ \alpha^*.
\end{align}
A $G$-action is said to be \emph{transitive} if (i) the reduced $G_0$-action is transitive, and (ii) for every $p\in |M|$, 
the even linear map 
\begin{align}\label{eq: trans} 
 \mathfrak{g} \to T_pM, \qquad X\mapsto\mathrm{ev}_p\circ (X_e\otimes 1)\circ \alpha^*,
\end{align}
is surjective~\cite[Section 4.11]{Goe08}.

A supermanifold $M$ is said to be \emph{$G$-homogeneous} if the Lie supergroup $G$ acts on it transitively. 
Every $G$-homogeneous supermanifold $M$ is isomorphic to $G/G_p$ with $G_p$ being the isotropy subsupergroup of 
$G$ at $p\in |M|$~\cite[Proposition 3.10.3]{Kos77}. By definition, the isotropy subsupergroup $G_p$ is the Harish-Chandra 
pair $((G_0)_p, \mathfrak{g}_p)$, where $(G_0)_p$ denotes the ordinary isotropy subgroup of $G_0$ at $p$, and 
$\mathfrak{g}_p$ consists of all left-invariant vector fields $X$ on $G$ such that the infinitesimal action of $X$ at $p$ is 
trivial; see~\cite[Section 4.11]{Goe08} and~\cite[\S 8.4]{CCF11}. 
The \emph{isotropy representation} of $G_p$ on $T_pM$ is equivalent to the adjoint representation on $\mathfrak{g}/\mathfrak{g}_p$ 
via the natural isomorphism $\mathfrak{g}/\mathfrak{g}_p \cong T_pM$ \cite[Proposition 12]{Goe08}.

\section{Curvature}
\label{sec: met_curv}

\subsection{Graded Riemannian metrics and curvatures}
\label{sec: gradedR}

Let $V=V_{\bar{0}}\oplus V_{\bar{1}}$ be a vector superspace over $\mathbb{R}$. A \emph{scalar superproduct} on $V$ is a 
non-degenerate, supersymmetric, even bilinear form $\langle-, -\rangle: V\times V\rightarrow \mathbb{R}$. 
By the non-degeneracy, $\dim(V_{\bar{1}})$ must be even. The supersymmetry implies that $\langle-,-\rangle$ is symmetric on 
$V_{\bar{0}}\times V_{\bar{0}}$, and skew-symmetric on $V_{\bar{1}}\times V_{\bar{1}}$, while the evenness implies that it is  zero on 
$V_{\bar{0}}\times V_{\bar{1}}$ and $V_{\bar{1}}\times V_{\bar{0}}$. 
Moreover, the restriction of $\langle-,-\rangle$ to $V_{\bar{0}}$ may be indefinite. 

\begin{definition} \cite{Goe08} 
A \emph{graded Riemannian metric} on a supermanifold $M=(|M|, \mathcal{O}_M)$ is an $\mathcal{O}_M$-linear morphism of sheaves 
$g: \mathcal{T}_M \otimes \mathcal{T}_M \to \mathcal{O}_M$ such that for any $X,Y\in \mathcal{T}_M(U)$, 
where $U$ is any open subset of $|M|$, 
\begin{enumerate}
\item [(1)] (non-degeneracy) the map $X\mapsto g(X, -)$ is an isomorphism $\mathcal{T}_M \rightarrow \mathcal{T}_M ^*$;
\item [(2)] (supersymmetry) $ g(X,Y) = (-1)^{[X][Y]} g(Y,X)$;
\item [(3)] (evenness) $[g(X, Y)]= [X]+[Y]$. 
\end{enumerate}
\end{definition}

For each $p\in |M|$, the sheaf morphism $g$ induces a scalar superproduct $\langle-,-\rangle_p:=g_p$ 
on the tangent superspace $T_pM$,
\begin{align*}
 \langle-,-\rangle_p: T_pM\times T_pM \to \mathbb{R}.
\end{align*}
If the restriction of $\langle-,-\rangle_p$ to $(T_pM)_{\bar{0}}$ is indefinite, it induces a pseudo-Riemannian metric on the reduced 
manifold $M_0$. We stress that the scalar superproduct $\langle -, -\rangle_p$ does not determine the graded Riemannian metric unless 
the odd dimension of $M$ is zero. A supermanifold $M$ equipped with a graded Riemannian metric $g$ 
is called a \emph{Riemannian supermanifold}, and is denoted by $(M,g)$. 

An affine connection on a supermanifold $M$ is an even morphism 
$\nabla: \mathcal{T}_M \otimes_{\mathbb{R}}\mathcal{T}_M\rightarrow  \mathcal{T}_M$ of sheaves of vector superspaces such that, 
for every open subset $U\subseteq |M|$, 
\begin{align*}
 \nabla_{fX}Y= f \nabla_{X}Y, \qquad \nabla_{X}(fY)= X(f)Y + (-1)^{[X][f]}f \nabla_{X}Z,
\end{align*}
where $f\in \mathcal{O}_{M}(U)$ and $X,Y\in \mathcal{T}_{M}(U)$. 
If $(M, g)$ is a Riemannian supermanifold, there is a unique  affine connection, called the \emph{Levi-Civita connection}, satisfying
\begin{align}
 X g(Y, Z) &= g(\nabla_{X}Y, Z) +(-1)^{[X][Y]}  g(Y, \nabla_{X}Z) , \label{eq: LC1}
 \\[.1cm]
 [X,Y]&=\nabla_{X}Y- (-1)^{[X][Y]} \nabla_{Y}X, \label{eq: LC2}
\end{align}
for all vector fields $X,Y,Z$ on $M$, meaning that the connection is \emph{metric} and \emph{torsion-free}. 
As in the non-super case, the Levi-Civita connection is  determined by the following super analogue of the  Koszul formula.
\begin{proposition} \cite[Theorem 4.2]{MS96}\label{prop: Kos}
Let $(M, g)$ be a Riemannian supermanifold. Then, the Levi-Civita  connection $\nabla$ is  given implicitly by the Koszul formula
\begin{align*}
 2\, g(\nabla_{X}Y, Z) 
 &= X g(Y,Z) +(-1)^{[X][Y]} Y g(X,Z) -(-1)^{([X]+[Y])[Z]} Z g(X,Y)
 \\[.1cm]
 &\quad - g(X, [Y,Z]) -(-1)^{[X][Y]} g(Y, [X,Z]) +(-1)^{([X]+[Y])[Z]}  g(Z, [X,Y]),
\end{align*}
with $X,Y,Z$ any vector fields on $M$.
\end{proposition}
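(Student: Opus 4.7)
The plan is to mirror the classical proof of the Koszul formula, adapted to track signs in the $\mathbb{Z}_2$-graded setting. The argument splits into uniqueness — deriving the stated identity from~\eqref{eq: LC1} and~\eqref{eq: LC2} under the assumption that some $\nabla$ exists — and existence — defining $\nabla_X Y$ implicitly via the right-hand side of the Koszul identity using non-degeneracy of $\langle-,-\rangle$, then verifying the two defining conditions.

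For uniqueness, I would write~\eqref{eq: LC1} three times: once for the triple $(X,Y,Z)$, once with $X$ and $Y$ exchanged (so that $Y\langle X,Z\rangle$ appears on the left), and once with $Z$ playing the role of the derivation (producing $Z\langle X,Y\rangle$). I would then form the signed combination in which the second is weighted by $(-1)^{[X][Y]}$ and the third by $-(-1)^{([X]+[Y])[Z]}$. On the right-hand side, six terms of the form $\langle \nabla\cdot,\cdot\rangle$ appear; pairing them as $\{\nabla_X Y,\nabla_Y X\}$, $\{\nabla_X Z,\nabla_Z X\}$ and $\{\nabla_Y Z,\nabla_Z Y\}$, then invoking~\eqref{eq: LC2} in each pair (together with supersymmetry of $\langle-,-\rangle$ to align arguments before combining), collapses these into $2\langle \nabla_X Y,Z\rangle$ plus three bracket terms. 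Rearranging reproduces the stated identity.

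For existence, non-degeneracy of $\langle-,-\rangle$ ensures that the map sending $Z$ to one-half of the Koszul right-hand side is represented by a unique homogeneous vector field, which I take as $\nabla_X Y$. The properties required of a connection — $\mathcal{O}_M$-linearity in $X$ and the super-Leibniz rule $\nabla_X(fY)=X(f)Y+(-1)^{[X][f]}f\,\nabla_X Y$ in $Y$ — follow by substituting $fX$ or $fY$ with $f$ a homogeneous superfunction into the right-hand side and applying the Leibniz rule both for brackets and for vector fields acting on $\langle-,-\rangle$; the $X(f)$- and $Y(f)$-terms that arise recombine cleanly due to the particular sign weights in the formula. Metric compatibility~\eqref{eq: LC1} follows by computing $\langle \nabla_X Y,Z\rangle+(-1)^{[X][Y]}\langle Y,\nabla_X Z\rangle$ directly from the defining formula and observing that the $Y$- and $Z$-derivation terms together with all bracket terms cancel, leaving $X\langle Y,Z\rangle$. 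Torsion-freeness~\eqref{eq: LC2} follows by forming $\langle \nabla_X Y-(-1)^{[X][Y]}\nabla_Y X,Z\rangle$, verifying that all terms except $\langle [X,Y],Z\rangle$ cancel, and then appealing once more to non-degeneracy.

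The main obstacle will be disciplined sign bookkeeping. Throughout, exponents such as $([X]+[Y])[Z]+[Z][X]$ must be reduced modulo $2$ to $[Y][Z]$ and its cyclic analogues, and supersymmetry $\langle A,B\rangle=(-1)^{[A][B]}\langle B,A\rangle$ must be applied repeatedly to realign arguments before pairs of $\nabla$-terms can be combined via~\eqref{eq: LC2}. Once each cyclic permutation is recorded together with its sign factor, the manipulations become essentially mechanical, and no supergeometric input beyond~\eqref{eq: LC1},~\eqref{eq: LC2} and the non-degeneracy of $\langle-,-\rangle$ enters the argument.
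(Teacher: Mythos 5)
Your proposal is correct: it is the standard graded Koszul argument (cyclic permutation of the metric identity \eqref{eq: LC1} combined with torsion-freeness \eqref{eq: LC2} for uniqueness, and definition of $\nabla_XY$ via the right-hand side together with non-degeneracy for existence), and the paper itself offers no proof to compare against, since it imports the statement as \cite[Theorem 4.2]{MS96}, where essentially this derivation is carried out. The only step left implicit in your sketch is verifying that the right-hand side is $\mathcal{O}_M$-(super)linear in $Z$, which is what licenses the appeal to non-degeneracy of the metric to represent it as $2\,\langle \nabla_XY,Z\rangle$.
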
 

A \emph{graded Killing vector field} on a Riemannian supermanifold is a vector field $X$ that preserves the metric, that is,
\begin{align}\label{eq: Kfield}
 Xg(Y,Z)= g([X,Y],Z) +(-1)^{[X][Y]} g(Y, [X,Z]),
\end{align}
where $Y,Z$ are any vector fields, and we note that the set of graded Killing vector fields forms a Lie superalgebra. 
Using \eqref{eq: LC1} and \eqref{eq: LC2}, a vector field $X$ is seen to be a graded Killing vector field if and only if 
\begin{align}\label{eq: KVec}
  g(\nabla_YX, Z) +(-1)^{[X][Z]} g(Y, \nabla_ZX) =0
\end{align}
for all vector fields $Y,Z$. For graded Killing vector fields, the Koszul formula in \propref{prop: Kos} readily simplifies as follows. 
\begin{lemma}\label{lem: Kos}
	Let $X,Y,Z$ be graded Killing vector fields on a Riemannian supermanifold $(M,g)$. Then,
	\begin{align*}
	 2 g(\nabla_XY,Z)= g(X,[Y,Z])+X g(Y,Z)
	   = g(X,[Y,Z])+g([X,Y],Z)+(-1)^{[X][Y]} g(Y, [X,Z]).
	\end{align*}
\end{lemma}

Following~\cite{DeW92}, the \emph{Riemann curvature tensor} on a Riemannian supermanifold $(M, g)$  is defined by
\begin{align*}
 R(X,Y)Z:= \nabla_{[X,Y]}Z- \nabla_{X}\nabla_{Y} Z+ (-1)^{[X][Y]} \nabla_{Y}\nabla_{X}Z,
\end{align*}
where $X,Y,Z\in \mathcal{T}_M(U)$ are any vector fields and $U$ is any open subset of $|M|$. 
Furthermore, the \emph{Ricci curvature tensor} is defined as
\begin{align*} 
 \mathrm{Ric}(Y,Z)= \mathrm{Str}\big(X \mapsto (-1)^{[X][Z]} R(Y,X)Z\big),
\end{align*}
where the supertrace $\mathrm{Str}$ is taken over the $\mathcal{O}_M(U)$-linear superspace $\mathcal{T}_{M}(U)$. 
Similarly, the \emph{scalar curvature} is defined as the  supertrace of the Ricci curvature tensor with respect to the metric $g$.  
  
In the remainder of this paper, we will work with these curvature tensors for homogeneous superspaces.
To present the curvature tensors evaluated at a point $p\in |M|$, let $\{X_{1},\dots,X_d\}$ be an ordered homogeneous basis for 
the tangent superspace $T_p(M)$, with $\{\widebar{X}_1,\ldots,\widebar{X}_d\}$ the corresponding right dual basis. 
For a vector $X\in T_p(M)$, we thus have
\begin{align}\label{XXX}
 X=\sum_{i=1}^d\,\langle X, \widebar{X}_i\rangle_p X_i= \sum_{i=1}^d\,\langle X_i, X\rangle_p \widebar{X}_i. 
\end{align}
At $p\in |M|$, the \emph{Ricci curvature} is given by 
\begin{align}\label{RicR}
 \Ric(X_i, X_k)_p:=\sum_{j=1}^d (-1)^{[X_j]+[X_j][X_k]} \langle R(X_i,X_j)X_k, \widebar{X}_j\rangle_p.
\end{align}
As $\Ric(-,-)_p$ is even, we have $\Ric(X_i,X_j)_p=0$ if $[X_i]\neq[X_j]$. 
Finally, the scalar curvature at $p$ is 
\begin{align}\label{eq: scalar}
 S(p):= \sum_{i=1}^d (-1)^{[X_i]} \Ric(X_i, \widebar{X}_i)_p.
\end{align}

\subsection{Invariant metrics}\label{sec: RRs}

Let $G=(G_0, \mathfrak{g})$ be a Lie supergroup. A graded Riemannian metric $g$ 
on $G$ is called \emph{left-invariant} if $g(X,Y)$ is a constant function on $G$ for all left-invariant vector fields 
$X,Y$. Recall that we  identify the Lie superalgebra $\mathfrak{g}$ of left-invariant vector fields with the tangent space $T_eG$ at the 
identity $e\in G$ via the isomorphism $X\mapsto X_e$. Similarly to the non-super case, the left-invariant graded Riemannian metrics $g$ 
on $G$ are in natural bijection with 
the scalar superproducts $\langle -,-\rangle:=g_e$  on $\mathfrak{g}\cong T_eG$ \cite[Section 4.9]{Goe08}. 

\begin{lemma}\label{lem: leftinv}
Let $g$ be a left-invariant graded Riemannian metric on $(G_0, \mathfrak{g})$. Then,
\begin{align*}
 2\,g(\nabla_{X}Y, Z) 
   = g([X,Y],Z)-g(X, [Y,Z]) -(-1)^{[X][Y]}g(Y, [X,Z]),\qquad X,Y,Z\in\mathfrak{g}.
\end{align*}
\end{lemma}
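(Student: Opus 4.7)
The plan is to deduce the formula directly from the Koszul formula in \propref{prop: Kos}, using two facts special to the left-invariant setting: (i) the scalar superproduct $\langle X,Y\rangle$ of left-invariant vector fields $X,Y$ is a constant function on $G$ (this is precisely the definition of left-invariance recalled just above the lemma), and (ii) any vector field, whether even or odd, annihilates constants when acting as a super-derivation on $\mathcal{O}_G$. Consequently, the three terms $X\langle Y,Z\rangle$, $(-1)^{[X][Y]}Y\langle X,Z\rangle$ and $-(-1)^{([X]+[Y])[Z]}Z\langle X,Y\rangle$ appearing in the Koszul formula all vanish identically on~$G$.

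Substituting this into the formula of \propref{prop: Kos} leaves
\begin{align*}
 2\,\langle \nabla_{X}Y, Z\rangle
 = -\langle X, [Y,Z]\rangle - (-1)^{[X][Y]}\langle Y, [X,Z]\rangle + (-1)^{([X]+[Y])[Z]}\langle Z, [X,Y]\rangle.
\end{align*}
The next step is to rewrite the last term to match the statement of the lemma. Using supersymmetry of $\langle -,-\rangle$, namely $\langle A,B\rangle = (-1)^{[A][B]}\langle B,A\rangle$ for homogeneous $A,B$, and noting that the bracket $[X,Y]$ has parity $[X]+[Y]$, I compute
\begin{align*}
 (-1)^{([X]+[Y])[Z]}\langle Z,[X,Y]\rangle
 = (-1)^{([X]+[Y])[Z]}(-1)^{[Z]([X]+[Y])}\langle [X,Y],Z\rangle = \langle [X,Y],Z\rangle,
\end{align*}
since the two sign factors cancel. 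Combining this with the previous display yields the desired identity.

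The argument is short and essentially a specialisation, so I do not expect any substantial obstacle. The only subtle point is being systematic with the parity bookkeeping, in particular verifying that odd left-invariant vector fields also kill constant superfunctions (this follows because a super-derivation of $\mathcal{O}_G$ satisfies $\varphi(c\cdot 1) = c\,\varphi(1) = 0$ by the super-Leibniz rule applied to $1\cdot 1 = 1$), and then matching the sign conventions in the supersymmetry relation. Everything else is direct substitution, so the proof will comfortably fit in a few lines.
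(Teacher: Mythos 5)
Your proof is correct and follows the same route as the paper: the paper's proof likewise observes that left-invariance makes $\langle Y,Z\rangle$ constant so all the derivative terms $X\,\langle Y,Z\rangle$ in the Koszul formula of \propref{prop: Kos} vanish, and then the stated identity follows (the supersymmetry rewriting $(-1)^{([X]+[Y])[Z]}\langle Z,[X,Y]\rangle=\langle [X,Y],Z\rangle$ that you spell out is implicit there). Your extra check that odd vector fields also annihilate constants is a fine, if minor, addition.
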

\begin{proof}
Since $g$ is left-invariant, $Xg(Y,Z)=0$ for all $X,Y,Z\in \mathfrak{g}$, so the result follows from \propref{prop: Kos}.
\end{proof}

A scalar superproduct $\langle-,-\rangle $ on a Lie superalgebra $\mathfrak{g}\cong T_eG$ is called \emph{$\Ad_G$-invariant} if it is both 
$\Ad_{G_0}$- and $\ad_{\mathfrak{g}}$-invariant, i.e., for every $X, Y, Z\in \mathfrak{g}$ and every $a\in G_0$, we have
\begin{align*}
	\langle \Ad_a(X), \Ad_a(Y) \rangle =\langle X, Y\rangle,\qquad
	\langle [X,Y],Z\rangle = \langle X, [Y,Z]\rangle.
\end{align*}
Note that if $G$ is connected, then $\ad_{\mathfrak{g}}$-invariance implies $\Ad_{G_0}$-invariance.

\begin{definition}\label{def: G-inv}
A graded Riemannian metric on a $G$-homogeneous superspace $M$ is \emph{$G$-invariant} if (i) every element 
of $G_0$ acts on $M$ by isometries, and (ii) the image of the
anti-homomorphism \eqref{eq: gTMM} is contained in the Lie superalgebra of graded Killing vector fields.
\end{definition}
\begin{theorem}\label{thm: invmetric}
\cite[Theorem 3]{Goe08}
Let $G=(G_0, \mathfrak{g})$ be a Lie supergroup and $K=(K_0, \mathfrak{k})$ a closed Lie subsupergroup of $G$. 
The map $g \mapsto g_{K}$ is a bijective correspondence between $G$-invariant graded Riemannian metrics on $G/K$ 
and $\Ad_K$-invariant scalar superproducts on the tangent superspace $T_{K}(G/K)$ at the base point $K$. 
\end{theorem}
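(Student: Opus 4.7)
The plan is to exploit the transitivity of the $G$-action on $G/K$ together with the canonical identification $T_K(G/K)\cong \mathfrak{g}/\mathfrak{k}$ noted in \secref{sec: homosp}. The forward map sends a $G$-invariant graded metric $\langle-,-\rangle$ to its restriction $\langle-,-\rangle_K$ at the base point, and the first task is to verify that $\langle-,-\rangle_K$ is $\Ad_K$-invariant in the Harish-Chandra sense. The $\Ad_{K_0}$-invariance follows directly from condition (i) of $G$-invariance: since $K_0\subseteq G_0$ fixes the base point, its action on $T_K(G/K)$ is the isotropy representation, which under the identification above is the $\Ad_{K_0}$-representation on $\mathfrak{g}/\mathfrak{k}$, and $K_0$ acts by isometries. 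The $\ad_{\mathfrak{k}}$-invariance comes from condition (ii): via \eqref{eq: gTMM}, elements of $\mathfrak{k}$ correspond to graded Killing vector fields that vanish at the base point, so evaluating \eqref{eq: Kfield} there — the left-hand side vanishes, and brackets of fundamental vector fields are identified (up to sign) with the Lie superbracket on $\mathfrak{g}/\mathfrak{k}$ — yields the desired invariance of $\langle-,-\rangle_K$.

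For the inverse direction, I would start from an $\Ad_K$-invariant scalar superproduct $\beta$ on $T_K(G/K)$ and propagate $\beta$ across $G/K$ by the $G$-action. On the reduced manifold $G_0/K_0$, this is the classical construction: $\Ad_{K_0}$-invariance of $\beta$ guarantees that the pointwise prescription $\bigl\langle v,w\bigr\rangle_{gK}:=\beta\bigl((dL_{g^{-1}})v,\,(dL_{g^{-1}})w\bigr)$ is independent of the coset representative. To promote this to a sheaf morphism $\mathcal{T}_{G/K}\otimes_{\mathcal{O}_{G/K}}\mathcal{T}_{G/K}\to \mathcal{O}_{G/K}$, I would use the Harish-Chandra picture for $G$ and the fact that, by transitivity, the image of \eqref{eq: gTMM} generates $\mathcal{T}_{G/K}$ as an $\mathcal{O}_{G/K}$-module in a neighbourhood of the base point. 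On such generators I would define $\langle \widetilde X,\widetilde Y\rangle$ by the analogous $\Ad$-type formula involving $\beta$, then extend $\mathcal{O}_{G/K}$-linearly; the $\ad_{\mathfrak{k}}$-invariance of $\beta$ is what makes this well-defined after passing from $G$ to the quotient $G/K$.

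Bijectivity then reduces to two routine checks: applying the inverse construction to the restriction of a $G$-invariant metric recovers the original metric, essentially by $G$-invariance, while evaluating the metric built from $\beta$ at the base point gives back $\beta$ tautologically. I expect the main obstacle to lie not in these pointwise verifications, which mirror the non-super story, but in the sheaf-theoretic extension in the inverse direction, since $\mathfrak{g}_{\bar 1}$ does not integrate to a genuine group action on $|G/K|$. Consequently, one has to check directly that the constructed $\mathcal{O}_{G/K}$-bilinear morphism is even, supersymmetric, and non-degenerate, and that fundamental vector fields associated with odd elements of $\mathfrak{g}$ are graded Killing in the sense of \eqref{eq: Kfield}. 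The $\Ad_K$-invariance of $\beta$, once unpacked into its $\Ad_{K_0}$- and $\ad_{\mathfrak{k}}$-components, supplies exactly the identities needed for these verifications.
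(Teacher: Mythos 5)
First, a point of comparison: the paper does not prove this statement at all — it is imported verbatim from \cite[Theorem 3]{Goe08} — so there is no in-paper argument to measure your proposal against; it has to stand on its own. On its own terms, your forward direction is essentially fine: restrict at the base point, get $\Ad_{K_0}$-invariance from condition (i) of $G$-invariance, and get $\ad_{\mathfrak{k}}$-invariance by evaluating \eqref{eq: Kfield} at $K$, using \eqref{eq: opp} and the fact that the fields attached to $\mathfrak{k}$ via \eqref{eq: gTMM} vanish there.

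The genuine gap is in the inverse direction, which you compress into ``define $\langle \widetilde X,\widetilde Y\rangle$ by the analogous $\Ad$-type formula involving $\beta$ and extend $\mathcal{O}_{G/K}$-linearly.'' In the super setting that formula is never actually specified: a graded Riemannian metric must assign to a pair of vector fields a \emph{superfunction}, i.e.\ an element of $\mathcal{O}_{G/K}(U)$ including its nilpotent part, while transport by $dL_{g^{-1}}$ only prescribes values at the reduced points $gK_0$. Moreover, the fundamental vector fields generate $\mathcal{T}_{G/K}$ but are far from a free generating set, so an $\mathcal{O}_{G/K}$-bilinear extension from prescribed values on them must be checked against \emph{all} $\mathcal{O}_{G/K}$-linear relations among them, not just the relation ``kernel $=\mathfrak{k}$ at the base point''; your appeal to $\ad_{\mathfrak{k}}$-invariance does not address this. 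Closing this is precisely the content of Goertsches's machinery: one identifies $\mathcal{T}_{G/K}$ (and the relevant tensor constructions on it) with the sheaf associated to the isotropy representation of the Harish-Chandra pair $K$ on $\mathfrak{g}/\mathfrak{k}$, so that $G$-invariant sections correspond bijectively to $\Ad_K$-invariant elements of the fibre at $K$; equivalently, one builds the tensor upstairs on $G$ from $\beta$ and shows it descends through the $K$-invariant-sections description of $\mathcal{O}_{G/K}$. That construction, together with the injectivity statement you also only assert (that a $G$-invariant metric, including the nilpotent parts of the superfunctions $\langle \widetilde X,\widetilde Y\rangle$, is determined by its value at $K$), is the substantive core of the theorem and is missing from your proposal.
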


\thmref{thm: invmetric} implies that every $G$-invariant graded Riemannian metric $g$ on $G/K$ is uniquely determined by its evaluation 
$g_K$ at the base point $K$. Using  the method in the proof of this result in~\cite{Goe08}, one shows that $\mathrm{Ric}(g)$ on $G/K$ is 
uniquely determined by  $\mathrm{Ric}(g)_K$, even though the Ricci tensor may be degenerate. Here, $\mathrm{Ric}(g)_K$ is a 
supersymmetric, $\mathrm{Ad}_K$-invariant even bilinear form on~$T_K(G/K)$.  More generally \cite[Theorem 4.16]{San10},
there is a bijective correspondence between $G$-invariant tensor fields of type $(r,s)$ on $G/K$ and isotropy-invariant tensors 
of type $(r,s)$ on $T_{K}(G/K)$, via evaluation at~$K$. In our case, both $g$ and $\mathrm{Ric}(g)$ are $G$-invariant tensor fields 
of type $(0,2)$. Thus, to solve the Einstein equation $\mathrm{Ric}(g) = c g$
on a homogeneous Riemannian supermanifold $(G/K,g)$, it is  necessary and sufficient to solve the corresponding equation at the 
base point $K$, that is, to find $g_K$ satisfying $\mathrm{Ric}(g)_K = c g_K$.
The bijective correspondence ensures that a solution at $K$ uniquely extends to a $G$-invariant solution on the entire supermanifold.

\begin{example}\label{sec: sl11}
In the spirit of~\cite{Mil76} on left-invariant metrics on Lie groups, we now compute the Ricci curvature of the 
three-dimensional Lie supergroup $\SL(1|1)$, and demonstrate that every left-invariant graded Riemannian metric on 
$\SL(1|1)$ is an Einstein metric. Although $\SL(1|1)$ is not part of our $A$-type classification in \secref{subsubsec: typeA} 
and \secref{sec: flagA}, we find that the analysis below serves to illustrate some of the key notions.

The Lie superalgebra $\mathfrak{sl}(1|1)$ of the Lie supergroup $\SL(1|1)$ has basis
\begin{align*}
 X_1=\begin{pmatrix}
  	1&0\\ 0&1
  \end{pmatrix},\qquad 
  X_2= \begin{pmatrix}
  	0&1\\ 0&0
  \end{pmatrix},\qquad
  X_3=\begin{pmatrix}
  	0&0\\ 1&0
  \end{pmatrix},
\end{align*}
with graded commutation relations
\begin{align*}
 [X_1,X_2]=[X_1,X_3]=0, \qquad [X_2,X_3]=X_1,
\end{align*}
where $X_1$ is even and $X_2,X_3$ odd. Clearly, the centre of $\mathfrak{sl}(1|1)$ is spanned by $X_1$. 

Let $g$ be a left-invariant graded Riemannian metric on $\SL(1|1)$. 
Then, $g$ can be identified with a scalar superproduct $\langle-, -\rangle$ on $\mathfrak{sl}(1|1)$, which is a non-degenerate 
supersymmetric even bilinear form. This implies that 
$\langle X_1, X_2\rangle =\langle X_1, X_3\rangle=\langle X_2, X_2\rangle=\langle X_3, X_3\rangle= 0$ and
$\langle X_2, X_3\rangle =-\langle X_3, X_2\rangle$, so relative to the ordered basis $\{X_1,X_2,X_3\}$, we may assume 
that the metric $g$ has the matrix form
\begin{align*}
	(g(X_i,X_k))_{3\times 3}=\begin{pmatrix}
	x_1&0&0\\ 
	0&0&x_2\\ 
	0&-x_2&0
\end{pmatrix}, \qquad x_1,x_2\in\mathbb{R}^\times.
\end{align*}
The corresponding right dual basis is thus given by 
$\{\widebar{X}_1=\frac{1}{x_1}X_1,\widebar{X}_2=\frac{1}{x_2}X_3,\widebar{X}_3=-\frac{1}{x_2}X_2\}$.

Using \lemref{lem: leftinv}, we find
$\nabla_{X_1}X_1=\nabla_{X_2}X_2=\nabla_{X_3}X_3=0$ and
\begin{align*}
 \nabla_{X_1}X_2=\nabla_{X_2}X_1=-\frac{x_1}{2x_2}X_2,\qquad
 \nabla_{X_1}X_3=\nabla_{X_3}X_1=\frac{x_1}{2x_2}X_3,\qquad 
 \nabla_{X_2}X_3=\nabla_{X_3}X_2=\frac{1}{2}X_1,
\end{align*}
from which it follows that the Ricci curvature at the identity $e\in\SL(1|1)$ is given by
\begin{align*}
 \Ric(X_i,X_k)=\frac{x_1}{2x_2^2}\,g(X_i,X_k).
\end{align*}
We thus conclude that every left-invariant graded Riemannian metric on $\SL(1|1)$ is an Einstein metric.
\end{example}

\subsection{Curvature formulas}
\label{subsec: curvs}

Let $G=(G_0,\mathfrak{g})$ be a connected Lie supergroup, with $\mathfrak{g}$ a compact real form of a basic classical 
Lie superalgebra~$\mathfrak{g}^{\mathbb{C}}$. As $\mathfrak{g}^{\mathbb{C}}$ is basic classical, there exists a non-degenerate 
supersymmetric even bilinear form  $Q$ on 
$\mathfrak{g}^{\mathbb{C}}$, with  $\ad_{\mathfrak{g}^{\mathbb{C}}}$-invariance given by
\begin{align*}
 Q([X,Y],Z)=Q(X,[Y,Z]), \qquad X,Y,Z\in \mathfrak{g}^{\mathbb{C}}.
\end{align*}
Let $K=(K_0,\mathfrak{k})$ be a closed connected Lie subsupergroup of $G$,
and let the homogeneous superspace $M=G/K$ be equipped with a $G$-invariant graded Riemannian metric 
$g$. Our goal here is to give explicit formulas for the various curvatures of this metric at the point $K$ of $M$. 
As discussed following \thmref{thm: invmetric}, these formulas determine the full curvature tensors.

Since $M$ is equipped with a $G$-invariant metric, the image under the anti-homomorphism \eqref{eq: gTMM}
of the Lie superalgebra $\mathfrak{g}$ of left-invariant vector fields is contained in 
the Lie superalgebra of graded Killing fields, cf.~\defref{def: G-inv}. As $\mathfrak{g}$ is simple, the anti-homomorphism,
mapping $X\mapsto X^{\dagger}$, is injective, so we can identity the left-invariant vector field $X\in \mathfrak{g}$ 
with the Killing vector field $X^{\dagger}$ on $(M,g)$.
With $[-,-]$ denoting the Lie superbracket of vector fields on $M$, and $[-,-]_{\mathfrak{g}}$ the Lie superbracket 
of~$\mathfrak{g}$, the anti-homomorphicity then implies that
\begin{align}\label{eq: opp}
 [X^\dagger, Y^\dagger]=-[X,Y]_{\mathfrak{g}}^\dagger,\qquad X,Y\in\mathfrak{g}.
\end{align}

We now identify the Lie subsuperlagebra $\mathfrak{k}$ of $\mathfrak{g}$ with the subsuperalgebra of the Killing vector fields 
$X^{\dagger}$ which vanish at the base point~$K$, and choose once and for all a $Q$-orthogonal complement $\mathfrak{m}$ 
of $\mathfrak{k}$. We thus have the vector superspace decomposition $\mathfrak{g}=\mathfrak{k}\,\oplus\,\mathfrak{m}$ 
and ditto isomorphisms $\mathfrak{g}/\mathfrak{k}\cong \mathfrak{m}\cong T_KM $, where we identify $\mathfrak{m}$ with $T_KM$ 
by evaluating the corresponding graded Killing vector field at the point $K$.  In this way, the isotropy representation of $K$ in $T_KM$ 
is identified with the restriction of the $\Ad_G$-representation of $K$ to~$\mathfrak{m}$. 

We denote by $\langle-,-\rangle$ the scalar superproduct on $\mathfrak{m}$ induced by $g_K$ through the identification of 
$\mathfrak{m}$ with~$T_KM$. By \thmref{thm: invmetric}, $\langle-,-\rangle$ is an $\mathrm{Ad}_K$-invariant, supersymmetric, 
even bilinear form on $\mathfrak{m}$. As $K$ is connected,  the $\mathrm{Ad}_K$-invariance is equivalent to 
$\ad_{\mathfrak{k}}$-invariance, that is,
\begin{align*}
 \langle X,[Y,Z]_{\mathfrak{m}}\rangle=\langle [X,Y]_{\mathfrak{m}},Z\rangle,
\end{align*}
for all $X,Z\in\mathfrak{m}$ and $Y\in\mathfrak{k}$. 
We extend the superproduct $\langle-,-\rangle$ on $\mathfrak{m}$ to an $\Ad_K$-invariant superproduct 
on $\mathfrak{g}$ by choosing on $\mathfrak{k}$ an $\Ad_{K}$-invariant superproduct, such as $Q|_{\mathfrak{k}}$,
and setting $\langle\mathfrak{k},\mathfrak{m}\rangle=0$. In the following, we primarily use $X,Y,Z,\ldots$ to denote elements of 
$\mathfrak{g}$, despite the use of the same symbols to represent vector fields in \secref{sec: gradedR} and \secref{sec: RRs}. 
We hope this practice will not cause any confusion.

The next result evaluates the Levi-Civita connection on $(M,g)$ at the base point $K$. We denote by $Z_{\mathfrak{k}}$, 
respectively $Z_{\mathfrak{m}}$, the $\mathfrak{k}$- and $\mathfrak{m}$-components of $Z\in\mathfrak{g}$, so
\begin{align}\label{gkm}
 [-,-]_{\mathfrak{g}}= [-,-]_{\mathfrak{k}}+ [-,-]_{\mathfrak{m}}.
\end{align}
We follow~\cite[Proposition 7.28]{Bes87} on classical homogeneous spaces and introduce the even bilinear map
$U: \mathfrak{m}\times \mathfrak{m}\to \mathfrak{m}$ given implicitly by
\begin{align}\label{eq: defU}
 \langle U(X,Y), Z\rangle 
  = -\langle X, [Y,Z]_{\mathfrak{m}}\rangle -(-1)^{[X][Y]} \langle Y, [X,Z]_{\mathfrak{m}}\rangle,\qquad X,Y,Z\in\mathfrak{m}.
\end{align}
It readily follows that $U(X,Y)=(-1)^{[X][Y]} U(Y,X)$ for any homogeneous elements $X,Y\in \mathfrak{m}$.
\begin{remark}
To reduce the number of fractions in the ensuing expressions, especially in \propref{prop: Rie} and \secref{sec: pfmain},
we have opted to scale $U$ by $2$ relative to the similar function in~\cite{Bes87}.
\end{remark}
\begin{proposition}\label{prop: con}
Let $M=G/K$ be a homogeneous superspace with a $G$-invariant graded Riemannian metric $g$,
and let $X^\dagger, Y^\dagger$ be graded Killing vector fields associated with $X,Y\in \mathfrak{m}$. Then,
\begin{align*} 
 2(\nabla_{X^\dagger} Y^\dagger)_K=-[X,Y]_{\mathfrak{m}}+ U(X,Y).
\end{align*}
\end{proposition}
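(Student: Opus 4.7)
The plan is to apply the simplified Koszul formula of \lemref{lem: Kos} to the Killing fields $X^{\dagger}, Y^{\dagger}, Z^{\dagger}$ associated to arbitrary $X, Y, Z \in \mathfrak{m}$, and then evaluate at the basepoint $K \in M$, where the $\mathfrak{k}$-components of the bracket are killed off by the identification $T_{K}M \cong \mathfrak{m}$. The result will fall out by comparison with the defining relation \eqref{eq: defU} of $U$ and non-degeneracy of $\langle-,-\rangle$ on $\mathfrak{m}$.

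First I would note that because the metric is $G$-invariant, the fundamental vector fields $X^{\dagger}, Y^{\dagger}, Z^{\dagger}$ (arising from \eqref{eq: gTMM}) are graded Killing, so \lemref{lem: Kos} applies and yields
\begin{align*}
 2\,\langle \nabla_{X^{\dagger}}Y^{\dagger}, Z^{\dagger}\rangle
  = \langle X^{\dagger},[Y^{\dagger},Z^{\dagger}]\rangle
  +\langle[X^{\dagger},Y^{\dagger}],Z^{\dagger}\rangle
  +(-1)^{[X][Y]}\langle Y^{\dagger}, [X^{\dagger},Z^{\dagger}]\rangle.
\end{align*}
Next I would use the sign reversal \eqref{eq: opp} to replace every bracket of daggered fields by the negative dagger of the $\mathfrak{g}$-bracket, and then evaluate the resulting identity at the point $K$. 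Under the identification $T_{K}M \cong \mathfrak{g}/\mathfrak{k} \cong \mathfrak{m}$, the $\mathfrak{k}$-parts of $[X,Y]_{\mathfrak{g}}$, $[Y,Z]_{\mathfrak{g}}$ and $[X,Z]_{\mathfrak{g}}$ become trivial Killing fields at $K$, so by \eqref{gkm} only the $\mathfrak{m}$-components survive in the pairings with $X,Y,Z \in \mathfrak{m}$.

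Putting this together, the right-hand side evaluated at $K$ becomes
\begin{align*}
 -\langle X,[Y,Z]_{\mathfrak{m}}\rangle
 -\langle [X,Y]_{\mathfrak{m}},Z\rangle
 -(-1)^{[X][Y]}\langle Y,[X,Z]_{\mathfrak{m}}\rangle.
\end{align*}
Comparing with \eqref{eq: defU}, the first and third terms together equal $\langle U(X,Y), Z\rangle$, so
\begin{align*}
 2\,\langle (\nabla_{X^{\dagger}}Y^{\dagger})_{K}, Z\rangle
  =\langle -[X,Y]_{\mathfrak{m}}+U(X,Y),\,Z\rangle.
\end{align*}
Since $Z \in \mathfrak{m}$ is arbitrary and $\langle-,-\rangle$ is non-degenerate on $\mathfrak{m}$, the claimed formula follows.

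I do not anticipate a genuine obstacle here; the only point that requires care is bookkeeping with the sign flip \eqref{eq: opp} (and thus ensuring the correct sign in front of $[X,Y]_{\mathfrak{m}}$), together with checking that the symmetry of $U$ in the graded sense is consistent with the parities appearing in the Koszul formula. Everything else is a direct transcription of the classical argument of \cite[Proposition 7.28]{Bes87} into the $\mathbb{Z}_{2}$-graded setting, relying crucially on \lemref{lem: Kos} and \thmref{thm: invmetric}.
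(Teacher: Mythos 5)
Your proposal is correct and follows essentially the same route as the paper: apply \lemref{lem: Kos} to the graded Killing fields $X^\dagger,Y^\dagger,Z^\dagger$, flip signs via \eqref{eq: opp}, evaluate at $K$ (where the $\mathfrak{k}$-components vanish under the identification $T_KM\cong\mathfrak{m}$), and identify the remaining terms with $-\langle[X,Y]_{\mathfrak{m}},Z\rangle+\langle U(X,Y),Z\rangle$ using \eqref{eq: defU} and non-degeneracy. No gaps.
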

\begin{proof}
By \lemref{lem: Kos} and $\eqnref{eq: opp}$, for any $Z\in \mathfrak{m}$ (with associated graded Killing vector field 
$Z^\dagger$), we have 
\begin{align*}
    2g(\nabla_{X^\dagger}Y^\dagger, Z^\dagger) 
    =-g(X^\dagger, [Y,Z]_{\mathfrak{g}}^\dagger)
      -g([X,Y]^\dagger_{\mathfrak{g}}, Z^\dagger)
      -(-1)^{[X^{\dagger}][Y^{\dagger}]}g(Y^\dagger, [X,Z]^{\dagger}_{\mathfrak{g}}).
\end{align*}
Evaluating at $K$ and using \eqref{eq: defU}, having identified $\mathfrak{m}$ with $T_KM$, we obtain
\begin{align*}
 2\,\langle (\nabla_{X^\dagger}Y^\dagger)_K,Z\rangle= - \langle [X,Y]_{\mathfrak{m}}, Z\rangle +\langle U(X,Y),Z\rangle.
\end{align*}
This completes the proof.
\end{proof}

We note that the restriction of $\langle-,-\rangle$ to $\mathfrak{m}_{\bar{0}}\times \mathfrak{m}_{\bar{0}}$ is symmetric 
but may be indefinite, that the restriction to $\mathfrak{m}_{\bar{1}}\times \mathfrak{m}_{\bar{1}}$ is skew-symmetric, 
that $\dim(\mathfrak{m}_{\bar{1}})$ is even, and that
$\langle\mathfrak{m}_{\bar{0}},\mathfrak{m}_{\bar{1}}\rangle=\langle\mathfrak{m}_{\bar{1}},\mathfrak{m}_{\bar{0}}\rangle=0$.
It follows that  $\mathfrak{m}$ admits an ordered \emph{$g$-normalised homogeneous basis} 
$\{X_1, \dots, X_c, X_{c+1},\dots,X_d\}$, where $[X_i]=\bar{0}$ for $i\leq c$ and $[X_i]=\bar{1}$ for $i>c$, such that
the corresponding matrix representation of $\langle-, -\rangle$ is of the form
 \begin{align}\label{eq: normbasis}
     (\langle X_i, X_j\rangle)_{d\times d}=\begin{pmatrix}
      I_p& 0& 0& 0\\ 
      0& -I_q& 0&0\\ 
      0&0&0&I_r\\ 
      0&0&-I_r&0
    \end{pmatrix},
 \end{align}
where $p+q=c$ and $2r=d-c$. The right dual basis elements are given by
\begin{align}\label{eq: dualX}
\begin{array}{rllll}
 &\widebar{X}_i= X_i, \quad &1\leq i\leq p; \qquad\  &\widebar{X}_{i}=X_{i+r}, \quad &c< i\leq c+r,
  \\[.2cm] 
  &\widebar{X}_i=-X_i, \quad  &p< i\leq c; \qquad\  &\widebar{X}_{i}=-X_{i-r},  \quad &d-r<i\leq d.
\end{array}
\end{align}
We have the following useful observation.

\begin{lemma}\label{lem: fXX}
Let $f$ be an even bilinear map on $\mathfrak{m}\times\mathfrak{m}$. Then,
\begin{align*}
 \sum_{j=1}^df(\widebar{X}_j,X_j)=\sum_{j=1}^d(-1)^{[X_j]}f(X_j,\widebar{X}_j).
\end{align*}
\end{lemma}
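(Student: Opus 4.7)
The plan is to verify the identity by direct computation, using the explicit description of the $g$-normalised homogeneous basis $\{X_1,\ldots,X_d\}$ in \eqref{eq: normbasis} together with the formulas for the right dual basis in \eqref{eq: dualX}. Since $f$ is even bilinear, the sums on both sides only receive nonzero contributions when the two arguments have matching parities; I will split the analysis into the even block ($1\le j\le c$) and the odd block ($c<j\le d$) and show that each block balances separately.

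First, I would dispatch the even block. For $1\le j\le p$ we have $\widebar{X}_j=X_j$ and $[X_j]=\bar 0$, so the $j$-th term on either side equals $f(X_j,X_j)$; for $p<j\le c$ we have $\widebar{X}_j=-X_j$ and $[X_j]=\bar 0$, so both sides give $-f(X_j,X_j)$. Thus the even parts of the two sums agree term by term.

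Next, I would treat the odd block, where the sign $(-1)^{[X_j]}=-1$ becomes active and the duality permutes basis vectors. Indexing so that for $c<j\le c+r$ we have $\widebar{X}_j=X_{j+r}$ and for $d-r<j\le d$ we have $\widebar{X}_j=-X_{j-r}$, the left-hand odd contribution is
\begin{align*}
 \sum_{j=c+1}^{c+r}f(X_{j+r},X_j)\;-\;\sum_{j=d-r+1}^{d}f(X_{j-r},X_j),
\end{align*}
while the right-hand odd contribution is
\begin{align*}
 -\sum_{j=c+1}^{c+r}f(X_j,X_{j+r})\;+\;\sum_{j=d-r+1}^{d}f(X_j,X_{j-r}).
\end{align*}
Shifting the summation index $j\mapsto j+r$ in the second sum of each line swaps the two sums on each side into a common form indexed by $k\in\{c+1,\ldots,c+r\}$, and the two lines then coincide. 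Equivalently, one may observe that $\widebar{X}_j=\epsilon_j X_{\sigma(j)}$ for an involution $\sigma$ on $\{1,\ldots,d\}$ with signs $\epsilon_j=\pm 1$ satisfying $\epsilon_j\epsilon_{\sigma(j)}=(-1)^{[X_j]}$ (this follows from $\widebar{\widebar{X}}_j=(-1)^{[X_j]}X_j$, a short check against \eqref{eq: dualX}), and the substitution $k=\sigma(j)$ then turns the left-hand side into the right-hand side.

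There is no real obstacle here: the only bookkeeping required is the correct pairing of odd basis vectors under the duality, for which the formulas \eqref{eq: dualX} are explicit. I would keep the write-up brief, separating the even and odd blocks as above and performing the one index shift needed in the odd block.
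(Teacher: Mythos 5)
Your proposal is correct and follows essentially the same route as the paper: both split the sum according to the explicit block structure of \eqref{eq: normbasis}--\eqref{eq: dualX} and re-index the odd block (the paper's proof is exactly this computation, with the shift $j\mapsto j\pm r$ absorbed into the second displayed line). The involution/sign reformulation you sketch is a harmless repackaging of the same bookkeeping.
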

\begin{proof}
Using \eqref{eq: dualX}, we have
\begin{align*}
 \sum_{j=1}^df(\widebar{X}_j,X_j)
 &=\sum_{j=1}^pf(X_j,X_j)
   +\sum_{j=p+1}^cf(-X_j,X_j)
   +\sum_{j=c+1}^{c+r}f(X_{j+r},X_j)
   +\sum_{j=c+r+1}^df(-X_{j-r},X_j)
 \\[.1cm]
 &=\sum_{j=1}^pf(X_j,X_j)
   +\sum_{j=p+1}^cf(X_j,-X_j)
   -\sum_{j=c+r+1}^{d}f(X_j,-X_{j-r})
   -\sum_{j=c+1}^{d-r}f(X_j,X_{j+r}),
\end{align*}
which is recognised as the righthand side of the stated relation.
\end{proof}
\begin{remark}
Using the super-skew-symmetry of $[-,-]_{\mathfrak{m}}$, it readily follows from \lemref{lem: fXX} that
\begin{align}\label{XX0}
 \sum_{j=1}^d\,[\widebar{X}_j,X_j]_{\mathfrak{m}}=\sum_{j=1}^d(-1)^{[X_j]}[X_j, \widebar{X}_j]_{\mathfrak{m}}=0.
\end{align}
Using the $\ad_{\mathfrak{k}}$-invariance of $\langle-,-\rangle$, it also follows that, for every $Y\in\mathfrak{k}_{\bar{0}}$,
\begin{align}\label{YXX}
 \sum_{j=1}^d\,\langle[Y,\widebar{X}_j]_{\mathfrak{m}},X_j\rangle=0.
\end{align}
\end{remark}

We are now in a position to state and prove our main curvature formulas: \propref{prop: Rie}, \thmref{thm: Ric} 
and \corref{coro: scur}. The proof of the expression for the Ricci curvature given in \thmref{thm: Ric} is lengthy and deferred 
until \secref{sec: pfmain}. We emphasise that the formulas below only give the evaluations of the corresponding curvature tensors at the 
base point~$K$. However, by~\cite[Theorem~4.16]{San10} (cf.~the discussion following \thmref{thm: invmetric}), 
this data determines the full curvature tensors.
\begin{proposition}\label{prop: Rie}
Let $M=G/K$ be a homogeneous superspace with $G$-invariant graded Riemannian metric $g$, where 
$G=(G_0, \mathfrak{g})$ is a connected Lie supergroup, $\mathfrak{g}$ a compact real form of a basic classical Lie 
superalgebra, and $K=(K_0,\mathfrak{k})$ a connected closed Lie subsupergroup of $G$. Let $\mathfrak{m}$ be a 
$Q$-orthogonal complement of $\mathfrak{k}$, where $Q$ is an $\ad_{\mathfrak{g}^{\mathbb{C}}}$-invariant 
non-degenerate supersymmetric even bilinear form on $\mathfrak{g}^{\mathbb{C}}$, and let 
$X_i,X_j,X_k,X_\ell\in\mathfrak{m}$. Then, at the base point $K$, 
\begin{align*}
   4\, \langle (R &(X_i, X_j)X_k, X_{\ell}\rangle
   =-2\,\langle [X_i,X_j]_{\mathfrak{m}}, [X_k, X_{\ell}]_{\mathfrak{m}}\rangle 
    +(-1)^{[X_i]([X_j]+[X_k])} \langle [X_j, X_k]_{\mathfrak{m}}, [X_i, X_{\ell}]_{\mathfrak{m}}\rangle
    \\[.15cm] 
   &-(-1)^{[X_j][X_k]} \langle [X_i,X_k]_{\mathfrak{m}}, [X_j,X_{\ell}]_{\mathfrak{m}}\rangle
    -\langle X_i, [[X_j,X_k]_{\mathfrak{m}}, X_{\ell}]_{\mathfrak{m}}\rangle
    +\langle [X_i,[X_j, X_k]_{\mathfrak{m}}]_{\mathfrak{m}},X_\ell\rangle
    \\[.15cm] 
   &+(-1)^{[X_i][X_j]} \langle X_j, [[X_i,X_k]_{\mathfrak{m}}, X_{\ell}]_{\mathfrak{m}}\rangle
     +(-1)^{[X_j][X_k]}\langle [[X_i, X_k]_{\mathfrak{m}}, X_j]_{\mathfrak{m}},X_\ell\rangle
    \\[.15cm] 
   &+(-1)^{[X_i]([X_j]+[X_k])} \langle X_j, [X_k, [X_i, X_{\ell}]_{\mathfrak{m}}]_{\mathfrak{m}}\rangle
     +(-1)^{[X_i][X_j]+[X_k][X_\ell]} \langle [X_j, [X_i, X_{\ell}]_{\mathfrak{m}}]_{\mathfrak{m}},X_k\rangle
   \\[.15cm]
   &-(-1)^{[X_j][X_k]} \langle X_i, [X_k, [X_j, X_{\ell}]_{\mathfrak{m}}]_{\mathfrak{m}}\rangle 
     -(-1)^{[X_k][X_\ell]} \langle [X_i, [X_j, X_{\ell}]_{\mathfrak{m}}]_{\mathfrak{m}},X_k\rangle
   \\[.15cm]
   &
     +2\,\langle X_i, [X_j, [X_k, X_{\ell}]_{\mathfrak{g}}]_{\mathfrak{m}}\rangle 
    -2(-1)^{[X_i][X_j]} \langle X_j, [X_i, [X_k, X_{\ell}]_{\mathfrak{g}}]_{\mathfrak{m}}\rangle
   \\[.15cm] 
   &+(-1)^{[X_i]([X_j]+[X_k])} \langle U(X_j, X_k), U(X_i, X_{\ell})\rangle 
     -(-1)^{[X_j][X_{k}]} \langle U(X_i, X_k), U(X_j, X_{\ell}) \rangle.
\end{align*}
\end{proposition}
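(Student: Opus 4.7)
The plan is to evaluate $\langle R(X_i^\dagger, X_j^\dagger) X_k^\dagger, X_\ell^\dagger\rangle$ at the coset $K \in M$, where each $X_i^\dagger$ denotes the graded Killing vector field on $M$ associated with $X_i \in \mathfrak{m}$ via the identifications set up in \secref{subsec: curvs}. Expanding the curvature via the definition $R(X,Y)Z = \nabla_{[X,Y]}Z - \nabla_X\nabla_Y Z + (-1)^{[X][Y]}\nabla_Y\nabla_X Z$ and using~\eqref{eq: opp} together with the decomposition~\eqref{gkm}, the problem splits into three pieces. Multiplying through by~$4$ is natural, since each application of Proposition~\ref{prop: con} carries a factor of~$\tfrac{1}{2}$.

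For the bracket term, I would use $[X_i^\dagger, X_j^\dagger] = -[X_i, X_j]_\mathfrak{g}^\dagger$ from~\eqref{eq: opp} and split $[X_i, X_j]_\mathfrak{g} = [X_i, X_j]_\mathfrak{k} + [X_i, X_j]_\mathfrak{m}$. The $\mathfrak{k}$-component yields a graded Killing field vanishing at~$K$, so the corresponding covariant derivative vanishes there by tensoriality of~$\nabla$ in its first argument. The $\mathfrak{m}$-component is handled directly by Proposition~\ref{prop: con}, producing upon pairing with~$X_\ell$ the triple-bracket term $\langle [[X_i, X_j]_\mathfrak{m}, X_k]_\mathfrak{m}, X_\ell\rangle$ and a $U$-term $\langle U([X_i, X_j]_\mathfrak{m}, X_k), X_\ell\rangle$. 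The defining property~\eqref{eq: defU} expands the latter into bracket-bracket pairings such as $\langle [X_i, X_j]_\mathfrak{m}, [X_k, X_\ell]_\mathfrak{m}\rangle$.

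For each iterated covariant-derivative term, say $\langle \nabla_{X_i^\dagger} \nabla_{X_j^\dagger} X_k^\dagger, X_\ell^\dagger\rangle$ at~$K$, I would use metric compatibility~\eqref{eq: LC1} to express it as $X_i^\dagger\langle \nabla_{X_j^\dagger} X_k^\dagger, X_\ell^\dagger\rangle - (-1)^{[X_i]([X_j]+[X_k])} \langle \nabla_{X_j^\dagger} X_k^\dagger, \nabla_{X_i^\dagger} X_\ell^\dagger\rangle$. The second piece is a pairing of two Koszul expressions; applying Proposition~\ref{prop: con} to each factor produces the bracket-bracket summands along with the $\langle U(X_j, X_k), U(X_i, X_\ell)\rangle$-type summand seen in the formula. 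For the first piece, I would use Lemma~\ref{lem: Kos} to rewrite $\langle \nabla_{X_j^\dagger} X_k^\dagger, X_\ell^\dagger\rangle$ as a sum of metric pairings involving $\mathfrak{g}$-brackets of Killing fields, and then differentiate by the graded Killing identity~\eqref{eq: Kfield} applied to $X_i^\dagger$; at~$K$, re-splitting each $\mathfrak{g}$-bracket into its $\mathfrak{k}$- and $\mathfrak{m}$-components (using $\ad_\mathfrak{k}$-invariance of $\langle-,-\rangle$ to absorb the $\mathfrak{k}$-parts) yields the summands of the form $\langle X_i, [X_j, [X_k, X_\ell]_\mathfrak{g}]_\mathfrak{m}\rangle$ together with the various iterated $\mathfrak{m}$-bracket pairings.

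The principal difficulty is combinatorial bookkeeping: tracking the Koszul signs $(-1)^{[X][Y]}$ correctly through multiple exchanges of vector-field arguments, consistently distinguishing $\mathfrak{g}$-brackets from their $\mathfrak{m}$-projections (these interact non-trivially with $\langle-,-\rangle$ via $\ad_\mathfrak{k}$-invariance), and consolidating the many structurally distinct summands before matching them to the right-hand side. Once the sign and summation conventions are pinned down, no further ideas are needed: the super-Jacobi identity and the symmetry $U(X,Y)=(-1)^{[X][Y]}U(Y,X)$ collect the terms into the form asserted.
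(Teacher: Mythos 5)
Your plan follows essentially the same route as the paper's proof: expand the definition of $R$, use metric compatibility together with the Killing-field Koszul formula (\lemref{lem: Kos}) and the Killing identity \eqref{eq: Kfield} for the derivative terms, apply \propref{prop: con} to the $\langle\nabla\,\cdot,\nabla\,\cdot\rangle$ pairings, and finish with \eqref{eq: defU} and $\ad_{\mathfrak{k}}$-invariance. The only (harmless) local deviation is your treatment of $\nabla_{[X_i^\dagger,X_j^\dagger]}X_k^\dagger$: the paper moves the derivative to the $X_\ell$-slot via the Killing identity \eqref{eq: KVec} and then applies \propref{prop: con} to $\nabla_{X_\ell^\dagger}X_k^\dagger$, whereas you use $\mathcal{O}_M$-linearity of $\nabla$ in its first argument together with the $\mathfrak{k}\oplus\mathfrak{m}$ splitting of $[X_i,X_j]_{\mathfrak{g}}$ and apply \propref{prop: con} to $\nabla_{[X_i,X_j]_{\mathfrak{m}}^\dagger}X_k^\dagger$ -- both yield the same contributions after \eqref{eq: defU} and the supersymmetry of $U$.
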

\begin{proof}
With the shorthand notation $\nabla_i=\nabla_{X_i}$ and $[i]=[X_i]$ for each $i$, and
applying \eqref{eq: LC1} and \eqref{eq: KVec} to the defining expression for the Riemannian curvature, we get
\begin{align*} 
   \langle R(X_i, X_j)X_k, X_{\ell}\rangle 
   &= \langle \nabla_{[X_i,X_j]}X_k, X_{\ell}\rangle- \langle \nabla_{i}\nabla_{j} X_k, X_{\ell}\rangle 
   +(-1)^{[i][j]} \langle \nabla_{j}\nabla_{i}X_k, X_{\ell}\rangle
   \\[.1cm]  
   &= -(-1)^{[k][\ell]} \langle [X_i,X_j], \nabla_{\ell} X_k\rangle 
    - X_i\,\langle \nabla_{j}X_{k}, X_{\ell}\rangle + (-1)^{[i]([j]+[k])} \langle \nabla_{j}X_k, \nabla_{i}X_{\ell}\rangle
   \\[.1cm] 
   &\quad + (-1)^{[i][j]} X_j\,\langle \nabla_{i}X_k,X_{\ell} \rangle -(-1)^{[j][k]} \langle \nabla_{i}X_k, \nabla_{j}X_{\ell}\rangle.
\end{align*}
Applying \eqnref{eq: LC2} and \lemref{lem: Kos} to this expression, we obtain
\begin{align*} 
   \langle R(X_i, X_j)X_k, X_{\ell}\rangle 
   =& -(-1)^{[k][\ell]}\langle \nabla_{i}X_j, \nabla_{\ell}X_k\rangle 
     +(-1)^{[i][j]+[k][\ell]} \langle \nabla_{j}X_i, \nabla_{\ell}X_k\rangle
   \\[.1cm] 
     & +(-1)^{[i]([j]+[k])}\langle \nabla_{j}X_k, \nabla_{i}X_{\ell}\rangle -(-1)^{[j][k]}\langle \nabla_{i}X_k, \nabla_{j}X_\ell\rangle
     \\[.1cm] 
     & - \tfrac{1}{2}X_i\,\langle X_j, [X_k, X_{\ell}]\rangle 
        -\tfrac{1}{2}(-1)^{[j][k]}X_i\,\langle X_k, [X_j,X_{\ell}]\rangle
     \\[.1cm] 
     & -\tfrac{1}{2}(-1)^{[\ell]([j]+[k])} X_i\,\langle X_{\ell}, [X_j, X_{k}]\rangle 
       + \tfrac{1}{2}(-1)^{[i][j]} X_j\,\langle X_{i}, [X_k, X_{\ell}]\rangle 
    \\[.1cm] 
     &+ \tfrac{1}{2}(-1)^{[i]([j]+[k])} X_j\,\langle X_k, [X_i, X_{\ell}]\rangle
        + \tfrac{1}{2}(-1)^{[\ell]([i]+[k])+[i][j]}X_j\,\langle X_{\ell}, [X_i,X_{k}]\rangle.
\end{align*}
Applying \propref{prop: con} to each of the first four terms, and \eqref{eq: Kfield} to each of the last six terms, yields
\begin{align*}
   4\,\langle R(X_i, X_j)X_k, X_{\ell}\rangle 
   &=2\,\langle [X_i,X_j]_{\mathfrak{m}}, U(X_k, X_{\ell})\rangle 
     -(-1)^{[i]([j]+[k])}\langle [X_j,X_k]_{\mathfrak{m}}, U(X_i,X_{\ell})\rangle
   \\[.1cm]
   &\quad-(-1)^{[i]([j]+[k])}\langle U(X_j, X_k), [X_i,X_{\ell}]_{\mathfrak{m}}\rangle 
     +(-1)^{[j][k]} \langle [X_i, X_k]_{\mathfrak{m}}, U(X_j, X_{\ell})\rangle
   \\[.1cm] 
   &\quad +(-1)^{[j][k]}  \langle U(X_i, X_k), [X_j, X_{\ell}]_{\mathfrak{m}}\rangle 
     -2\,\langle [X_i,X_j]_{\mathfrak{m}}, [X_k,X_{\ell}]_{\mathfrak{m}}\rangle
   \\[.1cm] 
   &\quad +(-1)^{[i]([j]+[k])} \langle [X_j,X_k]_{\mathfrak{m}}, [X_i,X_{\ell}]_{\mathfrak{m}}\rangle    
     + (-1)^{[i]([j]+[k])} \langle U(X_j,X_k), U(X_i, X_{\ell}) \rangle
   \\[.1cm]
   &\quad -(-1)^{[j][k]} \langle [X_i,X_k]_{\mathfrak{m}}, [X_j, X_{\ell}]_{\mathfrak{m}}\rangle   
     -(-1)^{[j][k]} \langle U(X_i, X_k), U(X_j, X_{\ell})\rangle
   \\[.1cm] 
   &\quad -2(-1)^{([i]+[j])[k]}\langle X_k, [[X_i,X_j]_{\mathfrak{g}}, X_{\ell}]_{\mathfrak{m}}\rangle  
      -2(-1)^{[\ell]} \langle X_{\ell}, [[X_i,X_j]_{\mathfrak{g}}, X_k]_{\mathfrak{m}}\rangle
   \\[.1cm] 
   &\quad -2(-1)^{[i][j]} \langle X_j, [X_i, [X_k,X_{\ell}]_{\mathfrak{g}}]_{\mathfrak{m}}\rangle 
       +2\,\langle X_i, [X_j, [X_k,X_{\ell}]_{\mathfrak{g}}]_{\mathfrak{m}}\rangle.
\end{align*}
The desired formula now follows by applying \eqref{eq: defU} to the first five terms, and using
\begin{align*}
 (-1)^{[X_k][X_\ell]} \langle [[X_i,X_j]_{\mathfrak{k}}, X_{\ell}]_{\mathfrak{m}},X_k\rangle
  &=-\langle [[X_i,X_j]_{\mathfrak{k}}, X_k]_{\mathfrak{m}},X_\ell\rangle,
\end{align*}
which follows from the $\ad_{\mathfrak{k}}$-invariance.
\end{proof}
\begin{theorem}\label{thm: Ric}
Let $M=G/K$ be a homogeneous superspace with $G$-invariant graded Riemannian metric $g$,
where $G=(G_0, \mathfrak{g})$ is a connected Lie supergroup, $\mathfrak{g}$ a compact real form of a basic classical Lie 
superalgebra, and $K$ a connected closed Lie subsupergroup of $G$. Let $\{X_1,\ldots,X_d\}$ be a $g$-normalised 
homogeneous basis for $\mathfrak{m}=T_K(M)$, with right dual basis $\{\widebar{X}_1,\ldots,\widebar{X}_d\}$. 
Then, at the base point $K$, for every $i\in\{1,\ldots,d\}$,
\begin{align*}
  \Ric(X_i, \widebar{X}_i)
  =-\frac{1}{2}B(X_i, \widebar{X}_i)
   +\frac{1}{2}\sum_{j=1}^d\,\langle [X_i,\widebar{X}_j]_{\mathfrak{m}}, [X_j,\widebar{X}_i]_{\mathfrak{m}}\rangle 
   -\frac{1}{4} \sum_{j,k=1}^d\,\langle X_i, [\widebar{X}_j,\widebar{X}_k]_{\mathfrak{m}}\rangle\,
      \langle [X_k,X_j]_{\mathfrak{m}}, \widebar{X}_i\rangle,
\end{align*}
where $B$ denotes the Killing form on $\mathfrak{g}$.
\end{theorem}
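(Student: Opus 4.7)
The plan is to substitute the Riemann curvature expression from \propref{prop: Rie} into the contracted definition \eqnref{RicR}. Specifically, I would set $(X_k, X_\ell) \to (\widebar{X}_i, \widebar{X}_j)$ in \propref{prop: Rie}, multiply by $(-1)^{[X_j]+[X_j][X_i]}$ using $[\widebar{X}_i]=[X_i]$, and sum over $j \in \{1,\dots,d\}$. The task then is to collapse the resulting sum of thirteen families of terms into the three contributions displayed in the theorem.

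I would first dispose of the two $U$-terms, i.e.\ the last line of \propref{prop: Rie}. Using the defining relation \eqnref{eq: defU} to expand $U(X_j,\widebar{X}_i)$ in the right dual basis via \eqnref{XXX}, and then reorganising the resulting double sum with \lemref{lem: fXX}, the $U\!\cdot\!U$ contributions telescope into precisely the double sum $-\tfrac14\sum_{j,k}\langle X_i,[\widebar{X}_j,\widebar{X}_k]_{\mathfrak m}\rangle\langle [X_k,X_j]_{\mathfrak m},\widebar{X}_i\rangle$ featured in the conclusion, with the extra single-bracket cross-products cancelling against portions of the remaining summands.

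Next, I would address the terms carrying a $\mathfrak{g}$-bracket (those involving $[\widebar X_i,\widebar X_j]_{\mathfrak g}$), which must be decomposed according to \eqnref{gkm}. The $\mathfrak{k}$-pieces are absorbed using the $\ad_{\mathfrak k}$-invariance of $\langle-,-\rangle$ together with the trace-type identity \eqnref{YXX}. The remaining $\mathfrak{m}$-pieces, combined with the iterated double-bracket terms $\langle X_a,[[X_b,X_c]_{\mathfrak m},X_d]_{\mathfrak m}\rangle$ from \propref{prop: Rie}, are manipulated through repeated application of the super-Jacobi identity in $\mathfrak{g}^{\mathbb C}$ and the $\ad_{\mathfrak{g}^{\mathbb C}}$-invariance of $Q$; together they assemble into the supertrace $\Str_{\mathfrak g}(\ad X_i \ad \widebar X_i)$, yielding the $-\tfrac12 B(X_i,\widebar X_i)$ contribution. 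A key observation here is that this supertrace can be computed with respect to any homogeneous dual basis of $\mathfrak{g}$, and in particular with respect to one adapted to the decomposition $\mathfrak{g}=\mathfrak{k}\oplus\mathfrak{m}$, which is exactly what the sum naturally produces. The simpler remaining terms of the form $\langle [X_i,X_j]_{\mathfrak m},[\widebar X_i,\widebar X_j]_{\mathfrak m}\rangle$, after re-indexing through \lemref{lem: fXX} and using super-skew-symmetry of $[-,-]_{\mathfrak m}$, combine to yield the middle summand $\tfrac12\sum_j\langle[X_i,\widebar X_j]_{\mathfrak m},[X_j,\widebar X_i]_{\mathfrak m}\rangle$.

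The main obstacle will be sign bookkeeping. Each of the thirteen term families in \propref{prop: Rie} carries a Koszul sign depending on up to four parities; after multiplication by $(-1)^{[X_j]+[X_j][X_i]}$ and summation, the many required cancellations depend on exact agreement of these signs, mediated by \lemref{lem: fXX}, the identities \eqnref{XX0} and \eqnref{YXX}, and $[\widebar X_j]=[X_j]$. A secondary delicate point is that both the super-Jacobi identity in $\mathfrak{g}^{\mathbb C}$ \emph{and} the $\ad_{\mathfrak k}$-invariance must be deployed in tandem to collapse the $\mathfrak{k}$-valued components of the various iterated brackets; keeping these two mechanisms organised is what makes the computation bulky despite the compactness of the final formula.
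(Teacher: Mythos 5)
Your proposal is correct and follows essentially the same route as the paper's proof in \secref{sec: pfmain}: substituting \propref{prop: Rie} into \eqnref{RicR}, expanding the $U\cdot U$ terms in the dual basis (the paper's \lemref{lem: UU0} and \lemref{lem: sumUU}), splitting the $\mathfrak{g}$-brackets via \eqnref{gkm} and absorbing the $\mathfrak{k}$-components with $\ad_{\mathfrak{k}}$-invariance, the super-Jacobi identity and \eqnref{YXX}, and identifying $-\tfrac12 B(X_i,\widebar{X}_i)$ by evaluating the supertrace in a dual basis adapted to $\mathfrak{g}=\mathfrak{k}\oplus\mathfrak{m}$ (the paper's \lemref{lem: Killing}). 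The one ingredient your sketch glosses over is that one of the two $U\cdot U$ families does not merely telescope: its vanishing, $\sum_{j}\langle U(X_i,\widebar{X}_i),U(\widebar{X}_j,X_j)\rangle=0$, rests on $\sum_{j}U(\widebar{X}_j,X_j)=0$, i.e.\ on $\Str_{\mathfrak{g}}(\ad_{\mathfrak{g}}(X))=0$ for $X\in\mathfrak{m}_{\bar 0}$, which uses the simplicity of $\mathfrak{g}$ (the paper's \lemref{lem: Z0} and \lemref{lem: UU0}).
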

\begin{corollary}\label{coro: scur}
Let the notation be as in \thmref{thm: Ric}. Then, at the base point $K$, the scalar curvature of $g$ satisfies
\begin{align*}
     S(K)= -\frac{1}{2} \sum_{i=1}^dB(\widebar{X}_i,X_i) 
     +\frac{1}{4} \sum_{i,j=1}^d \langle [ \widebar{X}_i, \widebar{X}_j]_{\mathfrak{m}}, [X_j,X_i]_{\mathfrak{m}}\rangle. 
\end{align*}
\end{corollary}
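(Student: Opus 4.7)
The plan is to substitute the Ricci-curvature formula from \thmref{thm: Ric} into the definition $S=\sum_{i=1}^d(-1)^{[X_i]}\Ric(X_i,\widebar{X}_i)$ from \eqref{eq: scalar}, and then to simplify the three resulting sums independently. The identity \lemref{lem: fXX} will handle the Killing-form and the quadratic-bracket contributions, while the dual-basis expansion \eqref{XXX} will collapse the cubic contribution; supersymmetry of $\langle-,-\rangle$ will then tidy up the residual signs.

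First, the Killing-form piece $-\tfrac{1}{2}\sum_i(-1)^{[X_i]}B(X_i,\widebar{X}_i)$ reduces to $-\tfrac{1}{2}\sum_iB(\widebar{X}_i,X_i)$ by a direct application of \lemref{lem: fXX} with $f=B$. For the quadratic-bracket piece I would invoke \lemref{lem: fXX} once more, this time with the even bilinear form $f_j(A,B):=\langle[A,\widebar{X}_j]_{\mathfrak{m}},[X_j,B]_{\mathfrak{m}}\rangle$ for each fixed~$j$; this converts $\sum_i(-1)^{[X_i]}\langle[X_i,\widebar{X}_j]_{\mathfrak{m}},[X_j,\widebar{X}_i]_{\mathfrak{m}}\rangle$ into $\sum_i\langle[\widebar{X}_i,\widebar{X}_j]_{\mathfrak{m}},[X_j,X_i]_{\mathfrak{m}}\rangle$, contributing $\tfrac{1}{2}\sum_{i,j}\langle[\widebar{X}_i,\widebar{X}_j]_{\mathfrak{m}},[X_j,X_i]_{\mathfrak{m}}\rangle$ to $S$.

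The cubic piece requires a different technique. Applying the dual-basis identity $Y=\sum_i\langle X_i,Y\rangle\widebar{X}_i$ from \eqref{XXX} with $Y=[\widebar{X}_j,\widebar{X}_k]_{\mathfrak{m}}$ collapses the sum over $i$: since $\langle X_i,Y\rangle\neq 0$ forces $[X_i]=[X_j]+[X_k]$, the prefactor $(-1)^{[X_i]}$ becomes $(-1)^{[X_j]+[X_k]}$, and a further application of the supersymmetry of $\langle-,-\rangle$ contributes an additional factor of $(-1)^{([X_j]+[X_k])^2}=(-1)^{[X_j]+[X_k]}$, so the two signs cancel. This yields a contribution of $-\tfrac{1}{4}\sum_{j,k}\langle[\widebar{X}_j,\widebar{X}_k]_{\mathfrak{m}},[X_k,X_j]_{\mathfrak{m}}\rangle$ to $S$. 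Combining all three contributions after renaming dummies gives the claimed formula via $\tfrac{1}{2}-\tfrac{1}{4}=\tfrac{1}{4}$.

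The computation is routine once \thmref{thm: Ric} is available; the only delicate point is the sign bookkeeping in the cubic term, where the supertrace sign $(-1)^{[X_i]}$ from \eqref{eq: scalar} and the supersymmetry sign from $\langle-,-\rangle$ must be shown to cancel exactly.
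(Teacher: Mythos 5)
Your proposal is correct and follows essentially the same route as the paper: substitute the formula of \thmref{thm: Ric} into \eqref{eq: scalar}, use \lemref{lem: fXX} on the Killing-form and quadratic terms, collapse the triple sum via the dual-basis expansion \eqref{XXX}, and rename dummy indices. The only (harmless) difference is in the cubic term, where you handle the signs by the parity-matching observation plus supersymmetry of $\langle-,-\rangle$, whereas the paper combines \lemref{lem: fXX} with the expansion $\sum_{i}\langle [\widebar{X}_j,\widebar{X}_k]_{\mathfrak{m}},\widebar{X}_i\rangle X_i=[\widebar{X}_j,\widebar{X}_k]_{\mathfrak{m}}$; both yield the same cancellation.
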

\begin{proof}
First apply the Ricci curvature formula from \thmref{thm: Ric} to \eqref{eq: scalar}. Use \lemref{lem: fXX} and apply
$\sum_{i=1}^d\langle [\widebar{X}_j,\widebar{X}_k]_{\mathfrak{m}},\widebar{X}_i\rangle\,X_i=[\widebar{X}_j,\widebar{X}_k]_{\mathfrak{m}}$ 
to the ensuing triple sum. Change summation variables to obtain the desired formula.
\end{proof}

\subsection{Proof of the Ricci curvature formula}
\label{sec: pfmain}

This section is devoted to proving the Ricci curvature formula given in \thmref{thm: Ric}. Although our general strategy is 
similar to that in~\cite[Chapter 7]{Bes87}, the super-setting complicates matters considerably.
Throughout, $\{X_1,\ldots,X_d\}$ denotes a $g$-normalised homogeneous basis for $\mathfrak{m}=T_K(M)$, 
with right dual basis $\{\widebar{X}_1,\ldots,\widebar{X}_d\}$, where $[\widebar{X}_i]=[X_i]$ for all $i$.
As in the proof of \propref{prop: Rie}, we will use the shorthand notation $\nabla_i=\nabla_{X_i}$ and $[i]=[X_i]$.
Since the extended superproduct $\langle-, -\rangle:\mathfrak{g}\times\mathfrak{g}\to\mathbb{R}$ is an even bilinear map, 
an expression like $\langle Y_1,[Y_2,Y_3]\rangle$, with $Y_1,Y_2,Y_3\in\mathfrak{g}$, is zero if $[Y_1]\neq[Y_2]+[Y_3]$.
We first establish a sequence of technical results: \lemref{lem: Z0} to \lemref{lem: Killing}. 

To set the stage, note that
\begin{align}\label{UX}
 \begin{gathered}U(\widebar{X}_i, X_j)
 =\sum_{k=1}^d\,\langle U(\widebar{X}_i, X_j), \widebar{X}_k\rangle\,X_k
 =-\sum_{k=1}^d\Big(\langle \widebar{X_i}, [X_j, \widebar{X}_k]_{\mathfrak{m}}\rangle
      +(-1)^{[i][j]} \langle X_j, [\widebar{X}_i, \widebar{X}_k]_{\mathfrak{m}}\rangle\Big)X_k,
 \\
 U(X_i, \widebar{X}_j)
 =\sum_{k=1}^d\,\langle X_k,U(X_i,\widebar{X}_j)\rangle\,\widebar{X}_k
 =\sum_{k=1}^d\Big(\langle [X_k,X_i]_{\mathfrak{m}},\widebar{X}_j]\rangle
    +(-1)^{[i][j]}\langle [X_k,\widebar{X}_j]_{\mathfrak{m}},X_i\rangle\Big)\widebar{X}_k,
\end{gathered}
\end{align}
where the second equality in each line follows from \eqref{eq: defU}.

For clarity, we also note that
\begin{align}\label{StrXX}
 \Str_{\mathfrak{m}}(\ad_{\mathfrak{m}}(X))
 =\sum_{i=1}^d\,\langle \widebar{X}_i,[X,X_i]_{\mathfrak{m}}\rangle
 =\sum_{i=1}^d(-1)^{[i]}\langle X_i,[X,\widebar{X}_i]_{\mathfrak{m}}\rangle,\qquad X\in\mathfrak{g},
\end{align}
with the second equality following from \lemref{lem: fXX},
and that this expression is zero if $X\in\mathfrak{g}_{\bar{1}}$.
\begin{lemma}\label{lem: Z0}
 It holds that
\begin{align*}
 \sum_{i=1}^dU(\widebar{X}_i,X_i)=\sum_{i=1}^d(-1)^{[i]}U(X_i,\widebar{X}_i)=0.
\end{align*}
\end{lemma}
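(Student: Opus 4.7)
My plan is to first reduce the two stated identities to a single one. By the super-symmetry of $U$, namely $U(X,Y)=(-1)^{[X][Y]}U(Y,X)$, together with the $\mathbb{Z}_2$-identity $[i]^2=[i]$, I get $U(X_i,\widebar{X}_i)=(-1)^{[i]}U(\widebar{X}_i,X_i)$ for each $i$, so that $(-1)^{[i]}U(X_i,\widebar{X}_i)=U(\widebar{X}_i,X_i)$. Summing over $i$ shows that the two sums in the statement coincide, and it therefore suffices to prove that $\sum_i U(\widebar{X}_i,X_i)=0$.

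To this end I would test the sum against an arbitrary $Z\in\mathfrak{m}$, using the non-degeneracy of $\langle-,-\rangle$ on $\mathfrak{m}$. The defining relation \eqref{eq: defU} gives
\begin{align*}
 \langle U(\widebar{X}_i,X_i),Z\rangle
 = -\langle\widebar{X}_i,[X_i,Z]_{\mathfrak{m}}\rangle
   -(-1)^{[i]}\langle X_i,[\widebar{X}_i,Z]_{\mathfrak{m}}\rangle.
\end{align*}
Applying \lemref{lem: fXX} to the bilinear map $f(A,B):=\langle A,[B,Z]_{\mathfrak{m}}\rangle$ yields
$\sum_i(-1)^{[i]}\langle X_i,[\widebar{X}_i,Z]_{\mathfrak{m}}\rangle=\sum_i\langle\widebar{X}_i,[X_i,Z]_{\mathfrak{m}}\rangle$, so the two contributions merge into
\begin{align*}
 \sum_i\langle U(\widebar{X}_i,X_i),Z\rangle
 = -2\sum_i\langle\widebar{X}_i,[X_i,Z]_{\mathfrak{m}}\rangle.
\end{align*}
Since $U(\widebar{X}_i,X_i)$ is even, the left-hand side vanishes automatically when $Z$ is odd. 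When $Z$ is even, super-skew-symmetry combined with \eqref{StrXX} identifies the right-hand side with $-2\Str_{\mathfrak{m}}(\ad_{\mathfrak{m}}(Z))$. The problem thereby reduces to the key claim that $\Str_{\mathfrak{m}}(\ad_{\mathfrak{m}}(Z))=0$ for every $Z\in\mathfrak{m}_{\bar{0}}$.

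For this last step I would introduce the linear functional $\phi(X):=\Str_{\mathfrak{g}}(\ad(X))$ on $\mathfrak{g}$. Since the supertrace of any supercommutator of endomorphisms vanishes, $\phi$ annihilates $[\mathfrak{g},\mathfrak{g}]$. As $\mathfrak{g}^{\mathbb{C}}$ is basic classical, it is simple and non-abelian, hence $[\mathfrak{g}^{\mathbb{C}},\mathfrak{g}^{\mathbb{C}}]=\mathfrak{g}^{\mathbb{C}}$; passing to real forms, $[\mathfrak{g},\mathfrak{g}]=\mathfrak{g}$, and so $\phi\equiv 0$. To descend from $\Str_{\mathfrak{g}}$ to $\Str_{\mathfrak{m}}$, I observe that the $Q$-orthogonal decomposition $\mathfrak{g}=\mathfrak{k}\oplus\mathfrak{m}$ together with $Q$-invariance of the bracket forces $[\mathfrak{k},\mathfrak{m}]\subseteq\mathfrak{m}$, so for $Z\in\mathfrak{m}$ the $\mathfrak{k}\to\mathfrak{k}$ block of $\ad(Z)$ is zero, giving $\Str_{\mathfrak{g}}(\ad(Z))=\Str_{\mathfrak{m}}(\ad_{\mathfrak{m}}(Z))$, and hence the required vanishing.

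The main obstacle I anticipate is this last step. In the non-super setting, $\mathrm{tr}(\ad(Z))=0$ is nearly automatic for a semisimple Lie algebra, but in the super-setting $\Str(\ad(Z))$ need not vanish for a general Lie superalgebra (e.g.\ on $\mathfrak{gl}(m|n)$ with $m\neq n$), so the argument genuinely relies on the basic-classical hypothesis via the simplicity of $\mathfrak{g}^{\mathbb{C}}$. The only other points requiring care are sign-bookkeeping in the application of \lemref{lem: fXX}, in particular the simplification $(-1)^{[\widebar{X}_i][X_i]}=(-1)^{[i]}$, and the verification that the identity for $\Str_{\mathfrak{g}}$ interacts correctly with the $\mathfrak{k}\oplus\mathfrak{m}$ block decomposition.
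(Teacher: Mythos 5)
Your proof is correct and follows essentially the same route as the paper's: pair $\sum_i U(\widebar{X}_i,X_i)$ against an even $Z\in\mathfrak{m}$, combine \eqref{eq: defU} with \lemref{lem: fXX} to obtain a multiple of $\Str_{\mathfrak{m}}(\ad_{\mathfrak{m}}(Z))$, identify this with $\Str_{\mathfrak{g}}(\ad_{\mathfrak{g}}(Z))$ using the $Q$-orthogonal decomposition and $[\mathfrak{k},\mathfrak{m}]\subseteq\mathfrak{m}$, and conclude from $\mathfrak{g}=[\mathfrak{g},\mathfrak{g}]$ together with the vanishing of supertraces of supercommutators. Two harmless slips worth noting: with your conventions the right-hand side is $+2\Str_{\mathfrak{m}}(\ad_{\mathfrak{m}}(Z))$ rather than $-2\Str_{\mathfrak{m}}(\ad_{\mathfrak{m}}(Z))$ (immaterial, since the supertrace vanishes), and $\mathfrak{gl}(m|n)$ with $m\neq n$ is not an example where $\Str(\ad(X))$ can be nonzero — there it vanishes identically, since $\mathfrak{gl}(m|n)=\mathbb{C}I\oplus[\mathfrak{gl}(m|n),\mathfrak{gl}(m|n)]$ with $\ad(I)=0$, so a genuine counterexample requires something like a solvable Lie superalgebra.
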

\begin{proof}
The first equality follows from \lemref{lem: fXX}. For the second equality, let
\begin{align*}
 Z:=\sum_{i=1}^dU(\widebar{X}_i,X_i),
\end{align*}
so $Z\in\mathfrak{m}_{\bar{0}}$. We want to show that $\langle Z,X\rangle=0$ for all $X\in\mathfrak{g}$.
Since $\langle\mathfrak{k},\mathfrak{m}\rangle=0$, it suffices to consider $X\in\mathfrak{m}$, and since 
$Z\in\mathfrak{m}_{\bar{0}}$, the relation trivially holds if $X\in\mathfrak{m}_{\bar{1}}$, so we will assume that 
$X\in\mathfrak{m}_{\bar{0}}$. It then follows from \eqref{eq: defU} that
\begin{align*}
 \langle Z,X\rangle=\sum_{i=1}^d\,\langle\widebar{X}_i, [X,X_i]_{\mathfrak{m}}\rangle 
      +\sum_{i=1}^d(-1)^{[i]}\langle X_i, [X,\widebar{X}_i]_{\mathfrak{m}}\rangle
 =2\Str_{\mathfrak{m}}(\ad_{\mathfrak{m}}(X))
 =2\Str_{\mathfrak{g}}(\ad_{\mathfrak{g}}(X)),
\end{align*}
where the last equality uses that $X\in\mathfrak{m}$, $\mathfrak{g}=\mathfrak{k}\oplus \mathfrak{m}$ and 
$[\mathfrak{k},\mathfrak{m}]\subseteq \mathfrak{m}$. Since $\mathfrak{g}$ is simple, 
we have $\mathfrak{g}=[\mathfrak{g},\mathfrak{g}]$, and since $\Str([f_1,f_2])=0$ for all 
$f_1,f_2\in\mathrm{End}(\mathfrak{g})$, it follows that $\Str_{\mathfrak{g}}(\ad_{\mathfrak{g}}(X))=0$.
\end{proof}
\begin{corollary}\label{lem: UU0}
For each $i=1,\ldots,d$, we have
\begin{align*}
 \sum_{j=1}^d\,\langle U(X_i, \widebar{X}_i),U(\widebar{X}_j,X_j)\rangle=0.
\end{align*}
\end{corollary}
\begin{remark}
The lemma holds as long as $\Str_{\mathfrak{g}}(\ad_{\mathfrak{g}}(X))=0$ for all $X$, 
even if $\mathfrak{g}$ is not simple. For example, the lemma holds for $\mathfrak{g}=\mathfrak{sl}(n|n)$.
\end{remark}
\begin{lemma}\label{lem: sumUU}
For each $i=1,\ldots,d$, we have
\begin{align*}
  &\sum_{j=1}^d (-1)^{[i]+[j]} \langle U(X_j,\widebar{X}_i), U(X_i, \widebar{X}_j)\rangle
    =-\sum_{j,k=1}^d\,\langle X_i, [\widebar{X}_k, \widebar{X}_j]_{\mathfrak{m}}\rangle\,
     \langle [X_j,X_k]_{\mathfrak{m}},\widebar{X}_i\rangle
 \\ 
 &+\sum_{k=1}^d\Big(\langle X_i,[[\widebar{X}_i,\widebar{X}_k]_{\mathfrak{m}},X_k]_{\mathfrak{m}}\rangle
    +\langle [[X_i,\widebar{X}_k]_{\mathfrak{m}},X_k]_{\mathfrak{m}},\widebar{X}_i\rangle
    -\langle [X_i,\widebar{X}_k]_{\mathfrak{m}}, [X_k, \widebar{X}_i]_{\mathfrak{m}}\rangle\Big). 
\end{align*}
\end{lemma}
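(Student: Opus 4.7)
The plan is to substitute the explicit basis expansions from \eqref{UX} for both $U$'s into the left-hand side and then simplify the four resulting summands separately. Specifically, $U(X_j,\widebar X_i)$ expands in the $\{X_k\}$-basis while $U(X_i,\widebar X_j)$ expands in the dual $\{\widebar X_k\}$-basis, so pairing via $\langle X_k,\widebar X_l\rangle=\delta_{kl}$ collapses $\langle U(X_j,\widebar X_i),U(X_i,\widebar X_j)\rangle$ into a single sum over $k$ of four products $A_kC_k$, $A_kD_k$, $B_kC_k$, $B_kD_k$, weighted by $-1$ or $-(-1)^{[i][j]}$, where
\[
A_k=\langle X_j,[\widebar X_i,\widebar X_k]_{\mathfrak m}\rangle,\ \ B_k=\langle\widebar X_i,[X_j,\widebar X_k]_{\mathfrak m}\rangle,\ \ C_k=\langle[X_k,X_i]_{\mathfrak m},\widebar X_j\rangle,\ \ D_k=\langle[X_k,\widebar X_j]_{\mathfrak m},X_i\rangle.
\]

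I then carry out the $j$-sum of each of the four products separately. The product $A_kC_k$ has $j$ appearing only outside the brackets, so the completeness relation $\sum_j\langle X_j,P\rangle\langle Q,\widebar X_j\rangle=\langle Q,P\rangle$ (an immediate consequence of \eqref{XXX}) collapses the $j$-sum to $\langle[X_k,X_i]_{\mathfrak m},[\widebar X_i,\widebar X_k]_{\mathfrak m}\rangle$, and one application of \lemref{lem: fXX} swapping $X_k$ and $\widebar X_k$ turns this into the RHS term $-\sum_k\langle[X_i,\widebar X_k]_{\mathfrak m},[X_k,\widebar X_i]_{\mathfrak m}\rangle$. The mixed products $A_kD_k$ and $B_kC_k$ have $\widebar X_j$ or $X_j$ inside exactly one bracket; here I insert $\sum_j\langle X_j,A\rangle\widebar X_j=A$, or its companion $\sum_j\langle A,\widebar X_j\rangle X_j=A$, into the inside of the bracket by linearity of $[-,-]_{\mathfrak m}$, producing the two double-commutator RHS terms after supersymmetry and (for $B_kC_k$) a second use of \lemref{lem: fXX}. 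Finally, the product $B_kD_k$ has both $X_j$ and $\widebar X_j$ inside brackets, so completeness does not collapse it directly; I instead apply \lemref{lem: fXX} to the $k$-sum (not the $j$-sum), which moves $X_k$ and $\widebar X_k$ between the two factors and, combined with supersymmetry, produces precisely the first (genuinely double-sum) RHS term $-\sum_{j,k}\langle X_i,[\widebar X_k,\widebar X_j]_{\mathfrak m}\rangle\langle[X_j,X_k]_{\mathfrak m},\widebar X_i\rangle$.

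The main obstacle is the intricate sign bookkeeping. At each step one must combine the $(-1)^{[i][j]}$ signs from the definition of $U$, the overall $(-1)^{[i]+[j]}$ factor, and the sign changes from supersymmetry of $\langle-,-\rangle$ and from each invocation of \lemref{lem: fXX}, while also using that the non-vanishing of each inner product forces $[j]=[i]+[k]$, so that $(-1)^{[j]}$ can be re-expressed as $(-1)^{[i]+[k]}$. A subtle point is the choice of index on which to apply \lemref{lem: fXX} for the $B_kD_k$ sum: applying it to $j$ leaves an uncancelled factor of $(-1)^{[i]}$, whereas applying it to $k$ yields exactly the desired RHS term. Beyond these manipulations no super-Jacobi identity or $\ad_{\mathfrak k}$-invariance input is required, since each double commutator appearing on the RHS is produced directly by the substitution $\sum_j\langle X_j,A\rangle\widebar X_j=A$ inside an existing bracket.
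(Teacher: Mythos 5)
Your proposal is correct and follows essentially the same route as the paper: expand both factors via \eqref{UX}, collapse the pairing with the right dual basis, reduce three of the resulting four double sums to single sums via the completeness relation \eqref{XXX}, and finish with \lemref{lem: fXX} (plus supersymmetry/skew-symmetry bookkeeping), leaving the $B_kD_k$ contribution as the genuine double sum. Your term-by-term assignments and the choice to apply \lemref{lem: fXX} on the $k$-index for that last contribution check out, so this is just a more detailed write-up of the paper's sketch.
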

\begin{proof}
Use \eqref{UX} to rewrite the lefthand side as a sum of four double sums. Then, apply \eqref{XXX} to 
$X=[\widebar{X}_i,\widebar{X}_k]_{\mathfrak{m}}$, respectively $X=[X_i,X_k]_{\mathfrak{m}}$ (twice), to reduce three of 
the double sums to single sums, and apply \lemref{lem: fXX}.
\end{proof}
\begin{lemma}\label{lem: Killing}
Let $X,Y\in\mathfrak{m}$ be homogeneous. Then,
\begin{align*}
    B(X,Y)= \sum_{i=1}^d\Big(\langle \widebar{X}_i, [X, [Y,X_i]_{\mathfrak{g}}]_{\mathfrak{m}}\rangle 
      +(-1)^{[X]} \langle \widebar{X}_i,[Y,[X,X_i]_{\mathfrak{k}}]_{\mathfrak{m}}\rangle\Big).
\end{align*}
\end{lemma}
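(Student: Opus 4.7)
The plan is to compute $B(X,Y)=\Str_{\mathfrak{g}}(\ad(X)\ad(Y))$ directly by enlarging the basis of $\mathfrak{m}$ to a basis of $\mathfrak{g}$. Fix a homogeneous basis $\{Y_1,\ldots,Y_e\}$ of $\mathfrak{k}$ with right dual $\{\widebar{Y}_1,\ldots,\widebar{Y}_e\}$ with respect to the chosen $\Ad_K$-invariant superproduct on $\mathfrak{k}$. Because the extension of $\langle-,-\rangle$ to $\mathfrak{g}$ satisfies $\langle\mathfrak{k},\mathfrak{m}\rangle=0$, the enlarged family $\{X_1,\ldots,X_d,Y_1,\ldots,Y_e\}$ has right dual $\{\widebar{X}_1,\ldots,\widebar{X}_d,\widebar{Y}_1,\ldots,\widebar{Y}_e\}$. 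Since $B(X,Y)$ vanishes unless $[X]=[Y]$, the endomorphism $\ad(X)\ad(Y)$ is even, so the reasoning behind \eqref{StrXX} gives
\[
  B(X,Y)=\sum_{i=1}^d\langle\widebar{X}_i,[X,[Y,X_i]_{\mathfrak{g}}]_{\mathfrak{g}}\rangle+\sum_{a=1}^e\langle\widebar{Y}_a,[X,[Y,Y_a]_{\mathfrak{g}}]_{\mathfrak{g}}\rangle.
\]
In the first sum, $\widebar{X}_i\in\mathfrak{m}$ together with $\langle\mathfrak{k},\mathfrak{m}\rangle=0$ allows one to replace the argument by its $\mathfrak{m}$-component, producing the first term in the claimed formula.

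For the second sum I would exploit that $[\mathfrak{k},\mathfrak{m}]\subseteq\mathfrak{m}$, a consequence of the $Q$-orthogonality of the decomposition combined with $\ad_{\mathfrak{g}^{\mathbb{C}}}$-invariance of $Q$; in particular $[Y,Y_a]_{\mathfrak{g}}\in\mathfrak{m}$. Expanding this element in the $\mathfrak{m}$-basis via \eqref{XXX} gives
\[
  \langle\widebar{Y}_a,[X,[Y,Y_a]_{\mathfrak{g}}]_{\mathfrak{g}}\rangle=\sum_{i=1}^d\langle[Y,Y_a]_{\mathfrak{g}},\widebar{X}_i\rangle\,\langle\widebar{Y}_a,[X,X_i]_{\mathfrak{k}}\rangle,
\]
where only the $\mathfrak{k}$-projection of $[X,X_i]_{\mathfrak{g}}$ contributes because $\widebar{Y}_a\in\mathfrak{k}$. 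Applying the supersymmetry of $\langle-,-\rangle$ to each factor, and using that evenness of the form pins the surviving parities down to $[Y]+[Y_a]=[X_i]$ and $[Y_a]=[X]+[X_i]$, the combined sign collapses to $(-1)^{[X_i]+[Y_a]}=(-1)^{[X]}$. Summing over $a$ and collapsing via the identity $\sum_a\langle[X,X_i]_{\mathfrak{k}},\widebar{Y}_a\rangle Y_a=[X,X_i]_{\mathfrak{k}}$ from \eqref{XXX} recombines the $\mathfrak{k}$-sum into $[Y,[X,X_i]_{\mathfrak{k}}]_{\mathfrak{g}}$, which already lies in $\mathfrak{m}$, yielding exactly the second summand in the lemma.

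The principal challenge is the careful bookkeeping of the supersymmetry signs; the decisive observation is that evenness of $\langle-,-\rangle$ together with $[X]=[Y]$ rigidifies the contributing parities so that the various $\pm1$ factors collapse uniformly to $(-1)^{[X]}$. Beyond \eqref{XXX}, the supertrace formula underlying \eqref{StrXX} applied to the enlarged basis, and the reductivity $[\mathfrak{k},\mathfrak{m}]\subseteq\mathfrak{m}$, no further structural ingredient is required.
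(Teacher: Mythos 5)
Your proof is correct and follows essentially the same route as the paper's: both compute $B(X,Y)=\Str_{\mathfrak{g}}(\ad(X)\ad(Y))$ with respect to the superproduct extended to $\mathfrak{g}=\mathfrak{k}\oplus\mathfrak{m}$ with $\langle\mathfrak{k},\mathfrak{m}\rangle=0$, split the trace into its $\mathfrak{m}$- and $\mathfrak{k}$-parts, and convert the $\mathfrak{k}$-part into the second summand by a double dual-basis expansion, the parity constraints collapsing the signs to the uniform factor $(-1)^{[X]}$. The only cosmetic difference is that you use the supertrace in the form $\sum_j\langle\widebar{Z}_j,f(Z_j)\rangle$ directly, whereas the paper works with the barred arguments inside the brackets and removes the resulting $(-1)^{[Z_j]}$ signs at the end via \lemref{lem: fXX}.
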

\begin{proof}
Let $\{Y_1, \dots, Y_h\}$ be a homogeneous basis for $\mathfrak{k}$, with right dual basis 
$\{\widebar{Y}_1, \dots, \widebar{Y}_h \}$, and recall that $\langle\mathfrak{k},\mathfrak{m}\rangle=0$.
Expressing $\ad(X)\ad(Y)(\widebar{X}_i)$ and $\ad(X)\ad(Y)(\widebar{Y}_k)$ in terms of the dual basis, and
using a formula similar to \eqref{StrXX} but for a supertrace over all of $\mathfrak{g}$, yields
\begin{align*}
  B(X,Y)=\sum_{i=1}^d (-1)^{[X_i]} \langle X_i, [X, [Y, \widebar{X}_i]_{\mathfrak{g}}]_{\mathfrak{m}}\rangle 
       + \sum_{k=1}^h(-1)^{[Y_k]} \langle Y_k, [X,[Y, \widebar{Y_k}]_{\mathfrak{g}}]_{\mathfrak{k}}\rangle. 
\end{align*}
Use \eqref{gkm} and $[X, [Y, \widebar{Y_k}]_{\mathfrak{k}}]_{\mathfrak{g}}\in \mathfrak{m}$ to rewrite
the sum over $k$ as
\begin{align*}
 \sum_{k=1}^h(-1)^{[Y_k]} \langle Y_k, [X,[Y, \widebar{Y_k}]_{\mathfrak{m}}]_{\mathfrak{k}}\rangle
  &=-\sum_{i=1}^d\sum_{k=1}^h (-1)^{[Y_k]+[Y][Y_k]}\langle X_i,[\widebar{Y}_k,Y]_{\mathfrak{m}}\rangle\, 
    \langle Y_k, [X, \widebar{X}_i]_{\mathfrak{k}} \rangle
  \\
  &=\sum_{i=1}^d (-1)^{[X_i]+[X]} \langle X_i, [Y,[X,\widebar{X}_i]_{\mathfrak{k}}]_{\mathfrak{m}}\rangle,
\end{align*}
where the first equality follows from
$[Y, \widebar{Y_k}]_{\mathfrak{m}}= \sum_{i=1}^d\,\langle X_i, [Y, \widebar{Y_k}]_{\mathfrak{m}}\rangle \widebar{X}_i$, 
and the second from
$[X,\widebar{X}_i]_{\mathfrak{k}}= \sum_{k=1}^h\,\langle Y_k, [X, \widebar{X}_i]_{\mathfrak{k}} \rangle\,\widebar{Y}_k$.
By \lemref{lem: fXX}, this establishes the desired expression.
\end{proof}
Using \lemref{lem: fXX}, it follows from \lemref{lem: Killing} that, for each $i=1,\ldots,d$,
\begin{align}\label{BXX}
     B(X_i, \widebar{X}_i)
     =\sum_{j=1}^d\Big(
     \langle [X_i,[\widebar{X}_i,\widebar{X}_j]_{\mathfrak{g}}]_{\mathfrak{m}},X_j\rangle
     +\langle [[\widebar{X}_j,X_i]_{\mathfrak{k}},\widebar{X}_i]_{\mathfrak{m}},X_j\rangle\Big).
\end{align}
\begin{proof}[Proof of \thmref{thm: Ric}]
Since the $g$-normalised basis and its right-dual companion are related as in \eqref{eq: dualX}, we can use \eqref{RicR} and 
\propref{prop: Rie} to obtain an expression for $\Ric(X_i,\widebar{X}_i)$. Applying \lemref{lem: UU0} and \eqref{XX0} to the expression 
for $\mathrm{Ric}(X_i, \widebar{X}_i)$, the corresponding terms vanish, yielding
\begin{align*}
 4 \mathrm{Ric}(X_i, \widebar{X_i})
 &= -2\sum_{j=1}^d(-1)^{[j]+[i][j]} \langle [X_i,X_j]_{\mathfrak{m}}, [\widebar{X}_i, \widebar{X}_j]_{\mathfrak{m}}\rangle 
 + \sum_{j=1}^d (-1)^{[i]+[j]} \langle [X_j, \widebar{X}_i]_{\mathfrak{m}}, [X_i, \widebar{X}_j]_{\mathfrak{m}}\rangle 
 \\ 
 &-\sum_{j=1}^d (-1)^{[j]+[i][j]} \langle X_i, [[X_j, \widebar{X}_i]_{\mathfrak{m}}, \widebar{X}_j]_{\mathfrak{m}}\rangle 
 + \sum_{j=1}^d  (-1)^{[j]+[i][j]} \langle [X_i, [X_j, \widebar{X}_i]_{\mathfrak{m}}]_{\mathfrak{m}}, \widebar{X}_j \rangle 
 \\ 
 &+ \sum_{j=1}^d(-1)^{[j]} \langle X_j, [[X_i, \widebar{X}_i]_{\mathfrak{m}}, \widebar{X_j}]_{\mathfrak{m}}\rangle 
 + \sum_{j=1}^d(-1)^{[j]} \langle [[X_i, \widebar{X}_i]_{\mathfrak{m}}, X_j]_{\mathfrak{m}}, \widebar{X}_j \rangle 
 \\ 
 & + \sum_{j=1}^d (-1)^{[i]+[j]} \langle X_j, [\widebar{X}_i, [X_i, \widebar{X}_j]_{\mathfrak{m}} ]_{\mathfrak{m}} \rangle 
 + \sum_{j=1}^d (-1)^{[j]+[i][j]} \langle [X_j, [X_i, \widebar{X}_j]_{\mathfrak{m}}]_{\mathfrak{m}}, \widebar{X}_i \rangle 
 \\ 
 &+2 \sum_{j=1}^d (-1)^{[j]+[i][j]} \langle X_i, [X_j,[\widebar{X}_i, \widebar{X}_j]_{\mathfrak{g}} ]_{\mathfrak{m}} \rangle 
 -2 \sum_{j=1}^d (-1)^{[j]} \langle X_j, [X_i, [\widebar{X}_i, \widebar{X}_j]_{\mathfrak{g}}]_{\mathfrak{m}}\rangle 
 \\ 
 &+ \sum_{j=1}^d (-1)^{[i]+[j]} \langle U(X_j, \widebar{X}_i), U(X_i, \widebar{X}_j)\rangle. 
\end{align*} 
Signs may be eliminated by invoking \lemref{lem: fXX} and by applying the skew-supersymmetry of the Lie bracket together with 
the supersymmetry of the scalar product. Applying \lemref{lem: sumUU} and \eqref{gkm} 
to the ensuing expression for $\Ric(X_i,\widebar{X}_i)$, and using \eqref{BXX}, yields
\begin{align}\label{4RBB}
   &4\Ric(X_i, \widebar{X_i})+B(X_i,\widebar{X}_i)+(-1)^{[i]}B(\widebar{X}_i,X_i)
  \nonumber\\
  &=2\sum_{j=1}^d\,\langle [X_i,\widebar{X}_j]_{\mathfrak{m}}, [X_j,\widebar{X}_i]_{\mathfrak{m}}\rangle
  -\sum_{j,k=1}^d\,\langle X_i, [\widebar{X}_j,\widebar{X}_k]_{\mathfrak{m}}\rangle\,
       \langle [X_k,X_j]_{\mathfrak{m}}, \widebar{X}_i\rangle
  \nonumber\\
  &+2\sum_{j=1}^d\Big(
   \langle [ [\widebar{X}_j, X_i]_{\mathfrak{g}}, \widebar{X}_i]_{\mathfrak{m}}, X_j \rangle
   +\langle [[X_i, \widebar{X}_i]_{\mathfrak{m}}, \widebar{X}_j]_{\mathfrak{m}}, X_j \rangle
  \nonumber\\
  &\qquad\qquad
    - \langle[X_i, [\widebar{X}_i, \widebar{X}_j]_{\mathfrak{m}}]_{\mathfrak{m}}, X_j \rangle
    -\langle X_i, [ [\widebar{X}_i, \widebar{X}_j]_{\mathfrak{k}}, X_j ]_{\mathfrak{m}} \rangle
   \Big).
\end{align}
We can use the super-Jacobi identity
\begin{align*}
 [[\widebar{X}_j,X_i]_{\mathfrak{g}},\widebar{X}_i]_{\mathfrak{m}}
 =-[[X_i,\widebar{X}_i]_{\mathfrak{g}},\widebar{X}_j]_{\mathfrak{m}}
 +[X_i,[\widebar{X}_i,\widebar{X}_j]_{\mathfrak{g}}]_{\mathfrak{m}}
\end{align*}
to rewrite the first term in the last $j$-sum, and use \eqref{YXX} to conclude that
\begin{align*}
 \sum_{j=1}^d\,\langle [[X_i,\widebar{X}_i]_{\mathfrak{g}},\widebar{X}_j]_{\mathfrak{m}},X_j\rangle
 =\sum_{j=1}^d\,\langle [[X_i,\widebar{X}_i]_{\mathfrak{m}},\widebar{X}_j]_{\mathfrak{m}},X_j\rangle,
\end{align*}
while $\ad_{\mathfrak{k}}$-invariance implies that
\begin{align*}
   \langle X_i,[[\widebar{X}_i,\widebar{X}_j]_{\mathfrak{k}},X_j]_{\mathfrak{m}}\rangle
   =\langle [X_i,[\widebar{X}_i,\widebar{X}_j]_{\mathfrak{k}}]_{\mathfrak{m}},X_j\rangle.
\end{align*}
It follows that the last $j$-sum vanishes, and as $B(X_i,\widebar{X}_i)=(-1)^{[i]}B(\widebar{X}_i,X_i)$, 
the formula in \thmref{thm: Ric} follows.
\end{proof}

\subsection{Naturally reductive metrics}
\label{sec: nat red}

We say that the $G$-invariant metric $g$ on the homogeneous superspace $M=G/K$ is 
\emph{naturally reductive} if
\begin{align}\label{red}
 \langle [X,Y]_{\mathfrak{m}},Z\rangle=\langle X,[Y,Z]_{\mathfrak{m}}\rangle
\end{align}
for all $X,Y,Z\in\mathfrak{m}$. This mimics the definition of a naturally reductive metric in classical homogeneous 
geometry~\cite{DZ79}. Such metrics have been used extensively in the study of the Einstein equation, spectral geometry, 
the prescribed Ricci curvature problem, and other topics; see, e.g.,~\cite{DZ79,GS10,Lau20,AGP20}. 
When $g$ is naturally reductive, the curvature formulas obtained above simplify substantially.

\begin{lemma}\label{lem: U0}
Let $g$ be naturally reductive. Then, $U=0$.
\end{lemma}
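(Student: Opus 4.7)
The plan is to show directly that $\langle U(X,Y),Z\rangle=0$ for all homogeneous $X,Y,Z\in\mathfrak{m}$ and then invoke non-degeneracy of the metric on $\mathfrak{m}$ (together with the fact that $U(X,Y)\in\mathfrak{m}$) to conclude $U(X,Y)=0$. By super-linearity and the fact that $U$ is even bilinear, it suffices to treat homogeneous arguments.

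The main computation is short. Starting from the defining relation \eqref{eq: defU}, I would first rewrite the second term using the super-skew-symmetry of the Lie superbracket, namely $[X,Z]_{\mathfrak{m}}=-(-1)^{[X][Z]}[Z,X]_{\mathfrak{m}}$, so that both terms can be brought into a common form. Then I apply the naturally reductive condition \eqref{red} to each pairing: the first gives $\langle X,[Y,Z]_{\mathfrak{m}}\rangle=\langle [X,Y]_{\mathfrak{m}},Z\rangle$, and the second, after an analogous rewriting, produces a multiple of the same quantity $\langle [X,Y]_{\mathfrak{m}},Z\rangle$. Careful bookkeeping of the sign factors $(-1)^{[X][Y]}$ coming from both the super-skew-symmetry and the defining formula should show that the two contributions cancel exactly, yielding $\langle U(X,Y),Z\rangle=0$.

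Having established this for all homogeneous $Z\in\mathfrak{m}$, and noting that $U(X,Y)$ lies in $\mathfrak{m}$ and has homogeneous parity $[X]+[Y]$, the non-degeneracy of the graded Riemannian metric restricted to $\mathfrak{m}$ (or, equivalently, the non-degeneracy of $\langle-,-\rangle$ on each homogeneous component, as described after \thmref{thm: invmetric}) forces $U(X,Y)=0$.

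The only subtlety I anticipate is the sign bookkeeping when rewriting $\langle Y,[X,Z]_{\mathfrak{m}}\rangle$ so as to expose $\langle [X,Y]_{\mathfrak{m}},Z\rangle$ on the right; depending on the order in which super-skew-symmetry and naturally reductivity are applied, one picks up either one or two factors of $(-1)^{[X][Y]}$, and these must combine to give precisely the cancellation with the first term. This is routine but should be executed carefully.
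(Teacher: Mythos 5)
Your proposal is correct and takes essentially the same route as the paper: apply the naturally reductive identity \eqref{red} to both terms of the defining relation \eqref{eq: defU} and conclude by non-degeneracy of $\langle-,-\rangle$ on $\mathfrak{m}$. The sign bookkeeping you flag does work out: the second term $-(-1)^{[X][Y]}\langle Y,[X,Z]_{\mathfrak{m}}\rangle$ becomes $-(-1)^{[X][Y]}\langle [Y,X]_{\mathfrak{m}},Z\rangle=+\langle [X,Y]_{\mathfrak{m}},Z\rangle$, which cancels the first term $-\langle [X,Y]_{\mathfrak{m}},Z\rangle$ exactly.
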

\begin{proof}
Applying \eqref{red} to \eqref{eq: defU} yields
\begin{align*}
 \langle U(X,Y), Z\rangle=0,\qquad X,Y,Z\in\mathfrak{m}.
\end{align*}
Since this holds for all $Z$, the result follows from the non-degeneracy of $\langle-,-\rangle$.
\end{proof}

\begin{proposition}\label{prop: CurvInv}
Let $M=G/K$ be a homogeneous superspace with $G$-invariant graded naturally reductive Riemannian metric~$g$, 
where $G=(G_0, \mathfrak{g})$ is a connected Lie supergroup, $\mathfrak{g}$ a compact real form 
of a basic classical Lie superalgebra, and $K$ a connected closed Lie subsupergroup of $G$.
Let $\{X_1,\ldots,X_d\}$ be a $g$-normalised homogeneous basis for $\mathfrak{m}=T_K(M)$,
with right dual basis $\{\widebar{X}_1,\ldots,\widebar{X}_d\}$.
Then, for every $i,j,k,\ell\in\{1,\ldots,d\}$,
\begin{align*}
   4\,\langle R(X_i, X_j)X_k, X_{\ell}\rangle
   &=2\,\langle X_i, [X_j,[X_k,X_\ell]_{\mathfrak{m}}]_{\mathfrak{m}}\rangle 
     +4\,\langle X_i, [X_j,[X_k,X_\ell]_{\mathfrak{k}}]_{\mathfrak{m}}\rangle
    \\[.15cm] 
    &\quad+(-1)^{[X_j][X_k]}\langle X_i, [X_k,[X_j,X_\ell]_{\mathfrak{m}}]_{\mathfrak{m}}\rangle
      -(-1)^{([X_j]+[X_k])[X_\ell]}\langle X_i, [X_\ell,[X_j,X_k]_{\mathfrak{m}}]_{\mathfrak{m}}\rangle
\end{align*}
and
\begin{align*}
  \Ric(X_i, \widebar{X}_i)
  &=-\frac{1}{2}B(X_i, \widebar{X}_i)
   +\frac{1}{4}\sum_{j=1}^d\,\langle [X_i,\widebar{X}_j]_{\mathfrak{m}}, [X_j,\widebar{X}_i]_{\mathfrak{m}}\rangle.
\end{align*}
\end{proposition}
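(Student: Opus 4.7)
The plan is to specialise Proposition~\ref{prop: Rie} and Theorem~\ref{thm: Ric} using two ingredients: the naturally reductive identity \eqref{red} itself, and the fact that naturally reductive metrics satisfy $U\equiv 0$, which is Lemma~\ref{lem: U0}. The splitting \eqref{gkm} and the super-Jacobi identity will enter only in the Riemann calculation, while the completeness relation \eqref{XXX} drives the collapse of the triple sum in the Ricci calculation.

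For the Ricci formula I would start from Theorem~\ref{thm: Ric}: the $B$-term is already in place, so only the triple sum requires attention. Applying \eqref{red} to the two factors inside the sum rewrites $\langle X_i,[\widebar{X}_j,\widebar{X}_k]_{\mathfrak{m}}\rangle$ as $\langle [X_i,\widebar{X}_j]_{\mathfrak{m}},\widebar{X}_k\rangle$ and $\langle [X_k,X_j]_{\mathfrak{m}},\widebar{X}_i\rangle$ as $\langle X_k,[X_j,\widebar{X}_i]_{\mathfrak{m}}\rangle$. The sum over $k$ then collapses by \eqref{XXX}, leaving $\sum_j\langle [X_i,\widebar{X}_j]_{\mathfrak{m}},[X_j,\widebar{X}_i]_{\mathfrak{m}}\rangle$, which is a copy of the second summand already in Theorem~\ref{thm: Ric}. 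Combining coefficients $\tfrac{1}{2}-\tfrac{1}{4}=\tfrac{1}{4}$ then gives the claimed identity.

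For the Riemann formula I would first drop the two $U$-terms in Proposition~\ref{prop: Rie}, leaving eleven terms: three ``quadratic bracket'' expressions of the form $\langle [\cdot,\cdot]_{\mathfrak{m}},[\cdot,\cdot]_{\mathfrak{m}}\rangle$, six ``triple bracket'' expressions of the form $\langle\cdot,[[\cdot,\cdot]_{\mathfrak{m}},\cdot]_{\mathfrak{m}}\rangle$ (and cyclic variants), and two ``$\mathfrak{g}$-bracket'' expressions of the form $\langle\cdot,[\cdot,[\cdot,\cdot]_{\mathfrak{g}}]_{\mathfrak{m}}\rangle$. I would convert each quadratic bracket into a triple bracket via \eqref{red}, then split $[X_k,X_\ell]_{\mathfrak{g}}=[X_k,X_\ell]_{\mathfrak{k}}+[X_k,X_\ell]_{\mathfrak{m}}$ through \eqref{gkm}. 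The $\mathfrak{k}$-component, after pairing with its partner, yields the summand $4\,\langle X_i,[X_j,[X_k,X_\ell]_{\mathfrak{k}}]_{\mathfrak{m}}\rangle$ in the statement, while the various $\mathfrak{m}$-contributions are reorganised using super-skew-symmetry of $[-,-]_{\mathfrak{m}}$ and the super-Jacobi identity until only the three remaining summands survive.

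The hard part will be the sign and combinatorial bookkeeping for the Riemann calculation: each application of \eqref{red} or of super-Jacobi introduces Koszul signs, and the eleven surviving terms must recombine with exactly the coefficients $2$, $4$, $(-1)^{[X_j][X_k]}$, and $-(-1)^{([X_j]+[X_k])[X_\ell]}$ in the stated formula. No ingredient beyond \eqref{red}, \eqref{gkm}, \eqref{XXX}, super-skew-symmetry, and the super-Jacobi identity is required, so once the sign conventions are fixed the reduction is essentially mechanical.
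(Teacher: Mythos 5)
Your strategy coincides with the paper's: set $U=0$ via \lemref{lem: U0} and feed the naturally reductive identity \eqref{red} into \propref{prop: Rie} and \thmref{thm: Ric}. Your Ricci argument is complete and correct — rewriting both factors inside the triple sum with \eqref{red} and collapsing the $k$-sum by \eqref{XXX} turns it into $\sum_j\langle[X_i,\widebar{X}_j]_{\mathfrak m},[X_j,\widebar{X}_i]_{\mathfrak m}\rangle$, and $\tfrac12-\tfrac14=\tfrac14$ gives the second display.

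The gap is in the Riemann half, which you defer entirely to ``mechanical bookkeeping'' while asserting its outcome. Two small inaccuracies first: after discarding the $U$-terms, thirteen terms of \propref{prop: Rie} survive (there are eight triple-bracket terms, not six), and the super-Jacobi identity is not needed — \eqref{red}, the $\ad_{\mathfrak k}$-invariance, super-skew-symmetry of the bracket and supersymmetry of the metric suffice. The substantive problem is that when the bookkeeping is actually carried out, the two purely $\mathfrak m$-bracket terms do \emph{not} recombine with the coefficients you assert. The reduction of \propref{prop: Rie} produces
\begin{align*}
 4\,\langle R(X_i,X_j)X_k,X_\ell\rangle
 &=2\,\langle X_i,[X_j,[X_k,X_\ell]_{\mathfrak m}]_{\mathfrak m}\rangle
  +4\,\langle X_i,[X_j,[X_k,X_\ell]_{\mathfrak k}]_{\mathfrak m}\rangle
 \\
 &\quad-(-1)^{[X_j][X_k]}\langle X_i,[X_k,[X_j,X_\ell]_{\mathfrak m}]_{\mathfrak m}\rangle
  +(-1)^{([X_j]+[X_k])[X_\ell]}\langle X_i,[X_\ell,[X_j,X_k]_{\mathfrak m}]_{\mathfrak m}\rangle,
\end{align*}
i.e.\ with the last two signs opposite to the target display, and the two versions are genuinely different. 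You can see that the stated signs cannot be what \propref{prop: Rie} yields by an elementary check: take $\mathfrak k=0$, all brackets equal to $[\cdot,\cdot]_{\mathfrak m}$, and a bi-invariant metric on $\mathfrak{su}(2)$ with an orthonormal basis (purely even, $U=0$, naturally reductive). There \propref{prop: Rie} collapses, via the Jacobi identity, to $4\langle R(X_i,X_j)X_k,X_\ell\rangle=\langle X_i,[X_j,[X_k,X_\ell]]\rangle$, in agreement with the direct computation from $\nabla_{X^\dagger}Y^\dagger=-\tfrac12[X,Y]^\dagger$, whereas the stated display equals $3\,\langle X_i,[X_j,[X_k,X_\ell]]\rangle$; for instance $(i,j,k,\ell)=(2,1,1,2)$ gives $-3$ instead of the correct $-1$, and only the sign-corrected version above contracts through \eqref{RicR} to the Ricci display you proved. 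So the step you call ``essentially mechanical'' is exactly where your proposal, executed faithfully, fails to land on the stated formula: you need to carry out the reduction explicitly and confront the sign discrepancy (which points to a defect in the displayed target rather than in the method), not assert that the coefficients match.
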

\begin{proof}
By applying \eqref{red} and \lemref{lem: U0} to the curvature expressions in \propref{prop: Rie} 
and \thmref{thm: Ric}, the result follows.
\end{proof}

\subsection{Diagonal metrics}
\label{sec: diag_curv}

Here, we discuss particular specialisations of the formula for the Ricci curvature 
that we will use to obtain our main results on Einstein metrics in \secref{sec: gen_flag}.

Recall that $\mathfrak{g}$ admits the $Q$-orthogonal decomposition $\mathfrak{g}=\mathfrak{k}\oplus \mathfrak{m}$. 
As $K$ is connected, this means that $\mathfrak{m}$ is an $\ad_{\mathfrak{k}}$-representation, 
that is, $[\mathfrak{k}, \mathfrak{m}]\subseteq \mathfrak{m}$. 
In the following, we will assume $\mathfrak{m}$ decomposes 
into $Q$-orthogonal $\ad_{\mathfrak{k}}$-irreducible representations,
\begin{align}\label{eq: mdec}
  \mathfrak{m}=\mathfrak{m}_1\oplus \cdots \oplus \mathfrak{m}_s,
\end{align}
where $\mathfrak{m}_i\not \cong \mathfrak{m}_j$ as $\ad_{\mathfrak{k}}$-representations, for all $i\neq j$. Such a 
decomposition is unique up to permutation of summands. As $\mathfrak{m}_i$ is $\ad_{\mathfrak{k}}$-irreducible, and both 
$Q$ and the Killing form $B$ are even bilinear forms, Schur's lemma for Lie superalgebras~\cite{Kac77} tells us that there 
exist $b_1,\ldots,b_s\in \mathbb{R}$ such that 
\begin{align}\label{BbQ}
 B|_{\mathfrak{m}_i}=-b_iQ|_{\mathfrak{m}_i}, \qquad i=1,\ldots,s,
\end{align}
where the sign convention is adopted from the non-super setting.

As we will discuss in the following, the curvature formulas obtained in \secref{subsec: curvs} simplify quite dramatically for 
diagonal metrics. We will frequently use an ordered basis for $\mathfrak{m}$ that is adapted to the decomposition 
\eqref{eq: mdec}. For each $i=1,\ldots,s$, we thus let 
$\mathfrak{M}_i=\{e^i_{\alpha}\,|\, \alpha=1,\ldots,\dim(\mathfrak{m}_i)\}$ be an ordered $Q$-normalised homogeneous 
basis for $\mathfrak{m}_i$ such that the corresponding matrix representation of $Q$ is of the form \eqref{eq: normbasis}, 
and let $\{\bar{e}_{\alpha}^i\,|\, \alpha=1,\ldots,\dim(\mathfrak{m}_i)\}$ be the right $Q$-dual basis, defined by
\begin{align*}
 Q(e_{\alpha}^i, \bar{e}_{\alpha'}^i)= \delta_{\alpha, \alpha'}, \qquad \alpha, \alpha'\in\{1,\ldots,\dim(\mathfrak{m}_i)\}.
\end{align*}
Ordered as indicated, $\mathfrak{M}_1\cup\cdots\cup \mathfrak{M}_s$ then constitutes a suitably ordered basis for 
$\mathfrak{m}$. 
For each $i=1,\ldots,s$, we let $I^i$ denote the index set of the basis $\mathfrak{M}_i$ for $\mathfrak{m}_i$.

Let $g$ be a $G$-invariant graded Riemannian metric on $M$. Then, by \thmref{thm: invmetric}, $g$ is identified with 
an $\Ad_K$-invariant scalar superproduct $\langle-,-\rangle$ on $\mathfrak{m}$. Since $K$ is connected, 
the $\Ad_K$-invariance is equivalent to the $\ad_{\mathfrak{k}}$-invariance of $\langle-,-\rangle$ at the level of Lie 
superalgebras. By Schur's lemma, the metric $g$ is therefore of the form
\begin{align}\label{eq: metric}
\langle-,-\rangle= \sum_{i=1}^s x_i Q|_{\mathfrak{m}_i}, \qquad x_1,\ldots,x_s\in\mathbb{R}^\times,
 \end{align} 
and we say that it is \emph{diagonal} with respect to the decomposition~\eqref{eq: mdec}.
Unlike in classical Riemannian geometry, the parameters $x_i$ need not be positive.
If they all are, then we say that the metric is \textit{positive}.
Moreover, the Ricci curvature is an even $\ad_{\mathfrak{k}}$-invariant 
supersymmetric bilinear form on $\mathfrak{m}$, so it is of the form
\begin{align}\label{RicQm}
 \Ric= \sum_{i=1}^s r_i Q|_{\mathfrak{m}_i}, \qquad r_1,\ldots,r_s\in\mathbb{R}.
\end{align}
With this, the Einstein equation \eqref{intro: Einstein} becomes
\begin{align}\label{eq: Ein}
 r_i=cx_i,\qquad i=1,\ldots,s.
\end{align}

\subsubsection{Structure constants}
\label{subsec: ijk}

We now introduce super-analogues of the structure constants; see, e.g.,~\cite{WZ86}.
\begin{definition}
For each triple $i,j ,k\in\{1,\ldots,s\}$, let
\begin{align*}
 [ijk]:=-\sum_{\alpha\in I^i,\,\beta\in I^j,\,\gamma\in I^k}
   Q\big(e_{\gamma}^k, [e_{\alpha}^i, e_{\beta}^j]_{\mathfrak{m}_k}\big)\,Q\big([\bar{e}_{\beta}^j,\bar{e}_{\alpha}^i]_{\mathfrak{m}_k},\bar{e}_{\gamma}^k\big).
\end{align*}
\end{definition}
The $\ad_{\mathfrak{g}}$-invariance of $Q$ implies that the structure constant $[ijk]$ is symmetric in all three indices.
\begin{remark}
If $\dim(\mathfrak{m}_{\bar{1}})=0$ and $Q|_{\mathfrak{m}_{\bar{0}}}$ is positive definite, i.e., $M$ is a classical 
homogeneous Riemannian manifold, then $\bar{e}_{\alpha}^i=e_{\alpha}^i$ and the structure constant $[ijk]$ coincides with 
its classical counterpart.
\end{remark}
Let $\Str_{\mathfrak{m}_i}(f)$ denote the supertrace of the restriction $f|_{\mathfrak{m}_i}$ for any 
$f\in\mathrm{End}(\mathfrak{g})$, and let $\ad_{\mathfrak{m}_i}(X):=[X,-]_{\mathfrak{m}_i}$ for any $X\in \mathfrak{g}$. 
We then have the following result. 
\begin{lemma}\label{lem: sumstr}
For each triple $i,j ,k\in\{1,\ldots,s\}$, we have
\begin{align*}
 [ijk]= -\sum_{\alpha\in I^i}(-1)^{[e_{\alpha}^i]} 
   \Str_{\mathfrak{m}_j}\!\big(\!\ad_{\mathfrak{m}_j}(e_{\alpha}^i) \ad_{\mathfrak{m}_k}(\bar{e}_{\alpha}^i)\big).
\end{align*} 
Moreover, $[ijk]$ is independent of the $Q$-normalised homogeneous bases of $\mathfrak{m}_1,\ldots,\mathfrak{m}_s$.
\end{lemma}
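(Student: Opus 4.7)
My plan is to reduce the triple sum defining $[ijk]$ to a supertrace in three steps. First, I would collapse the sum over $\gamma$: since $\{e_\gamma^k\}_{\gamma\in I^k}$ and $\{\bar{e}_\gamma^k\}_{\gamma\in I^k}$ form a right $Q$-dual pair of bases for $\mathfrak{m}_k$, the completeness identity $\sum_\gamma Q(e_\gamma^k, Z)\,\bar{e}_\gamma^k = Z$ (the $\mathfrak{m}_k$-analogue of \eqref{XXX}) applies to $Z = [e_\alpha^i, e_\beta^j]_{\mathfrak{m}_k}$. Pulling the scalars $Q(e_\gamma^k, Z)$ into $Q(Y, -)$ by bilinearity gives
\[
 \sum_{\gamma\in I^k} Q\bigl(e_\gamma^k, [e_\alpha^i, e_\beta^j]_{\mathfrak{m}_k}\bigr)\, Q\bigl([\bar{e}_\beta^j,\bar{e}_\alpha^i]_{\mathfrak{m}_k},\bar{e}_\gamma^k\bigr) = Q\bigl([\bar{e}_\beta^j,\bar{e}_\alpha^i]_{\mathfrak{m}_k},[e_\alpha^i, e_\beta^j]_{\mathfrak{m}_k}\bigr),
\]
and hence $[ijk] = -\sum_{\alpha,\beta} Q\bigl([\bar{e}_\beta^j,\bar{e}_\alpha^i]_{\mathfrak{m}_k},[e_\alpha^i, e_\beta^j]_{\mathfrak{m}_k}\bigr)$.

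Second, I would apply \lemref{lem: fXX} --- whose proof rests only on the matrix form \eqref{eq: normbasis} and therefore carries over verbatim to any $Q$-normalised homogeneous basis with its right-dual, in particular to $\mathfrak{M}_i$. For each fixed $\beta\in I^j$, the map $f(Y,Z):=Q\bigl([\bar{e}_\beta^j, Y]_{\mathfrak{m}_k},[Z, e_\beta^j]_{\mathfrak{m}_k}\bigr)$ is even bilinear on $\mathfrak{m}_i\times\mathfrak{m}_i$, so the lemma yields $\sum_\alpha f(\bar{e}_\alpha^i, e_\alpha^i) = \sum_\alpha (-1)^{[e_\alpha^i]} f(e_\alpha^i,\bar{e}_\alpha^i)$. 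Substituting, I obtain
\[
 [ijk] = -\sum_{\alpha,\beta} (-1)^{[e_\alpha^i]}\, Q\bigl([\bar{e}_\beta^j, e_\alpha^i]_{\mathfrak{m}_k},[\bar{e}_\alpha^i, e_\beta^j]_{\mathfrak{m}_k}\bigr).
\]

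Third, I would recognise the inner $\beta$-sum as a supertrace. By $Q$-orthogonality of $\mathfrak{g}=\mathfrak{k}\oplus\bigoplus_\ell \mathfrak{m}_\ell$, the projection onto $\mathfrak{m}_k$ in the first $Q$-slot is inessential, and $\ad$-invariance in the form $Q([X,Y],Z)=Q(X,[Y,Z])$ then yields $Q\bigl([\bar{e}_\beta^j, e_\alpha^i]_{\mathfrak{m}_k},[\bar{e}_\alpha^i, e_\beta^j]_{\mathfrak{m}_k}\bigr) = Q\bigl(\bar{e}_\beta^j,\, [e_\alpha^i,[\bar{e}_\alpha^i, e_\beta^j]_{\mathfrak{m}_k}]_{\mathfrak{m}_j}\bigr)$. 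Summing over $\beta$ recovers $\Str_{\mathfrak{m}_j}\bigl(\ad_{\mathfrak{m}_j}(e_\alpha^i)\ad_{\mathfrak{m}_k}(\bar{e}_\alpha^i)\bigr)$ in the sense of \eqref{StrXX}, which gives the stated identity. Basis independence is then automatic: $\Str_{\mathfrak{m}_j}$ and $\ad_{\mathfrak{m}_k}$ are intrinsic to the subspaces, while for $\mathfrak{m}_i$ the combination $\sum_\alpha (-1)^{[e_\alpha^i]}\phi(e_\alpha^i,\bar{e}_\alpha^i)$ for an even bilinear form $\phi$ on $\mathfrak{m}_i\times\mathfrak{m}_i$ is the $Q$-induced supertrace of the endomorphism of $\mathfrak{m}_i$ associated with $\phi$, hence basis-independent.

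The main obstacle is not conceptual but notational: bookkeeping the signs from supersymmetry and verifying that the various $\mathfrak{m}_\ell$- and $\mathfrak{m}_j$-projections can be freely inserted or deleted at each step thanks to $Q$-orthogonality, so that the $\ad$-invariance of $Q$ can be applied cleanly in the form needed to expose the supertrace.
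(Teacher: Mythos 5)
Your proposal is correct and follows essentially the same route as the paper: collapse the $\gamma$-sum by duality and then use the $\ad$-invariance of $Q$ together with the $Q$-orthogonality of the decomposition to expose the $\beta$-sum as $\Str_{\mathfrak{m}_j}\!\big(\!\ad_{\mathfrak{m}_j}(e_{\alpha}^i)\ad_{\mathfrak{m}_k}(\bar{e}_{\alpha}^i)\big)$; the only cosmetic difference is that you invoke \lemref{lem: fXX} on the $\alpha$-sum where the paper instead performs super-(skew-)symmetry manipulations inside the $\beta$-sum. Your basis-independence argument (the weighted sum $\sum_{\alpha}(-1)^{[e_{\alpha}^i]}\phi(e_{\alpha}^i,\bar{e}_{\alpha}^i)$ is the supertrace of the $Q$-associated endomorphism of $\mathfrak{m}_i$, hence intrinsic) is a valid and slightly more conceptual substitute for the paper's explicit transition-matrix computation with $B^i=((A^i)^{-1})^T$.
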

\begin{proof}
Using $[\bar{e}^j_{\beta}, \bar{e}_{\alpha}^i]_{\mathfrak{m}_k}= \sum_{\gamma\in I^k} Q([\bar{e}^j_{\beta}, \bar{e}_{\alpha}^i]_{\mathfrak{m}}, \bar{e}_{\gamma}^k) e_{\gamma}^k$,
we see that
\begin{align*}
 [ijk]=-\sum_{\alpha\in I^i,\,\beta\in I^j}Q\big([\bar{e}^j_{\beta}, \bar{e}_{\alpha}^i]_{\mathfrak{m}_k}, [e_{\alpha}^i, e_{\beta}^j]_{\mathfrak{m}_k}\big),
\end{align*}
so the desired expression for $[ijk]$ follows from
\begin{align}\label{QStr}
 \sum_{\beta\in I^j}Q\big([\bar{e}^j_{\beta}, \bar{e}_{\alpha}^i]_{\mathfrak{m}_k}, [e_{\alpha}^i, e_{\beta}^j]_{\mathfrak{m}_k}\big)
   &=\sum_{\beta\in I^j}Q\big( [[\bar{e}^j_{\beta}, \bar{e}_{\alpha}^i]_{\mathfrak{m}_k}, e_{\alpha}^i]_{\mathfrak{m}_j}, e_{\beta}^j\big) 
    =\sum_{\beta\in I^j}(-1)^{[e_{\alpha}^i] +[e_{\beta}^j]} Q\big(e_{\beta}^j, [e_{\alpha}^i, [\bar{e}_{\alpha}^i, \bar{e}_{\beta}^j]_{\mathfrak{m}_k}]_{\mathfrak{m}_j}\big)
  \nonumber\\[.1cm]
  &=(-1)^{[e_{\alpha}^i]}\Str_{\mathfrak{m}_j}\!\big(\!\ad_{\mathfrak{m}_j}(e_{\alpha}^i)\ad_{\mathfrak{m}_k}(\bar{e}_{\alpha}^i)\big).
\end{align}
It remains to show that $[ijk]$ is independent of the choice of 
basis for $\mathfrak{m}_i$. Suppose there is another $Q$-normalised homogeneous basis $\{f_{\beta}^i\,|\,\beta\in I^i\}$ 
for $\mathfrak{m}_i$, with right dual basis $\{\bar{f}_{\beta}^i\,|\,\beta\in I^i\}$, 
and let $A^i$ and $B^i$ be the (even) transition matrices such that
\begin{align*}
  f_{\beta}^{i}=\sum_{\alpha\in I^i} A^i_{\beta\alpha}e_{\alpha}^i, \qquad
  \bar{f}_{\beta}^{i}=\sum_{\alpha\in I^i} B^i_{\beta\alpha}\bar{e}_{\alpha}^i. 
\end{align*}
Then, $B^i=((A^i)^{-1})^T$ and
\begin{align*}
  \sum_{\beta \in I^i}(-1)^{[f_{\beta}^i]}\Str_{\mathfrak{m}_j}\!\big(\!\ad_{\mathfrak{m}_j}(f_{\beta}^i)\ad_{\mathfrak{m}_k}(\bar{f}_{\beta}^i)\big)
  &=\sum_{\alpha,\beta,\gamma \in I^i}(-1)^{[e_{\alpha}^i]} A^{i}_{\beta\alpha} B_{\beta\gamma}^i 
    \Str_{\mathfrak{m}_j}\!\big(\!\ad_{\mathfrak{m}_j}(e_{\alpha}^i)\ad_{\mathfrak{m}_k}(\bar{e}_{\gamma}^i)\big)
 \\[.1cm] 
 &=\sum_{\alpha\in I^i}(-1)^{[e_{\alpha}^i]}\Str_{\mathfrak{m}_j}\!\big(\!\ad_{\mathfrak{m}_j}(e_{\alpha}^i)\ad_{\mathfrak{m}_k}(\bar{e}_{\alpha}^i)\big).
\end{align*}
This proves the basis independence of $[ijk]$.
\end{proof}

Let $\{z_1,\ldots,z_{\dim(\mathfrak{k})}\}$ be a homogeneous basis 
for $\mathfrak{k}$, with right $Q$-dual basis $\{\bar{z}_1,\ldots,\bar{z}_{\dim(\mathfrak{k})}\}$. 
For each $i=1,\ldots,s$, define the Casimir operator on $\mathfrak{m}_i$ by
\begin{align*}
  C_{\mathfrak{m}_i, Q|_{\mathfrak{k}}}:= 
   -\sum_{\ell=1}^{\dim(\mathfrak{k})}(-1)^{[z_\ell]}\ad(z_\ell) \ad(\bar{z}_\ell), 
\end{align*}
where the negative sign is incorporated as a convention (cf.~\cite{WZ86}).
Since $\mathfrak{m}_i$ is $\ad_{\mathfrak{k}}$-irreducible and $C_{\mathfrak{m}_i, Q|_{\mathfrak{k}}}$ is an even operator, 
Schur's lemma implies that there exists  $c_i\in \mathbb{R}$ such that
\begin{align}\label{Cc}
 C_{\mathfrak{m}_i, Q|_{\mathfrak{k}}}= c_i\mathrm{Id},
\end{align}
where $\mathrm{Id}$ denotes the identity operator.
Using this, the next result provides a super-analogue of~\cite[Lemma 1.5]{WZ86}. 
For ease of notation, we set
\begin{align}\label{di} 
 d_i:=\mathrm{sdim}(\mathfrak{m}_i)=\dim((\mathfrak{m}_i)_{\bar{0}})-\dim((\mathfrak{m}_i)_{\bar{1}}), \qquad i=1,\ldots,s.
\end{align}
Note that $d_i$, as well as $b_i$ and $[ijk]$, can be negative in the super-setting. 
\begin{proposition}\label{prop: StrCas}
Let $M=G/K$ be a homogeneous superspace, with notation as in \thmref{thm: Ric}, and suppose $\mathfrak{m}$ admits a 
$Q$-orthogonal multiplicity-free decomposition of the form \eqref{eq: mdec}. Then,
\begin{align*} 
 \sum_{j,k=1}^s [ijk]=d_i(b_i-2c_i),\qquad i=1,\dots,s,
\end{align*}
where $d_i$, $b_i$ and $c_i$ are given in \eqref{di}, \eqref{BbQ} and \eqref{Cc}.
\end{proposition}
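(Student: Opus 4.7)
The plan is to package $\sum_{j,k}[ijk]$ as a single supertrace over $\mathfrak{m}$, then contrast it with the full $\Str_{\mathfrak{g}}$ that computes the Killing form, and finally identify the two residual pieces that appear with the Casimir eigenvalue $c_i$.

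First, I apply \lemref{lem: sumstr} and sum over $j$ and $k$: using $\sum_{k}\ad_{\mathfrak{m}_k}(\bar e_\alpha^i)(Z)=[\bar e_\alpha^i,Z]_{\mathfrak{m}}$, this collapses to
\begin{equation*}
\sum_{j,k}[ijk]=-\sum_{\alpha\in I^i}(-1)^{[e_\alpha^i]}\Str_{\mathfrak{m}}\!\big(Z\mapsto[e_\alpha^i,[\bar e_\alpha^i,Z]_{\mathfrak{m}}]_{\mathfrak{m}}\big).
\end{equation*}
I then introduce the auxiliary operator $T_i:=\sum_{\alpha}(-1)^{[e_\alpha^i]}\ad(e_\alpha^i)\ad(\bar e_\alpha^i)$ on $\mathfrak{g}$, whose full supertrace is $\Str_{\mathfrak{g}}(T_i)=\sum_{\alpha}(-1)^{[e_\alpha^i]}B(e_\alpha^i,\bar e_\alpha^i)=-b_id_i$ by \eqref{BbQ}. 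Decomposing $\Str_{\mathfrak{g}}=\Str_{\mathfrak{k}}+\Str_{\mathfrak{m}}$ and, inside $\Str_{\mathfrak{m}}(T_i|_{\mathfrak{m}})$, writing $[\bar e_\alpha^i,Z]=[\bar e_\alpha^i,Z]_{\mathfrak{m}}+[\bar e_\alpha^i,Z]_{\mathfrak{k}}$ while noting that $[e_\alpha^i,[\bar e_\alpha^i,Z]_{\mathfrak{k}}]\in\mathfrak{m}_i$ by the $\ad_{\mathfrak{k}}$-invariance of $\mathfrak{m}_i$, I obtain
\begin{equation*}
\sum_{j,k}[ijk]=b_id_i+\Str_{\mathfrak{k}}(T_i|_{\mathfrak{k}})+\sum_{\alpha}(-1)^{[e_\alpha^i]}\Str_{\mathfrak{m}_i}\!\big(Z\mapsto[e_\alpha^i,[\bar e_\alpha^i,Z]_{\mathfrak{k}}]\big).
\end{equation*}

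The crux is to show that each of the last two terms equals $-c_id_i$. For the $\mathfrak{m}_i$-trace, I expand $[\bar e_\alpha^i,Z]_{\mathfrak{k}}=\sum_{\ell}Q([\bar e_\alpha^i,Z],\bar z_\ell)\,z_\ell$ in a homogeneous $Q$-dual basis $\{z_\ell\},\{\bar z_\ell\}$ of $\mathfrak{k}$, apply the $\ad_{\mathfrak{g}^{\mathbb C}}$-invariance of $Q$ to relocate the outer $e_\alpha^i$'s, and then invoke the dual-basis collapse identity $\sum_{\alpha}(-1)^{[e_\alpha^i]}Q(\bar e_\alpha^i,U)\,Q(e_\alpha^i,V)=Q(U,V)$ for $U,V\in\mathfrak{m}_i$, which follows from \eqref{XXX} combined with the super-symmetry of $Q$. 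A super-skew-symmetry step on an inner bracket then matches the Casimir expression obtained by expanding \eqref{Cc} with the same basis, yielding $-c_id_i$. The $\mathfrak{k}$-trace is handled in parallel: $\ad$-invariance converts $Q([e_\alpha^i,[\bar e_\alpha^i,z_\ell]],\bar z_\ell)$ into a pairing $Q([\bar e_\alpha^i,z_\ell],[e_\alpha^i,\bar z_\ell])$ of two elements of $\mathfrak{m}_i$, and a further combination of super-symmetry of $Q$ and super-skew-symmetry of the bracket again produces $-c_id_i$. Adding the three contributions gives $\sum_{j,k}[ijk]=b_id_i-2c_id_i=d_i(b_i-2c_i)$.

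The principal obstacle is the super-sign arithmetic in this final step: each use of $Q([X,Y],Z)=Q(X,[Y,Z])$, super-skew-symmetry $[X,Y]=-(-1)^{[X][Y]}[Y,X]$, and super-symmetry $Q(A,B)=(-1)^{[A][B]}Q(B,A)$ injects a sign, and these must conspire precisely so that both residuals land on the exponent $(-1)^{[e_\alpha^i](1+[z_\ell])}$ that appears when $c_id_i$ is expanded directly from \eqref{Cc}. A single overlooked use of $[\bar e_\alpha^i]=[e_\alpha^i]$, or a miscount of the parity $[\bar e_\alpha^i]+[z_\ell]$ of an intermediate bracket, flips the final equality, so this bookkeeping must be carried out with care.
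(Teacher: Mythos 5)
Your proposal is correct and follows essentially the same route as the paper's proof: \lemref{lem: sumstr} packages $\sum_{j,k}[ijk]$ as $-\sum_{\alpha}(-1)^{[e_\alpha^i]}\Str_{\mathfrak{m}}\!\big(\!\ad_{\mathfrak{m}}(e_\alpha^i)\ad_{\mathfrak{m}}(\bar e_\alpha^i)\big)$, the full $\Str_{\mathfrak{g}}$ produces $b_id_i$ via the Killing form and \eqref{BbQ}, and the two cross terms each contribute $-c_id_i$ through the Casimir \eqref{Cc}. The only minor difference is that the paper identifies the two cross terms with each other using supertrace cyclicity, $\Str_V(fg)=(-1)^{[f][g]}\Str_W(gf)$, together with the dual-basis swap of \lemref{lem: fXX}, and then evaluates a single $\Str_{\mathfrak{k}}$ expression against $C_{\mathfrak{m}_i,Q|_{\mathfrak{k}}}$, whereas you evaluate the $\mathfrak{m}_i$-trace and the $\mathfrak{k}$-trace separately by direct dual-basis expansions; both versions close with the same $Q$-invariance and sign bookkeeping you describe.
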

\begin{proof}
Using \lemref{lem: sumstr}, we have
\begin{align*}
    \sum_{j,k=1}^s[ijk]=-\sum_{j,k=1}^s \sum_{\alpha\in I^i}(-1)^{[e_{\alpha}^i]} \Str_{\mathfrak{m}_j}\!\big(\!\ad_{\mathfrak{m}_j}(e_{\alpha}^i) \ad_{\mathfrak{m}_k}(\bar{e}_{\alpha}^i)\big)
       =- \sum_{\alpha\in I^i}(-1)^{[e_{\alpha}^i]} \Str_{\mathfrak{m}}\!\big(\!\ad_{\mathfrak{m}}(e_{\alpha}^i)\ad_{\mathfrak{m}}(\bar{e}_{\alpha}^i)\big).
\end{align*}    
Since $\mathfrak{g}$ admits the $Q$-orthogonal decomposition $\mathfrak{g}=\mathfrak{k} \oplus \mathfrak{m}$ 
and $[\mathfrak{k}, \mathfrak{m}]\subseteq \mathfrak{m}$,  we have
\begin{align*}
   \Str_{\mathfrak{m}}\!\big(\!\ad_{\mathfrak{m}}(e_{\alpha}^i)\ad_{\mathfrak{m}}(\bar{e}_{\alpha}^i)\big)
  =\Str_{\mathfrak{g}}\!\big(\!\ad_{\mathfrak{g}}(e_{\alpha}^i)\ad_{\mathfrak{g}}(\bar{e}_{\alpha}^i)\big) 
    -\Str_{\mathfrak{m}}\!\big(\!\ad_{\mathfrak{m}}(e_{\alpha}^i)\ad_{\mathfrak{k}}(\bar{e}_{\alpha}^i)\big)  
    -\Str_{\mathfrak{k}}\!\big(\!\ad_{\mathfrak{k}}(e_{\alpha}^i)\ad_{\mathfrak{m}}(\bar{e}_{\alpha}^i)\big).
\end{align*}
Since $\Str_V(fg)= (-1)^{[f][g]}\Str_{W}(gf)$ for any two homogeneous linear maps $f: W\to V$ and $g: V\to W$ between 
vector superspaces $V$ and $W$, we see that 
$\Str_{\mathfrak{m}}(\ad_{\mathfrak{m}}(e_{\alpha}^i)\ad_{\mathfrak{k}}(\bar{e}_{\alpha}^i))=(-1)^{[e_{\alpha}^i]}\Str_{\mathfrak{k}}(\ad_{\mathfrak{k}}(\bar{e}_{\alpha}^i)\ad_{\mathfrak{m}}(e_{\alpha}^i))$.
It follows that
\begin{align*}
    \sum_{j,k=1}^s[ijk]
    &=-\sum_{\alpha\in I^i} (-1)^{[e_{\alpha}^i]} B(e_{\alpha}^i, \bar{e}_{\alpha}^i) 
       +\sum_{\alpha\in I^i} \Str_{\mathfrak{k}}\!\big(\!\ad_{\mathfrak{k}}(\bar{e}_{\alpha}^i)\ad_{\mathfrak{m}}(e_{\alpha}^i)\big)
       +\sum_{\alpha\in I^i} (-1)^{[e_{\alpha}^i]} \Str_{\mathfrak{k}}\!\big(\!\ad_{\mathfrak{k}}(e_{\alpha}^i)\ad_{\mathfrak{m}}(\bar{e}_{\alpha}^i)\big) 
   \\[.1cm]
   &=\sum_{\alpha\in I^i} (-1)^{[e_{\alpha}^i]}b_i Q(e_{\alpha}^i, \bar{e}_{\alpha}^i)
       +2\sum_{\alpha\in I^i} (-1)^{[e_{\alpha}^i]} \Str_{\mathfrak{k}}\!\big(\!\ad_{\mathfrak{k}}(e_{\alpha}^i)\ad_{\mathfrak{m}}(\bar{e}_{\alpha}^i)\big) 
   \\ 
   &=b_id_i +2 \sum_{\alpha\in I^i} (-1)^{[e_{\alpha}^i]} 
     \sum_{\ell=1}^{\dim(\mathfrak{k})} (-1)^{[z_\ell]} Q\big(z_\ell, [e_{\alpha}^i, [\bar{e}_{\alpha}^i,\bar{z}_\ell]_{\mathfrak{m}}]_{\mathfrak{k}}\big)
   \\
   &=b_id_i +2\sum_{\alpha\in I^i} (-1)^{[e_{\alpha}^i]} 
     \sum_{\ell=1}^{\dim(\mathfrak{k})} (-1)^{[z_\ell]} Q\big(e_{\alpha}^i, [z_\ell,[\bar{z}_\ell, \bar{e}_{\alpha}^i]_{\mathfrak{m}}]_{\mathfrak{m}}\big)
   \\ 
   &=b_id_i +2\sum_{\alpha\in I^i} (-1)^{[e_{\alpha}^i]} Q\big(e_{\alpha}^i, -C_{\mathfrak{m}_i, Q|_{\mathfrak{k}}}(\bar{e}_{\alpha}^i)\big)
   \\
   &=b_id_i-2d_ic_i,
\end{align*}
where the fourth equality uses the $\ad_{\mathfrak{g}}$-invariance and supersymmetry of $Q$.
\end{proof}

\subsubsection{Ricci curvature}
\label{subsec: RicciCoef}

For diagonal metrics, we now derive expressions for the \emph{Ricci coefficients} $r_i$, $i=1,\ldots,s$, in \eqref{RicQm}.
Although the superdimensions $d_i$ may be $0$ (unlike the corresponding ordinary dimensions), the formula for the Ricci 
coefficients in \thmref{thm: SimRic} below is analogous to the corresponding classical result; cf.~\cite[Lemma 1.1]{PS97} 
and~\cite[Lemma 3.3]{PR19}. If $\dim(\mathfrak{m}_{\bar{1}})=0$, then \thmref{thm: SimRic} gives a Ricci curvature formula 
for a pseudo-Riemannian manifold.
\begin{theorem}\label{thm: SimRic}
Let $M=G/K$ be a homogeneous superspace, with notation as in \thmref{thm: Ric}, and suppose $\mathfrak{m}$ admits a 
$Q$-orthogonal multiplicity-free decomposition of the form \eqref{eq: mdec}. With the notation for $g$ as in 
\eqnref{eq: metric}, each of the corresponding Ricci coefficients $r_i$, $i=1,\ldots,s$, satisfies
\begin{align}\label{4dr}
  d_ir_i=\frac{b_id_i}{2}+\sum_{j,k=1}^s\frac{[ijk]}{4}\Big( \frac{x_i^2}{x_jx_k} - \frac{2x_j }{x_k}\Big).
\end{align}
\end{theorem}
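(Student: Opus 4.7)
The plan is to apply \thmref{thm: Ric} with the ordered basis $\mathfrak{M}_1\cup\cdots\cup\mathfrak{M}_s$ of $\mathfrak{m}$, which is adapted to \eqref{eq: mdec} but is $Q$-normalised rather than $g$-normalised. Since the $\mathfrak{m}_i$ are $Q$-orthogonal and $g|_{\mathfrak{m}_i}=x_iQ|_{\mathfrak{m}_i}$, the right $g$-dual of $e_\alpha^i$ is $\tilde e_\alpha^i:=\tfrac{1}{x_i}\bar e_\alpha^i$. The formula of \thmref{thm: Ric} extends to any basis paired with its right $g$-dual, as the basis-dependence enters only through the basis-independent tensor $\sum_j X_j\otimes\bar X_j\in\mathfrak{m}\otimes\mathfrak{m}$ representing $g^{-1}$. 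Bilinearity of $\Ric$, the decomposition \eqref{RicQm}, and the identity $\sum_{\alpha\in I^i}(-1)^{[e_\alpha^i]}Q(e_\alpha^i,\bar e_\alpha^i)=d_i$ together give
\[
d_ir_i=x_i\sum_{\alpha\in I^i}(-1)^{[e_\alpha^i]}\Ric(e_\alpha^i,\tilde e_\alpha^i),
\]
reducing the task to substituting the RHS of \thmref{thm: Ric} and simplifying each of the three resulting terms.

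The Killing-form term collapses via \eqref{BbQ} to $\tfrac{b_id_i}{2}$. For the other two, the identity $\langle u,v\rangle=x_\ell Q(u,v)$ on $\mathfrak{m}_\ell$ extracts the explicit $x$-factors, leaving $Q$-expressions. In the triple-sum (third) term, both pairings land in $\mathfrak{m}_i$, and the identity $\sum_{\alpha\in I^i}(-1)^{[e_\alpha^i]}Q(e_\alpha^i,u)Q(v,\bar e_\alpha^i)=Q(u,v)$ for $u,v\in\mathfrak{m}_i$ (a direct consequence of $u=\sum_\alpha Q(e_\alpha^i,u)\bar e_\alpha^i$ and the supersymmetry of $Q$) collapses the $\alpha$-sum to $\sum_{\beta,\gamma}Q([\bar e_\beta^j,\bar e_\gamma^k]_{\mathfrak{m}_i},[e_\gamma^k,e_\beta^j]_{\mathfrak{m}_i})$. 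A super-skew-symmetry reshuffle together with the full symmetry of $[ijk]$ identifies this with the re-indexed form of expression \eqref{QStr} from the proof of \lemref{lem: sumstr}, giving $-[ijk]$. This produces the contribution $\tfrac{1}{4}\sum_{j,k}\tfrac{x_i^2}{x_jx_k}[ijk]$.

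The main obstacle is the second (quadratic-bracket) term, which hinges on the identity
\[
\sum_{\alpha\in I^i,\,\beta\in I^j}(-1)^{[e_\alpha^i]}Q\big([e_\alpha^i,\bar e_\beta^j]_{\mathfrak{m}_\ell},[e_\beta^j,\bar e_\alpha^i]_{\mathfrak{m}_\ell}\big)=-[ij\ell].
\]
I would first invoke $\ad_{\mathfrak{g}}$-invariance of $Q$ to rewrite the integrand as $Q\big([[e_\alpha^i,\bar e_\beta^j]_{\mathfrak{m}_\ell},e_\beta^j]_{\mathfrak{m}_i},\bar e_\alpha^i\big)$, and then recognise the $\alpha$-sum as $\Str_{\mathfrak{m}_i}$ of the even operator $X\mapsto[[X,\bar e_\beta^j]_{\mathfrak{m}_\ell},e_\beta^j]_{\mathfrak{m}_i}$ via the standard supertrace identity (analogous to \eqref{StrXX}). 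A careful super-skew manipulation identifies this operator with $(-1)^{[e_\beta^j]}\ad_{\mathfrak{m}_i}(e_\beta^j)\ad_{\mathfrak{m}_\ell}(\bar e_\beta^j)$, and the remaining $\beta$-sum then matches \lemref{lem: sumstr} at indices $(j,i,\ell)$, producing $-[ji\ell]=-[ij\ell]$. Combining the three contributions, renaming $\ell\to k$, and applying the $j\leftrightarrow k$ symmetry of $[ijk]$ to convert $x_k/x_j$ into $x_j/x_k$ yields \eqref{4dr}. The most delicate aspect throughout is the supersign bookkeeping in the super-skew-symmetry reorganisation.
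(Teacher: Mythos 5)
Your proposal is correct and, at its core, follows the same route as the paper: evaluate the general Ricci formula of \thmref{thm: Ric} on a basis adapted to \eqref{eq: mdec}, peel off the factors of $x_i$ by passing between $g$ and $Q$, and identify the two bracket terms with structure constants through \lemref{lem: sumstr} (your mixed-bracket identity $\sum_{\alpha,\beta}(-1)^{[e_\alpha^i]}Q([e_\alpha^i,\bar e_\beta^j]_{\mathfrak{m}_\ell},[e_\beta^j,\bar e_\alpha^i]_{\mathfrak{m}_\ell})=-[ij\ell]$ is valid, and your sign computation $F_\beta=(-1)^{[e_\beta^j]}\ad_{\mathfrak{m}_i}(e_\beta^j)\ad_{\mathfrak{m}_\ell}(\bar e_\beta^j)$ checks out). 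The organisational differences are minor: you work directly with the $Q$-normalised basis paired with its right $g$-dual $\tfrac{1}{x_i}\bar e_\alpha^i$, whereas the paper rescales to a $g$-normalised basis via $e_\alpha^i=\sqrt{\tilde x_i}\,X_\alpha^i$ with the signs $\epsilon_i$ tracked explicitly; and you treat the quadratic-bracket term by $\ad_{\mathfrak{g}}$-invariance plus a supertrace identity, whereas the paper expands one bracket in the dual basis and invokes \eqref{QStr} — both funnel through the same lemma.

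One point should be made explicit: \thmref{thm: Ric} is stated only for a $g$-normalised basis, and your justification (basis-independence of the internal tensor $\sum_j X_j\otimes\widebar X_j$) covers the internal summations but not the outer pair $(e_\alpha^i,\tfrac{1}{x_i}\bar e_\alpha^i)$, which is not part of a $g$-normalised basis. The fix is one line: both sides of the identity in \thmref{thm: Ric} are bilinear in the outer arguments and hence invariant under $(X,Y)\mapsto(\lambda X,\lambda^{-1}Y)$, and $(e_\alpha^i,\tfrac{1}{x_i}\bar e_\alpha^i)$ is exactly such a reciprocal rescaling of the $g$-normalised dual pair $(|x_i|^{-1/2}e_\alpha^i,\,|x_i|^{1/2}x_i^{-1}\bar e_\alpha^i)$ — this is precisely what the paper's $\epsilon_i$, $\sqrt{\tilde x_i}$ device accomplishes, and it is needed in particular when some $x_i<0$. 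With that remark added, your argument is complete.
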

\begin{proof}
For each $i=1,\ldots,s$, we introduce $\epsilon_i=\pm1$ such that $\tilde{x}_i=\epsilon_ix_i>0$, 
and write $\tilde{r}_i=\epsilon_ir_i$. Establishing the stated formula is then equivalent to showing that
\begin{align*}
 4d_i\tilde{r}_i=2b_id_i\epsilon_i 
  +\sum_{j,k=1}^s [ijk]\epsilon_i\epsilon_j\epsilon_k\Big( \frac{\tilde{x}_i^2}{\tilde{x}_j\tilde{x}_k} 
    - \frac{2\tilde{x}_j }{\tilde{x}_k}\Big), \qquad i=1,\ldots,s.
\end{align*}

For each $i=1,\ldots,s$, let $\{X^i_{\alpha}\,|\,\alpha\in I^i\}$ be a $g$-normalised
basis for $\mathfrak{m}_i$, with right $g$-dual basis $\{\widebar{X}_{\alpha}^i\,|\,\alpha\in I^i\}$, 
and let $\{X_1,\dots,X_d\}$ be a corresponding basis for $\mathfrak{m}$ adapted to the decomposition \eqref{eq: mdec}.
Then, $Q$-orthogonality implies that $\langle X^i_{\alpha}, X^j_{\beta}\rangle=0$ for $i\neq j$ and any $\alpha,\beta$.
For each $i=1,\ldots,s$, we also introduce the renormalised basis 
$\{e^i_{\alpha}=\sqrt{\tilde{x}_i}X_{\alpha}^i\,|\,\alpha\in I^i\}$ for $\mathfrak{m}_i$, whose right $Q$-dual basis is
seen to be given by $\{\bar{e}_{\alpha}^i=\epsilon_i\sqrt{\tilde{x}_i}\widebar{X}^i_{\alpha}\,|\, \alpha\in I^i\}$.

We now fix $i\in\{1,\ldots,s\}$ and $\alpha\in I^i$, and set 
$X=e^i_{\alpha}$ and $\widebar{X}=\epsilon_i\bar{e}_{\alpha}^i$. Then,
\begin{align*}
 Q(X, \widebar{X})=\epsilon_i,\qquad
 B(X,\widebar{X})=-b_i\epsilon_i,\qquad
 \Ric(X,\widebar{X})=\tilde{r}_i.
\end{align*}
By \thmref{thm: Ric}, we have 
\begin{align}\label{ri}
     4\tilde{r}_i&=2b_i\epsilon_i
     -2\sum_{j=1}^d (-1)^{[X_{\alpha}^i][X_j]+[X_j]}\,\tilde{x}_i\,\langle [X_{\alpha}^i, X_j]_{\mathfrak{m}}, [\widebar{X}_{\alpha}^i, \widebar{X}_j]_{\mathfrak{m}}\rangle
 \nonumber\\ 
   &\quad+\sum_{j,k=1}^d (-1)^{[X_k][X_j]+[X_{\alpha}^i]}\,\tilde{x}_i\,\langle X_{\alpha}^i, [X_k, X_j]_{\mathfrak{m}}\rangle \,
       \langle [\widebar{X}_k, \widebar{X}_j]_{\mathfrak{m}}, \widebar{X}_{\alpha}^i\rangle
 \nonumber\\ 
   &=2b_i\epsilon_i
   -2\sum_{j,k=1}^d (-1)^{[X_{\alpha}^i][X_j]+[X_{\alpha}^i]}\,\tilde{x}_i\,\langle X_k, [X_{\alpha}^i,X_j]_{\mathfrak{m}}\rangle\,
      \langle  [\widebar{X}_{\alpha}^i, \widebar{X}_j]_{\mathfrak{m}}, \widebar{X}_k\rangle
 \nonumber\\ 
   &\quad+\sum_{j,k=1}^d (-1)^{[X_k][X_j]+[X_{\alpha}^i]}\,\tilde{x}_i\,\langle X_{\alpha}^i, [X_k, X_j]_{\mathfrak{m}}\rangle\,
      \langle [\widebar{X}_k, \widebar{X}_j]_{\mathfrak{m}}, \widebar{X}_{\alpha}^i\rangle,
 \nonumber\\ 
   &=2b_i\epsilon_i
          -2\sum_{j,k=1}^s \sum_{ \beta \in I^j,\,\gamma\in I^k } (-1)^{[X_{\alpha}^i][X_{\beta}^j]+[X_{\alpha}^i]}\,\tilde{x}_i\,\langle X_{\gamma}^k, [X_{\alpha}^i,X_{\beta}^j]_{\mathfrak{m}}\rangle\,
          \langle  [\widebar{X}_{\alpha}^i, \widebar{X}_{\beta}^j]_{\mathfrak{m}}, \widebar{X}_{\gamma}^k\rangle
 \nonumber\\ 
   &\quad+\sum_{j,k=1}^s \sum_{ \beta \in I^j,\,\gamma\in I^k } (-1)^{[X_{\gamma}^k][X_{\beta}^j]+[X_{\alpha}^i]}\,\tilde{x}_i\,
     \langle X_{\alpha}^i, [X_{\gamma}^k, X_{\beta}^j]_{\mathfrak{m}}\rangle\, \langle [\widebar{X}_{\gamma}^k, \widebar{X}_{\beta}^j]_{\mathfrak{m}}, \widebar{X}_{\alpha}^i\rangle
 \nonumber\\ 
  &=2b_i\epsilon_i
   +2\sum_{j,k=1}^s \sum_{ \beta \in I^j,\,\gamma\in I^k }(-1)^{[e_{\alpha}^i]}\,\frac{\tilde{x}_k}{\tilde{x}_j}\epsilon_i\epsilon_j\epsilon_k 
  Q\big(e_{\gamma}^k, [e_{\alpha}^i,e_{\beta}^j]_{\mathfrak{m}_k}\big)\, 
  Q\big([\bar{e}_{\beta}^j, \bar{e}^i_{\alpha}]_{\mathfrak{m}_k}, \bar{e}_{\gamma}^k\big)
 \nonumber\\ 
 &\quad-\sum_{j,k=1}^s \sum_{ \beta \in I^j,\,\gamma\in I^k } (-1)^{[e_{\alpha}^i]}\,\frac{\tilde{x}_i^2}{\tilde{x}_j\tilde{x}_k}\epsilon_i\epsilon_j\epsilon_k
   Q\big(e_{\alpha}^i, [e_{\gamma}^k, e_{\beta}^j]_{\mathfrak{m}_i}\big)\, 
   Q\big([\bar{e}_{\beta}^j, \bar{e}_{\gamma}^k]_{\mathfrak{m}_i}, \bar{e}^i_{\alpha}\big)
 \nonumber\\ 
 &=2b_i\epsilon_i-\sum_{j,k=1}^s \sum_{ \beta \in I^j,\,\gamma\in I^k }(-1)^{[e_{\alpha}^i]}\epsilon_i\epsilon_j\epsilon_k 
   Q\big(e_{\gamma}^k, [e_{\alpha}^i,e_{\beta}^j]_{\mathfrak{m}_k}\big)\, 
   Q\big([\bar{e}_{\beta}^j, \bar{e}^i_{\alpha}]_{\mathfrak{m}_k}, \bar{e}_{\gamma}^k\big) 
   \Big(\frac{\tilde{x}_i^2}{\tilde{x}_j\tilde{x}_k}  - \frac{2\tilde{x}_k}{\tilde{x}_j} \Big) 
 \nonumber\\ 
 &=2b_i\epsilon_i-\sum_{j,k=1}^s \sum_{\beta\in I^j}(-1)^{[e_{\alpha}^i]}\epsilon_i 
 Q\big([\bar{e}_{\beta}^j, \bar{e}^i_{\alpha}]_{\mathfrak{m}_k}, [e_{\alpha}^i,e_{\beta}^j]_{\mathfrak{m}_k}\big)
 \Big( \frac{x_i^2}{x_jx_k}  - \frac{2x_k}{x_j} \Big)
 \nonumber\\ 
 &=2b_i \epsilon_i-\sum_{j,k=1}^s \Str_{\mathfrak{m}_j}\!\big(\!\ad_{\mathfrak{m}_j}(e_{\alpha}^i) \ad_{\mathfrak{m}_k}(\bar{e}_{\alpha}^i)\big)\epsilon_i\Big( \frac{x_i^2}{x_jx_k} - \frac{2x_k}{x_j}\Big),
\end{align}
where the second, second-to-last and last equalities follow from 
$[X_{\alpha}^i,X_j]_{\mathfrak{m}}= \sum_{k=1}^d\,\langle X_k, [X_{\alpha}^i,X_j]_{\mathfrak{m}}\rangle\,\widebar{X}_k$,
$[\bar{e}_{\beta}^j, \bar{e}^i_{\alpha}]_{\mathfrak{m}_k}= \sum_{\gamma\in I^k} Q\big([\bar{e}_{\beta}^j, \bar{e}^i_{\alpha}]_{\mathfrak{m}_k},\bar{e}_{\gamma}^k\big) e_{\gamma}^k$ and \eqref{QStr}, respectively. 
Taking the graded sum over the $Q$-normalised homogeneous basis for $\mathfrak{m}_i$, we obtain
\begin{align*}
 4d_ir_i
 =4\sum_{\alpha\in I^i}(-1)^{[e_{\alpha}^i]}\,r_i
   =2b_id_i-\sum_{\alpha\in I^i} 
      \sum_{j,k=1}^s(-1)^{[e_{\alpha}^i]}\Str_{\mathfrak{m}_j}\!\big(\!\ad_{\mathfrak{m}_j}(e_{\alpha}^i) \ad_{\mathfrak{m}_k}(\bar{e}_{\alpha}^i)\big)
      \Big( \frac{x_i^2}{x_jx_k} - \frac{2x_k}{x_j}\Big).
 \end{align*}
The desired formula now follows by application of \lemref{lem: sumstr} and the symmetry of $[ijk]$.
\end{proof}
\begin{proposition}\label{prop: scalar}
With the notation as in \thmref{thm: SimRic}, the scalar curvature $S$ is given by 
\begin{align*}
  S= \frac{1}{2}\sum_{i=1}^s \frac{b_id_i}{x_i} -\frac{1}{4} \sum_{i,j,k=1}^s [ijk]\frac{x_k}{x_ix_j}. 
\end{align*}
\end{proposition}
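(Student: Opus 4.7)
The plan is to reduce the stated formula to a direct computation using the definition of scalar curvature and the Ricci coefficient formula in \thmref{thm: SimRic}, with the symmetry of the structure constants $[ijk]$ doing the final bookkeeping.

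First, I would exploit the diagonal structure. Because $\mathfrak{m} = \mathfrak{m}_1 \oplus \cdots \oplus \mathfrak{m}_s$ is $Q$-orthogonal and the metric has the form $g = \sum_i x_i Q|_{\mathfrak{m}_i}$, any $g$-normalised homogeneous basis of $\mathfrak{m}_i$ has its right $g$-dual vectors lying back in $\mathfrak{m}_i$. Concatenating such bases over $i=1,\ldots,s$ gives a $g$-normalised basis $\{X_1,\ldots,X_d\}$ of $\mathfrak{m}$ adapted to the decomposition. Since $\Ric|_{\mathfrak{m}_i} = r_i Q|_{\mathfrak{m}_i} = (r_i/x_i)\, g|_{\mathfrak{m}_i}$, the evaluations $\Ric(X^i_\alpha, \widebar{X}^i_\alpha) = r_i/x_i$ are constant within each block. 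Substituting into the definition \eqref{eq: scalar} and recognising $\sum_{\alpha \in I^i}(-1)^{[X^i_\alpha]} = d_i$ yields the block-diagonal formula
\begin{align*}
 S = \sum_{i=1}^s \frac{d_i r_i}{x_i}.
\end{align*}

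Next, I would substitute the formula \eqref{4dr} from \thmref{thm: SimRic} for $d_i r_i$, which gives
\begin{align*}
 S = \sum_{i=1}^s \frac{b_i d_i}{2 x_i} + \frac{1}{4}\sum_{i,j,k=1}^s [ijk]\left(\frac{x_i}{x_j x_k} - \frac{2 x_j}{x_i x_k}\right).
\end{align*}
Note that it is $d_i r_i$, not $r_i$ itself, that is encoded by \thmref{thm: SimRic}, so this step works cleanly even when some superdimension $d_i$ vanishes.

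Finally, the main (purely combinatorial) step is to collapse the two cubic-in-$x$ terms into the single one appearing in the statement. Using the total symmetry of $[ijk]$ under permutation of its indices (a consequence of the $\ad_{\mathfrak{g}}$-invariance of $Q$, as remarked after the definition), the relabelling $i\leftrightarrow j$ yields $\sum_{i,j,k}[ijk]\,x_j/(x_i x_k) = \sum_{i,j,k}[ijk]\,x_i/(x_j x_k)$, so the bracketed expression in the display above equals $-\sum_{i,j,k}[ijk]\,x_i/(x_j x_k)$. A further relabelling $i\leftrightarrow k$ converts this to $-\sum_{i,j,k}[ijk]\,x_k/(x_i x_j)$. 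Combining with the first sum produces the stated formula. The only potential pitfall is keeping track of signs and of which symmetry of $[ijk]$ is being used at each step, but no delicate analytic input is needed beyond \thmref{thm: SimRic}.
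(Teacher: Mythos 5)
Your proposal is correct and follows essentially the same route as the paper: reduce to $S=\sum_{i=1}^s d_ir_i/x_i$ via the block-diagonal form \eqref{RicQm}, substitute the formula for $d_ir_i$ from \thmref{thm: SimRic}, and collapse the cubic terms using the full symmetry of $[ijk]$. Your added remark that only $d_ir_i$ (not $r_i$) is needed, so vanishing superdimensions cause no trouble, is a sound observation consistent with how the paper applies \eqref{4dr}.
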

\begin{proof}
It follows from \eqref{RicQm} that
\begin{align*}
  S= \sum_{i=1}^d (-1)^{[X_i]}\Ric(X_i, \widebar{X}_i)
   = \sum_{i=1}^s \sum_{\alpha\in I^i}(-1)^{[X_\alpha^i]} r_i Q(X_\alpha^i, \widebar{X}_\alpha^i) 
   = \sum_{i=1}^s \frac{d_ir_i}{x_i}.
\end{align*}
Using \thmref{thm: SimRic} and the symmetry of $[ijk]$, we obtain the formula for $S$. 
\end{proof}

If $d_i\neq0$, then \thmref{thm: SimRic} says that
\begin{align}\label{rdnot0}
  r_i= \frac{b_i}{2} +\sum_{j,k=1}^s\frac{[ijk]}{4d_i} \Big( \frac{x_i^2}{x_jx_k} - \frac{2x_j }{x_k}\Big).
\end{align}
If $d_i=0$, on the other hand, we cannot isolate $r_i$ from \eqref{4dr} but will now derive a formula for $r_i$ in case there 
exists an $i$-selected pair $(j,k)$. Here, $(j,k)$ is \emph{$i$-selected} if $j,k\in\{1,\ldots,s\}$ are such that $j\leq k$ and
\begin{align}\label{eq: trcon}
 [\mathfrak{m}_i, \mathfrak{m}_u]_{\mathfrak{m}_{v}}={0}
\end{align}
for all $1\leq u\leq v\leq s$ such that $(u,v)\neq (j,k)$. In \secref{sec: flagA} and \secref{sec: flagC}, 
we give concrete examples where \eqref{eq: trcon} is satisfied; see \propref{prop: mdec} and \propref{prop: rmdecC}.

\begin{proposition}\label{prop: ricexc}
Let the notation be as in \thmref{thm: SimRic}, and suppose $d_i=0$ for some $i\in\{1,\ldots,s\}$, 
and that there exists $i$-selected $(j,k)$. Then,
\begin{align*} 
           r_i=\frac{b_i}{2}+\frac{b_i-2c_i}{4}\Big( \frac{x_i^2}{x_jx_k}-\frac{x_j}{x_k}-\frac{x_k}{x_j} \Big). 
\end{align*}
\end{proposition}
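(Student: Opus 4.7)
My plan is to bypass the obstruction that $d_i=0$ trivialises \eqref{4dr} by returning to the identity derived in the proof of \thmref{thm: SimRic} \emph{before} the graded summation: for each $\alpha\in I^i$,
\begin{equation*}
4 r_i = 2 b_i - \sum_{j',k'=1}^s T_{j'k'}(\alpha)\Big(\frac{x_i^2}{x_{j'} x_{k'}} - \frac{2 x_{k'}}{x_{j'}}\Big),\qquad
T_{j'k'}(\alpha):=\Str_{\mathfrak{m}_{j'}}\!\big(\!\ad_{\mathfrak{m}_{j'}}(e^i_\alpha)\ad_{\mathfrak{m}_{k'}}(\bar e^i_\alpha)\big).
\end{equation*}
Averaging over $\alpha$ with the weight $(-1)^{[e^i_\alpha]}$ yields $d_i$ on the left and hence no information when $d_i=0$; the cure will be to sum unweighted, which becomes tractable under the $i$-selected hypothesis.

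The first step applies that hypothesis: since $T_{j'k'}(\alpha)$ is the supertrace of a composition $\mathfrak{m}_{j'}\to\mathfrak{m}_{k'}\to\mathfrak{m}_{j'}$ obtained by bracketing with $\mathfrak{m}_i$, it vanishes unless $[\mathfrak{m}_i,\mathfrak{m}_{j'}]_{\mathfrak{m}_{k'}}\ne 0$, which forces $\{j',k'\}=\{j,k\}$. At most two supertraces survive. Next I establish an unsigned counterpart of \propref{prop: StrCas}, namely
\[\sum_{j',k'=1}^s\sum_{\alpha\in I^i}T_{j'k'}(\alpha) = -(b_i-2c_i)|I^i|.\]
This is proved by running the argument of \propref{prop: StrCas} verbatim without the $(-1)^{[e^i_\alpha]}$ weighting; the crucial algebraic identity $Q(z_\ell,[e^i_\alpha,[\bar e^i_\alpha,\bar z_\ell]_{\mathfrak{m}}]_{\mathfrak{k}}) = Q(e^i_\alpha,[z_\ell,[\bar z_\ell,\bar e^i_\alpha]_{\mathfrak{m}}]_{\mathfrak{m}})$ used there is sign-free, so after collapsing via the Casimir the only change is that $\sum_\alpha Q(e^i_\alpha,\bar e^i_\alpha)=|I^i|$ replaces $d_i$. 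Combined with the reduction above, this yields $\sum_\alpha(T_{jk}(\alpha)+T_{kj}(\alpha))=-(b_i-2c_i)|I^i|$ (or $\sum_\alpha T_{jj}(\alpha)=-(b_i-2c_i)|I^i|$ when $j=k$, at which point the argument concludes by substitution).

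When $j\ne k$, the remaining ingredient is the symmetry $\sum_\alpha T_{jk}(\alpha) = \sum_\alpha T_{kj}(\alpha)$. Expanding $[e^i_\alpha,[\bar e^i_\alpha,x]]$ for $x\in\mathfrak{m}_j$ via super-Jacobi, projecting onto $\mathfrak{m}_j$, and invoking the cyclic property of the supertrace yields
$T_{jk}(\alpha)-T_{kj}(\alpha)=\Str_{\mathfrak{m}_j}\!\big(\!\ad([e^i_\alpha,\bar e^i_\alpha])|_{\mathfrak{m}_j}\big)$
(with only additional terms linear in $[e^i_\alpha,\bar e^i_\alpha]$ in the edge cases $i\in\{j,k\}$). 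Summing over $\alpha$ collapses to $\Str_{\mathfrak{m}_j}(\ad(\sum_\alpha[e^i_\alpha,\bar e^i_\alpha]))$, which vanishes because $\sum_\alpha[e^i_\alpha,\bar e^i_\alpha]=0$: by \eqref{eq: dualX}, even diagonal contributions drop out since $[X,X]=0$ for even $X$, while paired odd basis vectors cancel via the super-skew-symmetry $[X,Y]=[Y,X]$ for odd $X,Y$.

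Combining the two facts gives $\sum_\alpha T_{jk}(\alpha)=\sum_\alpha T_{kj}(\alpha)=-(b_i-2c_i)|I^i|/2$. Substituting into the $\alpha$-summed single-$\alpha$ identity, dividing by $|I^i|$, and simplifying
$\big(\tfrac{x_i^2}{x_jx_k}-\tfrac{2x_k}{x_j}\big)+\big(\tfrac{x_i^2}{x_jx_k}-\tfrac{2x_j}{x_k}\big)=2\big(\tfrac{x_i^2}{x_jx_k}-\tfrac{x_j}{x_k}-\tfrac{x_k}{x_j}\big)$
yields the claimed formula for $r_i$. The main hurdles will be the careful sign accounting in the unsigned version of \propref{prop: StrCas} and handling the super-Jacobi edge cases $i\in\{j,k\}$; both ultimately reduce to the universal identity $\sum_\alpha[e^i_\alpha,\bar e^i_\alpha]=0$.
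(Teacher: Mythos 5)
Your overall route is essentially the paper's: you return to the single-$\alpha$ identity \eqref{ri} from the proof of \thmref{thm: SimRic}, use the $i$-selected condition \eqref{eq: trcon} to kill all blocks except $\{j,k\}$, and evaluate the surviving supertraces through $\Str_{\mathfrak{g}}=B$, the $\mathfrak{k}$--$\mathfrak{m}$ cross-traces and the Casimir, i.e.\ by rerunning the computation behind \propref{prop: StrCas}. The paper's only real shortcut is that, since $d_i=0$, $\mathfrak{m}_i$ contains an \emph{even} $Q$-normalised basis vector $e^i_\alpha$ (with $\bar e^i_\alpha=\pm e^i_\alpha$, so in particular $[e^i_\alpha,\bar e^i_\alpha]=0$), and everything is evaluated termwise for that single vector; your unsigned sum over $I^i$ and the replacement of $d_i$ by $|I^i|$ are then unnecessary, though your unsigned analogue of \propref{prop: StrCas} is correct as stated (the Casimir identity you quote is indeed sign-free, and the cross-term bookkeeping goes through with an unsigned variant of \lemref{lem: fXX}).

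The genuine gap is your symmetry step $\sum_\alpha T_{jk}(\alpha)=\sum_\alpha T_{kj}(\alpha)$ in the edge case $i=j\neq k$. There, for $x\in\mathfrak{m}_j=\mathfrak{m}_i$, the inner bracket $[\bar e^i_\alpha,x]$ has, besides its $\mathfrak{m}_k$-component, a generally nonzero $\mathfrak{k}$-component, so the super-Jacobi manipulation leaves correction terms of the form $\Str_{\mathfrak{m}_i}\!\big(x\mapsto[e^i_\alpha,[\bar e^i_\alpha,x]_{\mathfrak{k}}]_{\mathfrak{m}_i}\big)$ and its swap; these are mixed $\mathfrak{k}$--$\mathfrak{m}$ cross-traces, not ``terms linear in $[e^i_\alpha,\bar e^i_\alpha]$'', they do not vanish individually (each evaluates to $-c_i$ via the Casimir), and so the identity $\sum_\alpha[e^i_\alpha,\bar e^i_\alpha]=0$ does not dispose of them. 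This case is not vacuous: for $\SOSp(2|2n)$ with $p=2$ one has $d_1=0$ and the $1$-selected pair is $(1,2)$, so $i=j$ occurs in an intended application of the proposition. (When $i\notin\{j,k\}$, and also when $i=k\neq j$, your argument is fine, because the inner brackets lie in $[\mathfrak{m}_i,\mathfrak{m}_j]$ with $i\neq j$, whose $\mathfrak{k}$-part vanishes by $Q$-invariance and the $Q$-orthogonality of the summands -- a fact you use implicitly.) The repair is easy: either run the Jacobi/supertrace argument over the summand different from $\mathfrak{m}_i$, where only $[\mathfrak{m}_i,\mathfrak{m}_k]$ with $i\neq k$ appears as inner bracket, or show the two correction traces cancel via the Casimir evaluation; or, simplest of all, follow the paper and work with a single even vector, for which $[e^i_\alpha,\bar e^i_\alpha]=0$ termwise and no summation over $\alpha$ is needed.
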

\begin{proof}
In the homogeneous $\mathfrak{m}_i$-basis $\{e^i_{\alpha}\,|\,\alpha\in I^i\}$ introduced in the proof 
of \thmref{thm: SimRic}, there exists at least one even vector, because $d_i=0$. 
Let us select one such vector and denote it by $e_{\alpha}^i$. Since \eqnref{eq: trcon} holds, we have
\begin{align*}
  \Str_{\mathfrak{m}}\!\big(\!\ad_{\mathfrak{m}}(e_{\alpha}^i)\ad_{\mathfrak{m}}(\bar{e}_{\alpha}^i)\big)
  =\sum_{u,v=1}^s\Str_{\mathfrak{m}_u}\!\big(\!\ad_{\mathfrak{m}_u}(e_{\alpha}^i)\ad_{\mathfrak{m}_v}(\bar{e}_{\alpha}^i)\big)
   =(2-\delta_{jk})\Str_{\mathfrak{m}_j}\!\big(\!\ad_{\mathfrak{m}_j}(e_{\alpha}^i)\ad_{\mathfrak{m}_k}(\bar{e}_{\alpha}^i)\big).
\end{align*}
We also have
\begin{align*}
   \Str_{\mathfrak{m}}\!\big(\!\ad_{\mathfrak{m}}(e_{\alpha}^i)\ad_{\mathfrak{m}}(\bar{e}_{\alpha}^i)\big)
   &=\Str_{\mathfrak{g}}\!\big(\!\ad_{\mathfrak{g}}(e_{\alpha}^i)\ad_{\mathfrak{g}}(\bar{e}_{\alpha}^i)\big) 
   -2\Str_{\mathfrak{k}}\!\big(\!\ad_{\mathfrak{k}}(e_{\alpha}^i)\ad_{\mathfrak{m}}(\bar{e}_{\alpha}^i)\big)
   \\ 
   &= B(e_{\alpha}^i, \bar{e}_{\alpha}^i)
     -2 \sum_{\ell=1}^{\dim(\mathfrak{k})} (-1)^{[z_\ell]} 
     Q\big(z_\ell, [e_{\alpha}^i, [\bar{e}_{\alpha}^i, \bar{z}_\ell]_{\mathfrak{m}}]_{\mathfrak{k}}\big)
   \\ 
 &=-b_iQ(e_{\alpha}^i, \bar{e}_{\alpha}^i)-2 Q\big(e_{\alpha}^i, -C_{\mathfrak{m}_i, Q|_{\mathfrak{k}}}(\bar{e}_{\alpha}^i)\big)
   \\[.1cm] 
   &= -b_i+2c_i.
\end{align*}
We now use \eqref{ri} and obtain
\begin{align*}
  r_i
 &=\frac{b_i}{2} - \frac{1}{4} \sum_{u,v=1}^s 
   \Str_{\mathfrak{m}_u}\!\big(\!\ad_{\mathfrak{m}_u}(e_{\alpha}^i) \ad_{\mathfrak{m}_v}(\bar{e}_{\alpha}^i)\big) 
   \Big( \frac{x_i^2}{x_ux_v} - \frac{2x_v}{x_u}\Big)
   \\
 &=\frac{b_i}{2}- \frac{2-\delta_{jk}}{4}\Str_{\mathfrak{m}_j}\!\big(\!\ad_{\mathfrak{m}_j}(e_{\alpha}^i)\ad_{\mathfrak{m}_k}(\bar{e}_{\alpha}^i)\big) 
    \Big(\frac{x_i^2}{x_jx_k} - \frac{x_j}{x_k}-\frac{x_k}{x_j}\Big),
 \end{align*}
from which the expression for $r_i$ follows.
\end{proof}

\section{Flag supermanifolds}
\label{sec: gen_flag}

In this section, we introduce a class of homogeneous superspaces called flag supermanifolds.
In analogy with the construction of generalised flag manifolds in terms of so-called painted Dynkin diagrams in the 
non-super case~\cite{Arv03}, we introduce what we call circled Dynkin diagrams to describe the flag supermanifolds 
in \secref{sec: flag}. In \secref{sec: examflag}, we identify the compact real forms of the Lie superalgebras of types $A$ 
and $C$. In \secref{sec: flagA1}, \secref{sec: flagA} and \secref{sec: flagC}, 
we then classify Einstein metrics on flag supermanifolds constructed from circled Dynkin diagrams of types $A$ and $C$. 
We refer to \cite{Kac77} for Dynkin diagrams of complex Lie superalgebras, to \cite{Ser83,Par90,Chu13} for classifications 
of real forms of Lie superalgebras,  and to \cite{Kac78} for the  representation theory of complex Lie superalgebras.

\subsection{Circled Dynkin diagrams}
\label{sec: flag}

Let $\mathfrak{g}^{\mathbb{C}}=\mathfrak{g}_{\bar{0}}^{\mathbb{C}}\oplus\mathfrak{g}_{\bar{1}}^{\mathbb{C}}$ be 
a basic classical Lie superalgebra,
with $Q$ a non-degenerate supersymmetric even bilinear form. 
Let $\mathfrak{h}^{\mathbb{C}}\subset \mathfrak{g}_{\bar{0}}^{\mathbb{C}}$ be a Cartan subalgebra of 
$\mathfrak{g}^{\mathbb{C}}$, 
and set $(\mathfrak{h}^{\mathbb{C}})^\vee:=\mathrm{Hom}_{\mathbb{C}}(\mathfrak{h}^{\mathbb{C}}, \mathbb{C})$. 
For each $\alpha\in (\mathfrak{h}^{\mathbb{C}})^\vee$, define
\begin{align*}
 \mathfrak{g}_{\alpha}^{\mathbb{C}}
  :=\big\{X\in \mathfrak{g}^{\mathbb{C}}\,|\, [H, X]=\alpha(H)X,\ \text{for all}\ H\in\mathfrak{h}^{\mathbb{C}}\big\}.
\end{align*}
If $\mathfrak{g}_{\alpha}^{\mathbb{C}}\neq \{0\}$ for $\alpha\neq 0$, we call $\alpha$ a \emph{root} and 
$\mathfrak{g}^{\mathbb{C}}_{\alpha}$ the corresponding \emph{root space}.
In a basic Lie superalgebra, every root $\alpha$ is either 
\emph{even} (if $\mathfrak{g}^{\mathbb{C}}_{\alpha}\subseteq \mathfrak{g}_{\bar{0}}^{\mathbb{C}}$) 
or \emph{odd} (if $\mathfrak{g}^{\mathbb{C}}_{\alpha}\subseteq \mathfrak{g}_{\bar{1}}^{\mathbb{C}}$),
and $\dim(\mathfrak{g}^{\mathbb{C}}_\alpha)=1$.
The set of all roots, $\Delta$, called the \emph{root system}, thus partitions 
as $\Delta=\Delta_{\bar{0}} \cup \Delta_{\bar{1}}$, where 
$\Delta_{\bar{0}}$ is the set of even roots (in fact, the root system of $\mathfrak{g}_{\bar{0}}^{\mathbb{C}}$)
and $\Delta_{\bar{1}}$ the set of odd roots.
As in the non-super case, we have a root space decomposition:
\begin{align*}
 \mathfrak{g}^{\mathbb{C}}
  = \mathfrak{h}^{\mathbb{C}}\oplus \bigoplus_{\alpha\in \Delta}\mathfrak{g}_{\alpha}^{\mathbb{C}}.
\end{align*} 
We choose a distinguished Borel subsuperalgebra 
$\mathfrak{b}^{\mathbb{C}}$ such that the corresponding set $\Pi=\{\alpha_{1}, \dots , \alpha_{\ell}\}$ of simple roots 
contains a \emph{single} odd simple root. 
The Weyl vector $\rho=\rho_{\bar{0}}-\rho_{\bar{1}}$ of $\mathfrak{g}^{\mathbb{C}}$ is defined as half the 
graded sum of the positive roots, that is,
\begin{align}\label{Weyl}
 \rho_{\bar{0}}=\tfrac{1}{2}\sum_{\alpha\in\Delta^+_{\bar{0}}}\alpha,\qquad
 \rho_{\bar{1}}=\tfrac{1}{2}\sum_{\alpha\in\Delta^+_{\bar{1}}}\alpha,
\end{align}
where $\Delta^+=\Delta^+_{\bar{0}}\cup\Delta^+_{\bar{1}}$ denotes the set of positive roots.
The bilinear form $Q|_{\mathfrak{h}^{\mathbb{C}}}$ is non-degenerate, 
so we may identify $\mathfrak{h}^{\mathbb{C}}$ with $(\mathfrak{h}^{\mathbb{C}})^\vee$. 
A root $\alpha$ is called \emph{isotropic} if $Q(\alpha, \alpha)=0$. Note that isotropic roots are odd.

We now choose a subset $\Pi_{K}$ of $\Pi$ and denote by $\Delta_K\subseteq\Delta$ the system generated from $\Pi_K$. 
Then,
\begin{align*}
 \mathfrak{k}^{\mathbb{C}}
  := \mathfrak{h}^{\mathbb{C}} \oplus \bigoplus_{\alpha\in \Delta_K}\mathfrak{g}_{\alpha}^{\mathbb{C}}
\end{align*}
is a Lie subsuperalgebra of $\mathfrak{g}^{\mathbb{C}}$.
Note that $\mathfrak{k}^{\mathbb{C}}$ decomposes into its centre $Z(\mathfrak{k}^{\mathbb{C}})$ and the semisimple 
part $\mathfrak{k}_{\mathrm{ss}}^{\mathbb{C}}:=[\mathfrak{k}^{\mathbb{C}},\mathfrak{k}^{\mathbb{C}}]$. 
Let $\mathfrak{g}$ be the compact real form of $\mathfrak{g}^{\mathbb{C}}$ induced from a star operation $\ast$ 
on $\mathfrak{g}^{\mathbb{C}}$. Then, we can form the Lie subsupergroup $K=(K_0, \mathfrak{k})$ of $G$, where 
$\mathfrak{k}= \mathfrak{k}^{\mathbb{C}}\cap \mathfrak{g}$ is a Lie subsuperalgebra of $\mathfrak{g}$, and $K_0$ is 
the Lie subgroup of $G_0$ generated by $\mathfrak{k}_{\bar{0}}$. 
In this way, we obtain a homogeneous superspace $M=G/K$ called a \emph{flag supermanifold}. 

The tangent space $T_K(M)$ of $M$ at $K$ can be described as follows. Let $\Pi_{M}=\Pi- \Pi_{K}$
and $\Delta_{M}=\Delta-\Delta_{K}$. We have the $Q$-orthogonal vector-space decomposition 
$\mathfrak{g}^{\mathbb{C}}=\mathfrak{k}^{\mathbb{C}}\oplus \mathfrak{m}^{\mathbb{C}}$, and
we identify the complexified tangent space $T_K(M)^{\mathbb{C}}$ with 
\begin{align}\label{eq: mComp}
 \mathfrak{m}^{\mathbb{C}}=\bigoplus_{\alpha\in \Delta_M}\mathfrak{g}_{\alpha}^{\mathbb{C}}.
\end{align}
Similarly to \eqnref{eq: real}, we then have
\begin{align}\label{eq: mreal}
 \mathfrak{m}=\sspan_{\mathbb{R}}\big\{ X\in \mathfrak{m}_{\bar{0}}^{\mathbb{C}}\,|\, X^*=-X\big\}
 \oplus\sspan_{\mathbb{R}}\big\{ \sqrt{\imath}X\in \mathfrak{m}_{\bar{1}}^{\mathbb{C}}\,|\, X^*=-X\big\}, 
\end{align}
where $\ast$ is the star operation on $\mathfrak{g}^{\mathbb{C}}$.
For each $\alpha\in \Delta^+_M$, let $E_{\pm\alpha}$ span $\mathfrak{g}^{\mathbb{C}}_{\pm\alpha}$ and define
\begin{align}\label{AB}
 A_{\alpha}:= E_{\alpha} -E_{-\alpha}, \qquad B_{\alpha}:= \imath(E_{\alpha} + E_{-\alpha}),
\end{align}
noting that $(A_{\alpha})^{*}= -A_{\alpha}$ and $(B_{\alpha})^{*}= -B_{\alpha}$ under the star operation \eqref{eq: starA}.
It follows that $\mathfrak{m}$ has $\mathbb{R}$-basis  
\begin{align*}
 \{A_{\alpha}, B_{\alpha}\,|\, \alpha\in \Delta_{M}\cap \Delta_{\bar{0}}^+\}
 \cup\{\sqrt{\imath}A_{\alpha},\sqrt{\imath}B_{\alpha}\,|\, \alpha\in \Delta_{M}\cap \Delta_{\bar{1}}^+\}.
\end{align*}

In the non-super case, the above construction is encoded in a so-called painted Dynkin diagram, where the nodes of 
$\Pi_{M}$ are painted black while the other nodes of $\Pi$ remain white. In the super-setting, however, black nodes 
typically represent non-isotropic odd simple roots, so to avoid confusion, we shall \emph{circle} the nodes of $\Pi_{M}$, thereby 
obtaining a \emph{circled Dynkin diagram} of $G$. In addition, a node marked with a cross, $\otimes$, 
denotes an isotropic odd simple root, in accordance with the standard Dynkin diagram conventions for basic classical Lie superalgebras. 
The uncircled nodes of $\Pi$ form a simple root system of $\mathfrak{k}_{\mathrm{ss}}^{\mathbb{C}}$. 

Conversely, given a Dynkin diagram of $G$ with some nodes circled, we can construct the corresponding homogeneous 
superspace $M=G/K$ as follows. Let $\mathfrak{g}^{\mathbb{C}}$ be the complex simple Lie superalgebra characterised 
by the (uncircled) Dynkin diagram. The Lie superalgebra $\mathfrak{g}$ in $G=(G_0, \mathfrak{g})$ is then the compact 
real form of $\mathfrak{g}^{\mathbb{C}}$ induced from a specified star operation, whereas the Lie group 
$G_0$ is the unique simply connected Lie group generated by $\mathfrak{g}_{\bar{0}}$.
The semisimple part $K_{\mathrm{ss}}$ of $K$ is obtained from the Dynkin diagram by removing the nodes that are circled, 
while $K=K_{ss}\times \UU(1)^{n}$, where $n$ is the number of circled nodes of the Dynkin diagram. 
This determines the homogeneous superspace $M=G/K$.

\subsection{Compact real forms}
\label{sec: examflag}

To construct a flag supermanifold $M=(G_0, \mathfrak{g})/(K_0, \mathfrak{k})$ from a given circled Dynkin diagram, 
we first specify the compact real form $\mathfrak{g}$.  In the following, we recall Dynkin diagrams, star operations and 
compact real forms for the complex Lie superalgebras of types $A$ and $C$. It would be interesting to study 
(e.g.~using Vogan diagrams \cite{Chu13}) how the construction changes for \textit{non}-compact real forms.

\subsubsection{$\SU(m|n)$} \label{subsubsec: typeA}

The Lie superalgebra $\mathfrak{gl}(m|n)^{\mathbb{C}}$ has a basis consisting of matrix units $E_{ij}$, 
$i,j\in\{1,\ldots,m+n\}$, satisfying
\begin{align*}
 [E_{ij}, E_{k\ell}]= \delta_{jk}E_{i\ell}-(-1)^{[E_{ij}][E_{k\ell}]}\delta_{i\ell} E_{kj}, 
 \end{align*}
where $[E_{ij}]=0$ if $i,j\leq m$ or $i,j>m$, and $[E_{ij}]=1$ otherwise. Let $\mathfrak{h}^{\mathbb{C}}$ be the Cartan 
subalgebra spanned by all diagonal matrices, and denote by $\{\varepsilon_1,\ldots,\varepsilon_{m+n}\}$ the 
$(\mathfrak{h}^{\mathbb{C}})^\vee$-basis where $\varepsilon_i(E_{jj})=\delta_{ij}$ for all $i,j$. 
For convenience, we introduce $\delta_{\mu}:=\varepsilon_{m+\mu}$ for each $\mu=1,\ldots,n$.

Let $\mathfrak{sl}(m|n)^{\mathbb{C}}:=\{ X\in \mathfrak{gl}(m|n)^{\mathbb{C}}\,|\, \Str(X)=0\}$, and let 
$\mathfrak{b}^{\mathbb{C}}$ be 
the Borel subsuperalgebra spanned by all upper-triangular matrices. Then, the set of positive roots partitions as 
$\Delta^+=\Delta_{\bar{0}}^+\cup \Delta_{\bar{1}}^+$, where
\begin{align*}
 \Delta_{\bar{0}}^+=\{\varepsilon_i-\varepsilon_j, \delta_{\mu}-\delta_{\nu}\,|\, 1\leq i<j\leq m; 1\leq \mu<\nu\leq n \},\qquad
 \Delta_{\bar{1}}^+=\{\varepsilon_i-\delta_{\mu}\,|\, 1\leq i\leq m; 1\leq \mu\leq n\},
\end{align*}
with simple root system
\begin{align*} 
 \Pi=\{\alpha_i=\varepsilon_i-\varepsilon_{i+1},
   \alpha_m=\varepsilon_{m}-\delta_1,
   \alpha_{m+\mu}=\delta_{\mu}-\delta_{\mu+1}
   \,|\, 1\leq i\leq m-1; 1\leq \mu \leq n-1 \}
\end{align*}
and corresponding Dynkin diagram
\begin{center}
\begin{tikzpicture}[scale=.6]
    \draw[thick] (0 cm,0) circle (.3cm);
    \draw[thick] (3.3 cm,0) circle (.3cm);
    \draw[thick] (5.3 cm,0) circle (.3cm);
    \draw[thick] (7.3 cm,0) circle (.3cm);
    \draw[thick] (10.6 cm,0) circle (.3cm);
     
    \draw[thick] (5.1 cm,-0.2 cm) -- (5.5cm, 0.2 cm); 
    \draw[thick] (5.1 cm, 0.2 cm) -- (5.5cm, -0.2 cm); 

    \draw[thick] (0.3 cm,0) -- (1 cm,0);
    \draw[thick] (2.3 cm,0) -- (3 cm,0);
    \draw[thick] (3.6 cm,0) -- (5 cm,0);
    \draw[thick] (5.6 cm,0) -- (7 cm,0);
    \draw[thick] (7.6 cm,0) -- (8.3  cm,0);
    \draw[thick] (9.6 cm,0) -- (10.3 cm,0);

    \draw[dotted,thick] (1.3 cm,0) -- (2.0 cm,0);
    \draw[dotted,thick] (8.6 cm,0) -- (9.3 cm,0);

    \draw (12 cm,-0.2) node[anchor=east]  {};
    \draw (0,-0.3) node[anchor=north]  {\tiny $\alpha_1$};
    \draw (3.3,-0.3) node[anchor=north]  {\tiny $\alpha_{m-1}$};
    \draw (5.3,-0.3) node[anchor=north]  {\tiny $\alpha_m$};
    \draw (7.3,-0.3) node[anchor=north]  {\tiny $\alpha_{m+1}$};
    \draw (10.6,-0.3) node[anchor=north]  {\tiny $\alpha_{m+n-1}$};  

    \draw (0,0.3) node[anchor=south]  {\tiny $1$};
    \draw (3.3,0.3) node[anchor=south]  {\tiny $1$};
    \draw (5.3,0.3) node[anchor=south]  {\tiny $1$};
    \draw (7.3,0.3) node[anchor=south]  {\tiny $1$};
    \draw (10.6,0.3) node[anchor=south]  {\tiny $1$};
\end{tikzpicture}
\end{center}
The numbers in the Dynkin diagram indicate the multiplicities in the decomposition of the highest root:
$\varepsilon_1-\delta_n=\alpha_1+\cdots+\alpha_{m+n-1}$.
The Weyl vector follows from
\begin{align*}
 2\rho=\sum_{i=1}^m(m-n-2i+1)\varepsilon_i+\sum_{\mu=1}^n(m+n-2\mu+1)\delta_\mu.
\end{align*}

Recall from~\cite{SNR77,GZ90b} that the so-called type (1) star operation on $\mathfrak{gl}(m|n)^{\mathbb{C}}$ 
is given by
\begin{align}\label{eq: starA}
 (E_{ij})^*=E_{ji}, \qquad i,j\in\{1,\ldots,m+n\}.
\end{align}
Using \eqnref{eq: real}, we then obtain a compact real form of $\mathfrak{gl}(m|n)^{\mathbb{C}}$, denoted by 
$\mathfrak{u}(m|n)$, with $\mathbb{R}$-basis
\begin{gather*}
 \{\imath E_{ii}\,|\, i=1,\ldots,m+n\}\cup
 \{A_{ij}, B_{ij}\,|\, 1\leq i<j\leq m\}\cup\{A_{ij}, B_{ij}\,|\, m+1\leq i<j\leq m+n\}
 \\[.1cm] 
 \cup\,\{\sqrt{\imath}A_{ij}, \sqrt{\imath} B_{ij}\,|\, 1\leq i\leq m;\,m+1\leq j\leq m+n\},
\end{gather*} 
where $A_{ij}= E_{ij}- E_{ji} $ and $B_{ij}= \imath (E_{ij}+E_{ji})$; cf.~\eqref{AB}.

The compact real form $\mathfrak{su}(m|n)$ of $\mathfrak{sl}(m|n)^{\mathbb{C}}$ consists of all matrices of 
$\mathfrak{u}(m|n)$ with supertrace zero, and we assume that $m\geq 3$ if $m=n$ (since for $m=n=2$, each odd root 
space has dimension $2$). The corresponding Lie supergroup is given by the Harish-Chandra pair 
\begin{align}\label{eq: HCSU}
 \SU(m|n)=(\SU(m)\times \SU(n)\times \UU(1), \mathfrak{su}(m|n)).
\end{align}

The Killing form on $\mathfrak{sl}(m|n)^{\mathbb{C}}$ is given by 
\begin{align}\label{eq: KillingA}
  B(X,Y)= 2(m-n) \Str(XY), \qquad X,Y\in \mathfrak{sl}(m|n)^{\mathbb{C}}.
\end{align}
In what follows, the non-degenerate supersymmetric even bilinear form $Q$ on 
$\mathfrak{sl}(m|n)^{\mathbb{C}}$ is defined by
\begin{align}\label{eq: slQ1}
 Q(X,Y)=-\Str(XY), \qquad \text{if $m\neq n$}.
\end{align}
For $m=n$, $B$ is identically zero, and the Lie superalgebra $\mathfrak{sl}(n|n)^{\mathbb{C}}$ has a one-dimensional 
centre spanned by the identity matrix $I_{2n}$. The corresponding quotient Lie superalgebra 
$\mathfrak{sl}(n|n)^{\mathbb{C}}/\mathbb{C}I_{2n}$ is simple, and we define $Q$ on
$\mathfrak{sl}(n|n)^{\mathbb{C}}=\mathbb{C}I_{2n}\oplus\mathfrak{sl}(n|n)^{\mathbb{C}}/\mathbb{C}I_{2n}$
by
\begin{align}\label{eq: slQ2}
 Q(I_{2n},I_{2n})=1,\qquad 
 Q(I_{2n}, X)=0,\qquad
 Q(X,Y)=-\Str(XY), \qquad 
 X,Y\in \mathfrak{sl}(n|n)/\mathbb{C}I_{2n}.
\end{align}

\begin{example}
\label{exam: flag}
Consider $G=\SU(3|2)$ and the circled Dynkin diagram
\begin{center}
\begin{tikzpicture}[scale=.6]
    \draw[thick] (0 cm,0) circle (.3cm);
     \draw[thick] (0.3 cm,0) -- +(1.3 cm,0);
    \draw[thick] (2 cm,0) circle (.3cm);
    \draw[thick] (2 cm,0) circle (.4cm);

    \draw[thick] (2.4 cm,0) -- +(1.3 cm,0);
     
     \draw[thick] (4 cm,0) circle (.3cm);
     
     \draw[thick] (3.8 cm,-0.2 cm) -- (4.2cm, 0.2 cm); 
     \draw[thick] (3.8 cm, 0.2 cm) -- (4.2cm, -0.2 cm); 
      \draw[thick] (4.3 cm,0) -- +(1.4 cm,0);    
     \draw[thick] (6 cm,0) circle (.3cm);

    \draw (0,-0.3) node[anchor=north]  {\tiny $\varepsilon_1-\varepsilon_2$};
    \draw (2.2,-0.3) node[anchor=north]  {\tiny $\varepsilon_2-\varepsilon_3$};
    \draw (4.2,-0.3) node[anchor=north]  {\tiny $\varepsilon_3-\delta_1$};
    \draw (6.4,-0.3) node[anchor=north]  {\tiny $\delta_1-\delta_2$}; 
\end{tikzpicture}
\end{center}
We now construct the corresponding flag supermanifold $M=G/K$. 

The Dynkin diagram corresponds to the complex Lie superalgebra 
$\mathfrak{g}^{\mathbb{C}}= \mathfrak{sl}(3|2)^{\mathbb{C}}$, while the Dynkin diagram obtained by removing the circled 
node corresponds to the Lie superalgebra 
$\mathfrak{k}^{\mathbb{C}}= \mathfrak{gl}(2)^{\mathbb{C}}\oplus \mathfrak{sl}(1|2)^{\mathbb{C}}$. 
The star operation gives rise to the real forms $\mathfrak{g}=\mathfrak{su}(3|2)$ and 
$\mathfrak{k}=\mathfrak{su}(2)\oplus\mathfrak{su}(1|2)\oplus\mathfrak{u}(1)\cong\mathfrak{u}(2)\oplus\mathfrak{su}(1|2)$. 
Thus, $K=\SU(2)\times \SU(1|2)\times \UU(1)\cong \UU(2)\times \SU(1|2)$, and the root systems of $G$ and $K$ 
are given by 
\begin{align*}
 \Delta&=\big\{\!\pm\!(\varepsilon_1-\varepsilon_2), \pm (\varepsilon_2-\varepsilon_3), 
 \pm(\varepsilon_1-\varepsilon_3), \pm (\delta_1-\delta_2) \big\}
 \cup\big\{\!\pm\!(\varepsilon_i-\delta_1), \pm (\varepsilon_i-\delta_2) \,|\, i=1,2,3 \big\},
 \\[.15cm] 
 \Delta_{K}&= \big\{\!\pm\!(\varepsilon_1-\varepsilon_2), \pm(\delta_1-\delta_2)\big\}
 \cup\big\{\!\pm\!(\varepsilon_3-\delta_1), \pm (\varepsilon_3-\delta_2)\big\},
\end{align*} 
respectively. Setting
\begin{align*}
 \Delta_{M}=\Delta-\Delta_{K}
 =\big\{\!\pm\!(\varepsilon_i-\varepsilon_3),\pm(\varepsilon_i-\delta_1), \pm (\varepsilon_i-\delta_2) \,|\, i=1,2\big\},
\end{align*}
we let $\Delta_{M}^+= \Delta^+\cap \Delta_{M}$ (respectively $\Delta_{M}^-=\Delta^-\cap \Delta_{M}$) denote the set of 
positive (respectively negative) roots of $\Delta_{M}$. It follows from \eqnref{eq: mComp} that 
$\mathfrak{m}^{\mathbb{C}}=\mathfrak{m}_{1}^{\mathbb{C}}\oplus \mathfrak{m}_{-1}^{\mathbb{C}}$, where 
\begin{align*}
 \mathfrak{m}_{\pm1}^{\mathbb{C}}=\bigoplus_{\alpha\in \Delta_M^\pm}\mathfrak{g}_{\alpha}^{\mathbb{C}}.
\end{align*}
Note that $\mathfrak{m}_{1}^{\mathbb{C}}$ and $\mathfrak{m}_{-1}^{\mathbb{C}}$ are irreducible 
$\ad_{\mathfrak{k}^{\mathbb{C}}}$-modules, and that they are dual to each other. The highest-weight vector of 
$\mathfrak{m}_{1}^{\mathbb{C}}$ is $E_{15}$, with highest weight $\varepsilon_1-\delta_2$. 
It follows from \eqref{eq: mreal} and \eqref{AB} that 
\begin{align*}
 \mathfrak{m}
  = \sum_{\alpha\in (\Delta_M^+)_{\bar{0}}} \mathbb{R}A_{\alpha}
    \ \oplus\!\sum_{\alpha\in (\Delta_M^+)_{\bar{0}}} \mathbb{R} B_{\alpha}
    \ \oplus\! \sum_{\alpha\in (\Delta_M^+)_{\bar{1}}} \mathbb{R} \sqrt{\imath}A_{\alpha}
    \  \oplus\! \sum_{\alpha\in (\Delta_M^+)_{\bar{1}}}\mathbb{R}\sqrt{\imath}B_{\alpha}.
\end{align*}
Since $\mathfrak{m}_{1}^{\mathbb{C}}$ is an irreducible $\ad_{\mathfrak{k}^{\mathbb{C}}}$-module, the real tangent 
space $\mathfrak{m}$ is irreducible as an $\ad_{\mathfrak{k}}$-module.

We extend the analysis above to general $\SU(m|n)$ with a single circled node in the Dynkin diagram 
in \secref{sec: flagA1}, and with two circled nodes in \secref{sec: flagA}. 
A similar analysis is carried out for $\SOSp(2|2n)$ in \secref{sec: flagC}, although for a single circled node only.
\end{example}

\subsubsection{$\mathrm{SOSp}(2|2n)$}
\label{subsubsec: typeC}

Let $V$ be a complex vector superspace with $\dim(V_{\bar{0}})=2$ and $\dim(V_{\bar{1}})=2n$, where $n\geq2$,
and let $\omega: V\times V\to \mathbb{C}$ be a non-degenerate 
supersymmetric even bilinear form whose matrix representation relative to a suitable basis for $V$ is given by
\begin{align*}
 \begin{pmatrix}
0 & 1 & 0& 0\\
1 & 0 & 0 & 0\\ 
0 & 0 & 0 & I_n\\
0 & 0 & -I_n & 0
\end{pmatrix}.
\end{align*}
The corresponding orthosymplectic superalgebra is defined by
\begin{align*}
 \mathfrak{osp}(2|2n)^{\mathbb{C}}
 :=\big\{ X\in \mathfrak{gl}(2|2n)^{\mathbb{C}} \,|\, \omega(Xu,v)+ (-1)^{[X][u]}\omega(u,Xv)=0 \text{ for all } u,v\in V \big\}.
\end{align*}
In the remainder of this subsection, we set $\mathfrak{g}^{\mathbb{C}}=\mathfrak{osp}(2|2n)^{\mathbb{C}}$.

Let $\mathfrak{h}^{\mathbb{C}}$ be the Cartan subalgebra spanned by the 
diagonal matrices of $\mathfrak{g}^{\mathbb{C}}$:
\begin{align}\label{eq: cartanC}
 \mathfrak{h}^{\mathbb{C}}=\big\{\mathrm{diag}(a,-a, f_1, \dots, f_{n}, -f_1, \dots, -f_n)
   \,|\, a,f_1,\ldots,f_n\in\mathbb{C}\big\}.
\end{align}
Relative to an arbitrary but fixed nonzero $H\in \mathfrak{h}^{\mathbb{C}}$, 
define the dual basis $\{\varepsilon, \delta_1, \dots, \delta_n\}\subset (\mathfrak{h}^{\mathbb{C}})^\vee$ by
\begin{align*} 
 \varepsilon(H)=a, \qquad \delta_j(H)=f_j,\qquad j=1,\ldots,n. 
\end{align*}
The root system partitions as $\Delta=\Delta^{+}\cup \Delta^-$, with the positive root system partitioning as 
$\Delta^+= \Delta^+_{\bar{0}}\cup \Delta^+_{\bar{1}}$, where
\begin{align*}
  \Delta_{\bar{0}}^+=\{ \delta_i\pm\delta_j\,|\, 1\leq i<j \leq n \}\cup \{2\delta_j\,|\, 1\leq j\leq n\}, 
    \qquad
  \Delta_{\bar{1}}^+=\{ \varepsilon\pm \delta_j\,|\, 1\leq j\leq n\}.
\end{align*}
The corresponding simple root system is given by
\begin{align*}
\Pi=\{ \alpha_1=\varepsilon-\delta_1, 
  \alpha_2=\delta_1-\delta_2, \dots, 
  \alpha_n=\delta_{n-1}-\delta_{n}, 
  \alpha_{n+1}=2\delta_n\},
\end{align*}
with Dynkin diagram
\begin{center}
\begin{tikzpicture}[scale=.6]
    \draw[thick] (0 cm,0) circle (.3cm);
    
    \draw[thick] (-0.2 cm,-0.2 cm) -- (0.2cm, 0.2 cm); 
     \draw[thick] (-0.2 cm,0.2 cm) -- (0.2cm, -0.2 cm); 

    \draw[thick] (0.3 cm,0) -- +(1.4 cm,0);
    \draw[thick] (2 cm,0) circle (.3cm);

     \draw[thick] (2.3 cm,0) -- (3  cm,0);
     \draw[dotted,thick] (3.8 cm,0) -- (4.5 cm,0);
    \draw[thick] (5.3 cm,0) -- (6 cm,0);

     \draw[thick] (6.3 cm,0) circle (.3cm);

     \draw[thick] (6.7 cm,.1cm) -- (8 cm,.1cm);
     \draw[thick] (6.7 cm,-.1cm) -- (8 cm,-.1cm);

     \draw[thick] (6.6 cm, 0cm) -- (6.9 cm,.25cm);
     \draw[thick] (6.6 cm, 0cm) -- (6.9 cm,-.25cm);
     \draw[thick] (8.3 cm,0) circle (.3cm);

   \draw (9.5,-0.3) node[anchor=east]  {};
    \draw (0,-0.3) node[anchor=north]  {\tiny $\alpha_1$};
     \draw (0,0.3) node[anchor=south]  {\tiny $1$};
    \draw (2,-0.3) node[anchor=north]  {\tiny $\alpha_2$};
     \draw (2,0.3) node[anchor=south]  {\tiny $2$};
    \draw (6.3,-0.3) node[anchor=north]  {\tiny $\alpha_n$};
     \draw (6.3,0.3) node[anchor=south]  {\tiny $2$};
    \draw (8.3,-0.3) node[anchor=north]  {\tiny $\alpha_{n+1}$};
     \draw (8.3,0.3) node[anchor=south]  {\tiny $1$};
\end{tikzpicture}
\end{center}
The numbers in the Dynkin diagram indicate the multiplicities in the decomposition of the highest root:
$\varepsilon+\delta_1=\alpha_1+2\alpha_2+\cdots+2\alpha_n+\alpha_{n+1}$. 
The Weyl vector is given by
\begin{align*}
 \rho=-n\varepsilon+\sum_{j=1}^n(n-j+1)\delta_j.
\end{align*}

For any root $\alpha\in \Delta$, let $E_{\alpha}\in\mathfrak{gl}(2|2n)^{\mathbb{C}}$ span the (one-dimensional) root space 
$\mathfrak{g}_{\alpha}^{\mathbb{C}}$, and let
\begin{align*}
 (E_{\alpha})^{*}=E_{-\alpha}\quad (\alpha\in \Delta^+), \qquad 
 (H)^*=H\quad (H\in \mathfrak{h}^{\mathbb{C}}),
\end{align*}
define a type (1) star operation~\cite{SNR77,GZ90a}.
In light of \eqnref{eq: real}, the compact real form $\mathfrak{osp}(2|2n)$ of $\mathfrak{g}^{\mathbb{C}}$ satisfies
\begin{align*}
 \mathfrak{osp}(2|2n)_{\bar{0}}
  =\mathfrak{so}(2,\mathbb{R})\oplus \mathfrak{sp}(2n)
  \cong\mathfrak{so}(2,\mathbb{R}) \oplus \mathfrak{u}(n, \mathbb{H}), \qquad
 \mathfrak{osp}(2|2n)_{\bar{1}}\cong \mathbb{H}^{n}.
\end{align*}
The connected orthosymplectic supergroup $\SOSp(2|2n)$ is given by the Harish-Chandra pair
\begin{align}\label{SOSp}
 \SOSp(2|2n)= (\SO(2,\mathbb{R})\times\Sp(2n), \mathfrak{osp}(2|2n)),
\end{align}
where $\Sp(2n)=\Sp(2n, \mathbb{C})\cap\UU(2n)\cong\UU(n,\mathbb{H})$ is our notation for the compact symplectic group.

As in the case of $\SU(m|n)$, we choose the non-degenerate $\ad_{\mathfrak{g}^{\mathbb{C}}}$-invariant supersymmetric 
even bilinear form to be
\begin{align}\label{eq: Qosp}
 Q(X,Y)=-\Str(XY), \qquad  X,Y\in \mathfrak{osp}(2|2n)^{\mathbb{C}},
\end{align}
and note that it is related to the Killing form $B$ as
\begin{align}\label{eq: BQrel}
 B(X,Y)=2n\,Q(X,Y).
\end{align}

\subsection{$\SU(m|n)$: one isotropy summand}
\label{sec: flagA1}

Let $G=\SU(m|n)$ be as in \eqref{eq: HCSU}. We will construct a flag supermanifold $M=G/K$ whose isotropy 
representation is an irreducible $\ad_{\mathfrak{k}}$-module, here denoted by $\mathfrak{m}_1$ 
(the index is for ease of comparison with results presented in \secref{sec: flagA} and \secref{sec: flagC}).
Retaining the notation from \secref{sec: diag_curv} and \secref{subsubsec: typeA}, 
a $G$-invariant graded Riemannian metric on $M$ is then of the form
\begin{align*}
 g= x_1 Q|_{\mathfrak{m}_1},\qquad
  x_1\in \mathbb{R}^\times.
\end{align*}
Using \eqref{eq: KillingA}, \eqref{eq: slQ1} and \eqref{eq: slQ2}, the coefficient in the expression \eqref{BbQ} is 
$b_1=2(m-n)$.

Generalising \exref{exam: flag}, we construct $M$ using a Dynkin diagram of $G$ with the $p^{\text{th}}\!$ node circled, 
where
\begin{align*}
 1\leq p\leq m+n-1,
\end{align*} 
and we divide our considerations into two cases, with $K$ given as follows:
\begin{align}\label{SUK1}
 K=\begin{cases}
    \UU(p|0)\times \SU(m-p|n), \qquad &p\leq m, \\[.1cm] 
    \SU(m|p-m)\times \UU(0|m+n-p) , \qquad & m<p.
  \end{cases} 
\end{align}
Note that $\SU(k|0)\cong\SU(0|k)\cong\SU(k)$ is an ordinary Lie group for $k\geq1$. However, writing $\UU(p|0)$, for 
example, keeps track of the concrete embedding of $\UU(p)$ inside $G$. 

Recall that $\mathfrak{m}^{\mathbb{C}}$ is the $Q$-orthogonal complement of $\mathfrak{k}^{\mathbb{C}}$ in 
$\mathfrak{g}^{\mathbb{C}}=\mathfrak{sl}(m|n)^{\mathbb{C}}$, where, with $K$ given in \eqref{SUK1},
\begin{align*}
 \mathfrak{k}^{\mathbb{C}}=\begin{cases}
   \mathfrak{gl}(p|0)^{\mathbb{C}}\oplus\mathfrak{sl}(m-p|n)^{\mathbb{C}}, 
   \qquad &p\leq m, 
	\\[.15cm] 
   \mathfrak{sl}(m|p-m)^{\mathbb{C}}\oplus\mathfrak{gl}(0|m+n-p)^{\mathbb{C}}, 
   \qquad & m<p.
  \end{cases} 
\end{align*}
First, let us focus on $p\leq m$ and introduce $\hat{\mathfrak{g}}^{\mathbb{C}}= \mathfrak{gl}(m|n)^{\mathbb{C}}$ and
$\hat{\mathfrak{k}}^{\mathbb{C}}=\mathfrak{gl}(p|0)^{\mathbb{C}}\oplus\mathfrak{gl}(m-p|n)^{\mathbb{C}}$.
Then, $\hat{\mathfrak{g}}^{\mathbb{C}}=\hat{\mathfrak{k}}^{\mathbb{C}} \oplus\mathfrak{m}^{\mathbb{C}}$ is a
$Q$-orthogonal decomposition, and the decomposition of $\mathfrak{m}^{\mathbb{C}}$ as an 
$\ad_{\hat{\mathfrak{k}}^{\mathbb{C}}}$-module is the same as the decomposition of $\mathfrak{m}^{\mathbb{C}}$ as 
an $\ad_{\mathfrak{k}^{\mathbb{C}}}$-module. So, we may (and will) continue our analysis using 
$\hat{\mathfrak{g}}^{\mathbb{C}}$ and $\hat{\mathfrak{k}}^{\mathbb{C}}$ instead of
$\mathfrak{g}^{\mathbb{C}}$ and $\mathfrak{k}^{\mathbb{C}}$.
For all applicable $r,s$, we use $\mu_{r|s}$ to denote the natural representation of $\mathfrak{gl}(r|s)^{\mathbb{C}}$, 
with dual representation $\bar{\mu}_{r|s}$.
In accord with the regular embedding $\hat{\mathfrak{k}}^{\mathbb{C}}\subseteq\hat{\mathfrak{g}}^{\mathbb{C}}$,
we denote by $\varepsilon_1, \dots, \varepsilon_p$ the weights of $\mu_{p|0}$, 
and by $\varepsilon_{p+1}, \dots, \varepsilon_m, \delta_1, \dots, \delta_n$ the weights of $\mu_{m-p|n}$. 
It then follows that $\mathfrak{m}^{\mathbb{C}}_1$ is indeed an irreducible $\ad_{\mathfrak{k}^{\mathbb{C}}}$-module,
with
\begin{align*}
 \mathfrak{m}^{\mathbb{C}}=\mathfrak{m}^{\mathbb{C}}_{-1}\oplus\mathfrak{m}^{\mathbb{C}}_1,\qquad
 \mathfrak{m}^{\mathbb{C}}_1=\begin{cases}
  \mu_{p|0}\otimes\bar{\mu}_{m-p|n},
   \qquad &p\leq m, 
	\\[.15cm]
    \mu_{m|p-m}\otimes\bar{\mu}_{0|m+n-p},
   \qquad &p>m, 
 \end{cases}
\end{align*}
where $\mathfrak{m}^{\mathbb{C}}_{-1}$ is the dual $\ad_{\mathfrak{k}^{\mathbb{C}}}$-module of $\mathfrak{m}^{\mathbb{C}}_1$,
and where we have included the expression for $\mathfrak{m}^{\mathbb{C}}_1$ obtained similarly in the case $p>m$.
For all $p$, the corresponding highest weight of $\mathfrak{m}^{\mathbb{C}}_1$ is $\Lambda_1=\varepsilon_1-\delta_n$.
We also have
\begin{align*}
 d_1:=\sdim(\mathfrak{m}_1)=2\sdim_{\mathbb{C}}(\mathfrak{m}_1^{\mathbb{C}})
 =\begin{cases}
  2p(m-p-n),\quad &p\leq m,\\[.15cm]
  -2(2m-p)(m+n-p),\quad &p>m.
  \end{cases}
\end{align*}

The eigenvalue of $C_{\mathfrak{m}_1, Q|_{\mathfrak{k}}}$ is given by
$c_1=-(\Lambda_1+2\rho_{\mathfrak{k}^{\mathbb{C}}}, \Lambda_1)$,
where the Weyl vector $\rho_{\mathfrak{k}^{\mathbb{C}}}$ of $\mathfrak{k}^{\mathbb{C}}$ (see \eqref{Weyl}) follows from
\begin{align*}
 2\rho_{\mathfrak{k}^{\mathbb{C}}}
   =\sum_{i=1}^p(p-2i+1)\varepsilon_i
   +\sum_{i=p+1}^m(m+p-n-2i+1)\varepsilon_i
   +\sum_{\mu=1}^n(m-p+n-2\mu+1)\delta_\mu
\end{align*}
for $p\leq m$, while for $p>m$, we have
\begin{align*}
 2\rho_{\mathfrak{k}^{\mathbb{C}}}
  &=\sum_{i=1}^m(2m-p-2i+1)\varepsilon_i
   +\sum_{\mu=1}^{p-m}(p-2\mu+1)\delta_\mu
   +\sum_{\mu=p-m+1}^n(-m+p+n-2\mu+1)\delta_\mu.
\end{align*}
To compute $c_1$, we use that
\begin{align*}
 (\lambda, \mu)=\sum_{i=1}^{m+n}\lambda(E_{ii})\mu(\bar{E}_{ii}),\qquad \lambda,\mu\in(\mathfrak{h}^{\mathbb{C}})^\vee,
\end{align*}
where the right $Q$-dual matrices, $\bar{E}_{11},\ldots,\bar{E}_{m+n,m+n}$, to the diagonal matrix units, 
$E_{11},\ldots,E_{m+n,m+n}$, are given by $\bar{E}_{ii}=-E_{ii}$ if $i\leq m$ and $\bar{E}_{ii}=E_{ii}$ if $i>m$.
We thus have $(\varepsilon_1,\varepsilon_1)=-1$ and $(\delta_n,\delta_n)=1$, in particular.
It follows that, for $p\leq m$,
\begin{align*}
 c_1
  =-(\Lambda_1+2\rho_{\mathfrak{k}^{\mathbb{C}}},\Lambda_1)
  =-(2\rho_{\mathfrak{k}^{\mathbb{C}}},\varepsilon_1-\delta_n)
  =-(p-1)(\varepsilon_1,\varepsilon_1)+(m-p-n+1)(\delta_n,\delta_n)=m-n,
\end{align*}
and similarly for $p>m$,
\begin{align*}
 c_1=-(2m-p-1)(\varepsilon_1,\varepsilon_1)-(m-p+n-1)(\delta_n,\delta_n)=m-n.
\end{align*}
In both cases, we thus have
\begin{align*}
 b_1=2c_1,
\end{align*}
so by \propref{prop: StrCas}, the single structure constant vanishes: $[111]=0$.

The Ricci coefficient $r_1$ follows from \eqref{rdnot0} if $d_1\neq0$, and from \propref{prop: ricexc} if $d_1=0$.
In both situations, we have $r_1=c_1$, so the Einstein equation \eqref{eq: Ein} simply reads
\begin{align*}
 c_1=cx_1.
\end{align*}
For $m\neq n$, there is a unique solution up to scaling, namely $(g,c)=((m-n)Q|_{\mathfrak{m}_1},1)$, and the metric is 
positive (see text following \eqref{eq: metric}) with positive $c$ if $m>n$, and positive with negative $c$ if $m<n$.
For $m=n$, there is a continuous family of solutions, namely $(g,c)=(x_1Q|_{\mathfrak{m}_1},0)$, so the metric is Ricci-flat.

Circling a single node in the Dynkin diagram of $\SU(m|n)$ results in a flag supermanifold with one isotropy summand. 
In the next section, we will show that circling two nodes leads to three isotropy summands.

\subsection{$\SU(m|n)$: three isotropy summands}
\label{sec: flagA}

Let $G=\SU(m|n)$ be as in \eqref{eq: HCSU}. We will construct a flag supermanifold $M=G/K$ whose isotropy 
representation decomposes into three inequivalent nonzero irreducible $\ad_{\mathfrak{k}}$-modules:
\begin{align*}
 \mathfrak{m}= \mathfrak{m}_1\oplus \mathfrak{m}_2\oplus \mathfrak{m}_3.
\end{align*}
Retaining the notation from \secref{sec: diag_curv} and \secref{subsubsec: typeA}, 
a $G$-invariant graded Riemannian metric on $M$ is of the form
\begin{align}\label{gxxx}
 g= x_1 Q|_{\mathfrak{m}_1}
  + x_2 Q|_{\mathfrak{m}_2}
  +x_3 Q|_{\mathfrak{m}_3},\qquad 
  x_1,x_2,x_3\in \mathbb{R}^\times.
\end{align}
We will use $(x_1,x_2,x_3)$ as a shorthand notation for the sum in \eqref{gxxx}. Using \eqref{eq: KillingA}, \eqref{eq: slQ1} 
and \eqref{eq: slQ2}, the coefficients in the expression \eqref{BbQ} are $b_1=b_2=b_3=2(m-n)$. In the following, we will 
denote this common value simply by $b$:
\begin{align*}
 b=2(m-n).
\end{align*}

We construct $M$ using a Dynkin diagram of $G$ with two circled nodes, with $p$ and $q$,
\begin{align*}
 1\leq p<q\leq m+n-1,
\end{align*} 
denoting the indices of the circled nodes, and we divide our considerations into three cases, as depicted 
in \figref{fig: CDynA}, with $K$ given as follows:
\begin{align}\label{SUK}
 K=\begin{cases}
  	 \UU(p|0)\times \UU(q-p|0)\times \SU(m-q|n), \qquad &p<q\leq m,\qquad (\text{Case 1}), \\[.1cm] 
  	 \UU(p|0)\times \SU(m-p|q-m)\times  \UU(0|m+n-q), \qquad&p<m< q,\qquad (\text{Case 2}),\\[.1cm] 
   \SU(m|p-m)\times  \UU(0|q-p) \times \UU(0|m+n-q) , \qquad & m\leq p<q,\qquad (\text{Case 3}).
  \end{cases} 
\end{align}
Note that the odd isotropic node is circled if either $q=m$ (Case 1) or $p=m$ (Case 3).

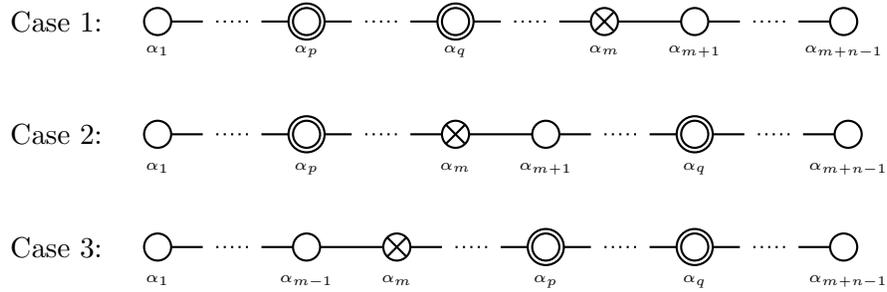
\begin{figure}[ht]
\begin{center}
  \begin{tikzpicture}[scale=.6]
   \draw (-1,0) node[anchor=east]  {Case 1:};
    \draw[thick] (0 cm,0) circle (.3cm);
    \draw[thick] (0.3 cm,0) -- (1  cm,0);
    \draw[dotted,thick] (1.3 cm,0) -- (2 cm,0);
    \draw[thick] (2.3 cm,0) -- (2.9 cm,0);

    \draw[thick] (3.3 cm,0) circle (.3cm);
    \draw[thick] (3.3 cm,0) circle (.4cm);
     
    \draw[thick] (3.7 cm,0) -- (4.3 cm,0);
    \draw[dotted,thick] (4.6 cm,0) -- (5.3 cm,0);

    \draw[thick] (5.6 cm,0) -- (6.2 cm,0);
    \draw[thick] (6.6 cm,0) circle (.3cm);
    \draw[thick] (6.6 cm,0) circle (.4cm);
    \draw[thick] (7 cm,0) -- (7.6 cm,0);

    \draw[dotted,thick] (7.9 cm,0) -- (8.6 cm,0);
    \draw[thick] (8.9 cm,0) -- (9.6 cm,0);
     \draw[thick] (9.9 cm,0) circle (.3cm);
     \draw[thick] (9.7 cm,-0.2 cm) -- (10.1cm, 0.2 cm); 
     \draw[thick] (9.7 cm, 0.2 cm) -- (10.1cm, -0.2 cm); 
     \draw[thick] (10.2 cm,0) -- (11.6 cm,0);
     \draw[thick] (11.9 cm,0) circle (.3cm);
     \draw[thick] (12.2 cm,0) -- (12.9 cm,0);     

    \draw[dotted,thick] (13.2 cm,0) -- (13.9 cm,0);
    \draw[thick] (14.2 cm,0) -- (14.9 cm,0); 
     \draw[thick] (15.2 cm,0) circle (.3cm);
    \draw (0,-0.3) node[anchor=north]  {\tiny $\alpha_1$};
    \draw (3.3,-0.3) node[anchor=north]  {\tiny $\alpha_p$};
    \draw (6.6,-0.3) node[anchor=north]  {\tiny $\alpha_q$};
    \draw (9.9,-0.3) node[anchor=north]  {\tiny $\alpha_m$};
    \draw (11.9,-0.3) node[anchor=north]  {\tiny $\alpha_{m+1}$};  
    \draw (15.2,-0.3) node[anchor=north]  {\tiny $\alpha_{m+n-1}$};  

  \begin{scope}[shift={(0,-2.5cm)}]
     \draw (-1,0) node[anchor=east]  {Case 2:};
    \draw[thick] (0 cm,0) circle (.3cm);
    \draw[thick] (0.3 cm,0) -- (1  cm,0);
    \draw[dotted,thick] (1.3 cm,0) -- (2 cm,0);
    \draw[thick] (2.3 cm,0) -- (2.9 cm,0);

    \draw[thick] (3.3 cm,0) circle (.3cm);
    \draw[thick] (3.3 cm,0) circle (.4cm);

    \draw[thick] (3.7 cm,0) -- (4.3 cm,0);
    \draw[dotted,thick] (4.6 cm,0) -- (5.3 cm,0);

    \draw[thick] (5.6 cm,0) -- (6.3 cm,0);
    \draw[thick] (6.6 cm,0) circle (.3cm);
     \draw[thick] (6.4 cm,-0.2 cm) -- (6.8cm, 0.2 cm); 
     \draw[thick] (6.4 cm, 0.2 cm) -- (6.8cm, -0.2 cm); 
    \draw[thick] (6.9 cm,0) -- +(1.4 cm,0);

    \draw[thick] (8.6 cm,0) circle (.3cm);
    \draw[thick] (8.9 cm,0) -- (9.6 cm,0);

    \draw[dotted,thick] (9.9 cm,0) -- (10.6 cm,0);
    \draw[thick] (10.9 cm,0) -- (11.5 cm,0);

    \draw[thick] (11.9 cm,0) circle (.3cm);
    \draw[thick] (11.9 cm,0) circle (.4cm);
    \draw[thick] (12.3 cm,0) -- (13 cm,0);
   \draw[dotted,thick] (13.3 cm,0) -- (14 cm,0);
    \draw[thick] (14.3 cm,0) -- (15 cm,0);
    \draw[thick] (15.3 cm,0) circle (.3cm);

    \draw (0,-0.4) node[anchor=north]  {\tiny $\alpha_1$};
    \draw (3.3,-0.4) node[anchor=north]  {\tiny $\alpha_p$};
    \draw (6.6,-0.4) node[anchor=north]  {\tiny $\alpha_m$};
    \draw (8.6,-0.4) node[anchor=north]  {\tiny $\alpha_{m+1}$};
    \draw (11.9,-0.4) node[anchor=north]  {\tiny $\alpha_q$};  
    \draw (15.3,-0.4 ) node[anchor=north]  {\tiny $\alpha_{m+n-1}$};  
  \end{scope}

  \begin{scope}[shift={(0,-5cm)}]
     \draw (-1,0) node[anchor=east]  {Case 3:};
    \draw[thick] (0 cm,0) circle (.3cm);
    \draw[thick] (0.3 cm,0) -- (1  cm,0);
    \draw[dotted,thick] (1.3 cm,0) -- (2 cm,0);
    \draw[thick] (2.3 cm,0) -- (3 cm,0);

    \draw[thick] (3.3 cm,0) circle (.3cm);

    \draw[thick] (3.6 cm,0) -- (5 cm,0);
    \draw[thick] (5.3 cm,0) circle (.3cm);
    \draw[thick] (5.1 cm,-0.2 cm) -- (5.5cm, 0.2 cm); 
    \draw[thick] (5.1 cm, 0.2 cm) -- (5.5cm, -0.2 cm); 
    
    \begin{scope}[shift={(5.3,0)}]
    \draw[thick] (0.3 cm,0) -- (1  cm,0);
    \draw[dotted,thick] (1.3 cm,0) -- (2 cm,0);
    \draw[thick] (2.3 cm,0) -- (2.9 cm,0);

    \draw[thick] (3.3 cm,0) circle (.3cm);
    \draw[thick] (3.3 cm,0) circle (.4cm);
     
    \draw[thick] (3.7 cm,0) -- (4.3 cm,0);
    \draw[dotted,thick] (4.6 cm,0) -- (5.3 cm,0);

    \draw[thick] (5.6 cm,0) -- (6.2 cm,0);
    \draw[thick] (6.6 cm,0) circle (.3cm);
    \draw[thick] (6.6 cm,0) circle (.4cm);
    \draw[thick] (7 cm,0) -- (7.6 cm,0);

    \draw[dotted,thick] (7.9 cm,0) -- (8.6 cm,0);
    \draw[thick] (8.9 cm,0) -- (9.6 cm,0);
     \draw[thick] (9.9 cm,0) circle (.3cm);
    \end{scope}

    \draw (0,-0.4) node[anchor=north]  {\tiny $\alpha_1$};
    \draw (3.3,-0.4) node[anchor=north]  {\tiny $\alpha_{m-1}$};
    \draw (5.3,-0.4) node[anchor=north]  {\tiny $\alpha_m$};
    \draw (8.6,-0.4) node[anchor=north]  {\tiny $\alpha_p$};
    \draw (11.9,-0.4) node[anchor=north]  {\tiny $\alpha_q$};  
    \draw (15.3,-0.4 ) node[anchor=north]  {\tiny $\alpha_{m+n-1}$};  
  \end{scope}

  \end{tikzpicture}
  \caption{Dynkin diagrams of $\SU(m|n)$ with two circled nodes.}
  \label{fig: CDynA}
\end{center}
\end{figure}

In \secref{sec: Aclass}, we use results obtained in \secref{sec: diag_curv} to classify the $G$-invariant Einstein metrics 
\eqref{gxxx} on $M=G/K$ constructed from a circled Dynkin diagram of $G=\SU(m|n)$, with $K$ given in \eqref{SUK}.
In preparation for these classification results, we examine the structure of the isotropy representation of $M$ 
in \secref{sec: isosum}, and confirm that there are precisely three inequivalent isotropy summands in $T_KM$ for $M$ 
constructed as above. We subsequently determine the structure constants of $M$ in \secref{sec: strconsA}.

\subsubsection{Isotropy summands}
\label{sec: isosum}

To describe the $\ad_{\mathfrak{k}}$-module structure of $\mathfrak{m}$, we first characterise the 
$\ad_{\mathfrak{k}^{\mathbb{C}}}$-module structure of $\mathfrak{m}^{\mathbb{C}}$. Recall that 
$\mathfrak{m}^{\mathbb{C}}$ is the $Q$-orthogonal complement of $\mathfrak{k}^{\mathbb{C}}$ in 
$\mathfrak{g}^{\mathbb{C}}=\mathfrak{sl}(m|n)^{\mathbb{C}}$, where, with $K$ given in \eqref{SUK},
\begin{align}\label{eq: compk}
 \mathfrak{k}^{\mathbb{C}}=\begin{cases}
   \mathfrak{gl}(p|0)^{\mathbb{C}}\oplus \mathfrak{gl}(q-p|0)^{\mathbb{C}}\oplus\mathfrak{sl}(m-q|n)^{\mathbb{C}}, 
   \qquad &p<q\leq m, 
 \\[.15cm] 
  \mathfrak{gl}(p|0)^{\mathbb{C}}\oplus  \mathfrak{sl}(m-p|q-m)^{\mathbb{C}} \oplus  \mathfrak{gl}(0|m+n-q)^{\mathbb{C}}, 
	\qquad& p< m< q,
	\\[.15cm] 
   \mathfrak{sl}(m|p-m)^{\mathbb{C}} \oplus \mathfrak{gl}(0|q-p)^{\mathbb{C}} \oplus \mathfrak{gl}(0|m+n-q)^{\mathbb{C}}, 
   \qquad & m\leq p<q.
  \end{cases} 
\end{align}
For the root-space decomposition \eqref{eq: mComp} of $\mathfrak{m}^{\mathbb{C}}$, we introduce
\begin{align*}
  \mathfrak{m}_{k_1,k_2}^{\mathbb{C}}:= \bigoplus_{\alpha\in \Delta_{k_1,k_2}} \mathfrak{g}_{\alpha}^{\mathbb{C}},\qquad
  \Delta_{k_1,k_2}:= \Big\{ \sum_{i=1}^{m+n-1}c_i  \alpha_i \in \Delta_M \,|\, c_p=k_1, c_{q}=k_2\Big\},\qquad
  (k_1,k_2)\in I_{\mathfrak{m}^{\mathbb{C}}},
\end{align*}
where $I_{\mathfrak{m}^{\mathbb{C}}}=\{(-1,-1),(0,-1),(-1,0),(1,0),(0,1),(1,1)\}$. If $\Delta_{k_1,k_2}$ is non-empty, 
then $\mathfrak{m}_{k_1,k_2}^{\mathbb{C}}$ is an $\ad_{\mathfrak{k}^{\mathbb{C}}}$-invariant subspace of 
$\mathfrak{m}^{\mathbb{C}}$.
\begin{proposition}\label{prop: mdecA}
Let $\mathfrak{k}^{\mathbb{C}}$ be one of  the Lie superalgebras in \eqref{eq: compk}. As an 
$\ad_{\mathfrak{k}^{\mathbb{C}}}$-module, $\mathfrak{m}^{\mathbb{C}}$ admits the $Q$-orthogonal decomposition
\begin{align*} 
  \mathfrak{m}^{\mathbb{C}}=\mathfrak{m}_{-1,-1}^{\mathbb{C}} 
  \oplus \mathfrak{m}_{0,-1}^{\mathbb{C}}
  \oplus \mathfrak{m}_{-1,0}^{\mathbb{C}} 
  \oplus \mathfrak{m}_{1,0}^{\mathbb{C}}
  \oplus \mathfrak{m}_{0,1}^{\mathbb{C}} 
  \oplus \mathfrak{m}_{1,1}^{\mathbb{C}}.
\end{align*}
Moreover, $\mathfrak{m}_{k_1,k_2}^{\mathbb{C}}$ and $\mathfrak{m}_{-k_1,-k_2}^{\mathbb{C}}$ are irreducible and dual 
for each $(k_1,k_2)\in\{(1,0),(0,1),(1,1)\}$.
\end{proposition}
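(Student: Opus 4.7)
The plan is to exploit the root-space decomposition $\mathfrak{m}^{\mathbb{C}}=\bigoplus_{\alpha\in\Delta_M}\mathfrak{g}_\alpha^{\mathbb{C}}$ together with a basic observation about type~$A$ roots: every root $\alpha=\sum_{i}c_i\alpha_i$ of $\mathfrak{sl}(m|n)^{\mathbb{C}}$ has coefficients $c_i\in\{-1,0,1\}$, with all $c_i\geq 0$ if $\alpha$ is positive and all $c_i\leq 0$ if negative; this is immediate from the fact that each root is of the form $\pm(\varepsilon_i-\varepsilon_j)$, $\pm(\delta_\mu-\delta_\nu)$, or $\pm(\varepsilon_i-\delta_\mu)$. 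Hence $(c_p(\alpha),c_q(\alpha))\in\{-1,0,1\}^2$, and since $\alpha\in\Delta_M$ iff at least one of $c_p(\alpha),c_q(\alpha)$ is nonzero, $\Delta_M$ partitions disjointly into the six subsets $\Delta_{k_1,k_2}$ indexed by $I_{\mathfrak{m}^{\mathbb{C}}}$. This yields the required direct-sum decomposition into $\ad_{\mathfrak{k}^{\mathbb{C}}}$-invariant subspaces, and non-emptiness of each $\Delta_{k_1,k_2}$ is verified by inspecting \figref{fig: CDynA} in each of the three cases.

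For the $Q$-properties, I would invoke the $\ad_{\mathfrak{g}^{\mathbb{C}}}$-invariance of $Q$, which gives $Q(\mathfrak{g}_\alpha^{\mathbb{C}},\mathfrak{g}_\beta^{\mathbb{C}})=0$ unless $\alpha+\beta=0$. Since $2(k_1,k_2)\neq(0,0)$ for every $(k_1,k_2)\in I_{\mathfrak{m}^{\mathbb{C}}}$, and since $(k_1,k_2)+(k_1',k_2')=0$ for $(k_1,k_2),(k_1',k_2')\in I_{\mathfrak{m}^{\mathbb{C}}}$ exactly when $(k_1',k_2')=(-k_1,-k_2)$, each summand $\mathfrak{m}_{k_1,k_2}^{\mathbb{C}}$ is $Q$-isotropic and is $Q$-orthogonal to every other summand except $\mathfrak{m}_{-k_1,-k_2}^{\mathbb{C}}$. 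The non-degeneracy of $Q$ on $\mathfrak{g}^{\mathbb{C}}$ then forces the restriction $Q\colon\mathfrak{m}_{k_1,k_2}^{\mathbb{C}}\times\mathfrak{m}_{-k_1,-k_2}^{\mathbb{C}}\to\mathbb{C}$ to be non-degenerate, and its $\ad_{\mathfrak{k}^{\mathbb{C}}}$-invariance supplies the asserted duality of $\ad_{\mathfrak{k}^{\mathbb{C}}}$-modules.

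For the irreducibility, I would work case by case using \eqref{eq: compk} and, for each $(k_1,k_2)\in\{(1,0),(0,1),(1,1)\}$, identify $\mathfrak{m}_{k_1,k_2}^{\mathbb{C}}$ with an outer tensor product of a natural and a dual-natural representation of two of the three direct summands of $\mathfrak{k}^{\mathbb{C}}$, on which the third summand acts trivially and the central $\mathfrak{u}(1)$-factors act by characters. In Case~1, the block-matrix realisation yields $\mathfrak{m}_{1,0}^{\mathbb{C}}\cong\mu_{p|0}\otimes\bar{\mu}_{q-p|0}$, $\mathfrak{m}_{0,1}^{\mathbb{C}}\cong\mu_{q-p|0}\otimes\bar{\mu}_{m-q|n}$, and $\mathfrak{m}_{1,1}^{\mathbb{C}}\cong\mu_{p|0}\otimes\bar{\mu}_{m-q|n}$, with wholly analogous identifications in Cases~2 and~3 that instead involve $\mathfrak{sl}(m-p|q-m)^{\mathbb{C}}$ or $\mathfrak{sl}(m|p-m)^{\mathbb{C}}$. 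Each natural (and dual-natural) representation of a type-$A$ factor is irreducible, and outer tensor products of absolutely irreducible modules across distinct direct summands of a Lie superalgebra remain irreducible by Schur's lemma for Lie superalgebras, so the claim follows.

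The main obstacle I anticipate is the book-keeping in Cases~2 and~3, where the odd simple root $\alpha_m$ belongs to $\Pi_K$ and one factor of $\mathfrak{k}^{\mathbb{C}}$ is itself a basic classical Lie superalgebra; there one must track carefully which indices from $\{\varepsilon_1,\ldots,\varepsilon_m\}$ and $\{\delta_1,\ldots,\delta_n\}$ appear as weights in each summand. A clean way to organise this uniformly is to view $\mathfrak{sl}(m|n)^{\mathbb{C}}$ as $3\times 3$ block matrices whose three row (and column) index ranges are dictated by the positions of the circled nodes; the six off-diagonal blocks are then precisely the $\mathfrak{m}_{k_1,k_2}^{\mathbb{C}}$, and the tensor-product identifications together with irreducibility can be read off in all three cases at once.
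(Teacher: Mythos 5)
Your proposal is correct and follows essentially the same route as the paper: both identify the six summands, via the block (natural $\otimes$ dual-natural) realisation of the off-diagonal part of $\mathfrak{g}^{\mathbb{C}}$ under the block-diagonal $\mathfrak{k}^{\mathbb{C}}$ (the paper passes to the auxiliary $\mathfrak{gl}$-blocks $\hat{\mathfrak{k}}^{\mathbb{C}}$ and matches these with the coefficient-defined spaces $\mathfrak{m}^{\mathbb{C}}_{k_1,k_2}$ by a containment-plus-sum argument), and then deduce irreducibility from absolute irreducibility of the natural modules. Your two minor deviations --- obtaining the partition of $\Delta_M$ directly from the fact that type-$A$ root coefficients lie in $\{0,\pm1\}$ with uniform sign, and deriving the duality of $\mathfrak{m}^{\mathbb{C}}_{k_1,k_2}$ and $\mathfrak{m}^{\mathbb{C}}_{-k_1,-k_2}$ from the non-degenerate $\ad_{\mathfrak{k}^{\mathbb{C}}}$-invariant $Q$-pairing rather than from the explicit tensor realisation --- are both sound, and the latter also makes precise the $Q$-pairing structure that the paper's proof leaves implicit.
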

\begin{proof}
We present a proof in Case 1 ($p< q\leq m$); the other two cases can be treated similarly.
Let $\hat{\mathfrak{g}}^{\mathbb{C}}= \mathfrak{gl}(m|n)^{\mathbb{C}}$ and
$\hat{\mathfrak{k}}^{\mathbb{C}}=\mathfrak{gl}(p|0)^{\mathbb{C}}\oplus \mathfrak{gl}(q-p|0)^{\mathbb{C}}\oplus\mathfrak{gl}(m-q|n)^{\mathbb{C}}$.
Then, $\hat{\mathfrak{g}}^{\mathbb{C}}=\hat{\mathfrak{k}}^{\mathbb{C}} \oplus\mathfrak{m}^{\mathbb{C}}$ is a
$Q$-orthogonal decomposition, and the decomposition of $\mathfrak{m}^{\mathbb{C}}$ as an 
$\ad_{\hat{\mathfrak{k}}^{\mathbb{C}}}$-module is the same as the decomposition of $\mathfrak{m}^{\mathbb{C}}$ as 
an $\ad_{\mathfrak{k}^{\mathbb{C}}}$-module. So, we may (and will) continue our analysis using 
$\hat{\mathfrak{g}}^{\mathbb{C}}$ and $\hat{\mathfrak{k}}^{\mathbb{C}}$ instead of
$\mathfrak{g}^{\mathbb{C}}$ and $\mathfrak{k}^{\mathbb{C}}$.

In accord with the regular embedding $\hat{\mathfrak{k}}^{\mathbb{C}}\subseteq\hat{\mathfrak{g}}^{\mathbb{C}}$,
we denote by $\varepsilon_1, \dots, \varepsilon_p$ the weights of the natural representation~$\mu_{p|0}$, 
by $\varepsilon_{p+1}, \dots, \varepsilon_q$ the weights of $\mu_{q-p|0}$, 
and by $\varepsilon_{q+1}, \dots, \varepsilon_m, \delta_1, \dots, \delta_n$ the weights of $\mu_{m-q|n}$. We then have
\begin{align*}
 \hat{\mathfrak{k}}^{\mathbb{C}}\cong(\mu_{p|0} \otimes \bar{\mu}_{p|0}) 
 \oplus (\mu_{q-p|0} \otimes \bar{\mu}_{q-p|0}) 
 \oplus (\mu_{m-q|n} \otimes \bar{\mu}_{m-q|n}).
\end{align*}
Restricting to $\hat{\mathfrak{k}}^{\mathbb{C}}$, we have 
$\mu_{m|n}|_{\hat{\mathfrak{k}}^{\mathbb{C}}}\cong \mu_{p|0}\oplus \mu_{q-p|0}\oplus \mu_{m-q|n}$
and hence the $\hat{\mathfrak{k}}^{\mathbb{C}}$-module isomorphisms
\begin{align*}
 \hat{\mathfrak{g}}^{\mathbb{C}}|_{\ad_{\hat{\mathfrak{k}}^{\mathbb{C}}}}
 \cong\mu_{m|n}|_{\hat{\mathfrak{k}}^{\mathbb{C}}} \otimes \bar{\mu}_{m|n}|_{\hat{\mathfrak{k}}^{\mathbb{C}}}
 \cong(\mu_{p|0}\oplus\mu_{q-p|0}\oplus\mu_{m-q|n})\otimes(\bar{\mu}_{p|0}\oplus\bar{\mu}_{q-p|0}\oplus\bar{\mu}_{m-q|n}).
\end{align*}
Expanding this out, $\hat{\mathfrak{k}}^{\mathbb{C}}$ arises as direct summand, with remainder
\begin{align*}
 \mathfrak{m}^{\mathbb{C}}=\mathfrak{m}_{-1,-1}^{'\mathbb{C}} \oplus \mathfrak{m}_{0,-1}^{'\mathbb{C}}
   \oplus\mathfrak{m}_{-1,0}^{'\mathbb{C}} \oplus \mathfrak{m}_{1,0}^{'\mathbb{C}}\oplus \mathfrak{m}_{0,1}^{'\mathbb{C}} 
   \oplus\mathfrak{m}_{1,1}^{'\mathbb{C}},
\end{align*}
where
\begin{align*}
 &\mathfrak{m}^{'\mathbb{C}}_{-1,-1}\cong\bar{\mu}_{p|0}\otimes \mu_{m-q|n}, 
 &&\mathfrak{m}^{'\mathbb{C}}_{0,-1}\cong\bar{\mu}_{q-p|0}\otimes \mu_{m-q|n}, 
 && \mathfrak{m}^{'\mathbb{C}}_{-1,0}\cong\bar{\mu}_{p|0}\otimes \mu_{q-p|0},
 \\[.1cm] 
 &\mathfrak{m}^{'\mathbb{C}}_{1,0}\cong\mu_{p|0}\otimes \bar{\mu}_{q-p|0}, 
 &&\mathfrak{m}^{'\mathbb{C}}_{0,1}\cong\mu_{q-p|0}\otimes\bar{\mu}_{m-q|n}, 
 && \mathfrak{m}^{'\mathbb{C}}_{1,1}\cong\mu_{p|0}\otimes \bar{\mu}_{m-q|n}. 
\end{align*}
It follows that $\mathfrak{m}_{k_1,k_2}^{'\mathbb{C}}$ and $\mathfrak{m}_{-k_1,-k_2}^{'\mathbb{C}}$ are irreducible and 
dual to each other.

We now show that $\mathfrak{m}^{\mathbb{C}}_{k_1, k_2}=\mathfrak{m}^{'\mathbb{C}}_{k_1,k_2}$ for all relevant 
$k_1,k_2$, which then concludes the proof. Note that the weights of $\mathfrak{m}^{'\mathbb{C}}_{1,0}$ are 
\begin{align*} 
 \varepsilon_i-\varepsilon_j =\alpha_i+\cdots +\alpha_{j-1}, \qquad 1\leq i\leq p< j\leq q,
\end{align*}
so the coefficient of $\alpha_p$ is $1$, while the coefficient of $\alpha_q$ is $0$. 
This implies that $\mathfrak{m}^{'\mathbb{C}}_{1,0}\subseteq \mathfrak{m}^{\mathbb{C}}_{1, 0}$.
Similarly, it follows that $\mathfrak{m}^{'\mathbb{C}}_{k_1,k_2}\subseteq \mathfrak{m}^{\mathbb{C}}_{k_1, k_2}$ 
for all $(k_1,k_2)\in I_{\mathfrak{m}^{\mathbb{C}}}$, so
\begin{align*}
    \bigoplus_{k_1,k_2} \mathfrak{m}_{k_1,k_2}^{\mathbb{C}} \subseteq \mathfrak{m}^{\mathbb{C}} 
    =\bigoplus_{k_1,k_2} \mathfrak{m}_{k_1,k_2}^{'\mathbb{C}} \subseteq  
       \bigoplus_{k_1,k_2} \mathfrak{m}_{k_1,k_2}^{\mathbb{C}},
\end{align*}
where the direct sums are over $(k_1,k_2)\in I_{\mathfrak{m}^{\mathbb{C}}}$. 
It follows that $\mathfrak{m}^{\mathbb{C}}_{k_1, k_2}=\mathfrak{m}^{'\mathbb{C}}_{k_1,k_2}$ 
for all $(k_1,k_2)\in I_{\mathfrak{m}^{\mathbb{C}}}$. 
\end{proof}
\begin{corollary}\label{coro: mrep}
Let the notation be as in \propref{prop: mdecA}. Then, the following holds.
\begin{itemize}
   \item  In Case 1 $(p<q\leq m)$,
    \begin{align*} 
    \mathfrak{m}_{1,0}^{\mathbb{C}} \cong \mu_{p|0}\otimes \bar{\mu}_{q-p|0}, \qquad 
    \mathfrak{m}_{0,1}^{\mathbb{C}}\cong \mu_{q-p|0}\otimes \bar{\mu}_{m-q|n}, \qquad 
    \mathfrak{m}_{1,1}^{\mathbb{C}}\cong \mu_{p|0}\otimes \bar{\mu}_{m-q|n},  
    \end{align*}
    as $\ad_{\mathfrak{k}^{\mathbb{C}}}$-modules, with corresponding highest weights
\begin{align*}
 \Lambda_{1,0}=\varepsilon_1-\varepsilon_q, \qquad 
 \Lambda_{0,1}= \varepsilon_{p+1}-\delta_n, \qquad 
 \Lambda_{1,1}= \varepsilon_1-\delta_n.
\end{align*}
 \item  In Case 2 $(p< m <q)$,
    \begin{align*}
     \mathfrak{m}_{1,0}^{\mathbb{C}} \cong \mu_{p|0}\otimes \bar{\mu}_{m-p|q-m}, \qquad 
     \mathfrak{m}_{0,1}^{\mathbb{C}}\cong \mu_{m-p|q-m}\otimes \bar{\mu}_{0|m+n-q}, \qquad 
     \mathfrak{m}_{1,1}^{\mathbb{C}}\cong \mu_{p|0}\otimes \bar{\mu}_{0|m+n-q},  
     \end{align*}
    as $\ad_{\mathfrak{k}^{\mathbb{C}}}$-modules, with corresponding highest weights
\begin{align*}
 \Lambda_{1,0}=\varepsilon_1-\delta_{q-m},\qquad  
 \Lambda_{0,1}= \varepsilon_{p+1}-\delta_n,\qquad 
 \Lambda_{1,1}=\varepsilon_1-\delta_n.
\end{align*}
  \item In Case 3 $(m\leq p<q)$,
    \begin{align*} 
    \mathfrak{m}_{1,0}^{\mathbb{C}} \cong \mu_{m|p-m}\otimes \bar{\mu}_{0|q-p}, \qquad 
    \mathfrak{m}_{0,1}^{\mathbb{C}}\cong \mu_{0|q-p}\otimes \bar{\mu}_{0|m+n-q}, \qquad 
    \mathfrak{m}_{1,1}^{\mathbb{C}}\cong \mu_{m|p-m}\otimes \bar{\mu}_{0|m+n-q},  
    \end{align*}
    as $\ad_{\mathfrak{k}^{\mathbb{C}}}$-modules, with corresponding highest weights
\begin{align*}
 \Lambda_{1,0}=\varepsilon_1-\delta_{q-m},\qquad  
 \Lambda_{0,1}= \delta_{p-m+1}-\delta_n,\qquad 
 \Lambda_{1,1}=\varepsilon_1-\delta_n.
\end{align*}
 \end{itemize}   
\end{corollary}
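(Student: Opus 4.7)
The plan is to extend the argument used in the proof of Proposition~\ref{prop: mdecA} to all three cases, and then read off the highest weights from the simple-root expansions. Case~1 is already essentially handled inside that proof: the six summands $\mathfrak{m}^{'\mathbb{C}}_{k_1,k_2}$ are identified with $\mathfrak{m}^{\mathbb{C}}_{k_1,k_2}$ by inspecting the $\alpha_p$- and $\alpha_q$-coefficients of their weights. For Cases~2 and~3, I would repeat exactly the same enlargement device: replace $\mathfrak{g}^{\mathbb{C}}$ by $\hat{\mathfrak{g}}^{\mathbb{C}}=\mathfrak{gl}(m|n)^{\mathbb{C}}$, and replace $\mathfrak{k}^{\mathbb{C}}$ by the obvious $\hat{\mathfrak{k}}^{\mathbb{C}}$ obtained by adjoining the missing abelian factors so that, in Case~2,
\[
 \hat{\mathfrak{k}}^{\mathbb{C}}=\mathfrak{gl}(p|0)^{\mathbb{C}}\oplus\mathfrak{gl}(m-p|q-m)^{\mathbb{C}}\oplus\mathfrak{gl}(0|m+n-q)^{\mathbb{C}},
\]
and analogously in Case~3. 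Then $\hat{\mathfrak{g}}^{\mathbb{C}}=\hat{\mathfrak{k}}^{\mathbb{C}}\oplus\mathfrak{m}^{\mathbb{C}}$ as $\ad_{\hat{\mathfrak{k}}^{\mathbb{C}}}$-modules, the natural module splits as $\mu_{m|n}|_{\hat{\mathfrak{k}}^{\mathbb{C}}}\cong\mu_{p|0}\oplus\mu_{m-p|q-m}\oplus\mu_{0|m+n-q}$ in Case~2 (and similarly in Case~3), and expanding $\mu_{m|n}\otimes\bar{\mu}_{m|n}$ yields nine summands, three of which reassemble into $\hat{\mathfrak{k}}^{\mathbb{C}}$ while the remaining six match the six $\mathfrak{m}^{\mathbb{C}}_{k_1,k_2}$. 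The matching is done exactly as in Case~1, by computing, for each summand, the $\alpha_p$- and $\alpha_q$-coefficients of a weight and invoking the chain of inclusions $\bigoplus\mathfrak{m}_{k_1,k_2}^{'\mathbb{C}}\subseteq\bigoplus\mathfrak{m}_{k_1,k_2}^{\mathbb{C}}\subseteq\mathfrak{m}^{\mathbb{C}}$.

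For the highest weights, I would observe that the highest weight of a tensor product $\mu_{a|b}\otimes\bar{\mu}_{c|d}$ is the difference of the highest weight of $\mu_{a|b}$ and the lowest weight of $\mu_{c|d}$, with the weight ordering on $\mathfrak{gl}(r|s)^{\mathbb{C}}$ being $\varepsilon_1>\cdots>\varepsilon_r>\delta_1>\cdots>\delta_s$. Reading off from the tensor-product descriptions of $\mathfrak{m}^{\mathbb{C}}_{1,0}$, $\mathfrak{m}^{\mathbb{C}}_{0,1}$, $\mathfrak{m}^{\mathbb{C}}_{1,1}$ established in the previous step then gives the stated values. For instance, $\mathfrak{m}^{\mathbb{C}}_{1,1}\cong\mu_{p|0}\otimes\bar{\mu}_{0|m+n-q}$ in Case~2 has highest weight $\varepsilon_1-\delta_n$, and $\mathfrak{m}^{\mathbb{C}}_{1,0}\cong\mu_{p|0}\otimes\bar{\mu}_{m-p|q-m}$ has highest weight $\varepsilon_1-\delta_{q-m}$.

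The only real subtlety is bookkeeping across the three cases, especially at the boundaries $p=m$ (Case~3) and $q=m$ (Case~1), where one of the circled nodes is the odd isotropic root $\alpha_m=\varepsilon_m-\delta_1$. This does not alter the structural argument but does dictate whether the labels of the highest weights are $\varepsilon$'s or $\delta$'s, so the main care is to align the blocks of $\hat{\mathfrak{k}}^{\mathbb{C}}$ with the correct segments of the concatenated list $\varepsilon_1,\ldots,\varepsilon_m,\delta_1,\ldots,\delta_n$. Irreducibility and duality for the three pairs follow from \propref{prop: mdecA} and are inherited automatically from the tensor-product descriptions.
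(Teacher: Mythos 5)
Your proposal is correct and follows essentially the same route as the paper: the corollary is obtained by running the enlargement argument of \propref{prop: mdecA} (with the appropriate $\hat{\mathfrak{k}}^{\mathbb{C}}$ in Cases 2 and 3) and reading off the highest weights of the resulting outer tensor products $\mu\otimes\bar\mu$ as the highest weight of the first factor minus the lowest weight of the second, which reproduces all the stated values of $\Lambda_{1,0}$, $\Lambda_{0,1}$, $\Lambda_{1,1}$.
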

\begin{proposition}\label{prop: mdec}
In each case in \eqref{SUK}, the $\ad_{\mathfrak{k}}$-representation $\mathfrak{m}$ admits a $Q$-orthogonal 
decomposition of the form 
\begin{align*}
 \mathfrak{m}\cong \mathfrak{m}_{1,0}\oplus \mathfrak{m}_{0,1}\oplus \mathfrak{m}_{1,1},
\end{align*}
where the summands are irreducible $\ad_{\mathfrak{k}}$-representations satisfying
\begin{gather*}
   [\mathfrak{m}_{1,0}, \mathfrak{m}_{1,0}]\subseteq \mathfrak{k},\qquad 
   [\mathfrak{m}_{0,1}, \mathfrak{m}_{0,1}]\subseteq \mathfrak{k},\qquad
   [\mathfrak{m}_{1,1}, \mathfrak{m}_{1,1}]\subseteq \mathfrak{k},
   \\[.1cm] 
   [\mathfrak{m}_{1,0}, \mathfrak{m}_{1,1}]\subseteq \mathfrak{k} \oplus \mathfrak{m}_{0,1},\qquad 
   [\mathfrak{m}_{0,1}, \mathfrak{m}_{1,1}]\subseteq \mathfrak{k}\oplus \mathfrak{m}_{1,0},\qquad 
   [\mathfrak{m}_{1,0}, \mathfrak{m}_{0,1}]\subseteq \mathfrak{k}\oplus \mathfrak{m}_{1,1}.
\end{gather*}
\end{proposition}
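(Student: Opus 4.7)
The plan is to reduce the real statement to the complex decomposition already established in \propref{prop: mdecA}. For each $(k_1,k_2)\in\{(1,0),(0,1),(1,1)\}$, I would define $\mathfrak{m}_{k_1,k_2}$ as the real form, in the sense of \eqref{eq: mreal}, of the complex subspace $\mathfrak{m}_{k_1,k_2}^{\mathbb{C}}\oplus\mathfrak{m}_{-k_1,-k_2}^{\mathbb{C}}$. For this to make sense, I first note that the star operation \eqref{eq: starA} satisfies $(\mathfrak{g}_{\alpha}^{\mathbb{C}})^*=\mathfrak{g}_{-\alpha}^{\mathbb{C}}$, since $E_{ij}\leftrightarrow E_{ji}$ interchanges the root spaces of $\pm\alpha$. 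Consequently, $*$ swaps $\mathfrak{m}_{k_1,k_2}^{\mathbb{C}}$ and $\mathfrak{m}_{-k_1,-k_2}^{\mathbb{C}}$, so each of the three complex subspaces is $*$-stable and yields a genuine real subspace of $\mathfrak{m}$; their direct sum exhausts $\mathfrak{m}$ because \propref{prop: mdecA} decomposes $\mathfrak{m}^{\mathbb{C}}$ into the six $Q$-orthogonal summands listed.

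Next I would check $Q$-orthogonality and irreducibility. Since $Q(\mathfrak{g}_{\alpha}^{\mathbb{C}},\mathfrak{g}_{\beta}^{\mathbb{C}})=0$ unless $\beta=-\alpha$, and no two of the sign-pairs $\{\pm(1,0)\}$, $\{\pm(0,1)\}$, $\{\pm(1,1)\}$ overlap, the real summands $\mathfrak{m}_{1,0}$, $\mathfrak{m}_{0,1}$, $\mathfrak{m}_{1,1}$ are mutually $Q$-orthogonal. For irreducibility, any $\ad_{\mathfrak{k}}$-invariant real subspace of $\mathfrak{m}_{k_1,k_2}$, after complexification, gives an $\ad_{\mathfrak{k}^{\mathbb{C}}}$-submodule of $\mathfrak{m}_{k_1,k_2}^{\mathbb{C}}\oplus\mathfrak{m}_{-k_1,-k_2}^{\mathbb{C}}$ stable under $*$. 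The irreducibility of each complex summand (\propref{prop: mdecA}) together with the fact that $*$ exchanges them forces this submodule to be either trivial or everything, so $\mathfrak{m}_{k_1,k_2}$ is $\ad_{\mathfrak{k}}$-irreducible.

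For the six bracket inclusions, I would exploit the $\mathbb{Z}^2$-grading of $\mathfrak{g}^{\mathbb{C}}$ obtained by reading off the coefficients of $\alpha_p$ and $\alpha_q$ in the simple-root expansion of any root. The key observation is that every root of $\mathfrak{sl}(m|n)^{\mathbb{C}}$ has simple-root coefficients in $\{-1,0,1\}$, so there is no root with $\alpha_p$- or $\alpha_q$-coefficient of absolute value $\geq 2$; in particular $\mathfrak{m}^{\mathbb{C}}_{k,k'}=0$ whenever $|k|\geq 2$ or $|k'|\geq 2$. Using $[\mathfrak{m}_{k_1,k_2}^{\mathbb{C}},\mathfrak{m}_{k_1',k_2'}^{\mathbb{C}}]\subseteq\mathfrak{m}_{k_1+k_1',k_2+k_2'}^{\mathbb{C}}$ (with the convention $\mathfrak{m}_{0,0}^{\mathbb{C}}=\mathfrak{k}^{\mathbb{C}}$), the self-brackets of each of $\mathfrak{m}_{\pm 1,0}^{\mathbb{C}}$, $\mathfrak{m}_{0,\pm 1}^{\mathbb{C}}$, $\mathfrak{m}_{\pm 1,\pm 1}^{\mathbb{C}}$ vanish, so $[\mathfrak{m}_{k_1,k_2},\mathfrak{m}_{k_1,k_2}]\subseteq\mathfrak{k}$. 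For mixed brackets, a short bigrading check yields $[\mathfrak{m}_{1,0},\mathfrak{m}_{1,1}]\subseteq\mathfrak{m}_{0,1}$, $[\mathfrak{m}_{0,1},\mathfrak{m}_{1,1}]\subseteq\mathfrak{m}_{1,0}$, and $[\mathfrak{m}_{1,0},\mathfrak{m}_{0,1}]\subseteq\mathfrak{m}_{1,1}\oplus\mathfrak{k}$ at the complex level, which descends to the real form using $[X,Y]^*=[Y^*,X^*]$.

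The only genuine subtlety lies in establishing irreducibility of the real forms from the complex picture, specifically showing that the $*$-swap of the paired complex summands admits no nontrivial invariant real subspace; the bracket verifications themselves are essentially bookkeeping once the $\mathbb{Z}^2$-grading is set up, and they are uniform across Cases 1--3 because the argument depends only on the positions of the circled nodes, not on whether the circled root is even or odd.
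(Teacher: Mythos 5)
Your proposal follows essentially the same route as the paper: the real summands are obtained as the star-stable real forms of the dual pairs $\mathfrak{m}_{k_1,k_2}^{\mathbb{C}}\oplus\mathfrak{m}_{-k_1,-k_2}^{\mathbb{C}}$ from \propref{prop: mdecA}, irreducibility is inherited from the irreducibility of the complex summands, and the bracket inclusions are read off the $(\alpha_p,\alpha_q)$-bigrading of the root spaces, which is exactly what the paper's phrase ``the commutation relations follow from those among the root vectors $E_\alpha$'' amounts to. The only point worth tightening is that your irreducibility step tacitly uses that $\mathfrak{m}_{k_1,k_2}^{\mathbb{C}}\not\cong\mathfrak{m}_{-k_1,-k_2}^{\mathbb{C}}$ as $\ad_{\mathfrak{k}^{\mathbb{C}}}$-modules (clear from their highest weights, cf.\ \corref{coro: mrep}), so the pair has no diagonal submodules that could be $*$-stable --- but this is at the same level of detail as the paper's own one-line justification.
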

\begin{proof}
As in \eqref{eq: real}, $\mathfrak{m}_{\bar{0}}$ is spanned by  $(-1)$-eigenvectors of the star operation restricted to 
$\mathfrak{m}_{\bar{0}}^{\mathbb{C}}$. Using \propref{prop: mdecA}, the star operation on 
$\mathfrak{m}_{\bar{0}}^{\mathbb{C}}$ splits into restrictions on
$(\mathfrak{m}_{-1,0}^{\mathbb{C}}\oplus \mathfrak{m}_{1,0}^{\mathbb{C}})_{\bar{0}}$, 
$(\mathfrak{m}_{0,-1}^{\mathbb{C}}\oplus \mathfrak{m}_{0,1}^{\mathbb{C}})_{\bar{0}}$ 
and $(\mathfrak{m}_{-1,-1}^{\mathbb{C}}\oplus \mathfrak{m}_{1,1}^{\mathbb{C}})_{\bar{0}}$, respectively. 
For each pair $(k_1,k_2)\in\{(1,0),(0,1),(1,1)\}$, the real $(-1)$-eigenspace of the star operation on 
$(\mathfrak{m}_{-k_1,-k_2}^{\mathbb{C}}\oplus \mathfrak{m}_{k_1,k_2}^{\mathbb{C}})_{\bar{0}}$ 
is spanned by $A_{\alpha}$ and $B_{\alpha}$, $\alpha\in (\Delta_{k_1,k_2}^{+})_{\bar{0}}$, so
\begin{align*}
 \mathfrak{m}_{\bar{0}}
  =(\mathfrak{m}_{1,0})_{\bar{0}} \oplus (\mathfrak{m}_{0,1})_{\bar{0}}\oplus (\mathfrak{m}_{1,1})_{\bar{0}}.
\end{align*}
Similarly, 
$\mathfrak{m}_{\bar{1}}=(\mathfrak{m}_{1,0})_{\bar{1}}\oplus(\mathfrak{m}_{0,1})_{\bar{1}}\oplus(\mathfrak{m}_{1,1})_{\bar{1}}$,
so the desired decomposition follows. Moreover, $\mathfrak{m}_{k_1,k_2}$ is an irreducible 
$\ad_{\mathfrak{k}}$-representation since $\mathfrak{m}_{k_1,k_2}^{\mathbb{C}}$ is an irreducible 
$\ad_{\mathfrak{k}^{\mathbb{C}}}$-representation.
Finally, the commutation relations follow from those among the root vectors $E_{\alpha}$.
\end{proof}
By \corref{coro: mrep}, the $\ad_{\mathfrak{k}^{\mathbb{C}}}$-modules 
$\mathfrak{m}_{1,0}^{\mathbb{C}}, \mathfrak{m}_{0,1}^{\mathbb{C}}$ and $\mathfrak{m}_{1,1}^{\mathbb{C}}$ have distinct 
highest weights and are therefore mutually non-isomorphic. It follows from \propref{prop: mdec} that the flag supermanifold 
$G/K$ has three inequivalent irreducible isotropy summands. 

It follows from the proof of \propref{prop: mdec} that, for each $(k_1,k_2)\in\{(1,0),(0,1),(1,1)\}$,
\begin{align*}
 \{A_{\alpha}, B_{\alpha}\,|\, \alpha\in \Delta_{k_1,k_2}\cap \Delta_{\bar{0}}^+\}
 \cup\{\sqrt{\imath}A_{\alpha},\sqrt{\imath}B_{\alpha}\,|\, \alpha\in \Delta_{k_1,k_2}\cap \Delta_{\bar{1}}^+\}
\end{align*}
is an $\mathbb{R}$-basis for the $\ad_{\mathfrak{k}}$-module $\mathfrak{m}_{k_1, k_2}$, with
\begin{align}\label{dmkk}
 \sdim(\mathfrak{m}_{k_1,k_2})=2\sdim_{\mathbb{C}}(\mathfrak{m}_{k_1,k_2}^{\mathbb{C}}).
\end{align}
For convenience, we introduce the notation
\begin{align}\label{m123}
 \mathfrak{m}_1=\mathfrak{m}_{0,1}, \qquad 
 \mathfrak{m}_2=\mathfrak{m}_{1,1}, \qquad 
 \mathfrak{m}_3=\mathfrak{m}_{1,0},
\end{align}
and write
\begin{align}\label{ddd}
 d_i=\sdim(\mathfrak{m}_{i}),\qquad i=1,2,3,
\end{align}
as in \eqref{di}.
Explicit expressions for $d_1,d_2,d_3$ are readily obtained and are given in \lemref{lem: msdim} below.

\subsubsection{Structure constants and Ricci curvature}
\label{sec: strconsA}

\setlength{\tabcolsep}{10pt} 
\renewcommand{\arraystretch}{1.5}
\begin{table}[t]
\begin{center}
\begin{tabular}{l|c|c|c}\hline
 &  $c_1$& $c_2$ & $c_3$   \\\hline
Case 1: $ p<q\leq m$ & $m-n-p$  &  $m-n+p-q$  & $q$  \\
Case 2: $p<m<q$ & $m-n-p$  & $-m-n+p+q$  &  $2m-q$  \\
Case 3: $m\leq p<q$ & $-m-n+p$ &  $m-n-p+q$ &   $2m-q$\\ \hline
\end{tabular}
\vspace*{2mm}
\caption{Casimir eigenvalues.}
\label{tab: c}
\end{center}
\end{table}
\begin{proposition}\label{prop: c}
Let $K$ be as in \eqref{SUK} and $\mathfrak{m}$ as in \propref{prop: mdec} and \eqref{m123}.
Then, the Casimir eigenvalues \eqref{Cc} are given as in Table~\ref{tab: c}.
\end{proposition}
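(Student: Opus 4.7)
The plan is to apply, case by case, the standard super-Casimir eigenvalue formula invoked already in Section \ref{sec: flagA1} for the single-summand setting. For any irreducible highest-weight $\hat{\mathfrak{k}}^{\mathbb{C}}$-module $\mathfrak{m}_i^{\mathbb{C}}$ with highest weight $\Lambda_i$, the Casimir $C_{\mathfrak{m}_i, Q|_{\mathfrak{k}}}$ acts as
\[
c_i = -\bigl(\Lambda_i,\, \Lambda_i + 2\rho_{\hat{\mathfrak{k}}^{\mathbb{C}}}\bigr),
\]
where $\hat{\mathfrak{k}}^{\mathbb{C}}$ is the $\mathfrak{gl}$-version of $\mathfrak{k}^{\mathbb{C}}$ (obtained from \eqref{eq: compk} by promoting each $\mathfrak{sl}$-factor to the corresponding $\mathfrak{gl}$-factor, as in Section \ref{sec: flagA1}), the graded Weyl vector $\rho_{\hat{\mathfrak{k}}^{\mathbb{C}}} = \rho_{\bar{0}} - \rho_{\bar{1}}$ is the half-graded-sum of positive roots of $\hat{\mathfrak{k}}^{\mathbb{C}}$, and the bilinear form on $(\mathfrak{h}^{\mathbb{C}})^{\vee}$ is the one dual to $Q = -\Str$, namely $(\varepsilon_i,\varepsilon_j) = -\delta_{ij}$, $(\delta_{\mu}, \delta_{\nu}) = \delta_{\mu\nu}$ and $(\varepsilon_i, \delta_{\mu}) = 0$. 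Since the extra central directions in $\hat{\mathfrak{k}}^{\mathbb{C}}$ are central in $\hat{\mathfrak{g}}^{\mathbb{C}}$, they act trivially on $\mathfrak{m}^{\mathbb{C}}$ via $\ad$ and hence contribute nothing to the Casimir on $\mathfrak{m}_i^{\mathbb{C}}$, so this manoeuvre does not change the eigenvalue $c_i$.

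I would then compute $2\rho_{\hat{\mathfrak{k}}^{\mathbb{C}}}$ separately in each of the three cases of \eqref{SUK} by summing the contributions from the individual $\mathfrak{gl}(r|s)^{\mathbb{C}}$-factors, using the identity
\[
2\rho_{\mathfrak{gl}(r|s)^{\mathbb{C}}} = \sum_{i=1}^{r}(r - s - 2i + 1)\,\varepsilon_i + \sum_{\mu=1}^{s}(r + s - 2\mu + 1)\,\delta_{\mu},
\]
appropriately reindexed to reflect each block's position inside $\mathfrak{gl}(m|n)^{\mathbb{C}}$. This extends to the two- and three-block settings the Weyl-vector calculation already carried out in Section \ref{sec: flagA1}. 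Substituting each of the three highest weights from Corollary \ref{coro: mrep} into the Casimir formula then yields the desired $c_i$ by direct pairing.

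Most of the highest weights are isotropic (of the form $\varepsilon_i - \delta_{\mu}$), so $(\Lambda_i, \Lambda_i) = 0$ and the formula collapses to $c_i = -(\Lambda_i, 2\rho_{\hat{\mathfrak{k}}^{\mathbb{C}}})$. The only non-isotropic highest weights appearing in Corollary \ref{coro: mrep} are $\Lambda_{1,0} = \varepsilon_1 - \varepsilon_q$ in Case 1 (with $(\Lambda,\Lambda) = -2$) and $\Lambda_{0,1} = \delta_{p-m+1} - \delta_n$ in Case 3 (with $(\Lambda,\Lambda) = +2$); both must be handled separately. The principal obstacle is purely combinatorial bookkeeping: nine pairings evaluated with carefully tracked index shifts and the sign convention inherited from $Q = -\Str$. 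With those substitutions, the entries of Table \ref{tab: c} drop out by inspection.
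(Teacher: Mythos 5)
Your proposal is correct and follows essentially the same route as the paper: the eigenvalue formula $c_i=-(\Lambda_i+2\rho_{\mathfrak{k}^{\mathbb{C}}},\Lambda_i)$ with the pairing dual to $Q=-\Str$ (so $(\varepsilon_i,\varepsilon_i)=-1$, $(\delta_\mu,\delta_\mu)=1$), the block-wise Weyl vector, and the highest weights of \corref{coro: mrep}, which indeed reproduce Table~\ref{tab: c} (including the two non-isotropic weights you flag). Your passage to the $\mathfrak{gl}$-version $\hat{\mathfrak{k}}^{\mathbb{C}}$ is a harmless cosmetic variation, since it has the same root system, hence the same Weyl vector, and the extra central direction contributes nothing to the Casimir on $\mathfrak{m}_i$.
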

\begin{proof}
In each of the three cases, the eigenvalues of $C_{\mathfrak{m}_i, Q|_{\mathfrak{k}}}$ are given by
$c_i=-(\Lambda_i+2\rho_{\mathfrak{k}^{\mathbb{C}}}, \Lambda_i)$, $i=1,2,3$,
where $\rho_{\mathfrak{k}^{\mathbb{C}}}$ is the Weyl vector of $\mathfrak{k}^{\mathbb{C}}$ (see \eqref{Weyl}),
and where $\Lambda_3,\Lambda_1,\Lambda_2$ are the highest weights given in \corref{coro: mrep}.
To compute these eigenvalues, we use that
\begin{align*}
 (\lambda, \mu)=\sum_{i=1}^{m+n}\lambda(E_{ii})\mu(\bar{E}_{ii}),\qquad \lambda,\mu\in(\mathfrak{h}^{\mathbb{C}})^\vee,
\end{align*}
where the right $Q$-dual matrices, $\bar{E}_{11},\ldots,\bar{E}_{m+n,m+n}$, to the diagonal matrix units, 
$E_{11},\ldots,E_{m+n,m+n}$, are given by $\bar{E}_{ii}=-E_{ii}$ if $i\leq m$ and $\bar{E}_{ii}=E_{ii}$ if $i>m$. 
In Case 1, for example, we have
\begin{align*}
 2\rho_{\mathfrak{k}^{\mathbb{C}}}
  &=\sum_{i=1}^p(p-2i+1)\varepsilon_i
  +\sum_{i=p+1}^q\!\big(q-p-2(i-p)+1\big)\varepsilon_i
  \\
  &\quad+\sum_{i=q+1}^m\!\big(m-q-n-2(i-q)+1\big)\varepsilon_i
  +\sum_{\mu=1}^n(m-q+n-2\mu+1)\delta_\mu,
\end{align*}
while the highest weight of $\mathfrak{m}_2^{\mathbb{C}}$ is $\Lambda_2=\varepsilon_1-\delta_n$, so
\begin{align*} 
 c_2=-(\Lambda_2+2\rho_{\mathfrak{k}^{\mathbb{C}}},\Lambda_2)
  =-(2\rho_{\mathfrak{k}^{\mathbb{C}}},\varepsilon_1-\delta_n)
  =-(p-1)(\varepsilon_1,\varepsilon_1)+(m-q-n+1)(\delta_n,\delta_n)
  =m-n+p-q.
\end{align*}
The eigenvalues $c_1$ and $c_3$ are computed similarly, as are the eigenvalues in Cases 2 and 3.
\end{proof}
\begin{corollary}\label{cor: cccb}
Let the notation be as in \propref{prop: c}. Then,
\begin{align*}
 b=c_1+c_2+c_3.
\end{align*}
\end{corollary}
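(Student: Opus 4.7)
The plan is to give a direct case-by-case verification using Table~\ref{tab: c} together with the value of~$b$. Since we are working with $G=\SU(m|n)$ and $Q$ given by \eqref{eq: slQ1} (or \eqref{eq: slQ2} when $m=n$), the relation \eqref{eq: KillingA} yields $b_1=b_2=b_3=b=2(m-n)$, as already noted preceding \eqref{gxxx}. It therefore suffices to sum the three rightmost entries in each row of Table~\ref{tab: c} and check that the total is $2(m-n)$.

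Concretely, in Case~1 ($p<q\leq m$), the sum reads
\begin{align*}
 c_1+c_2+c_3=(m-n-p)+(m-n+p-q)+q=2(m-n)=b.
\end{align*}
In Case~2 ($p<m<q$), we obtain
\begin{align*}
 c_1+c_2+c_3=(m-n-p)+(-m-n+p+q)+(2m-q)=2(m-n)=b,
\end{align*}
and in Case~3 ($m\leq p<q$),
\begin{align*}
 c_1+c_2+c_3=(-m-n+p)+(m-n-p+q)+(2m-q)=2(m-n)=b.
\end{align*}
In each case, the cancellations between $p$ and $q$ are immediate, and the identity follows.

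There is no conceptual obstacle here: the corollary is essentially an arithmetic consistency check on the Casimir computations already carried out in \propref{prop: c}. The only care one must take is to use the correct $\rho_{\mathfrak k^{\mathbb C}}$ (and hence the correct highest weight $\Lambda_i$ from \corref{coro: mrep}) for the case at hand, but this has already been done in establishing Table~\ref{tab: c}. As a sanity check, one may observe that the identity is compatible with \propref{prop: StrCas}: summing the relation $\sum_{j,k}[ijk]=d_i(b-2c_i)$ over $i\in\{1,2,3\}$ and using the total symmetry of $[ijk]$ gives $\sum_{i,j,k}[ijk]=b\sum_i d_i-2\sum_i d_i c_i$, a relation one could confirm independently from the structure constants derived in \secref{sec: strconsA}; but the cleanest route to the corollary itself is the direct substitution above.
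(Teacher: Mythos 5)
Your proof is correct and matches the paper's (implicit) argument: the corollary is stated without a separate proof precisely because it follows by summing the entries of Table~\ref{tab: c} in each case and comparing with $b=2(m-n)$, which is exactly the substitution you carry out. The arithmetic in all three cases checks out, so nothing further is needed.
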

\begin{lemma}\label{lem: msdim}
Let the notation be as in \propref{prop: mdec}, \eqref{m123} and \eqref{ddd}. 
Then,
\begin{align*}
 d_1=\tfrac{1}{2}(b-2c_2)(b-2c_3),\qquad
 d_2=\tfrac{1}{2}(b-2c_1)(b-2c_3),\qquad
 d_3=\tfrac{1}{2}(b-2c_1)(b-2c_2).
\end{align*} 
\end{lemma}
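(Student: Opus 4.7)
The plan is to carry out a direct case-by-case verification, organised uniformly across the three cases of \eqref{SUK} by introducing three ``block'' natural representations of $\mathfrak{k}^{\mathbb{C}}$. Concretely, I would denote by $V_1,V_2,V_3$ the natural representations of the three block factors of $\mathfrak{k}^{\mathbb{C}}$ listed in \eqref{eq: compk}; for instance, in Case~1 these are $V_1=\mu_{p|0}$, $V_2=\mu_{q-p|0}$, $V_3=\mu_{m-q|n}$, and analogously in Cases~2 and~3. A quick inspection of \corref{coro: mrep} then shows that, under the indexing $\mathfrak{m}_1=\mathfrak{m}_{0,1}$, $\mathfrak{m}_2=\mathfrak{m}_{1,1}$, $\mathfrak{m}_3=\mathfrak{m}_{1,0}$ of \eqref{m123}, the three tensor-product descriptions assemble uniformly as
\begin{align*}
 \mathfrak{m}_1^{\mathbb{C}}\cong V_2\otimes\bar{V}_3,\qquad
 \mathfrak{m}_2^{\mathbb{C}}\cong V_1\otimes\bar{V}_3,\qquad
 \mathfrak{m}_3^{\mathbb{C}}\cong V_1\otimes\bar{V}_2.
\end{align*}

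The next step is to combine the multiplicative property $\sdim(A\otimes B)=\sdim(A)\sdim(B)$ with $\sdim(\bar{V})=\sdim(V)$ and the identity $d_i=2\sdim_{\mathbb{C}}(\mathfrak{m}_i^{\mathbb{C}})$ coming from \eqref{dmkk} and \eqref{m123}. This immediately produces
\begin{align*}
 d_1=2\,\sdim_{\mathbb{C}}(V_2)\sdim_{\mathbb{C}}(V_3),\qquad
 d_2=2\,\sdim_{\mathbb{C}}(V_1)\sdim_{\mathbb{C}}(V_3),\qquad
 d_3=2\,\sdim_{\mathbb{C}}(V_1)\sdim_{\mathbb{C}}(V_2),
\end{align*}
so the target formulas reduce to the single numerical identity $\sdim_{\mathbb{C}}(V_i)=\tfrac{1}{2}(b-2c_i)$ for $i=1,2,3$.

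Verifying this identity is then a quick arithmetic exercise using Table~\ref{tab: c} and $b=2(m-n)$. In Case~1, the block superdimensions are $p,\,q-p,\,m-n-q$, while Table~\ref{tab: c} gives $\tfrac{1}{2}(b-2c_i)=p,\,q-p,\,m-n-q$ respectively; the analogous check in Case~2 produces $p,\,2m-p-q,\,q-m-n$ on both sides, and in Case~3 it produces $2m-p,\,p-q,\,q-m-n$ on both sides. Substituting back into the product formulas above concludes the proof.

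No step is conceptually hard; the point where bookkeeping errors are easiest to make, and therefore the step that will require the most care, is the sign convention in Case~3. There the two purely odd blocks lead to negative superdimensions (for example $\sdim_{\mathbb{C}}(\mu_{0|q-p})=p-q<0$), which is consistent with $d_i$ being allowed to take negative values in the super-setting and with the corresponding $\tfrac{1}{2}(b-2c_i)$ being negative. Beyond this bit of signed bookkeeping, the argument is purely mechanical once the tensor-product descriptions from \corref{coro: mrep} and the Casimir eigenvalues from Table~\ref{tab: c} are in hand; in particular, \corref{cor: cccb} is not needed.
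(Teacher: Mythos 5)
Your proof is correct and follows the same route as the paper, which simply cites \corref{coro: mrep}, \eqref{VW} and \eqref{dmkk}; your uniform block-representation notation $V_1,V_2,V_3$ and the reduction to the identity $\sdim_{\mathbb{C}}(V_i)=\tfrac{1}{2}(b-2c_i)$ make explicit exactly the bookkeeping that the paper leaves to the reader.
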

\begin{proof}
The result follows from \corref{coro: mrep}, \eqref{VW} and \eqref{dmkk}.
\end{proof}
\begin{proposition}\label{prop: StrCons}
Let $K$ be as in \eqref{SUK} and $\mathfrak{m}$ as in \propref{prop: mdec} and \eqref{m123}. 
Then, up to permutation, $[123]$ is the only possibly 
nonzero structure constant, and it is given as
\begin{align*}
 [123]=\tfrac{1}{4}(b-2c_1)(b-2c_2)(b-2c_3).
\end{align*}
\end{proposition}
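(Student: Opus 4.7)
The plan is to exploit the bracket constraints in \propref{prop: mdec} to reduce the computation to a single structure constant, and then to evaluate that constant via the identity in \propref{prop: StrCas}.

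First I would verify the vanishing statement. Recall that
\begin{align*}
 [ijk]=-\!\!\!\sum_{\alpha\in I^i,\,\beta\in I^j,\,\gamma\in I^k}\!\!
   Q\!\big(e_{\gamma}^k, [e_{\alpha}^i, e_{\beta}^j]_{\mathfrak{m}_k}\big)\,
   Q\!\big([\bar{e}_{\beta}^j,\bar{e}_{\alpha}^i]_{\mathfrak{m}_k},\bar{e}_{\gamma}^k\big),
\end{align*}
so $[ijk]$ vanishes whenever the $\mathfrak{m}_k$-component of $[\mathfrak{m}_i,\mathfrak{m}_j]$ is trivial. By \propref{prop: mdec}, we have $[\mathfrak{m}_i,\mathfrak{m}_i]\subseteq\mathfrak{k}$ for $i=1,2,3$, which immediately gives $[iik]=0$ for all $k$, and by the full $S_3$-symmetry of $[ijk]$ (ensured by the $\ad_{\mathfrak{g}}$-invariance and supersymmetry of $Q$) this also yields $[iki]=[kii]=0$. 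Hence any structure constant with a repeated index vanishes, leaving $[123]$ (and its permutations, all equal to $[123]$) as the sole possibly nonzero one.

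Next I would apply \propref{prop: StrCas} with $i=1$, noting that $b_1=b_2=b_3=b$:
\begin{align*}
 \sum_{j,k=1}^{3}[1jk]=d_1(b-2c_1).
\end{align*}
By the vanishing established above, the only surviving terms on the left are $[123]$ and $[132]$, which are equal, so $2[123]=d_1(b-2c_1)$. Substituting the expression for $d_1$ from \lemref{lem: msdim} yields
\begin{align*}
 [123]=\tfrac{1}{2}d_1(b-2c_1)=\tfrac{1}{4}(b-2c_1)(b-2c_2)(b-2c_3),
\end{align*}
which is the desired formula.

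The main obstacle is purely bookkeeping: one needs to trust that \propref{prop: mdec} is sharp, meaning that $[\mathfrak{m}_i,\mathfrak{m}_i]$ lies entirely in $\mathfrak{k}$ (not in $\mathfrak{k}\oplus\mathfrak{m}_i$), so that all the diagonal-type structure constants genuinely vanish; this is forced by the gradation coming from the two circled simple roots $\alpha_p,\alpha_q$, since within each $\mathfrak{m}_{k_1,k_2}$ the pair $(k_1,k_2)\in\{(1,0),(0,1),(1,1)\}$ is non-negative and summing two copies always leaves $\Delta_M$. As a minor consistency check one could independently compute $\sum_{j,k}[2jk]$ and $\sum_{j,k}[3jk]$ and verify $2[123]=d_2(b-2c_2)=d_3(b-2c_3)$; this follows from \lemref{lem: msdim} and serves as a sanity check rather than an additional step.
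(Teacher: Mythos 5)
Your proposal is correct and takes essentially the same route as the paper: the vanishing of every structure constant with a repeated index via \propref{prop: mdec} and the full symmetry of $[ijk]$, followed by \propref{prop: StrCas} together with \lemref{lem: msdim}. The only (immaterial) difference is that you sum $\sum_{j,k}[1jk]=d_1(b-2c_1)$, whereas the paper uses $\sum_{j,k}[3jk]=d_3(b-2c_3)$; both give $[123]=\tfrac{1}{4}(b-2c_1)(b-2c_2)(b-2c_3)$.
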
 
\begin{proof}
It follows from \propref{prop: mdec} that $[123]$ is the only possibly nonzero structure constant up to permutation. 
Using \propref{prop: StrCas} and the symmetry of $[ijk]$, we have $2[123]= [312]+[321]=d_3(b_3-2c_3)$.
The result now follows using $b_3=b$, \propref{prop: c} and \lemref{lem: msdim}.
\end{proof}
\begin{remark}
In the super-setting, structure constants can be negative, zero or positive, in contrast to the classical setting 
where they are always non-negative.
\end{remark}
\begin{proposition}\label{prop: RciA}
Let $g=(x_1,x_2,x_3)$ be a $G$-invariant graded Riemannian metric on $M=G/K$, where $G=\SU(m|n)$ and $K$ 
is given in \eqref{SUK}. Set $x^2=x_1^2+x_2^2+x_3^2$. Then, the following holds.
\begin{itemize}
   \item The Ricci coefficients in \eqref{RicQm} are given by
\begin{align*}
 r_i=\frac{b}{2}+\frac{(b-2c_i)x_i(2x_i^2-x^2)}{4x_1x_2x_3},\qquad i=1,2,3.
\end{align*}
   \item The scalar curvature is given by
\begin{align*}
 S=\frac{b}{2}\Big(\frac{d_1}{x_1}+\frac{d_2}{x_2}+\frac{d_3}{x_3}\Big)
  -\frac{[123]\,x^2}{2x_1x_2x_3},
\end{align*}
     with $d_1,d_2,d_3$ given in \lemref{lem: msdim} and $[123]$ in \propref{prop: StrCons}.
   \end{itemize}
\end{proposition}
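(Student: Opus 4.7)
The plan is to specialise the general formulas of \thmref{thm: SimRic} and \propref{prop: scalar} to the present three-summand setting, and then substitute the explicit data collected in \lemref{lem: msdim} and \propref{prop: StrCons}. The key simplifications are twofold. First, by \eqref{eq: KillingA} and the definition $b=2(m-n)$, we have $b_1=b_2=b_3=b$. Second, by \propref{prop: mdec} and \propref{prop: StrCons}, the structure constant $[ijk]$ vanishes unless $\{i,j,k\}=\{1,2,3\}$, in which case $[ijk]=[123]$ by full symmetry in the three indices. These two facts kill most terms in the ensuing sums, so what remains is an algebraic manipulation rather than any genuine computation.

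For the Ricci coefficient $r_i$ when $d_i\neq 0$, I would apply \eqref{rdnot0}. Writing $\{i',i''\}$ for the two indices complementary to $i$, only the two terms with $(j,k)=(i',i'')$ and $(j,k)=(i'',i')$ survive in the double sum, and they combine to
\begin{align*}
 r_i=\frac{b}{2}+\frac{[123]\,(x_i^2-x_{i'}^2-x_{i''}^2)}{2d_i\,x_{i'}x_{i''}}.
\end{align*}
Substituting $d_i=\tfrac{1}{2}(b-2c_{i'})(b-2c_{i''})$ from \lemref{lem: msdim} and $[123]=\tfrac{1}{4}(b-2c_1)(b-2c_2)(b-2c_3)$ from \propref{prop: StrCons}, the ratio $[123]/(2d_i)$ collapses to $(b-2c_i)/4$; converting $x_{i'}x_{i''}=x_1x_2x_3/x_i$ and $x_i^2-x_{i'}^2-x_{i''}^2=2x_i^2-x^2$ then delivers the claimed formula. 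When $d_i=0$, \eqref{4dr} becomes the tautology $0=0$, but \propref{prop: mdec} shows that the unique $i$-selected pair in the sense of \eqref{eq: trcon} is $(j,k)=(\min\{i',i''\},\max\{i',i''\})$, so \propref{prop: ricexc} applies and, using \corref{cor: cccb}, a direct substitution yields the same expression. The only subtlety is to verify that the two regimes agree, which they do.

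For the scalar curvature, I would apply \propref{prop: scalar}. The first term immediately gives $\frac{b}{2}(d_1/x_1+d_2/x_2+d_3/x_3)$. The triple sum $\sum_{i,j,k=1}^3 [ijk]\,x_k/(x_ix_j)$ is supported on the six permutations of $(1,2,3)$, each contributing $[123]$ times a monomial; pairing them as $\frac{x_k}{x_ix_j}+\frac{x_j}{x_ix_k}$ collapses the total to $2(x_1^2+x_2^2+x_3^2)/(x_1x_2x_3)=2x^2/(x_1x_2x_3)$. Multiplying by $-[123]/4$ produces the stated $-[123]\,x^2/(2x_1x_2x_3)$ contribution. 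Since every identity used in the argument is a direct consequence of results already established, there is no real obstacle: this proposition is essentially a bookkeeping exercise packaging \lemref{lem: msdim}, \propref{prop: StrCons}, \thmref{thm: SimRic} and \propref{prop: scalar} into a form suitable for the Einstein-equation analysis to follow.
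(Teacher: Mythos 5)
Your argument is correct and follows essentially the same route as the paper: \eqref{rdnot0} together with \lemref{lem: msdim}, \propref{prop: StrCons} and $b_1=b_2=b_3=b$ when $d_i\neq0$, \propref{prop: ricexc} via the $i$-selected pairs supplied by \propref{prop: mdec} when $d_i=0$, and \propref{prop: scalar} for $S$. The only cosmetic differences are that you handle the degenerate regime uniformly rather than through the paper's case-by-case check of which factor $b-2c_i$ can vanish, and your appeal to \corref{cor: cccb} is superfluous, since \propref{prop: ricexc} already yields the stated formula by direct substitution.
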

\begin{proof}
If $[123]\neq0$, then $d_1,d_2,d_3$ are nonzero and the desired 
expression for $r_i$ follows by applying $b_1=b_2=b_3=b$, \lemref{lem: msdim}, \propref{prop: StrCons} and the 
symmetry of $[ijk]$ to the expression in \eqref{rdnot0}.
In Case 1, $b-2c_i=0$ is only possible if $i=3$, in which case $d_1=d_2=0$ while $d_3\neq0$; 
in Case 2, only $b-2c_2$ can be zero, in which case $d_1=d_3=0$ while $d_2\neq0$; 
and in Case 3, only $b-2c_1$ can be zero, in which case $d_2=d_3=0$ while $d_1\neq0$.
By \propref{prop: mdec}, $(2,3)$ is $1$-selected, $(1,3)$ is $2$-selected, and $(1,2)$ is $3$-selected; see \eqref{eq: trcon}.
We can thus use \propref{prop: ricexc} to compute $r_1$ and $r_2$ in Case 1 if $2c_3=b$; 
$r_1$ and $r_3$ in Case 2 if $2c_2=b$;
and $r_2$ and $r_3$ in Case 3 if $2c_1=b$. In every case, the desired expression for $r_i$ follows.
Finally, the expression for $S$ follows by applying \lemref{lem: msdim}, \propref{prop: StrCons} 
and the symmetry of $[ijk]$ to the expression in \propref{prop: scalar}.
\end{proof}

\subsubsection{Einstein metrics}
\label{sec: Einstein}

Using \propref{prop: RciA}, we can write the Einstein equations \eqref{eq: Ein} as
\begin{align}\label{cs}
 c=\frac{b}{2x_i}+\frac{(b-2c_i)(2x_i^2-x^2)}{4x_1x_2x_3},\qquad i=1,2,3.
\end{align}
As always, if $(g,c)$ provides a solution to \eqref{cs}, then so does $(\lambda g,\frac{c}{\lambda})$ for any
$\lambda\in\mathbb{R}^\times$. If $b=0$, then simultaneously changing the sign on $c$ and on any one of 
$x_1,x_2,x_3$ will also yield a solution. As in the non-super setting, an Einstein metric is \emph{Ricci-flat} if $c=0$.

Eliminating $c$ by combining the equations \eqref{cs} for $i=2,3$ and $i=1,3$, respectively, 
and using \corref{cor: cccb}, yields
\begin{align}
 (c_1x_{23}^+-b)x_{23}^-&=c_3-c_2,
 \label{e1}
 \\[.15cm]
 (b-c_1)(x_{23}^-)^2+\big(b-(b-2c_3)x_{23}^+\big)x_{23}^-
  &=c_1(x_{23}^+)^2-bx_{23}^++2c_2,
 \label{e2}
\end{align}
where $x_{23}^\pm:=\frac{x_2\pm x_3}{x_1}$.
We now multiply \eqref{e2} by $(c_1x_{23}^+-b)^2$ and 
then apply \eqref{e1} to replace factors of $(c_1x_{23}^+-b)x_{23}^-$
by $c_3-c_2$.
This turns \eqref{e2} into the quartic relation
\begin{align*}
 \big[x_{23}^+-1\big]
 \big[c_1x_{23}^+-(c_1+2c_2)\big]
 \big[c_1x_{23}^+-(c_1+2c_3)\big]
 \big[c_1x_{23}^+-(c_2+c_3)\big]=0,
\end{align*}
which $x_{23}^+$ must satisfy to provide an Einstein metric.
For $c_1,c_2,c_3$ generic, this quartic relation combined with \eqref{e1}
yields the following four solutions to the Einstein equations:
\begin{align*}
 &(\mathrm{S1}):\quad (x_1,x_2,x_3)
  =(c_2+c_3,c_2,c_3),
   \qquad
   c=1,
 \\[.15cm]
 &(\mathrm{S2}):\quad (x_1,x_2,x_3)
  =(c_1,c_1+c_3,c_3),
   \qquad
   c=1,
 \\[.15cm]
 &(\mathrm{S3}):\quad (x_1,x_2,x_3)
  =(c_1,c_2,c_1+c_2),
   \qquad
   c=1,
 \\[.15cm]
 &(\mathrm{S4}):\quad (x_1,x_2,x_3)
  =(c_1,c_2,c_3),
  \qquad\qquad
   c=1+\frac{(b-2c_1)(b-2c_2)(b-2c_3)}{4c_1c_2c_3}.
\end{align*}
In these solutions, we respectively have
\begin{align}\label{xxx}
 x_1=x_2+x_3,\qquad
 x_2=x_1+x_3,\qquad
 x_3=x_1+x_2,\qquad
 x_1+x_2+x_3=b,
\end{align}
and we note that the expression for $c$ in (S4) is well-defined as long as $x_i\neq0$, $i=1,2,3$.

Since we must have $x_1,x_2,x_3\in\mathbb{R}^\times$, the Casimir eigenvalues $c_1,c_2,c_3$ could be related 
in a way that invalidates one or more of the four solutions.
Indeed, if $c_1=0$, $c_2=0$ or $c_3=0$, then the only viable solution is (S1), (S2) or (S3), respectively. 
Similarly, if $c_2+c_3=0$, $c_1+c_3=0$ or $c_1+c_2=0$, then
(S1), (S2), respectively (S3), is inviable. However, in these situations, other solutions may be present.
To see this, let us analyse the cases $c_1=0$ and $c_2+c_3=0$, noting that some of the scenarios discussed below cannot 
arise for the Casimir eigenvalues $c_1,c_2,c_3$ given in Table~\ref{tab: c}.

First suppose $c_1=0$.
If $c_2+c_3=0$, then, in fact, $c_2=c_3=0$, so $c=0$ while $g$ is unconstrained. 
If $c_2+c_3\neq0$ and $c_2c_3=0$, then $c=0$ and $g\in\{(x_1,x_2,x_2\pm x_1)\,|\, x_2\neq\mp x_1;\,x_1x_2\neq0\}$.
If $c_2+c_3\neq0$ and $c_2c_3\neq0$, then (S1) holds, that is, $c=\frac{1}{\lambda}$ and $g=\lambda(c_2+c_3,c_2,c_3)$, 
$\lambda\in\mathbb{R}^\times$. 

Now suppose $c_2+c_3=0$. Since $c_1=0$ was covered above, we also suppose $c_1\neq0$.
If $c_3=0$, then $c=0$ and $g\in\{(x_1,x_2,x_1-x_2)\,|\, x_2\neq x_1;\,x_1x_2\neq0\}$.
If $c_3\neq0$, then (S4) holds, that is, $c=\frac{c_1^2}{4\lambda c_3^2}$ 
and $g=\lambda(c_1,c_2,c_3)$, $\lambda\in\mathbb{R}^\times$, provides a solution.
However, this is not the only solution if $c_3\notin\{0,c_1\}$, as then (S3) also holds, that is,
$c=\frac{1}{\lambda}$ and $g=\lambda(c_1,c_2,c_1+c_2)$, $\lambda\in\mathbb{R}^\times$, provides a solution. 
Further, if $c_3\notin\{0,c_1,-c_1\}$, then (S2) also holds, that is, $c=\frac{1}{\lambda}$ and 
$g=\lambda(c_1,c_1+c_3,c_3)$, $\lambda\in\mathbb{R}^\times$, provides a solution.

\subsubsection{Classification results}
\label{sec: Aclass}

Following the analysis in \secref{sec: Einstein}, we now summarise the solutions to \eqref{cs} in the specific cases of our 
interest, the flag supermanifolds $M=G/K$ built from $G=\SU(m|n)$ with $K$ given 
in \eqref{SUK}. By invoking permutation symmetry, we have 
covered all possible scenarios in \secref{sec: Einstein}. As established in \secref{sec: isosum}, the corresponding isotropy 
representation decomposes into a sum of three inequivalent nonzero irreducible $\ad_{\mathfrak{k}}$-modules,
$\mathfrak{m}=\mathfrak{m}_1\oplus\mathfrak{m}_2\oplus\mathfrak{m}_3$, with the labelling for the summands given 
in \eqref{m123} and explained in \secref{sec: isosum}.
For all $x_1,x_2,x_3,c\in\mathbb{R}$, we introduce
\begin{align*}
 [\,x_1:x_2:x_3\mid c\,]=\begin{cases} \big\{(\lambda g,\tfrac{c}{\lambda})\,|\, 
  g=(x_1,x_2,x_3);\,\lambda\in\mathbb{R}^\times\big\},\ &\text{if $x_1x_2x_3\neq0$,}\\[.15cm]
   \emptyset,\ &\text{if $x_1x_2x_3=0$,}\end{cases}
\end{align*}
and
\begin{align*}
 \Fc_1(u,v)&=
\begin{cases} \big\{\big((x_2+x_3,x_2,x_3),0\big)\,|\, x_2+x_3\neq0;\,x_2x_3\neq0\big\},\ &\text{if $u=p\ \&\ v=q$,}
\\[.15cm]
\emptyset,\ &\text{otherwise,}\end{cases}
\\[.15cm]
\Fc_3(u,v)&=\begin{cases}\big\{\big((x_1,x_2,x_1+x_2),0\big)\,|\, x_1+x_2\neq0;\,x_1x_2\neq0\big\},\ &\text{if $u=p\ \&\ v=q$,}
\\[.15cm]
\emptyset,\ &\text{otherwise.}\end{cases}
\end{align*}
\begin{theorem}\label{th: SU}
Let $M=G/K$ be a flag supermanifold, where $G=\SU(m|n)$ and $K$ is given in \eqref{SUK}, 
with $m,n\geq1$ such that $m+n\geq4$ and, in the case $m=n$, such that $n\geq3$.
Then, the $G$-invariant Einstein metrics $g$ on $M$, with corresponding Einstein constant $c$, 
are of the form \eqref{gxxx} and classified as follows:
\begin{align*}
 \text{Case 1 } (p<q\leq m):&\qquad (g,c)\in\Cc_{1,1}\cup\Cc_{1,2}\cup\Cc_{1,3}\cup\Cc_{1,4}\cup\Fc_3(m-n,2(m-n)),
 \\[.15cm]
 \text{Case 2 } (p<m<q):&\qquad (g,c)\in\Cc_{2,1}\cup\Cc_{2,2}\cup\Cc_{2,3}\cup\Cc_{2,4}\cup\Fc_1(n-m,2m)\cup\Fc_3(m-n,2n),
 \\[.15cm]
 \text{Case 3 } (m\leq p<q):&\qquad (g,c)\in\Cc_{3,1}\cup\Cc_{3,2}\cup\Cc_{3,3}\cup\Cc_{3,4}\cup\Fc_1(3m-n,2m),
\end{align*}
where
\begin{align*}
 \Cc_{1,1}&=[\,m-n+p: m-n+p-q: q\mid 1\,]\,,
 \\[.15cm]
 \Cc_{1,2}&=[\,m-n-p: m-n-p+q: q\mid 1\,]\,,
 \\[.15cm]
 \Cc_{1,3}&=[\,m-n-p: m-n+p-q: 2(m-n)-q\mid 1\,]\,,
 \\[.15cm]
 \Cc_{1,4}&=\Big[m-n-p: m-n+p-q: q\mid 1+\frac{2p(q-p)(m-n-q)}{q(m-n-p)(m-n+p-q)}\Big],
\end{align*}
\begin{align*}
 \Cc_{2,1}&=[\,m-n+p: -m-n+p+q: 2m-q\mid 1\,]\,,
 \\[.15cm]
 \Cc_{2,2}&=[\,m-n-p: 3m-n-p-q: 2m-q\mid 1\,]\,,
 \\[.15cm]
 \Cc_{2,3}&=[\,m-n-p: -m-n+p+q: -2n+q\mid 1\,]\,,
 \\[.15cm]
 \Cc_{2,4}&=\Big[m-n-p: -m-n+p+q: 2m-q\mid 1+\frac{2p(m+n-q)(2m-p-q)}{(m-n-p)(m+n-p-q)(2m-q)}\Big],
\end{align*}
\begin{align*}
 \Cc_{3,1}&=[\,3m-n-p: m-n-p+q: 2m-q\mid 1\,]\,,
 \\[.15cm]
 \Cc_{3,2}&=[\,-m-n+p: m-n+p-q: 2m-q\mid 1\,]\,,
 \\[.15cm]
 \Cc_{3,3}&=[\,-m-n+p: m-n-p+q: -2n+q\mid 1\,]\,,
 \\[.15cm]
 \Cc_{3,4}&=\Big[-m-n+p: m-n-p+q: 2m-q\mid 1+\frac{2(p-q)(m+n-q)(2m-p)}{(m+n-p)(m-n-p+q)(2m-q)}\Big].
\end{align*}
\end{theorem}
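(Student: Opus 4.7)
The plan is to apply the general analysis of the Einstein equations carried out in \secref{sec: Einstein} to the specific Casimir eigenvalues of Cases~1, 2 and 3 listed in Table~\ref{tab: c}, and then to read off the resulting families of metrics. All the ingredients are already in place: \propref{prop: RciA} encodes the Einstein condition $r_i = c x_i$ as the system \eqref{cs}, \corref{cor: cccb} provides the identity $b = c_1 + c_2 + c_3$, and \secref{sec: Einstein} has already done the heavy lifting by deriving every solution of \eqref{cs} in general.

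For each of Cases~1, 2 and 3, I would first treat the generic regime in which $c_1, c_2, c_3$ and every sum $c_i + c_j$ are nonzero. By \secref{sec: Einstein}, the only $G$-invariant Einstein metrics are then the four one-parameter families (S1)--(S4). Substituting the Casimir eigenvalues from Table~\ref{tab: c} directly yields the discrete families $\Cc_{i,1}, \Cc_{i,2}, \Cc_{i,3}, \Cc_{i,4}$; in particular, the Einstein constant in $\Cc_{i,4}$ is obtained from the expression for $c$ in (S4) after rewriting each factor $b - 2c_\ell$ via \corref{cor: cccb}. Any triple $(x_1,x_2,x_3)$ in which some $x_\ell$ vanishes is automatically dropped by the convention that $[\,x_1:x_2:x_3\mid c\,]$ is empty in that situation.

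Next, I would address the degenerate Casimir arrangements catalogued at the end of \secref{sec: Einstein}. A continuous Ricci-flat family of the form $(x_1, x_2, x_1+x_2)$ or $(x_2+x_3, x_2, x_3)$ appears exactly when two of the Casimir eigenvalues vanish. Reading off Table~\ref{tab: c}, the condition $c_1 = c_2 = 0$ occurs in Case~1 precisely when $(p,q) = (m-n, 2(m-n))$, giving $\Fc_3(m-n, 2(m-n))$; in Case~2 the same condition gives $(p,q) = (m-n, 2n)$ and hence $\Fc_3(m-n, 2n)$, while $c_2 = c_3 = 0$ there gives $(p,q) = (n-m, 2m)$ and hence $\Fc_1(n-m, 2m)$. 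In Case~3, $c_1 = p - m - n$ cannot vanish under the constraint $p \le m+n-1$, so no $\Fc_3$ term survives; only $c_2 = c_3 = 0$, forcing $(p,q) = (3m-n, 2m)$ and thus $\Fc_1(3m-n, 2m)$, can occur. One also has to verify that the remaining degenerate scenarios of \secref{sec: Einstein}, in which only a single $c_i$ vanishes or in which $c_i + c_j = 0$ with all $c_\ell \neq 0$, either reproduce one of the $\Cc_{i,j}$ already listed or are excluded by the parameter constraints.

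The main obstacle is the bookkeeping rather than any single hard estimate: one must check, case by case, that the union of the twelve discrete families $\Cc_{i,j}$ together with the continuous Ricci-flat families $\Fc_1, \Fc_3$ exhausts every admissible solution of \eqref{cs}; that no spurious solutions slip through in the boundary regimes $c_\ell = 0$ or $c_i + c_j = 0$; and that the stated arguments of $\Fc_1$ and $\Fc_3$ are precisely those $(p,q)$ for which two Casimir eigenvalues simultaneously vanish under the constraints $1 \le p < q \le m+n-1$ defining each case. Once these checks are complete, the classification in each case follows.
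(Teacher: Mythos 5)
Your proposal follows essentially the same route as the paper: the paper proves \thmref{th: SU} precisely by specialising the general analysis of \secref{sec: Einstein} (the generic solutions (S1)--(S4) together with the degenerate Casimir scenarios, using \corref{cor: cccb} and the convention that $[\,x_1:x_2:x_3\mid c\,]$ is empty when a coordinate vanishes) to the eigenvalues in Table~\ref{tab: c}, which is exactly your plan. Your identification of when the continuous Ricci-flat families arise (two Casimir eigenvalues vanishing, with the correct $(p,q)$ in each case) and of the remaining bookkeeping matches the paper's argument, so the proposal is correct.
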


In accordance with symmetry relations between circled $\SU(m|n)$ and $\SU(n|m)$ Dynkin diagrams, 
the family of solution spaces $\Cc_{k,l}$ and $\Fc_k(u,v)$ associated with $\SU(m|n)$ is closed under 
simultaneously mapping the parameters and arguments as
\begin{align*}
 m\mapsto n,\quad n\mapsto m,\quad p\mapsto m+n-q,\quad q\mapsto m+n-p;\qquad 
 x_1\mapsto -x_3,\quad x_2\mapsto -x_2,\quad x_3\mapsto-x_1.
\end{align*}
Concretely, under these simultaneous maps,
\begin{align*}
 \Cc_{k,1}\mapsto\Cc_{4-k,3},\qquad
 \Cc_{k,2}\mapsto\Cc_{4-k,2},\qquad
 \Cc_{k,3}\mapsto\Cc_{4-k,1},\qquad
 \Cc_{k,4}\mapsto\Cc_{4-k,4},\qquad
 k=1,2,3,
\end{align*}
and
\begin{align*}
 \Fc_1\mapsto\Fc_3,\qquad
 \Fc_3\mapsto\Fc_1.
\end{align*}
 
Every $\Cc_{k,l}$ in \thmref{th: SU} is of the form $[\,x_1: x_2: x_3\mid c\,]$, so $\Cc_{k,l}=\emptyset$ 
if and only if $x_1x_2x_3=0$. For $q=2(m-n)$, for example, $\Cc_{1,3}=[\,m-n-p: -m+n+p: 0\mid 1\,]=\emptyset$. 
\corref{cor: mnpq} below summarises the result of analysing all possible linear relations between the parameters $m,n,p,q$.

\begin{corollary}\label{cor: mnpq}
Let the notation be as in \thmref{th: SU}. 
Then, the numbers and types of $G$-invariant solutions to the Einstein equation are as follows.
\\[.15cm]
\underline{Case 1 $(p<q\leq m)$ with $m>n$:}
\begin{itemize}
\item Four solutions up to scaling: 
 $\Cc_{1,1},\Cc_{1,2},\Cc_{1,3},\Cc_{1,4},\quad$ if $\ p\neq m-n\ \&\ q-p\neq m-n\ \&\ q\neq2(m-n)$.
\item Three solutions up to scaling: $\Cc_{1,1},\Cc_{1,2},\Cc_{1,4},\quad$ if $\ p\neq m-n\ \&\ q-p\neq m-n\ \&\ q=2(m-n)$.
\item One solution up to scaling: $\begin{cases} \Cc_{1,1},\quad \text{if } \ p=m-n\ \&\ q-p\neq m-n,\\[.15cm]
									\Cc_{1,2},\quad \text{if } \ p\neq m-n\ \&\ q-p=m-n.  \end{cases}$
\item Continuous family of Ricci-flat solutions: $\Fc_3(m-n,2(m-n))$.
\end{itemize}
\underline{Case 1 $(p<q\leq m)$ with $m=n$:}
\begin{itemize}
\item Four solutions up to scaling: $\Cc_{1,1},\Cc_{1,2},\Cc_{1,3},\Cc_{1,4}$.
\end{itemize}
\underline{Case 1 $(p<q\leq m)$ with $m<n$:}
\begin{itemize}
\item Four solutions up to scaling: $\Cc_{1,1},\Cc_{1,2},\Cc_{1,3},\Cc_{1,4},\quad$ if $\ p\neq n-m\ \&\ q-p\neq n-m$.
\item Three solutions up to scaling: 
 $\begin{cases} \Cc_{1,1},\Cc_{1,3},\Cc_{1,4},\quad \text{if } \ p\neq n-m\ \&\ q-p=n-m,\\[.15cm]
			\Cc_{1,2},\Cc_{1,3},\Cc_{1,4},\quad \text{if } \ p=n-m\ \&\ q-p\neq n-m.  \end{cases}$
\end{itemize}
\underline{Case 2 $(p<m<q)$ with $m>n$:}
\begin{itemize}
\item Four solutions up to scaling: 
 $\Cc_{2,1},\Cc_{2,2},\Cc_{2,3},\Cc_{2,4},\quad$ if $\ p\neq m-n\ \&\ p+q\neq 3m-n\ \&\ q\neq2n$.
\item Three solutions up to scaling: 
 $\begin{cases} \Cc_{2,1},\Cc_{2,2},\Cc_{2,4},\quad \text{if } \ p\neq m-n\ \&\ p\neq3(m-n)\ \&\ q=2n,\\[.15cm]
			\Cc_{2,1},\Cc_{2,3},\Cc_{2,4},\quad \text{if } \ p\neq m-n\ \&\ p+q=3m-n\ \&\ q\neq 2n.  \end{cases}$
\item Two solutions up to scaling: $\Cc_{2,1},\Cc_{2,4},\quad$ if $\ p=3(m-n)\ \&\ q=2n$.
\item One solution up to scaling: 
 $\begin{cases} \Cc_{2,1},\quad \text{if } \ p=m-n\ \&\ q\neq2n,\\[.15cm]
			\Cc_{2,2},\quad \text{if } \ p\neq m-n\ \&\ p+q=m+n\ \&\ q\neq 2n.  \end{cases}$
\item Continuous family of Ricci-flat solutions: $\Fc_3(m-n,2n)$.
\end{itemize}
\underline{Case 2 $(p<m<q)$ with $m=n$:}
\begin{itemize}
\item Four solutions up to scaling: $\Cc_{2,1},\Cc_{2,2},\Cc_{2,3},\Cc_{2,4},\quad$ if $\ p+q\neq 2m$.
\item No solutions if $p+q=2m$.
\end{itemize}
\underline{Case 2 $(p<m<q)$ with $m<n$:}
\begin{itemize}
\item Four solutions up to scaling: 
 $\Cc_{2,1},\Cc_{2,2},\Cc_{2,3},\Cc_{2,4},\quad$ if $\ p\neq n-m\ \&\ p+q\neq 3m-n\ \&\ q\neq2m$.
\item Three solutions up to scaling: 
 $\begin{cases} \Cc_{2,1},\Cc_{2,3},\Cc_{2,4},\quad \text{if } \ p\neq n-m\ \&\ p+q=3m-n\ \&\ q\neq 2m,\\[.15cm]
			\Cc_{2,2},\Cc_{2,3},\Cc_{2,4},\quad \text{if } \ p=n-m\ \&\ q\neq4m-2n\ \&\ q\neq2m.  \end{cases}$
\item Two solutions up to scaling: $\Cc_{2,3},\Cc_{2,4},\quad$ if $\ p=n-m\ \&\ q=4m-2n$.
\item One solution up to scaling: $\begin{cases} \Cc_{2,2},\quad \text{if } \ p\neq n-m\ \&\ p+q=m+n\ \&\ q\neq 2m,\\[.15cm]
									\Cc_{2,3},\quad \text{if } \ p\neq n-m\ \&\ q=2m.  \end{cases}$
\item Continuous family of Ricci-flat solutions: $\Fc_1(n-m,2m)$.
\end{itemize}
\underline{Case 3 $(m\leq p<q)$ with $m>n$:}
\begin{itemize}
\item Four solutions up to scaling: $\Cc_{3,1},\Cc_{3,2},\Cc_{3,3},\Cc_{3,4},\quad$ if $\ q-p\neq m-n\ \&\ q\neq 2n$.
\item Three solutions up to scaling:
 $\begin{cases} \Cc_{3,1},\Cc_{3,2},\Cc_{3,4},\quad \text{if } \ q-p\neq m-n\ \&\ q=2n,\\[.15cm]
			\Cc_{3,1},\Cc_{3,3},\Cc_{3,4},\quad \text{if } \ q-p=m-n\ \&\ q\neq 2n.  \end{cases}$
\end{itemize}
\underline{Case 3 $(m\leq p<q)$ with $m=n$:}
\begin{itemize}
\item Four solutions up to scaling: $\Cc_{3,1},\Cc_{3,2},\Cc_{3,3},\Cc_{3,4}$.
\end{itemize}
\underline{Case 3 $(m\leq p<q)$ with $m<n$:}
\begin{itemize}
\item Four solutions up to scaling: 
 $\Cc_{3,1},\Cc_{3,2},\Cc_{3,3},\Cc_{3,4},\quad$ if $\ p\neq 3m-n\ \&\ q-p\neq n-m\ \&\ q\neq2m$.
\item Three solutions up to scaling: $\Cc_{3,2},\Cc_{3,3},\Cc_{3,4},\quad$ if $\ p=3m-n\ \&\ q-p\neq n-m\ \&\ q\neq2m$.
\item One solution up to scaling: $\begin{cases} \Cc_{3,2},\quad \text{if } \ q-p=n-m\ \&\ q\neq2m,\\[.15cm]
									\Cc_{3,3},\quad \text{if } \ q-p\neq n-m\ \&\ q=2m.  \end{cases}$
\item Continuous family of Ricci-flat solutions: $\Fc_1(3m-n,2m)$.
\end{itemize}
\end{corollary}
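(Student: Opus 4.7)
The plan is to deduce the corollary directly from \thmref{th: SU} by examining, in each of the three cases, exactly when each of the candidate solution families $\Cc_{k,1},\Cc_{k,2},\Cc_{k,3},\Cc_{k,4}$ is nonempty, and separately when the Ricci-flat continuous family $\Fc_1$ or $\Fc_3$ becomes activated. Recall that each $\Cc_{k,l}$ has the form $[\,x_1:x_2:x_3\mid c\,]$ with explicit triples $(x_1,x_2,x_3)$ built from the Casimir eigenvalues $c_1,c_2,c_3$ of Table~\ref{tab: c}, and it is empty precisely when one of $x_1,x_2,x_3$ vanishes. Thus the whole corollary reduces to translating the relations $x_i=0$ into linear relations among $m,n,p,q$.

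First I would treat Case 1 $(p<q\le m)$ in detail, since the other two cases are entirely parallel. Here $c_1=m-n-p$, $c_2=m-n+p-q$, $c_3=q$, so $c_3>0$ automatically, while the relevant vanishings are $c_1=0\Leftrightarrow p=m-n$, $c_2=0\Leftrightarrow q-p=m-n$, and $c_1+c_2=q-2(m-n)=0\Leftrightarrow q=2(m-n)$. The triples in $\Cc_{1,1},\Cc_{1,2},\Cc_{1,3},\Cc_{1,4}$ involve respectively the coordinates $(c_2+c_3,c_2,c_3)$, $(c_1,c_1+c_3,c_3)$, $(c_1,c_2,c_1+c_2)$, $(c_1,c_2,c_3)$, and since $c_3=q\neq0$ and $c_1+c_3=m-n-p+q>0$ under the Case 1 constraints, the only possible coincidences are $c_1=0$, $c_2=0$, $c_2+c_3=0$, $c_1+c_2=0$. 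A straightforward enumeration — subdivided according to $m>n$, $m=n$, $m<n$, so as to determine which of these vanishings are even feasible under $1\le p<q\le m$ — produces the branches listed in the corollary. The appearance of the continuous family $\Fc_3(m-n,2(m-n))$ in Case 1 with $m>n$ is precisely the degenerate scenario described near the end of \secref{sec: Einstein}: $c_1=0$ and $c_2+c_3=0$ force $c_2=c_3=0$ to be impossible (here $c_3=q\neq0$), but when only $c_2+c_3=0$ and $c_1\neq0$ one recovers the family (S3)-type continuous Ricci-flat solutions.

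Cases 2 and 3 are handled by the same mechanism with the Casimir eigenvalues read off the corresponding rows of Table~\ref{tab: c}. In Case 2 one has $c_1=m-n-p$, $c_2=-m-n+p+q$, $c_3=2m-q$, and the feasibility conditions $1\le p<m<q\le m+n-1$ rule out some of the would-be coincidences (e.g.\ $c_2+c_3=m-n+p\neq 0$ unless $p=n-m$, which requires $m<n$), producing the asymmetric splits into the sub-branches recorded in the statement. Case 3, with $c_1=-m-n+p$, $c_2=m-n-p+q$, $c_3=2m-q$, is treated identically. In both cases, whenever exactly one $c_i$ vanishes together with an appropriate sum $c_j+c_k$, the Ricci-flat family $\Fc_1$ or $\Fc_3$ appears, coming from the same (S1)/(S3) degeneration analysed after \eqref{xxx}.

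The only potential obstacle is bookkeeping: there are many sub-cases to enumerate and it is essential to check, for each scenario, that the claimed solutions are distinct (not coincidentally equal up to scaling) and that no further accidental vanishing of $x_i$ has been overlooked. This is eased by the $(m,n,p,q)\leftrightarrow(n,m,m+n-q,m+n-p)$ symmetry noted between Cases 1 and 3, which allows the Case 3 analysis to be obtained from Case 1 by relabelling, leaving only Cases 1 and 2 to be worked through by hand; the final list in the corollary is then obtained by collecting the surviving $\Cc_{k,l}$ in each sub-region together with the continuous families $\Fc_k(u,v)$.
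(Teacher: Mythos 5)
Your overall strategy is the same as the paper's: given \thmref{th: SU}, the corollary is pure bookkeeping, deciding regime by regime which of the sets $\Cc_{k,l}$ and $\Fc_k(u,v)$ are nonempty, using Table~\ref{tab: c} and the degenerate-case analysis of \secref{sec: Einstein}. However, two concrete claims on which your enumeration rests are false, and they would corrupt the case analysis. First, in Case 1 it is not true that $c_1+c_3=m-n-p+q>0$: for $m<n$ one has $c_1+c_3=(m-n)+(q-p)$, which vanishes exactly when $q-p=n-m$ (e.g.\ $(m,n,p,q)=(4,6,1,3)$). This vanishing is precisely what empties $\Cc_{1,2}$ and produces the branch ``$\Cc_{1,1},\Cc_{1,3},\Cc_{1,4}$ if $p\neq n-m$ \& $q-p=n-m$'' of the corollary; with your list of admissible coincidences ($c_1=0$, $c_2=0$, $c_2+c_3=0$, $c_1+c_2=0$ only) that branch cannot be recovered. (A fully careful enumeration would also have to treat $p=n-m$ and $q-p=n-m$ holding simultaneously, which is possible, e.g.\ $(m,n,p,q)=(4,5,1,2)$, and leaves exactly $\Cc_{1,3},\Cc_{1,4}$.)

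Second, you misidentify the degeneration responsible for the continuous Ricci-flat families. You claim that ``when only $c_2+c_3=0$ and $c_1\neq0$ one recovers the family (S3)-type continuous Ricci-flat solutions''. That scenario yields \emph{discrete} solutions: as analysed at the end of \secref{sec: Einstein}, for $c_2+c_3=0$, $c_1\neq0$, $c_3\neq0$ one gets (S4), and generically also (S3) and (S2); in Case 1 this is exactly the regime $p=n-m$ (so $m<n$), where the corollary records isolated solutions, not a family. Moreover, in Case 1 with $m>n$, where $\Fc_3(m-n,2(m-n))$ actually lives, one has $c_2+c_3=m-n+p>0$, so your mechanism cannot even be triggered there. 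The continuous families arise exactly when \emph{two} Casimir eigenvalues vanish simultaneously --- in Case 1, $c_1=c_2=0$, i.e.\ $p=m-n$ and $q=2(m-n)$, forcing $m>n$ --- equivalently when all three entries of one of $\Cc_{1,3},\Cc_{2,1},\Cc_{2,3},\Cc_{3,1}$ vanish; this is the sub-case ``$c_1=0$, $c_2c_3=0$, $c_2+c_3\neq0$'' of the degenerate analysis, giving $c=0$ and a one-parameter family with $x_3=x_1+x_2$ (respectively $x_1=x_2+x_3$). Your blanket rule for Cases 2 and 3 (``one $c_i$ vanishes together with an appropriate sum $c_j+c_k$'') inherits the same error. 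Until these two points are corrected, the proposed enumeration does not reproduce the branches of the corollary.
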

The Ricci-flat solutions $\Fc_3(m-n,2(m-n))$ are only present if $n<m\leq2n$; $\Fc_3(m-n,2n)$ if $n<m<2n$;
$\Fc_1(n-m,2m)$ if $m<n<2m$; and $\Fc_1(3m-n,2m)$ if $m<n\leq2m$.
Indeed, there are no Ricci-flat solutions for $m=n$. In fact, the Ricci-flat solutions arise exactly when $p,q,m,n$ 
are related so that $x_1=x_2=x_3=0$ in $\Cc_{k,l}=[\,x_1:x_2:x_3\mid c\,]$, 
which can only happen in $\Cc_{1,3}$, $\Cc_{2,1}$, $\Cc_{2,3}$ and $\Cc_{3,1}$.

\begin{remark}
The isotropy representation of the homogeneous space $\SU(3)/\mathbb T^2$ splits into three inequivalent irreducible 
summands; see~\cite{Kim90}. One can identify an $\SU(3)$-invariant metric on this space with a triple $(x_1,x_2,x_3)$,
and it is well known~\cite{Kim90,PZ24} that such a metric is K\"ahler if and only if $x_i+x_j=x_k$ for some permutation 
$(i,j,k)$ of the set~$\{1,2,3\}$. The similarity between this condition and the three first conditions in \eqref{xxx}
suggests that the Ricci-flat metrics in \thmref{th: SU} may be related to some Calabi--Yau-type structure on~$M=G/K$.
\end{remark}

By considering $n=0$ or $m=0$, our classification results reduce to the similar results in the non-super case.
For $0=n<p<q<m$, we thus find that there are exactly four $G$-invariant Einstein metrics, 
namely $\Cc_{1,1},\Cc_{1,2},\Cc_{1,3},\Cc_{1,4}$, while for $0=m<p<q<n$, there are exactly four $G$-invariant Einstein 
metrics, namely $\Cc_{3,1},\Cc_{3,2},\Cc_{3,3},\Cc_{3,4}$. This is in agreement with the analysis in \cite{Arv93}.

\subsubsection{Positive metrics}

We recall that $g=(x_1,x_2,x_3)$ is said to be positive if $x_1,x_2,x_3>0$. Following from \corref{cor: mnpq}, 
\propref{prop: pos} below lists all the positive Einstein metrics found in \thmref{th: SU}. Corresponding to each $\Cc_{k,l}$, 
it is convenient to introduce
\begin{align*}
 \Cc_{k,l}^\pm=\big\{(\lambda g,\tfrac{c}{\lambda})\,|\, g=(x_1^{k,l},x_2^{k,l},x_3^{k,l});\,
  c=c^{k,l};\,\lambda\in\mathbb{R}^\pm\big\},
\end{align*}
where $x_1^{k,l},x_2^{k,l},x_3^{k,l},c^{k,l}$ are the parameters used in the characterisation 
of $\Cc_{k,l}=[\,x_1^{k,l}:x_2^{k,l}:x_3^{k,l}\mid c^{k,l}\,]$ in \thmref{th: SU}.
For example, for $(m-n-p)( m-n+p-q)(2m-2n-q)\neq0$, we have
\begin{align*}
 \Cc_{1,3}^-=\big\{\big(\lambda(m-n-p,m-n+p-q,2m-2n-q),\tfrac{1}{\lambda}\big)\,|\, \lambda<0\big\}.
\end{align*}
We also introduce
\begin{align*}
 \Fc_1^+=\big\{\big((x_2+x_3,x_2,x_3),0\big)\,|\, x_2,x_3>0\big\},\qquad
 \Fc_3^+=\big\{\big((x_1,x_2,x_1+x_2),0\big)\,|\, x_1,x_2>0\big\}.
\end{align*}
\begin{proposition}\label{prop: pos}
Let the notation be as in \thmref{th: SU}. 
Then, the numbers and types of positive $G$-invariant solutions to the Einstein equation are as follows.
\\[.15cm]
\underline{Case 1 $(p<q\leq m)$:}
\begin{itemize}
\item Four solutions up to scaling: $\Cc_{1,1}^+,\Cc_{1,2}^+,\Cc_{1,3}^+,\Cc_{1,4}^+,\quad$ if $\ \max(p,q-p)<m-n$.
\item One solution up to scaling: $\begin{cases} \Cc_{1,1}^+,\quad \text{if } \ q-p<m-n\leq p,\\[.15cm]
									\Cc_{1,2}^+,\quad \text{if } \ p<m-n\leq q-p,\\[.15cm]
									 \Cc_{1,3}^-,\quad \text{if } \ m-n<\min(p,q-p).\end{cases}$
\item Continuous family of Ricci-flat solutions: $\Fc_3^+,\quad$ if $\ p=m-n\ \&\ q=2(m-n)$.
\end{itemize}
\underline{Case 2 $(p<m<q)$:}
\begin{itemize}
\item Four solutions up to scaling: 
 $\begin{cases} \Cc_{2,1}^+,\Cc_{2,2}^+,\Cc_{2,3}^+,\Cc_{2,4}^+,\quad \text{if } \ n-m<m+n-q<p<m-n,\\[.15cm]
			\Cc_{2,1}^-,\Cc_{2,2}^-,\Cc_{2,3}^-,\Cc_{2,4}^-,\quad \text{if } \ m-n<p<m+n-q<n-m.\end{cases}$
\item One solution up to scaling: 
 $\begin{cases} \Cc_{2,1}^+,\quad \text{if } \ m-n\leq p\ \&\ n-m<m+n-q<p,\\[.15cm]
			\Cc_{2,2}^+,\quad \text{if } \ p\leq m+n-q\ \&\ p<m-n,\\[.15cm]
			\Cc_{2,2}^-,\quad \text{if } \ m+n-q\leq p\ \&\ m+n-q<n-m,\\[.15cm]
			\Cc_{2,3}^-,\quad \text{if } \ n-m\leq m+n-q\ \&\ m-n<p<m+n-q.
			\end{cases}$
\item Continuous family of Ricci-flat solutions: $\begin{cases} \Fc_1^+,\quad \text{if } \ p=n-m\ \&\ q=2m,\\[.15cm]
									 \Fc_3^+,\quad \text{if } \ p=m-n\ \&\ q=2n.\end{cases}$
\end{itemize}
\underline{Case 3 $(m\leq p<q)$:}
\begin{itemize}
\item Four solutions up to scaling: $\Cc_{3,1}^-,\Cc_{3,2}^-,\Cc_{3,3}^-,\Cc_{3,4}^-,\quad$ if $\ \max(m+n-q,q-p)<n-m$.
\item One solution up to scaling: $\begin{cases} \Cc_{3,1}^+,\quad \text{if } \ n-m<\min(m+n-q,q-p),\\[.15cm]
									\Cc_{3,2}^-,\quad \text{if } \ m+n-q<n-m\leq q-p,\\[.15cm]
									 \Cc_{3,3}^-,\quad \text{if } \ q-p<n-m\leq m+n-q.\end{cases}$
\item Continuous family of Ricci-flat solutions: $\Fc_1^+,\quad$ if $\ p=3m-n\ \&\ q=2m$.
\end{itemize}
\end{proposition}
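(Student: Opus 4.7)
The strategy is to refine the classification in \thmref{th: SU} by determining, within each solution family, the existence of a representative with all three metric coefficients strictly positive. Each $\Cc_{k,l}$ is a one-parameter scaling orbit $[x_1^{k,l}:x_2^{k,l}:x_3^{k,l}\mid c^{k,l}]$ with $\lambda\in\mathbb{R}^{\times}$, so it contains a positive metric if and only if the three coordinates $x_i^{k,l}$ all share a common nonzero sign: if that common sign is positive, a representative lies in $\Cc_{k,l}^+$, and if it is negative, one rescales by $\lambda<0$ to obtain a representative in $\Cc_{k,l}^-$. The continuous Ricci-flat families $\Fc_1(u,v)$ and $\Fc_3(u,v)$ admit positive representatives $\Fc_1^+$ and $\Fc_3^+$ exactly when the parameter equalities $u=p$, $v=q$ are satisfied, since their defining relations allow free choice of two positive parameters.

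I plan to treat the three cases of \eqref{SUK} in turn. In Case~1 ($p<q\leq m$), the third coordinate of $\Cc_{1,1}$, $\Cc_{1,2}$ and $\Cc_{1,4}$ equals $q>0$, so $\Cc_{1,1}^-$, $\Cc_{1,2}^-$ and $\Cc_{1,4}^-$ are automatically excluded; a direct inspection of the remaining linear expressions yields
\begin{align*}
 \Cc_{1,1}^+&\iff q-p<m-n,\qquad\qquad\ \ \Cc_{1,2}^+\iff p<m-n,\\[.1cm]
 \Cc_{1,4}^+&\iff\max(p,q-p)<m-n,\\[.1cm]
 \Cc_{1,3}^+&\iff\max(p,q-p)<m-n,\quad\Cc_{1,3}^-\iff\min(p,q-p)>m-n,
\end{align*}
which, assembled together with the parameter values $p=m-n$, $q=2(m-n)$ demanded by $\Fc_3^+$, reproduce the four subcases of Case~1. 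Case~2 follows the same template, except that each sign pattern of $\Cc_{2,l}$ can be uniformly flipped by swapping the roles of $m$ and $n$: the hypothesis $n-m<m+n-q<p<m-n$ (which forces $m>n$) produces the all-positive scenario $\Cc_{2,l}^+$, while the mirror condition $m-n<p<m+n-q<n-m$ (forcing $n>m$) produces the all-negative scenario $\Cc_{2,l}^-$, with the remaining one- and two-solution subcases encoding partial sign alignments exactly as in Case~1. In Case~3, the first coordinate $-m-n+p$ of $\Cc_{3,2}$, $\Cc_{3,3}$ and $\Cc_{3,4}$ is strictly negative throughout the allowed parameter range $m\leq p\leq m+n-1$, so these three families can only appear with sign $\Cc_{3,l}^-$, which forces $n>m$; only $\Cc_{3,1}$ can contribute an all-positive representative $\Cc_{3,1}^+$, explaining the structural asymmetry between the listed Case~3 scenarios and their Case~1 counterparts.

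The analysis is entirely elementary; the main obstacle is organisational. One must keep track of twelve triples of linear expressions in $(m,n,p,q)$ alongside the Ricci-flat parameter equalities, and correctly identify the degenerate boundaries on which some $x_i^{k,l}$ vanishes (for instance $p=m-n$ kills $\Cc_{1,2}$, $\Cc_{1,3}$ and $\Cc_{1,4}$ simultaneously in Case~1, and $q=2m$ kills $\Cc_{2,1}$, $\Cc_{2,2}$ and $\Cc_{2,4}$ in Case~2). Precisely these boundaries are where $\Fc_1^+$ and $\Fc_3^+$ surface, in agreement with \corref{cor: mnpq}, and the proposition follows upon assembling the resulting list.
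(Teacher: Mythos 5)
Your approach -- checking, for each $\Cc_{k,l}$, whether the representative triple $(x_1^{k,l},x_2^{k,l},x_3^{k,l})$ has all entries of one common sign (all positive giving $\Cc_{k,l}^+$, all negative giving $\Cc_{k,l}^-$ after rescaling by $\lambda<0$), together with the parameter equalities under which $\Fc_1^+$, $\Fc_3^+$ are nonempty -- is exactly the elementary sign inspection by which the paper obtains the proposition from \thmref{th: SU} and \corref{cor: mnpq}, and your explicit Case~1 sign conditions are correct. Two small slips in your narrative do not affect the method: among \emph{positive} solutions there are no two-solution subcases in Case~2 (those occur only in the unrestricted count of \corref{cor: mnpq}), and the $m\leftrightarrow n$ duality does not flip signs family-by-family but also interchanges $\Cc_{2,1}\leftrightarrow\Cc_{2,3}$, so it identifies the all-positive and all-negative four-solution regimes as sets rather than componentwise.
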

For any $(g,c)\in\Cc_{k,l}^\epsilon$ where $l<4$ in \propref{prop: pos}, the Einstein constant is positive 
if $\epsilon=+$ and negative if $\epsilon=-$. 
If $l=4$, on the other hand, the sign of the Einstein constant depends intricately on the parameters $m,n,p,q$.

In the situation we are considering, when $M=G/K$ is constructed from $G=\SU(m|n)$ with $K$ given in~\eqref{SUK}, 
\corref{coro: mrep} and \propref{prop: mdec} imply that the isotropy summand $\mathfrak m_i$ is neither purely even 
nor purely odd for at least two $i\in\{1,2,3\}$. As a consequence, $Q$ has mixed signature on 
$\mathfrak m_{\bar{0}}=(\mathfrak m_1)_{\bar{0}}\oplus(\mathfrak m_2)_{\bar{0}}\oplus(\mathfrak m_3)_{\bar{0}}$, 
and $Q|_{\mathfrak m_{\bar{0}}}$ is not positive-definite. This means that no $G$-invariant metric on $M$ induces 
a Riemannian metric on the reduced manifold~$|M|$. For metrics on~$M=G/K$, the notion of positivity introduced 
in Section~\ref{sec: diag_curv} thus appears to be a natural super-geometric counterpart to positive definiteness.

The finiteness conjecture in classical geometry states that a compact homogeneous space $G_0/K_0$ 
with pairwise inequivalent isotropy summands can support only finitely many $G_0$-invariant Einstein metrics 
up to scaling; see~\cite{BWZ04}. The existence of continuous families of positive Ricci-flat metrics established in 
Proposition~\ref{prop: pos} demonstrates that this conjecture fails on supermanifolds. 
This is particularly intriguing since, as we showed in Section~\ref{subsec: RicciCoef}, 
the formulas for the Ricci curvature on $M$ are similar to those in the non-super setting. 
The main difference comes from the possible non-positivity of the dimension parameters 
and possible negativity of the structure constants.

In classical homogeneous geometry, Bochner's vanishing theorem (see \cite[Theorem 1.84]{Bes87}) 
implies that a compact homogeneous space $G_0/K_0$ can support a $G_0$-invariant Riemannian metric 
with non-positive Ricci curvature only if the identity component of $G_0$ is a torus. 
The existence of positive $\mathrm{SU}(m|n)$-invariant Einstein metrics on $M$ with vanishing or 
negative Ricci curvature shows that this conclusion does not extend to homogeneous superspaces.

\subsection{$\SOSp(2|2n)$: two isotropy summands}
\label{sec: flagC}

Let $G=\SOSp(2|2n)$ be as in \eqref{SOSp}. We will construct a flag supermanifold $M=G/K$ whose isotropy 
representation decomposes into two inequivalent nonzero irreducible $\ad_{\mathfrak{k}}$-modules:
\begin{align*}
 \mathfrak{m}=\mathfrak{m}_1\oplus \mathfrak{m}_2.
\end{align*}
Retaining the notation from \secref{sec: diag_curv} and \secref{subsubsec: typeC}, 
a $G$-invariant graded Riemannian metric on $M$ is of the form
\begin{align}\label{gxx}
 g= x_1 Q|_{\mathfrak{m}_1}
  + x_2 Q|_{\mathfrak{m}_2},\qquad
  x_1,x_2\in \mathbb{R}^\times.
\end{align}
We will use $(x_1,x_2)$ as a shorthand notation for the sum in \eqref{gxx}. The coefficients in the expression \eqref{BbQ} 
follow from \eqref{eq: BQrel} and are given by $b_1=b_2=-2n$. In the following, we will denote this common value simply by 
$b$:
\begin{align*}
 b=-2n.
\end{align*}

We construct $M$ using a Dynkin diagram of $G$ with the $p^{\text{th}}\!$ node circled, where
\begin{align*}
 2\leq p\leq n,
\end{align*} 
as in \figref{fig: DynC}, and with
\begin{align}\label{KUSp}
 K=\UU(1|p-1)\times \Sp(2(n+1-p)).
\end{align}
The compact real form associated with \eqref{KUSp} is $\mathfrak{k}=\mathfrak{u}(1|p-1)\oplus\mathfrak{sp}(2(n+1-p))$,
and the scalar superproduct $Q$ on $\mathfrak{m}$, where
$\mathfrak{g}=\mathfrak{k}\oplus\mathfrak{m}$, is defined by the corresponding restriction of \eqnref{eq: Qosp}. 
\begin{figure}[hbt!]
\begin{center}
\begin{tikzpicture}[scale=.6]
    \draw[thick] (0 cm,0) circle (.3cm);
    
    \draw[thick] (-0.2 cm,-0.2 cm) -- (0.2cm, 0.2 cm); 
     \draw[thick] (-0.2 cm,0.2 cm) -- (0.2cm, -0.2 cm); 

    \draw[thick] (0.3 cm,0) -- +(1.4 cm,0);
    \draw[thick] (2 cm,0) circle (.3cm);

     \draw[thick] (2.3 cm,0) -- (3  cm,0);
     \draw[dotted,thick] (3.3 cm,0) -- (4 cm,0);
     \draw[thick] (4 cm,0) -- (4.6  cm,0);

    \draw[thick] (5 cm,0) circle (.3cm);
     \draw[thick] (5 cm,0) circle (.4cm);  
    \draw[thick] (5.4 cm,0) -- (6  cm,0);   
    \draw[dotted,thick] (6.3 cm,0) -- (7 cm,0);

    \begin{scope}[shift={(2,0)}]
    \draw[thick] (5.3 cm,0) -- (6 cm,0);

     \draw[thick] (6.3 cm,0) circle (.3cm);

     \draw[thick] (6.7 cm,.1cm) -- (8 cm,.1cm);
     \draw[thick] (6.7 cm,-.1cm) -- (8 cm,-.1cm);

     \draw[thick] (6.6 cm, 0cm) -- (6.9 cm,.25cm);
     \draw[thick] (6.6 cm, 0cm) -- (6.9 cm,-.25cm);
     \draw[thick] (8.3 cm,0) circle (.3cm);
    \end{scope}

    \draw (0,-0.3) node[anchor=north]  {\tiny $\alpha_1$};
         \draw (0,0.3) node[anchor=south]  {\tiny $1$};
    \draw (2,-0.3) node[anchor=north]  {\tiny $\alpha_2$};
         \draw (2,0.3) node[anchor=south]  {\tiny $2$};
    \draw (5,-0.35) node[anchor=north]  {\tiny $\alpha_p$};
         \draw (5,0.3) node[anchor=south]  {\tiny $2$};
    \draw (8.3,-0.3) node[anchor=north]  {\tiny $\alpha_n$};
         \draw (8.3,0.3) node[anchor=south]  {\tiny $2$};
    \draw (10.3,-0.3) node[anchor=north]  {\tiny $\alpha_{n+1}$};
         \draw (10.3,0.3) node[anchor=south]  {\tiny $1$};
  \end{tikzpicture}
\end{center}
  \caption{Dynkin diagram of $\SOSp(2|n)$ with one circled node.}
   \label{fig: DynC}
\end{figure}
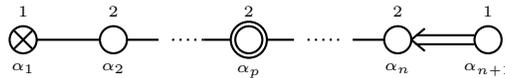

\subsubsection{Isotropy summands}

Let $\mathfrak{m}^{\mathbb{C}}$ denote the $Q$-orthogonal complement of 
$\mathfrak{k}^{\mathbb{C}}= \mathfrak{gl}(1|p-1)^{\mathbb{C}}\oplus \mathfrak{sp}(2(n+1-p))^{\mathbb{C}}$ 
in $\mathfrak{g}^{\mathbb{C}}=\mathfrak{osp}(2|2n)^{\mathbb{C}}$,
where $\mathfrak{k}^{\mathbb{C}}$ is the complexification of $\mathfrak{k}$.
For the root-space decomposition \eqref{eq: mComp} of $\mathfrak{m}^{\mathbb{C}}$, we introduce
\begin{align*}
 \mathfrak{m}_k^{\mathbb{C}}:= \bigoplus_{\alpha\in \Delta_{k}} \mathfrak{g}_{\alpha}^{\mathbb{C}},\qquad
 \Delta_{k}:= \big\{ \sum_{i=1}^{n+1} c_i  \alpha_i \in \Delta_M\,|\, c_p=k\big\},\qquad
 k\in I_{\mathfrak{m}^{\mathbb{C}}},
\end{align*}
where $I_{\mathfrak{m}^{\mathbb{C}}}=\{-2,-1,1,2\}$. If $\Delta_k$ is non-empty, then $\mathfrak{m}_k^{\mathbb{C}}$ 
is an $\ad_{\mathfrak{k}^{\mathbb{C}}}$-invariant subspace of $\mathfrak{m}^{\mathbb{C}}$.

Let $\nu_{2|2n}$ denote the natural representation of $\mathfrak{osp}(2|2n)^{\mathbb{C}}$, so
$\mathfrak{g}^{\mathbb{C}}\cong \bigw_s^2(\nu_{2|2n})$, where $\bigw_s^2$ denotes the second super exterior power.
We also recall that, for all applicable $r,s$, $\mu_{r|s}$ denotes the natural representation of 
$\mathfrak{gl}(r|s)^{\mathbb{C}}$, with dual representation $\bar{\mu}_{r|s}$.
\begin{proposition}\label{prop: mdecC}
As an $\ad_{\mathfrak{k}^{\mathbb{C}}}$-module, $\mathfrak{m}^{\mathbb{C}}$ 
admits the $Q$-orthogonal decomposition
\begin{align*}
 \mathfrak{m}^{\mathbb{C}}= \mathfrak{m}_{-2}^{\mathbb{C}} \oplus\mathfrak{m}_{-1}^{\mathbb{C}}
  \oplus\mathfrak{m}_{1}^{\mathbb{C}} \oplus  \mathfrak{m}_{2}^{\mathbb{C}}.
\end{align*}
Moreover, $\mathfrak{m}_i^{\mathbb{C}}$ and $\mathfrak{m}_{-i}^{\mathbb{C}}$ are irreducible and dual for each $i=1,2$.
\end{proposition}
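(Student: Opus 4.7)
The plan is to mirror the strategy of the proof of Proposition~\ref{prop: mdecA} (the $A$-type analog), using the realization $\mathfrak{g}^{\mathbb{C}} \cong \bigw_s^2(\nu_{2|2n})$. First, I would establish the $\mathfrak{k}^{\mathbb{C}}$-module decomposition of the natural representation,
\begin{align*}
\nu_{2|2n}|_{\mathfrak{k}^{\mathbb{C}}} \cong W_+ \oplus W_- \oplus U,
\end{align*}
where $W_+ \cong \mu_{1|p-1}$ carries the weights $\{\varepsilon, \delta_1, \ldots, \delta_{p-1}\}$, $W_- \cong \bar{\mu}_{1|p-1}$ carries the weights $\{-\varepsilon, -\delta_1, \ldots, -\delta_{p-1}\}$, and $U$ is the $\mathfrak{sp}(2(n+1-p))^{\mathbb{C}}$-module spanned by the root vectors for $\pm\delta_p, \ldots, \pm\delta_n$ (all odd in the $\mathbb{Z}_2$-grading inherited from $\nu_{2|2n}$).

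Applying the identity $\bigw_s^2(A \oplus B) \cong \bigw_s^2(A) \oplus (A \otimes B) \oplus \bigw_s^2(B)$ recursively, $\bigw_s^2(\nu_{2|2n})|_{\mathfrak{k}^{\mathbb{C}}}$ splits into six $\mathfrak{k}^{\mathbb{C}}$-invariant summands. Of these, $W_+ \otimes W_- \cong \mathfrak{gl}(1|p-1)^{\mathbb{C}}$ (natural tensor dual for $\mathfrak{gl}$) and $\bigw_s^2(U) \cong \mathfrak{sp}(2(n+1-p))^{\mathbb{C}}$ (since super-antisymmetrising a purely odd space yields an ordinary symmetric square, and $\mathfrak{sp}(2k)^{\mathbb{C}} \cong \mathrm{Sym}^2(\nu_{2k})$ via the symplectic form) together comprise $\mathfrak{k}^{\mathbb{C}}$. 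The remaining four summands are the candidates for $\mathfrak{m}^{\mathbb{C}}$:
\begin{align*}
 &\mathfrak{m}_{1}^{'\mathbb{C}} = W_+ \otimes U, \qquad
 \mathfrak{m}_{-1}^{'\mathbb{C}} = W_- \otimes U,\\
 &\mathfrak{m}_{2}^{'\mathbb{C}} = \bigw_s^2(W_+), \qquad
 \mathfrak{m}_{-2}^{'\mathbb{C}} = \bigw_s^2(W_-).
\end{align*}
Each is irreducible as a $\mathfrak{k}^{\mathbb{C}}$-module (as an outer tensor product of irreducibles of commuting factors, or via a multiplicity-one highest-weight argument in the $\mathfrak{gl}(1|p-1)^{\mathbb{C}}$-case), and $\mathfrak{m}_k^{'\mathbb{C}}$ and $\mathfrak{m}_{-k}^{'\mathbb{C}}$ are mutually dual for $k=1,2$ via the dualities $W_- \cong W_+^*$ and $U \cong U^*$ (the latter from the symplectic form).

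Finally, I would identify $\mathfrak{m}_k^{'\mathbb{C}}$ with $\mathfrak{m}_k^{\mathbb{C}}$ by inspecting the $\alpha_p$-coefficients of the highest weights: the highest weight of $\mathfrak{m}_2^{'\mathbb{C}}$ is $\varepsilon + \delta_1 = \alpha_1 + 2\alpha_2 + \cdots + 2\alpha_n + \alpha_{n+1}$, with $\alpha_p$-coefficient $2$, and the highest weight of $\mathfrak{m}_1^{'\mathbb{C}}$ is $\varepsilon + \delta_p = \alpha_1 + \cdots + \alpha_p + 2(\alpha_{p+1} + \cdots + \alpha_n) + \alpha_{n+1}$, with $\alpha_p$-coefficient $1$; the negatives are handled analogously. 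This gives $\mathfrak{m}_{\pm k}^{'\mathbb{C}} \subseteq \mathfrak{m}_{\pm k}^{\mathbb{C}}$, and since both sides exhaust $\mathfrak{m}^{\mathbb{C}}$ by a dimension count, equality follows. The $Q$-orthogonality of distinct summands is automatic, as $Q$ pairs $\mathfrak{g}_\alpha^{\mathbb{C}}$ with $\mathfrak{g}_\beta^{\mathbb{C}}$ to zero unless $\alpha + \beta = 0$. The main subtlety will be correctly tracking the parity of $U$ so that $\bigw_s^2(U)$ collapses to $\mathrm{Sym}^2(\nu_{2(n+1-p)}) \cong \mathfrak{sp}(2(n+1-p))^{\mathbb{C}}$ on the nose, rather than producing $\bigwedge^2(\nu_{2(n+1-p)})$ (which would yield $\mathfrak{so}(2(n+1-p))^{\mathbb{C}} \oplus \mathbb{C}$ and create a spurious trivial $\mathfrak{k}^{\mathbb{C}}$-summand).
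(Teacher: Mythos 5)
Your proposal is correct and follows essentially the same route as the paper: restrict $\nu_{2|2n}$ to $\mathfrak{k}^{\mathbb{C}}$, expand $\bigw_s^2$ of the three-term sum, split off $\mathfrak{k}^{\mathbb{C}}\cong\mu_{1|p-1}\otimes\bar\mu_{1|p-1}\oplus\bigw_s^2(\nu_{0|2(n+1-p)})$, and match the four remaining irreducible summands with $\mathfrak{m}_{\pm1}^{\mathbb{C}},\mathfrak{m}_{\pm2}^{\mathbb{C}}$ by their $\alpha_p$-coefficients before concluding equality by exhaustion, with $Q$-orthogonality and duality as you state. The only cosmetic difference is that the paper lists all weights of each summand while you check only the highest weight; this suffices because the remaining weights differ by roots of $\mathfrak{k}^{\mathbb{C}}$, whose $\alpha_p$-coefficient is zero, and your parity bookkeeping for $U$ (so that $\bigw_s^2(U)$ is the symmetric square $\cong\mathfrak{sp}(2(n+1-p))^{\mathbb{C}}$) is exactly the right point to watch.
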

\begin{proof}
Let $\pm\varepsilon,\pm\delta_1,\dots,\pm\delta_n$ denote the weights of $\nu_{2|2n}$. In accord with the regular 
embedding $\mathfrak{k}^{\mathbb{C}}\subseteq\mathfrak{g}^{\mathbb{C}}$, we denote by 
$\varepsilon,\delta_1,\dots,\delta_{p-1}$ the weights of $\mu_{1|p-1}$, by $-\varepsilon,-\delta_1,\dots,-\delta_{p-1}$ the 
weights of $\bar{\mu}_{1|p-1}$, and by $\pm \delta_p, \dots, \pm \delta_n$ the weights of $\nu_{0|2(n+1-p)}$. We then have
\begin{align*}
 \mathfrak{k}^{\mathbb{C}}\cong\mu_{1|p-1}\otimes\bar{\mu}_{1|p-1}\oplus\bigw_s^2(\nu_{0|2(n+1-p)}).
\end{align*} 
Restricting to $\mathfrak{k}^{\mathbb{C}}$, we have
$\nu_{2|2n}|_{\mathfrak{k}^{\mathbb{C}}}= \mu_{1|p-1} \oplus \bar{\mu}_{1|p-1} \oplus \nu_{0|2(n+1-p)}$
and hence the $\mathfrak{k}^{\mathbb{C}}$-module isomorphisms
\begin{align*}
   \mathfrak{g}^{\mathbb{C}}|_{\ad_{\mathfrak{k}^{\mathbb{C}}}}
   &\cong\bigw_s^2(\nu_{2|2n}|_{\mathfrak{k}^{\mathbb{C}}}) 
      \cong\bigw_s^2( \mu_{1|p-1} \oplus \bar{\mu}_{1|p-1} \oplus \nu_{0|2(n+1-p)})
   \\[.15cm]
    &\cong\bigw_s^2(\mu_{1|p-1})
       \oplus \bigw_s^2(\bar{\mu}_{1|p-1})
       \oplus \bigw_s^2(\nu_{0|2(n+1-p)})
    \\[.15cm] 
   &\quad \oplus (\mu_{1|p-1}\otimes \bar{\mu}_{1|p-1})
      \oplus(\mu_{1|p-1}\otimes \nu_{0|2(n+1-p)})
      \oplus (\bar{\mu}_{1|p-1}\otimes \nu_{0|2(n+1-p)})
     \\[.15cm]
    &\cong\mathfrak{k}^{\mathbb{C}} \oplus (\mu_{1|p-1}\otimes \nu_{0|2(n+1-p)}) 
       \oplus (\bar{\mu}_{1|p-1}\otimes \nu_{0|2(n+1-p)}) 
       \oplus \bigw_s^2(\mu_{1|p-1})
       \oplus \bigw_s^2(\bar{\mu}_{1|p-1}).
\end{align*}
As $\mathfrak{k}^{\mathbb{C}}$ arises as direct summand, it follows that
\begin{align*}
 \mathfrak{m}^{\mathbb{C}}
  =\mathfrak{m}_{-2}^{'\mathbb{C}} 
   \oplus\mathfrak{m}_{-1}^{'\mathbb{C}}
   \oplus\mathfrak{m}_{1}^{'\mathbb{C}} 
   \oplus\mathfrak{m}_{2}^{'\mathbb{C}},
\end{align*}
where
\begin{align*}
 \mathfrak{m}_{-2}^{'\mathbb{C}}\cong \bigw_s^2(\bar{\mu}_{1|p-1}),\quad\
 \mathfrak{m}_{-1}^{'\mathbb{C}}\cong \bar{\mu}_{1|p-1}\otimes \nu_{0|2(n+1-p)}, \quad\ 
 \mathfrak{m}_1^{'\mathbb{C}}\cong \mu_{1|p-1}\otimes \nu_{0|2(n+1-p)}, \quad\
 \mathfrak{m}_2^{'\mathbb{C}}\cong \bigw_s^2(\mu_{1|p-1}),
\end{align*}
noting that $\mathfrak{m}_{-i}^{'\mathbb{C}}$ and $\mathfrak{m}_i^{'\mathbb{C}}$ are irreducible and dual for 
each $i=1,2$.

We now show that $\mathfrak{m}_k^{\mathbb{C}}=\mathfrak{m}_k^{'\mathbb{C}}$ for all 
$k\in I_{\mathfrak{m}^{\mathbb{C}}}$, which then concludes the proof. The weights of $\mathfrak{m}_1^{'\mathbb{C}}$ are 
$\delta_i\pm\delta_j$ and $\varepsilon\pm \delta_j$ with $1\leq i<p\leq j\leq n$,
and since they are all contained in $\Delta_1$, the coefficient of $\alpha_p= \delta_{p-1}-\delta_p$ must be $1$. 
This implies that $\mathfrak{m}_1^{'\mathbb{C}}\subseteq\mathfrak{m}_1^{\mathbb{C}}$. Because
\begin{align*} 
 \bigw_s^2(\mu_{1|p-1})
  \cong \bigw_s^2(\mu_{1|0} \oplus \mu_{0|p-1})
  \cong \bigw_s^2(\mu_{0|p-1}) \oplus \mu_{1|0} \otimes \mu_{0|p-1}, 
\end{align*}
the weights of $\mathfrak{m}_2^{'\mathbb{C}}$ are $\delta_i+\delta_j$ with $1\leq i\leq j\leq p-1$ and 
$\varepsilon+\delta_i$ with $1\leq i\leq p-1$. As all these weights are contained in $\Delta_2$, 
it follows that $\mathfrak{m}_2^{'\mathbb{C}}\subseteq \mathfrak{m}_2^{\mathbb{C}}$. 
We similarly have $\mathfrak{m}_{-1}^{'\mathbb{C}}\subseteq \mathfrak{m}_{-1}^{\mathbb{C}}$ 
and $\mathfrak{m}_{-2}^{'\mathbb{C}}\subseteq \mathfrak{m}_{-2}^{\mathbb{C}}$, so
\begin{align*}
 \bigoplus_{k\in I_{\mathfrak{m}^{\mathbb{C}}}} \mathfrak{m}_k^{\mathbb{C}} 
 \subseteq \mathfrak{m}
 =\bigoplus_{k\in I_{\mathfrak{m}^{\mathbb{C}}}}\mathfrak{m}_k^{'\mathbb{C}} 
 \subseteq  \bigoplus_{k\in I_{\mathfrak{m}^{\mathbb{C}}}} \mathfrak{m}_k^{\mathbb{C}}.
\end{align*}
It follows that $\mathfrak{m}_k^{\mathbb{C}}=\mathfrak{m}_k^{'\mathbb{C}}$ for all $k\in I_{\mathfrak{m}^{\mathbb{C}}}$.
\end{proof}
\begin{corollary}\label{cor: mpL}
Let the notation be as in \propref{prop: mdecC}. Then,
\begin{align*}
 \mathfrak{m}_1^{\mathbb{C}}\cong \mu_{1|p-1}\otimes \nu_{0|2(n+1-p)}, \qquad
 \mathfrak{m}_2^{\mathbb{C}}\cong \bigw_s^2(\mu_{1|p-1}),
\end{align*}
with corresponding highest weights
\begin{align*}
 \Lambda_1=\varepsilon+\delta_p, \qquad 
 \Lambda_2=\varepsilon+\delta_1.  
\end{align*}
\end{corollary}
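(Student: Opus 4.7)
Since the module isomorphisms $\mathfrak{m}_1^{\mathbb{C}}\cong\mu_{1|p-1}\otimes\nu_{0|2(n+1-p)}$ and $\mathfrak{m}_2^{\mathbb{C}}\cong\bigw_s^2(\mu_{1|p-1})$ are already recorded in the proof of \propref{prop: mdecC} via the identifications $\mathfrak{m}_1^{\mathbb{C}}=\mathfrak{m}_1^{'\mathbb{C}}$ and $\mathfrak{m}_2^{\mathbb{C}}=\mathfrak{m}_2^{'\mathbb{C}}$, the task reduces to pinning down the highest weights $\Lambda_1$ and $\Lambda_2$ with respect to the Borel of $\mathfrak{k}^{\mathbb{C}}=\mathfrak{gl}(1|p-1)^{\mathbb{C}}\oplus\mathfrak{sp}(2(n+1-p))^{\mathbb{C}}$ inherited from the distinguished Borel of $\mathfrak{g}^{\mathbb{C}}$.

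The plan is to first list the simple roots of $\mathfrak{k}^{\mathbb{C}}$, which form a subset of those of $\mathfrak{g}^{\mathbb{C}}$: on the $\mathfrak{gl}(1|p-1)^{\mathbb{C}}$-side these are $\varepsilon-\delta_1, \delta_1-\delta_2, \ldots, \delta_{p-2}-\delta_{p-1}$, while on the $\mathfrak{sp}(2(n+1-p))^{\mathbb{C}}$-side they are $\delta_p-\delta_{p+1}, \ldots, \delta_{n-1}-\delta_n, 2\delta_n$. Within each summand $\mathfrak{m}_i^{\mathbb{C}}$, the highest weight is the unique weight that cannot be raised by any of these simple roots while remaining in the weight set of the module.

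For $\mathfrak{m}_1^{\mathbb{C}}$, which is an external tensor product of irreducibles over the two factors of $\mathfrak{k}^{\mathbb{C}}$, the highest weight is simply the sum of the highest weights of the tensor factors. The weights of $\mu_{1|p-1}$ are $\varepsilon,\delta_1,\ldots,\delta_{p-1}$, with $\varepsilon$ being highest, while the weights of the symplectic natural representation $\nu_{0|2(n+1-p)}$ are $\pm\delta_p,\ldots,\pm\delta_n$, with $\delta_p$ being highest. This yields $\Lambda_1=\varepsilon+\delta_p$.

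For $\mathfrak{m}_2^{\mathbb{C}}\cong\bigw_s^2(\mu_{1|p-1})$, I would enumerate the weights by tracking super-signs: super-antisymmetrisation kills the even diagonal $v\otimes v$ (with $v$ the even basis vector of weight $\varepsilon$) but preserves the odd diagonals $w_i\otimes w_i$, since super-antisymmetry becomes ordinary symmetry on odd--odd pairs. Consequently, the weights of $\bigw_s^2(\mu_{1|p-1})$ are $\varepsilon+\delta_i$ for $1\leq i\leq p-1$ and $\delta_i+\delta_j$ for $1\leq i\leq j\leq p-1$. Among these, $\varepsilon+\delta_1$ dominates under the ordering induced by the simple roots $\varepsilon-\delta_1,\delta_1-\delta_2,\ldots,\delta_{p-2}-\delta_{p-1}$, so $\Lambda_2=\varepsilon+\delta_1$. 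The only potential source of friction is maintaining the super-sign conventions in the exterior power for the odd--odd terms, but this is routine.
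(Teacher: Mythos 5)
Your proposal is correct and follows essentially the same route as the paper, which leaves the corollary unproved precisely because the identifications $\mathfrak{m}_i^{\mathbb{C}}=\mathfrak{m}_i^{'\mathbb{C}}$ and the weight lists of $\mu_{1|p-1}\otimes\nu_{0|2(n+1-p)}$ and $\bigw_s^2(\mu_{1|p-1})$ are already recorded in the proof of \propref{prop: mdecC}; reading off the maximal weights with respect to the Borel of $\mathfrak{k}^{\mathbb{C}}$ inherited from the distinguished Borel of $\mathfrak{g}^{\mathbb{C}}$ is exactly the intended argument. Your handling of the super exterior square (even diagonal killed, odd diagonals symmetric) and the resulting comparison showing $\varepsilon+\delta_1$ and $\varepsilon+\delta_p$ are the unique maximal weights are both accurate.
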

\begin{proposition}\label{prop: rmdecC}
The $\ad_{\mathfrak{k}}$-representation $\mathfrak{m}$ admits a $Q$-orthogonal decomposition of the form
\begin{align*}
   \mathfrak{m}\cong\mathfrak{m}_1\oplus \mathfrak{m}_2,
\end{align*}
where the summands are irreducible $\ad_{\mathfrak{k}}$-representations satisfying
\begin{align*}
 [\mathfrak{m}_1, \mathfrak{m}_1]\subseteq \mathfrak{k}\oplus \mathfrak{m}_2, \qquad 
 [\mathfrak{m}_1, \mathfrak{m}_2]\subseteq \mathfrak{m}_1, \qquad 
 [\mathfrak{m}_2, \mathfrak{m}_2]\subseteq \mathfrak{k}. 
\end{align*}    
\end{proposition}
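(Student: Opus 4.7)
The plan is to mirror the argument used for \propref{prop: mdec}, leveraging the $\mathbb{Z}$-grading on $\mathfrak{g}^{\mathbb{C}}$ given by the coefficient of the simple root $\alpha_p$ at the circled node. Under this grading, $\mathfrak{g}^{\mathbb{C}}=\bigoplus_{k=-2}^{2}\mathfrak{g}^{\mathbb{C}}_k$ with $\mathfrak{g}^{\mathbb{C}}_0=\mathfrak{k}^{\mathbb{C}}$ and $\mathfrak{g}^{\mathbb{C}}_k=\mathfrak{m}^{\mathbb{C}}_k$ for $k\neq 0$. Since this is a genuine $\mathbb{Z}$-grading of Lie superalgebras, we have $[\mathfrak{g}^{\mathbb{C}}_i,\mathfrak{g}^{\mathbb{C}}_j]\subseteq\mathfrak{g}^{\mathbb{C}}_{i+j}$, and any component $\mathfrak{g}^{\mathbb{C}}_k$ with $|k|>2$ is zero.

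First, I would construct the real summands $\mathfrak{m}_1,\mathfrak{m}_2$ exactly as in the proof of \propref{prop: mdec}. The star operation satisfies $(E_\alpha)^*=E_{-\alpha}$, so it interchanges $\mathfrak{m}^{\mathbb{C}}_i$ and $\mathfrak{m}^{\mathbb{C}}_{-i}$ for each $i\in\{1,2\}$. Therefore, setting
\begin{align*}
 \mathfrak{m}_i=\sspan_{\mathbb{R}}\{A_{\alpha}, B_{\alpha}\mid \alpha\in \Delta_{i}\cap \Delta_{\bar{0}}^+\}
 \oplus\sspan_{\mathbb{R}}\{\sqrt{\imath}A_{\alpha},\sqrt{\imath}B_{\alpha}\mid \alpha\in \Delta_{i}\cap \Delta_{\bar{1}}^+\},
\end{align*}
as in \eqref{eq: mreal} and \eqref{AB}, gives real subspaces whose complexifications are $\mathfrak{m}^{\mathbb{C}}_{-i}\oplus\mathfrak{m}^{\mathbb{C}}_i$. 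Combined with \propref{prop: mdecC}, this yields the $Q$-orthogonal decomposition $\mathfrak{m}=\mathfrak{m}_1\oplus\mathfrak{m}_2$. Irreducibility of each $\mathfrak{m}_i$ as an $\ad_{\mathfrak{k}}$-representation follows from the irreducibility of the corresponding complex summands $\mathfrak{m}^{\mathbb{C}}_i$ established in \propref{prop: mdecC}.

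Next, the three bracket relations drop out of the $\mathbb{Z}$-grading. Complexifying,
\begin{align*}
 [\mathfrak{m}_1,\mathfrak{m}_1]^{\mathbb{C}}
 \subseteq[\mathfrak{m}^{\mathbb{C}}_{-1}\oplus\mathfrak{m}^{\mathbb{C}}_1,\mathfrak{m}^{\mathbb{C}}_{-1}\oplus\mathfrak{m}^{\mathbb{C}}_1]
 \subseteq\mathfrak{g}^{\mathbb{C}}_{-2}\oplus\mathfrak{g}^{\mathbb{C}}_0\oplus\mathfrak{g}^{\mathbb{C}}_2
 =\mathfrak{m}^{\mathbb{C}}_{-2}\oplus\mathfrak{k}^{\mathbb{C}}\oplus\mathfrak{m}^{\mathbb{C}}_2,
\end{align*}
and descending to the real form via the star operation gives $[\mathfrak{m}_1,\mathfrak{m}_1]\subseteq\mathfrak{k}\oplus\mathfrak{m}_2$. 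Similarly, $[\mathfrak{m}^{\mathbb{C}}_{\pm 1},\mathfrak{m}^{\mathbb{C}}_{\pm 2}]\subseteq\mathfrak{g}^{\mathbb{C}}_{\pm 3}=0$ while $[\mathfrak{m}^{\mathbb{C}}_{\mp 1},\mathfrak{m}^{\mathbb{C}}_{\pm 2}]\subseteq\mathfrak{m}^{\mathbb{C}}_{\pm 1}$, yielding $[\mathfrak{m}_1,\mathfrak{m}_2]\subseteq\mathfrak{m}_1$. Finally, $[\mathfrak{m}^{\mathbb{C}}_{\pm 2},\mathfrak{m}^{\mathbb{C}}_{\pm 2}]\subseteq\mathfrak{g}^{\mathbb{C}}_{\pm 4}=0$ while $[\mathfrak{m}^{\mathbb{C}}_{-2},\mathfrak{m}^{\mathbb{C}}_2]\subseteq\mathfrak{k}^{\mathbb{C}}$, yielding $[\mathfrak{m}_2,\mathfrak{m}_2]\subseteq\mathfrak{k}$.

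There is no real obstacle here: the hardest part of the analysis has already been done in \propref{prop: mdecC}, namely the identification of $\mathfrak{g}_k^{\mathbb{C}}$ with $\mathfrak{m}^{\mathbb{C}}_k$ for $k\neq 0$, and the verification that the resulting $\ad_{\mathfrak{k}^{\mathbb{C}}}$-modules are irreducible. The remaining content of \propref{prop: rmdecC} is purely a bookkeeping exercise in passing from the complex graded picture to its compact real form via the star operation.
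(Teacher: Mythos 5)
Your proposal is correct and follows essentially the same route as the paper, which simply declares the proof analogous to that of \propref{prop: mdec}: descend from the complex decomposition of \propref{prop: mdecC} to the compact real form via the star operation, inherit irreducibility from the complex summands, and read off the bracket inclusions from the root data. Your packaging of the last step as the $\mathbb{Z}$-grading by the $\alpha_p$-coefficient is just a tidier phrasing of the paper's remark that the commutation relations follow from those among the root vectors $E_{\alpha}$.
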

\begin{proof}
  The proof is analogous to that of \propref{prop: mdec}. 
\end{proof}
By \corref{cor: mpL}, the $\ad_{\mathfrak{k}^{\mathbb{C}}}$-modules $\mathfrak{m}_{1}^{\mathbb{C}}$ and 
$\mathfrak{m}_{2}^{\mathbb{C}}$ have distinct highest weights and are therefore non-isomorphic. It follows from 
\propref{prop: rmdecC} that the flag supermanifold $G/K$ has two inequivalent irreducible isotropy summands. 

It follows from the proof of \propref{prop: rmdecC} that, for each $i=1,2$,
\begin{align*}
 \{A_{\alpha}, B_{\alpha}\,|\, \alpha\in\Delta_{i}\cap \Delta_{\bar{0}}^+\}
  \cup \{ \sqrt{\imath}A_{\alpha},\sqrt{\imath}B_{\alpha}\,|\, \alpha\in\Delta_{i}\cap \Delta_{\bar{1}}^+\}
\end{align*}
is an $\mathbb{R}$-basis for the $\ad_{\mathfrak{k}}$-module $\mathfrak{m}_i$, with
\begin{align}\label{dmi}
 d_i:=\sdim(\mathfrak{m}_i)=2\sdim_{\mathbb{C}}(\mathfrak{m}_i^{\mathbb{C}}).
\end{align}
Explicit expressions for $d_1,d_2$ are readily obtained and are given in \lemref{lem: sdimC} below.

\subsubsection{Structure constants and Ricci curvature}

\begin{proposition}\label{prop: cSp}
Let $K$ be as in \eqref{KUSp} and $\mathfrak{m}$ as in \propref{prop: rmdecC}.
Then, the Casimir eigenvalues \eqref{Cc} are given by
\begin{align*}
 c_1=-n+\tfrac{1}{2}(p-1),\qquad
 c_2=-(p-1).
\end{align*}
\end{proposition}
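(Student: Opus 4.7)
The plan is to apply the standard identity $c_i = -(\Lambda_i + 2\rho_{\mathfrak{k}^{\mathbb{C}}}, \Lambda_i)$ for the eigenvalue of the Casimir operator $C_{\mathfrak{m}_i, Q|_{\mathfrak{k}}}$ on an irreducible $\ad_{\mathfrak{k}^{\mathbb{C}}}$-module with highest weight $\Lambda_i$. Here $\rho_{\mathfrak{k}^{\mathbb{C}}}$ is the Weyl vector of $\mathfrak{k}^{\mathbb{C}}$ and $(-,-)$ is the bilinear form on $(\mathfrak{h}^{\mathbb{C}})^\vee$ induced by $Q$. This mirrors the strategy used in Proposition \ref{prop: c} for the $A$-type case, and the highest weights $\Lambda_1, \Lambda_2$ are supplied by \corref{cor: mpL}.

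The first step is to write down $\rho_{\mathfrak{k}^{\mathbb{C}}}$ explicitly. Since $\mathfrak{k}^{\mathbb{C}} = \mathfrak{gl}(1|p-1)^{\mathbb{C}}\oplus \mathfrak{sp}(2(n+1-p))^{\mathbb{C}}$, its positive roots split into the graded $\mathfrak{gl}(1|p-1)^{\mathbb{C}}$-roots $\{\delta_i-\delta_j:1\leq i<j\leq p-1\}\cup\{\varepsilon-\delta_i:1\leq i\leq p-1\}$, together with the $\mathfrak{sp}(2(n+1-p))^{\mathbb{C}}$-roots $\{\delta_i\pm\delta_j: p\leq i<j\leq n\}\cup\{2\delta_j:p\leq j\leq n\}$. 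Taking the graded half-sum as in \eqref{Weyl} yields
\begin{align*}
 2\rho_{\mathfrak{k}^{\mathbb{C}}} = (1-p)\varepsilon + \sum_{\mu=1}^{p-1}(p-2\mu+1)\delta_\mu + 2\sum_{j=p}^{n}(n+1-j)\delta_j.
\end{align*}

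Next, I will compute the values of $(-,-)$ on the $\varepsilon, \delta_j$ basis. Using the basis of $\mathfrak{h}^{\mathbb{C}}$ in \eqref{eq: cartanC}, the supertrace form $Q=-\Str$ gives $Q(h_\varepsilon,h_\varepsilon)=-2$ and $Q(h_{\delta_j},h_{\delta_k})=2\delta_{jk}$, so the $Q$-dual basis is $\bar h_\varepsilon = -\tfrac{1}{2}h_\varepsilon$ and $\bar h_{\delta_j} = \tfrac{1}{2}h_{\delta_j}$. The induced pairing on $(\mathfrak{h}^{\mathbb{C}})^\vee$ is therefore
\begin{align*}
 (\varepsilon,\varepsilon) = -\tfrac{1}{2}, \qquad (\delta_i,\delta_j) = \tfrac{\delta_{ij}}{2}, \qquad (\varepsilon,\delta_j)=0.
\end{align*}

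Finally, since both $\Lambda_1 = \varepsilon+\delta_p$ and $\Lambda_2 = \varepsilon+\delta_1$ are isotropic, $(\Lambda_i,\Lambda_i)=0$ and the calculation reduces to $(2\rho_{\mathfrak{k}^{\mathbb{C}}},\Lambda_i)$. A direct evaluation using the formulas above gives $(2\rho_{\mathfrak{k}^{\mathbb{C}}},\varepsilon+\delta_p) = \tfrac{p-1}{2}+(n+1-p) = n-\tfrac{p-1}{2}$ and $(2\rho_{\mathfrak{k}^{\mathbb{C}}},\varepsilon+\delta_1) = \tfrac{p-1}{2}+\tfrac{p-1}{2} = p-1$, yielding the claimed values $c_1 = -n+\tfrac{1}{2}(p-1)$ and $c_2 = -(p-1)$. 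The only real obstacle is bookkeeping: keeping the sign conventions straight between the graded Weyl-vector construction and the (non-super) $\mathfrak{sp}$ summand, and noting that the normalisation of $(-,-)$ here differs by a factor of $2$ from the $A$-type calculation in \propref{prop: c}, since the diagonal matrix units in $\mathfrak{osp}$ pair up two nonzero diagonal entries rather than one.
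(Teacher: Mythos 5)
Your proposal is correct and follows essentially the same route as the paper: the identity $c_i=-(\Lambda_i+2\rho_{\mathfrak{k}^{\mathbb{C}}},\Lambda_i)$ with the highest weights from \corref{cor: mpL}, the same graded Weyl vector $2\rho_{\mathfrak{k}^{\mathbb{C}}}=(1-p)\varepsilon+\sum_{\mu=1}^{p-1}(p-2\mu+1)\delta_\mu+2\sum_{j=p}^{n}(n+1-j)\delta_j$, and the same $Q$-induced pairing $(\varepsilon,\varepsilon)=-\tfrac12$, $(\delta_j,\delta_j)=\tfrac12$ obtained from the dual Cartan basis. Your explicit use of the isotropy of $\Lambda_1,\Lambda_2$ is only a cosmetic reorganisation of the paper's computation, and the final values agree.
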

\begin{proof}
The eigenvalues of $C_{\mathfrak{m}_i, Q|_{\mathfrak{k}}}$ are given by
$c_i=-(\Lambda_i+2\rho_{\mathfrak{k}^{\mathbb{C}}}, \Lambda_i)$, $i=1,2$,
where $\rho_{\mathfrak{k}^{\mathbb{C}}}$ is the Weyl vector of $\mathfrak{k}^{\mathbb{C}}$,
and where $\Lambda_1,\Lambda_2$ are the highest weights given in \corref{cor: mpL}.
To compute these eigenvalues, we use that
\begin{align*}
 (\lambda, \mu)=\sum_{i=1}^{n+1}\lambda(h_i)\mu(\bar{h}_i),\qquad \lambda,\mu\in(\mathfrak{h}^{\mathbb{C}})^\vee,
\end{align*}
where $\{h_1=E_{11}-E_{22},\,h_i=E_{i+1,i+1}-E_{n+1+i,n+1+i}\,|\, 2\leq i\leq n+1\}$ is a basis for the 
Cartan subalgebra $\mathfrak{h}^{\mathbb{C}}$ (cf.~\eqnref{eq: cartanC}).
The right $Q$-dual basis elements are given by
\begin{align*}
  \bar{h}_1=-\tfrac{1}{2}h_1, \qquad \bar{h}_i=\tfrac{1}{2} h_{i}, \qquad i=2,\ldots,n+1,
\end{align*}
so
\begin{align*}
 (\varepsilon,\varepsilon)=\sum_{i=1}^{n+1}\varepsilon(h_i)\varepsilon(\bar{h}_i)=-\frac{1}{2},\qquad
 (\delta_j, \delta_j)=\sum_{i=1}^{n+1}\delta_j(h_i)\delta_j(\bar{h}_i)=\frac{1}{2},\qquad j=1,\ldots,n.
\end{align*}
Since
\begin{align*}
 2\rho_{\mathfrak{k}^{\mathbb{C}}}
  =(1-p)\varepsilon
  +\sum_{j=1}^{p-1}(p-2j+1)\delta_j
  +\sum_{j=p}^n\!\big(2(n+1-j)\big)\delta_j,
\end{align*}
and the highest weights are given by $\Lambda_1=\varepsilon+\delta_p$ and
$\Lambda_2=\varepsilon+\delta_1$, we have
\begin{align*}
 c_1&=-(\Lambda_1+2\rho_{\mathfrak{k}^{\mathbb{C}}},\Lambda_1)
  =(p-2)(\varepsilon,\varepsilon)- (2(n+1-p)+1)(\delta_p, \delta_p)
  =-\tfrac{1}{2}(2n-p+1),
 \\[.15cm]
 c_2&=-(\Lambda_2+2\rho_{\mathfrak{k}^{\mathbb{C}}},\Lambda_2)
  =(p-2)(\varepsilon,\varepsilon)-p(\delta_1,\delta_1)
  =1-p.
\end{align*}
\end{proof}
\begin{corollary}
Let the notation be as in \propref{prop: cSp}. Then,
\begin{align*}
 b=2c_1+c_2.
\end{align*}
\end{corollary}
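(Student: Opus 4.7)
The plan is to verify the identity by direct substitution using ingredients already assembled. From the opening of \secref{sec: flagC} we have $b = -2n$, and \propref{prop: cSp} provides the explicit Casimir eigenvalues
\begin{align*}
 c_1 = -n + \tfrac{1}{2}(p-1), \qquad c_2 = -(p-1).
\end{align*}
A one-line computation then gives
\begin{align*}
 2c_1 + c_2 = 2\bigl(-n + \tfrac{1}{2}(p-1)\bigr) - (p-1) = -2n + (p-1) - (p-1) = -2n = b.
\end{align*}

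Since every quantity involved is an explicit function of the pair $(n,p)$, there is essentially no obstacle: the corollary is an arithmetical consequence of \propref{prop: cSp}. The only conceivable subtlety is checking that one really does obtain the stated linear combination $2c_1+c_2$ (with coefficients matching the multiplicities coming from the $\ad_\mathfrak{k}$-module decomposition), rather than some other integer combination; this is confirmed by the substitution above.

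It is worth remarking that, as with the analogous \corref{cor: cccb} in the $A$-type case, one could alternatively pursue a structural derivation. Namely, \propref{prop: StrCas} with $s=2$ combined with the bracket relations of \propref{prop: rmdecC} (which force $[111]=[122]=[222]=[212]=0$) yields $2[112] = d_1(b-2c_1)$ and $[112] = d_2(b-2c_2)$, hence $d_1(b-2c_1)=2d_2(b-2c_2)$. Recovering $b = 2c_1+c_2$ purely by this route would require further input about $d_1, d_2$, so the direct substitution remains the cleanest argument and is what I would present.
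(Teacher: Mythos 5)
Your proposal is correct and matches the paper, which states this corollary immediately after \propref{prop: cSp} with no written proof precisely because it is the one-line substitution you give: $2c_1+c_2=-2n+(p-1)-(p-1)=-2n=b$. The structural remark via \propref{prop: StrCas} is a fine observation but, as you note, unnecessary here.
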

\begin{lemma}\label{lem: sdimC}
Let the notation be as in \propref{prop: rmdecC} and \eqref{dmi}. 
Then,
\begin{align*}
 d_1=-2(p-2)(b-2c_2),\qquad
 d_2=-(p-2)(b-2c_1).
\end{align*}
\end{lemma}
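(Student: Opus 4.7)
The plan is to compute the two superdimensions directly from the explicit descriptions of $\mathfrak{m}_1^{\mathbb{C}}$ and $\mathfrak{m}_2^{\mathbb{C}}$ provided in \corref{cor: mpL}, and then verify, using the Casimir eigenvalues of \propref{prop: cSp} together with $b=-2n$, that the resulting numbers coincide with the stated expressions $-2(p-2)(b-2c_2)$ and $-(p-2)(b-2c_1)$. By \eqref{dmi}, it suffices to compute $\sdim_{\mathbb{C}}(\mathfrak{m}_i^{\mathbb{C}})$ for $i=1,2$, and then double.

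For $\mathfrak{m}_1^{\mathbb{C}}\cong\mu_{1|p-1}\otimes\nu_{0|2(n+1-p)}$, the second factor is purely odd of dimension $2(n+1-p)$, and the first has $1$ even and $p-1$ odd dimensions. Applying the grading rule \eqref{VW} for tensor products of vector superspaces, the even part of $\mathfrak{m}_1^{\mathbb{C}}$ has dimension $(p-1)\cdot 2(n+1-p)$ and its odd part has dimension $1\cdot 2(n+1-p)$, giving $\sdim_{\mathbb{C}}(\mathfrak{m}_1^{\mathbb{C}})=2(n+1-p)(p-2)$.

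For $\mathfrak{m}_2^{\mathbb{C}}\cong\bigw_s^{2}(\mu_{1|p-1})$, the even part is $\bigw^{2}((\mu_{1|p-1})_{\bar{0}})\oplus\mathrm{Sym}^{2}((\mu_{1|p-1})_{\bar{1}})$, of dimension $0+\binom{p}{2}=\frac{p(p-1)}{2}$, while the odd part is $(\mu_{1|p-1})_{\bar{0}}\otimes(\mu_{1|p-1})_{\bar{1}}$, of dimension $p-1$. Hence $\sdim_{\mathbb{C}}(\mathfrak{m}_2^{\mathbb{C}})=\frac{(p-1)(p-2)}{2}$.

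Finally, \propref{prop: cSp} gives $b-2c_2=-2n+2(p-1)=2(p-1-n)$ and $b-2c_1=-2n+(2n-(p-1))=1-p$, so $-2(p-2)(b-2c_2)=4(p-2)(n+1-p)=d_1$ and $-(p-2)(b-2c_1)=(p-2)(p-1)=d_2$, as required. The derivation is entirely bookkeeping; the only point requiring care is the correct application of the super-sign conventions in $\bigw_s^{2}$, but this is straightforward.
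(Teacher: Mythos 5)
Your computation is correct and follows exactly the route the paper intends: the paper's proof is a one-line citation of \corref{cor: mpL}, \eqref{VW} and \eqref{dmi}, and you have simply spelled out that bookkeeping, with the correct super-grading of the tensor product and of $\bigw_s^2(\mu_{1|p-1})$ (even part $\bigw^2((\mu_{1|p-1})_{\bar 0})\oplus\mathrm{Sym}^2((\mu_{1|p-1})_{\bar 1})$, odd part $(\mu_{1|p-1})_{\bar 0}\otimes(\mu_{1|p-1})_{\bar 1}$), and the correct values $b-2c_1=1-p$, $b-2c_2=2(p-1-n)$.
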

\begin{proof}
The result follows from \corref{cor: mpL}, \eqref{VW} and \eqref{dmi}.
\end{proof}
\begin{proposition}\label{prop: StrC}
Let $K$ be as in \eqref{KUSp} and $\mathfrak{m}$ as in \propref{prop: rmdecC}.
Then, up to permutation, $[112]$ is the only possibly nonzero structure constant, and it is given as 
\begin{align*}
  [112]=-(p-2)(b-2c_1)(b-2c_2).
\end{align*} 
\end{proposition}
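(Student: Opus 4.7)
The plan is to mirror the strategy used in the proof of \propref{prop: StrCons} for the $A$-type case. The key inputs are the bracket relations of \propref{prop: rmdecC}, which tell us exactly which projections $[\mathfrak{m}_i,\mathfrak{m}_j]_{\mathfrak{m}_k}$ can be nonzero, and the global sum identity \propref{prop: StrCas}. Combined with the full symmetry of $[ijk]$ in its three indices, this should pin down $[112]$ uniquely.

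First I would eliminate every structure constant other than $[112]$ (up to permutation). From \propref{prop: rmdecC}, $[\mathfrak{m}_1,\mathfrak{m}_1]\subseteq \mathfrak{k}\oplus\mathfrak{m}_2$ forces $[\mathfrak{m}_1,\mathfrak{m}_1]_{\mathfrak{m}_1}=0$, so $[111]=0$. Similarly, $[\mathfrak{m}_1,\mathfrak{m}_2]\subseteq \mathfrak{m}_1$ implies $[\mathfrak{m}_1,\mathfrak{m}_2]_{\mathfrak{m}_2}=0$, hence $[122]=0$ (and by symmetry $[212]=[221]=0$). Finally, $[\mathfrak{m}_2,\mathfrak{m}_2]\subseteq \mathfrak{k}$ gives $[222]=0$. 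This leaves only $[112]$ (and its permutations $[121]$, $[211]$, which equal $[112]$) as potentially nonzero.

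Next I would apply \propref{prop: StrCas} with $i=2$ to isolate $[112]$. Only $[211]$ survives in the $j,k$-sum, so
\[
 [211]=\sum_{j,k=1}^{2}[2jk]=d_2(b_2-2c_2)=d_2(b-2c_2).
\]
Substituting the expression $d_2=-(p-2)(b-2c_1)$ from \lemref{lem: sdimC} and using the symmetry $[211]=[112]$ yields the desired formula
\[
 [112]=-(p-2)(b-2c_1)(b-2c_2).
\]

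There is no genuine obstacle here, but as a sanity check I would also run \propref{prop: StrCas} for $i=1$: only $[112]$ and $[121]$ contribute, giving $2[112]=d_1(b-2c_1)$, and plugging in $d_1=-2(p-2)(b-2c_2)$ reproduces the same value. The main subtlety, as in the $A$-type case, is to track the signs carefully in the super-setting, since both $d_i$ and the $[ijk]$ can fail to be non-negative; the formulas in \lemref{lem: sdimC} and \propref{prop: cSp} handle this uniformly, so no additional case analysis is required.
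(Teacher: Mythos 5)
Your argument is correct and is essentially the paper's own proof: the bracket relations of \propref{prop: rmdecC} eliminate all structure constants except $[112]$ and its permutations, and \propref{prop: StrCas} with $i=2$ gives $[112]=[211]=d_2(b_2-2c_2)$, which combined with $b_2=b$ and \lemref{lem: sdimC} yields the stated formula. Your extra consistency check with $i=1$ is a harmless addition but not needed.
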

\begin{proof}
\propref{prop: rmdecC} implies that $[112]$ is the only possibly nonzero structure constant up to permutation. 
Using \propref{prop: StrCas} and the symmetry of $[ijk]$, we have $[112]=[211]=d_2(b_2-2c_2)$.
The result now follows using $b_2=b$, \propref{prop: cSp} and \lemref{lem: sdimC}.
\end{proof}
Concretely, $b-2c_1=-(p-1)$ and $b-2c_2=-2(n+1-p)$, and since $2\leq p\leq n$, it follows that $(b-2c_1)(b-2c_2)>0$,
$[112]\leq0$ and that $[112]=0$ if and only if $p=2$.

\begin{proposition}\label{prop: RicScC}
Let $g=(x_1,x_2)$ be a $G$-invariant graded Riemannian metric on $M=G/K$, where $G=\SOSp(2|2n)$ and $K$ 
is given in \eqref{KUSp}. Set $x^2=2x_1^2+x_2^2$. Then, the following holds.
\begin{itemize}
\item The Ricci coefficients in \eqref{RicQm} are given by
\begin{align*}
 r_i=\frac{b}{2}+\frac{(b-2c_i)x_i(2x_i^2-x^2)}{4x_1x_1x_2},\qquad
 i=1,2.
\end{align*}
\item The scalar curvature $S$ is given by 
\begin{align*}
 S=\frac{b}{2}\Big(\frac{d_1}{x_1}+\frac{d_2}{x_2}\Big)
  -\frac{[112]\,x^2}{4x_1x_1x_2},
\end{align*}
with $d_1,d_2$ given in \lemref{lem: sdimC} and $[112]$ in \propref{prop: StrC}.
\end{itemize}
\end{proposition}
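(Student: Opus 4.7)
The plan is to mimic the proof of \propref{prop: RciA}, applying \thmref{thm: SimRic} when the relevant superdimension is nonzero and falling back on \propref{prop: ricexc} when it vanishes. By \propref{prop: StrC}, the only structure constant that can be nonzero (up to permutation) is $[112]$, which collapses the double sum in \thmref{thm: SimRic} dramatically. For $i=1$, the surviving terms correspond to $(j,k)\in\{(1,2),(2,1)\}$, and using $b_1=b$ together with the symmetry $[121]=[112]$ gives
\begin{align*}
 4d_1r_1=2bd_1+[112]\Big(\frac{x_1}{x_2}-\frac{2x_1}{x_2}+\frac{x_1}{x_2}-\frac{2x_2}{x_1}\Big)=2bd_1-\frac{2[112]\,x_2}{x_1}.
\end{align*}
For $i=2$, only $(j,k)=(1,1)$ survives, yielding $4d_2r_2=2bd_2+[112]\big(x_2^2/x_1^2-2\big)$. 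Substituting the identities $[112]=\tfrac{1}{2}(b-2c_1)d_1=(b-2c_2)d_2$ that follow by comparing \lemref{lem: sdimC} with \propref{prop: StrC}, and rewriting $-x_2^2=2x_1^2-x^2$ and $x_2^2-2x_1^2=2x_2^2-x^2$ via $x^2=2x_1^2+x_2^2$, then produces the stated formula for $r_i$ whenever $d_i\neq0$.

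The main subtle point is the case $d_1=d_2=0$, which by \lemref{lem: sdimC} occurs precisely when $p=2$ (since $b=2c_1$ would force $p=1$ and $b=2c_2$ would force $p=n+1$, both out of range). Here \thmref{thm: SimRic} cannot be used to isolate $r_i$, so one must invoke \propref{prop: ricexc}. The hypothesis of that proposition requires, for each $i$, an $i$-selected pair $(j,k)$ with $j\leq k$. The bracket relations in \propref{prop: rmdecC} immediately exhibit $(1,2)$ as $1$-selected and $(1,1)$ as $2$-selected: the only possibly nonvanishing components of $[\mathfrak{m}_1,\mathfrak{m}_u]_{\mathfrak{m}_v}$ and $[\mathfrak{m}_2,\mathfrak{m}_u]_{\mathfrak{m}_v}$ are $[\mathfrak{m}_1,\mathfrak{m}_1]_{\mathfrak{m}_2}$ and $[\mathfrak{m}_2,\mathfrak{m}_1]_{\mathfrak{m}_1}$, respectively. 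Applying \propref{prop: ricexc} with these selected pairs gives
\begin{align*}
 r_1=\frac{b}{2}+\frac{b-2c_1}{4}\Big(\frac{x_1}{x_2}-\frac{x_1}{x_2}-\frac{x_2}{x_1}\Big),\qquad
 r_2=\frac{b}{2}+\frac{b-2c_2}{4}\Big(\frac{x_2^2}{x_1^2}-2\Big),
\end{align*}
which agree with the $d_i\neq0$ expressions after the same rewriting in terms of $x^2$.

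For the scalar curvature, I would start from \propref{prop: scalar} specialised to $s=2$, yielding $S=\tfrac{b}{2}(d_1/x_1+d_2/x_2)-\tfrac{1}{4}\sum_{i,j,k}[ijk]x_k/(x_ix_j)$. Using \propref{prop: StrC} and the full symmetry of $[ijk]$, the triple sum reduces to three equal contributions from the orbit of $(1,1,2)$, giving
\begin{align*}
 \sum_{i,j,k=1}^2[ijk]\frac{x_k}{x_ix_j}=[112]\Big(\frac{x_2}{x_1^2}+\frac{2}{x_2}\Big)=\frac{[112]\,x^2}{x_1^2x_2},
\end{align*}
which produces the advertised formula for $S$. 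The only mild obstacle is bookkeeping of the $p=2$ degenerate case and verifying the two equivalent forms of $[112]$ obtained from \lemref{lem: sdimC}; otherwise the argument is a direct parallel of the $\SU(m|n)$ computation.
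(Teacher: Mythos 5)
Your proposal is correct and follows essentially the same route as the paper: apply \thmref{thm: SimRic} together with \propref{prop: StrC} and \lemref{lem: sdimC} when $p\neq2$ (so $d_1,d_2\neq0$), fall back on \propref{prop: ricexc} with the selected pairs in the degenerate case $p=2$, and obtain $S$ from \propref{prop: scalar}; your intermediate formulas and the identities $[112]=\tfrac12(b-2c_1)d_1=(b-2c_2)d_2$ all check out. The only (immaterial) difference is that you name $(1,2)$ as the $1$-selected pair, which in fact conforms to the convention $j\leq k$ in \eqref{eq: trcon}, whereas the paper writes $(2,1)$; since the formula in \propref{prop: ricexc} is symmetric in $j,k$, this does not affect the result.
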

\begin{proof}
If $p\neq2$, then, by \lemref{lem: sdimC}, $d_1,d_2$ are nonzero, in which case \thmref{thm: SimRic} implies that 
\begin{align*}
           r_1=\frac{b_1}{2}- \frac{[112]}{2d_1}\frac{x_2}{x_1},\qquad
           r_2=\frac{b_2}{2}+ \frac{[211]}{4d_2}\Big( \frac{x_2^2}{x_1^2}- 2\Big).
\end{align*}
Using $b_1=b_2=b$ and \propref{prop: StrC}, the desired expressions for the Ricci coefficients follow.
If $p=2$, then $d_1=d_2=0$. By \propref{prop: rmdecC}, $(2,1)$ is $1$-selected, and $(1,1)$ is $2$-selected.
The desired expressions for the Ricci coefficients now follow from \propref{prop: ricexc}.
Finally, the expression for $S$ follows from \propref{prop: scalar}. 
\end{proof}

\subsubsection{Classification of Einstein metrics}
\label{subsec: strcon}

Using \propref{prop: RicScC}, we can write the Einstein equations \eqref{eq: Ein} as
\begin{align}\label{cs2}
 c=\frac{b}{2x_i}+\frac{(b-2c_i)(2x_i^2-x^2)}{4x_1x_1x_2},\qquad i=1,2.
\end{align}
Eliminating $c$ by combining the two equations in \eqref{cs2} and using $b=2c_1+c_2$ and $x^2=2x_1^2+x_2^2$, yields
\begin{align*}
 (2x_1-x_2)(c_2x_1-c_1x_2)=0.
\end{align*}
It follows from \propref{prop: cSp} that $c_1,c_2\neq0$, so, up to scaling, there are exactly two solutions to the Einstein 
equations:
\begin{align*}
 &(\mathrm{S1}):\quad (x_1,x_2)
  =(1,2),
   \qquad\ \
   c=c_1,
 \\[.15cm]
 &(\mathrm{S2}):\quad (x_1,x_2)
  =(c_1,c_2),
   \qquad
   c=1+\frac{(2c_1-c_2)c_2}{4c_1^2}.
\end{align*}
This completes our analysis of the Einstein equations and gives rise to the classification result in \thmref{thm: EinC} below.
Mimicking the presentation in \secref{sec: Aclass}, for all $x_1,x_2,c\in\mathbb{R}$, we introduce
\begin{align*}
 [\,x_1:x_2\mid c\,]:=\begin{cases} \big\{(\lambda g,\tfrac{c}{\lambda})\,|\, 
  g=(x_1,x_2);\,\lambda\in\mathbb{R}^\times\big\},\ &\text{if $x_1x_2\neq0$,}\\[.15cm]
   \emptyset,\ &\text{if $x_1x_2=0$.}\end{cases}
\end{align*}
\begin{theorem}\label{thm: EinC}
Let $M=G/K$ be a flag supermanifold, where $G=\SOSp(2|2n)$ and $K$ is given in \eqref{KUSp} with $2\leq p\leq n$. 
Then, the $G$-invariant Einstein metrics $g$ on $M$, with corresponding Einstein constant $c$, are of the form \eqref{gxx} 
and classified as follows:
\begin{align*}
 (g,c)\in[\,1:2\mid -n+\tfrac{1}{2}(p-1)\,]
 \cup\Big[2n+1-p:2(p-1)\mid -\frac{1}{2}-\frac{(p-1)(n+1-p)}{(2n+1-p)^2}\Big].
\end{align*}
\end{theorem}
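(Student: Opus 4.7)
The plan is to reduce the Einstein equation to an algebraic system in $(x_1,x_2)$ via \propref{prop: RicScC}, and then solve it directly. By \propref{prop: RicScC}, the Einstein condition $r_i = cx_i$ (see \eqref{eq: Ein}) for $i=1,2$ is equivalent to the pair of equations \eqref{cs2}. The first step is thus to write out these two equations, substituting $b = 2c_1 + c_2$ from the corollary following \propref{prop: cSp} and $x^2 = 2x_1^2 + x_2^2$ as indicated in \propref{prop: RicScC}.

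Next, I would eliminate the Einstein constant $c$. Multiplying the $i=1$ equation by $x_1$ and the $i=2$ equation by $x_2$, subtracting, and clearing denominators, gives a polynomial relation in $x_1$ and $x_2$. Using $b - 2c_1 = -(p-1)$, $b - 2c_2 = -2(n+1-p)$, and $b - (b-2c_1) - (b - 2c_2)/2 \cdot 2 = 2c_1 + c_2 - \text{(terms)}$, one can simplify the difference systematically. The key claim to verify is that after simplification the polynomial factorises as
\begin{align*}
 (2x_1 - x_2)(c_2 x_1 - c_1 x_2) = 0.
\end{align*}
The factorisation is forced by homogeneity: the eliminated relation is homogeneous of degree $2$ in $(x_1,x_2)$, and two independent solution rays are visible by inspection, namely the ``naturally reductive ratio'' $2x_1 = x_2$ (corresponding to the Killing-type metric) and the ``Casimir ratio'' $c_2 x_1 = c_1 x_2$ (corresponding to the second Einstein metric). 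Verifying this factorisation is the main computational step, but it is a straightforward algebraic identity, not a deep obstacle.

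Having factored, I would treat the two branches separately. On the branch $x_2 = 2x_1$, substituting into \eqref{cs2} for $i=1$ yields $c = b/(2x_1) + (b-2c_1)(2x_1^2 - (2x_1^2 + 4x_1^2))/(4x_1 \cdot x_1 \cdot 2x_1) = b/(2x_1) - (b-2c_1)/(2x_1) = c_1/x_1$; thus $(x_1,x_2) = (1,2)$ with $c = c_1 = -n + (p-1)/2$ up to the scaling $[\cdot\mid\cdot]$. On the branch $c_2 x_1 = c_1 x_2$ (which is well-defined since $c_1, c_2 \neq 0$ by \propref{prop: cSp}), choose the representative $(x_1,x_2) = (c_1, c_2)$; substituting into \eqref{cs2} for $i=1$ and simplifying gives $c = 1 + (2c_1 - c_2)c_2/(4c_1^2)$.

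Finally, I would rewrite each solution in the form declared in the statement. For the second solution, rescaling by $\lambda = -2$ transforms $(c_1, c_2)$ into $(-2c_1, -2c_2) = (2n+1-p, 2(p-1))$ and $c$ into $-c/2$, which computes to $-\tfrac12 - (p-1)(n+1-p)/(2n+1-p)^2$ after substituting $2c_1 - c_2 = -2(n+1-p)$ and $4c_1^2 = (2n+1-p)^2$. Since $2 \leq p \leq n$ ensures $2n+1-p > 0$ and $p-1 > 0$, both solution sets are nonempty. The two solutions are distinct because $2c_1 = c_2$ would force $p = n+1$, which is excluded by hypothesis. This exhausts the solution set and completes the proof.
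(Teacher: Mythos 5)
Your proposal is correct and follows essentially the same route as the paper: reduce the Einstein condition to \eqref{cs2} via \propref{prop: RicScC}, eliminate $c$ to obtain the factorisation $(2x_1-x_2)(c_2x_1-c_1x_2)=0$, solve each branch (giving $(1,2)$ with $c=c_1$ and $(c_1,c_2)$ with $c=1+\frac{(2c_1-c_2)c_2}{4c_1^2}$), and rescale to the stated form. One small wording slip: multiplying the two equations in \eqref{cs2} by $x_1$ and $x_2$ and subtracting does not remove $c$; one simply equates the two expressions for $c$ (as the paper does), after which your factorisation, the branch computations, and the final substitutions $2c_1-c_2=-2(n+1-p)$, $4c_1^2=(2n+1-p)^2$ are exactly right.
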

We highlight the similarities between the solutions in \thmref{thm: EinC} and the ones in \secref{sec: Einstein}
when the former are expressed as
\begin{align*}
 (g,c)\in[\,c_1:b-c_2\mid 1\,]
 \cup\Big[c_1:c_2\mid 1+\frac{(b-2c_1)(b-2c_1)(b-2c_2)}{4c_1c_1c_2}\Big].
\end{align*}
\begin{remark}
The various expressions above, including the ones in \thmref{thm: EinC}, shorten if $p$ is replaced by $q+1$.
This would correspond to enumerating the nodes associated to $\alpha_2,\ldots,\alpha_n$ in the $\SOSp(2|2n)$
Dynkin diagram in \secref{subsubsec: typeC} from $1$ to $n-1$. For consistency with the enumeration in the $\SU(m|n)$ 
cases in \secref{sec: flagA1} and \secref{sec: flagA}, we have opted to use $p$ as indicated.
\end{remark}
When restricting the scaling parameter $\lambda$ to $\mathbb{R}^+$, both solutions in \thmref{thm: EinC} are positive with 
negative Einstein constant.

\addcontentsline{toc}{section}{Acknowledgements}
\section*{Acknowledgements}

This work was supported by the Australian Research Council under the Discovery Project scheme, 
project numbers DP180102185, DP200102316 and DP220102530. The authors thank Andreas Arvanitoyeorgos, Ruibin 
Zhang and Gabriele Tartaglino-Mazzucchelli for discussions and comments.

\addcontentsline{toc}{section}{References}

\end{document}